\pdfoutput=1

\documentclass[11pt,a4paper]{article}
\usepackage[T1]{fontenc}
\usepackage{hyperref}
\usepackage{geometry,amsmath,amsfonts,amssymb,amsthm}
\usepackage{slashed}
\usepackage{epsfig}
\usepackage{latexsym}
\usepackage{graphicx}
\usepackage{amssymb}
\usepackage{array}
\usepackage{verbatim}
\usepackage{bm}
\usepackage{lmodern}
\usepackage{dsfont}
\usepackage[footnotesize]{caption}
\usepackage{subfigure}
\usepackage{diagbox}

\usepackage{cancel}
\usepackage{cite}

\usepackage{float}

\usepackage{multirow}
\usepackage{tikz}

\usepackage{bbold}

\usepackage{rotating}
\usepackage{ifthen}
\usepackage[utf8]{inputenc}

\setcounter{section}{0}
\numberwithin{equation}{section} 
\setlength\parindent{0pt} 

\usepackage{varioref}
\usepackage{prettyref}

\usepackage{microtype}

\newrefformat{fig}{Fig.~\ref{#1}}
\newrefformat{subsec}{Sec.~\ref{#1}}
\newrefformat{eq}{Eq.~\eqref{#1}}
\newrefformat{fn}{Footnote~\ref{#1}}
\newrefformat{app}{App.~\ref{#1}}
\newrefformat{conj}{Conjecture~\ref{#1}}
\newrefformat{lem}{Lemma~\ref{#1}}
\newrefformat{exa}{Example~\ref{#1}}
\newrefformat{def}{Definition~\ref{#1}}
\newrefformat{thm}{Theorem~\ref{#1}}
\newrefformat{tab}{Table~\ref{#1}}

\usepackage{textcomp}

\renewcommand\[{\begin{equation}}
\renewcommand\]{\end{equation}}
\renewenvironment{align*}{\align}{\endalign}

\newtheorem{thm}{Theorem}
\newtheorem{lem}[thm]{Lemma}

\newtheorem{example}[thm]{Example}
\newtheorem{defn}[thm]{Definition}

\hypersetup{pdfencoding=auto, psdextra}

\pdfstringdefDisableCommands{%
  \def\to{\textrightarrow}%
}
\pdfsuppresswarningpagegroup=1

\hypersetup{ pdfborder={0 0 .7 [1 2]} }

\geometry{verbose,tmargin=2.5cm,bmargin=2.5cm,lmargin=2cm,rmargin=2cm,marginparwidth=1.6cm}

 \edef\oldeverydisplay{\the\everydisplay}
\usepackage[nodisplayskipstretch]{setspace}
\everydisplay\expandafter{\oldeverydisplay}
\setstretch{1.3}
\setlength{\footnotesep}{0.7\baselineskip}

\newcommand{\dAlemb}{\Box}

\DeclareMathOperator{\sn}{sn}
\DeclareMathOperator{\am}{am}

\begin{document}
\begin{titlepage}

\thispagestyle{plain}

\vskip 1.5cm
\flushright{DESY 18-106}
\vskip 1.5cm

\begin{center}
{\LARGE\bf Emerging chromo-natural inflation}

\vskip 1cm
{\large  Valerie~Domcke$^1$, Ben Mares,  Francesco Muia$^2$, Mauro Pieroni$^3$}\\[3mm]
{\it{
${}^{1}$ Deutsches Electronen Synchrotron (DESY), 22607 Hamburg, Germany  \\
${}^2$ ICTP, Strada Costiera 11, Trieste 34014, Italy,\\
${}^{3}$ Instituto de F\'{\i}sica Te\'orica UAM/CSIC C/ Nicol\'as Cabrera 13-15 \\Universidad Aut\'onoma de Madrid Cantoblanco, Madrid 28049, Spain  \\[3mm]
}}

\end{center}

\vskip 2cm

\begin{abstract}
The shift-symmetric coupling of a pseudo-scalar particle driving inflation to gauge fields provides a unique way of probing cosmic inflation. For the case of an $\mathrm{SU}(2)$ gauge group, we suggest a possible mechanism for the development of a classical isotropic background gauge field from the standard quantum mechanical vacuum in the far past. Over the course of inflation, the theory dynamically evolves from an approximately abelian regime into an inherently non-abelian regime, with distinct predictions for the scalar and tensor power spectra. The latter regime closely resembles a setup known as chromo-natural inflation, although our main focus here is on a new part of the parameter space which has received little attention so far. For single-field slow-roll inflation models, large scales may exit the horizon in the abelian regime, ensuring agreement with the observations of the anisotropies in the cosmic microwave background, whereas smaller scales experience the non-abelian effects. This results in a strong enhancement of the stochastic gravitational wave background at small scales,  e.g.\ at frequencies accessible with ground-based interferometers. For the scalar power spectrum, a similar enhancement arises due to non-linear contributions.
\end{abstract}

\end{titlepage}

\setcounter{page}{2}

\tableofcontents
\newpage

\section{Introduction\label{sec:introduction}}
The paradigm of cosmic inflation, proposed to explain the puzzling homogeneity and flatness of the Hot Big Bang Universe~\cite{Guth:1980zm}, has been strikingly successful in predicting the anisotropies of the Cosmic Microwave Background (CMB), measured to great precision by the Planck satellite~\cite{Ade:2015lrj}. This paradigm however, leaves open questions. What guarantees the required flatness of the inflationary potential? How is the inflation sector coupled to the Standard Model (SM) of particle physics? The lack of observable predictions on far sub-horizon scales makes it very difficult to find satisfactory and testable answers to these questions. In this context, a special role is played by pseudo-scalar inflation models, in which the inflaton $\phi$ (the particle driving inflation) couples to the field strength tensor $F_{\mu \nu}$ of massless gauge fields through the derivative coupling $\phi F_{\mu \nu} \tilde F^{\mu \nu}$. This coupling is compatible with a shift-symmetry of the inflaton $\phi$ protecting the flatness if the inflationary potential, it provides an immediate way to couple the inflation sector to a gauge field sector (which could be the SM or a hidden sector) and it leads to distinctive signatures, including a strongly enhanced chiral gravitational wave background~\cite{Cook:2011hg,Dimastrogiovanni:2012ew,Adshead:2013qp}.

The phenomenology of these models, both for abelian and non-abelian gauge fields, has recently received a lot of interest. In both cases, the gauge field sector experiences a tachyonic instability during inflation, leading to an explosive particle production which impacts the predictions of inflation. 
For abelian gauge fields this instability is controlled by the inflaton velocity, implying large effects towards the end of inflation in single-field slow-roll inflation models whereas the CMB scales can be largely unaffected, see Ref.~\cite{Barnaby:2010vf} for an overview. The phenomenology of this model includes a strongly enhanced and non-Gaussian contribution to the scalar and tensor power spectra~\cite{Barnaby:2010vf,Barnaby:2011qe,Barnaby:2011vw,Shiraishi:2013kxa,Cook:2013xea}, which may lead to a distortion of the CMB black body spectrum~\cite{Meerburg:2012id}, primordial black hole (PBH) production~\cite{Linde:2012bt, Domcke:2017fix,Garcia-Bellido:2016dkw} and an enhanced chiral gravitational wave signal in the frequency band of LIGO and LISA~\cite{Cook:2011hg,Barnaby:2011qe,Barnaby:2011vw,Anber:2012du,Domcke:2016bkh,Bartolo:2016ami}. Furthermore, the effective friction induced by the gauge field allows for inflation on rather steep potentials~\cite{Anber:2009ua}. The interplay of the gauge fields with the production of charged fermions has been studied in~\cite{Domcke:2018eki} and validity of the perturbative analysis has been scrutinized in~\cite{Ferreira:2015omg, Peloso:2016gqs}. 
The coupling to non-abelian $\mathrm{SU}(2)$ gauge fields, dubbed chromo-natural inflation (CNI) in~\cite{Adshead:2012kp}, allows for inflationary solutions on steep potentials in the presence of a non-vanishing isotropic background gauge field configuration. An analysis of the perturbations~\cite{Dimastrogiovanni:2012st,Dimastrogiovanni:2012ew,Adshead:2013qp,Adshead:2013nka} revealed an enhanced tensor power spectrum, however to the degree of excluding the model as an explanation for the anisotropies in the CMB. The same conclusion holds in the regime where the scalar field can be integrated out~\cite{Namba:2013kia}, referred to as  gauge-flation~\cite{Maleknejad:2011jw,Maleknejad:2011sq} (see also~\cite{Adshead:2012qe,SheikhJabbari:2012qf}). 
Modifications to the original model can evade this conclusion by employing different inflation potentials~\cite{Caldwell:2017chz,DallAgata:2018ybl}, by enlarging the field content of the model~\cite{Dimastrogiovanni:2016fuu,McDonough:2018xzh} or by considering a spontaneously broken gauge symmetry~\cite{Adshead:2016omu}.

In this paper we study the possibility of a dynamical emergence of CNI, under plausible assumptions that we will discuss in due course. In CNI, the gauge field background is assumed to be homogeneous, isotropic and have a sufficiently large vacuum expectation value, so that the background evolution of the inflaton is dominated by the gauge friction term.  We show how such an isotropic background may develop from the regular Bunch--Davies initial conditions in the far past, providing justification for what is commonly taken for granted in CNI.\footnote{We emphasize that the mechanism presented in this paper is not a definitive solution to the problem of generating a background in CNI models: our arguments are based on a separation of length scales whose validity varies throughout the parameter space and should be explicitly verified in a dedicated lattice simulation.} For small gauge field amplitudes, the non-abelian $\mathrm{SU}(2)$ dynamics reduce to three copies of an abelian gauge group. As the inflaton velocity increases over the course of inflation, the tachyonic enhancement of the gauge fields in the abelian regime triggers a classical, inherently non-abelian background evolution. In this background, only a single helicity component of the gauge field features a regime of tachyonic instability. Contrary to the abelian case, each Fourier mode experiences this instability only for a finite time interval. 
We provide analytical results which make only minimal assumptions about the values of the parameters involved.
For the explicit parameter example which we study numerically, we find the gauge friction term to be subdominant in the non-abelian regime, contrary to the usual assumption in CNI. We emphasize that the transition from an effectively abelian to a non-abelian regime is generic in single field axion inflation models, and naturally removes the tension of the original CNI model with the Planck data by delaying the enhancement of the tensor power spectrum to smaller scales. Moreover, this dynamical transition implies that the catastrophic instability in the scalar sector, arising in part of the parameter space as pointed out in~\cite{Dimastrogiovanni:2012ew}, is generically avoided.

Throughout most of the paper we restrict ourselves to the linearized system of perturbations (see also~\cite{Dimastrogiovanni:2012st,Dimastrogiovanni:2012ew,Adshead:2013qp,Adshead:2013nka}). We however point out the importance of higher-order contributions to the scalar perturbation sector, taking into account that two enhanced helicity 2 gauge field perturbation can source helicity 0 (i.e.\ scalar) modes. We estimate the impact of this on the scalar power spectrum, finding an enhancement which is exponentially sensitive to the inflaton velocity, similar to what was found in the abelian case~\cite{Linde:2012bt}.\footnote{While this paper was being finalized, Refs.~\cite{Dimastrogiovanni:2018xnn,Papageorgiou:2018rfx} appeared, which also study the effects of the nonlinear coupling between the helicity $2$ and helicity $0$ perturbations. We briefly comment on these completely independent results in Sec.~\ref{sec:example}, finding overall good agreement within the expected uncertainties.}

The remainder of this paper is organized as follows. We begin with an executive summary in Sec.~\ref{sec:ExecutiveSummary}, to help  guide the reader through the different points discussed in this paper, followed by an overview on our notation in Sec.~\ref{sec:notation}. In Sec.~\ref{sec:axioninflation}, we review some of the key results and equations of abelian and non-abelian axion inflation, setting the notation for the following sections. Sec.~\ref{sec:background} is dedicated to the study of the emerging non-trivial homogeneous isotropic gauge field background. In Sec.~\ref{sec:linearized} we study the linearized system of perturbations in a general homogeneous isotropic gauge field background. This is applied to a specific parameter example in Sec.~\ref{sec:example}, showing explicitly the transition from the abelian to the non-abelian regime. We compute the resulting scalar and tensor power spectrum, taking into account non-linear contributions. We conclude in Sec.~\ref{sec:conclusions}. Six appendices deal with the derivation of the linearized perturbation equations, including the gravitational modes not included in the main text (App.~\ref{app:fulleom}), the explicit gauge field basis used in our linearized analysis (App.~\ref{app:basis}) details on the computation of the non-linear contributions to the scalar power spectrum (App.~\ref{app:variance_computation}), technical details supporting the analysis of the gauge field background (App.~\ref{app:sec3}), mathematical properties of homogeneous isotropic gauge fields (App.~\ref{app:gaugefields}) and analytical approximations of the Whittaker function describing the enhanced perturbation mode of the non-abelian regime (App.~\ref{app:asymptotics}).

\subsection{Executive summary \label{sec:ExecutiveSummary}}

To help guide the reader through the different aspects of our analysis, we give a preview of our key equations and results in this section, skipping all technical details. These results will be derived in the subsequent sections.

Our main focus will lie on the linearized regime of $\mathrm{SU}(2)$ axion inflation. The pseudo-scalar (axion-like) inflaton $\phi$ is coupled to the field strength tensor of the $\mathrm{SU}(2)$ gauge fields through the derivative coupling $\phi F_{\mu \nu} \tilde F^{\mu \nu}$. In the linearized regime, the $\mathrm{SU}(2)$ gauge field\footnote{We adopt a common abuse of notation by referring to the gauge potential $A^a_\mu$ as the \emph{gauge field}.} $A_\mu^a$ can be decomposed into a homogeneous isotropic background $f(\tau)$ and perturbations $\delta A_\mu^a$:\footnote{See Sec.~\ref{sec:notation} for our index conventions.} 
\begin{equation}
 A_\mu^a(\tau, \vec x) = f(\tau) \delta^a_\mu + \delta A^a_\mu (\tau, \vec x) \,.
\end{equation}
The classical evolution of the background gauge field is governed by
\begin{equation}
 \frac{\textrm{d}^2}{\textrm{d} \tau^2}(e f) + 2 (e f)^3 - \frac{2 \xi}{(- \tau)} (e f)^2 = 0 \,,
\end{equation}
where $e$ denotes the $\mathrm{SU}(2)$ gauge coupling and $\xi$, encoding the velocity of the inflaton and defined in Eq.~\eqref{eq:rev_xi}, is typically taken to be ${\cal O}(1 - 10)$ during the last 60 e-folds of inflation. 

For a slowly evolving inflaton, $\xi \simeq const.$, the classical background evolution is focused around two attractor solutions\footnote{In Sec.~\ref{subsec:dynamical_background} and~\ref{subsec:gaugefluctuations} we comment on the difference between the background evolution studied in this paper and the ‘magnetic drift regime’ of~\cite{Adshead:2012kp,Dimastrogiovanni:2012ew, Adshead:2013qp,  Adshead:2013nka}.},
\begin{equation}
 e f(\tau) = c_i \, \xi (- \tau)^{-1} \quad \text{with  } c_0 = 0 \,, \quad c_2 = \tfrac{1}{2} (1 + \sqrt{1 - 4/\xi^2}) \,,
\end{equation}
where the latter is only possible for $\xi \geq 2$. 
Beyond this classical motion, the background is also sourced by the fluctuations $\delta A^a_\mu$. These dominate the background evolution around the $c_0$-solution, and eventually trigger the transitions from the $c_0$ to the $c_2$ solution. For details see Sec.~\ref{sec:background}.

Out of the six physical degrees of freedom of the gauge field, the most important is the helicity $+2$ mode $w_{+2}$, which couples directly to the metric tensor mode, sourcing chiral gravitational waves (see also~\cite{Dimastrogiovanni:2012ew,Adshead:2013qp}). In the $c_2$ background solution, its equation of motion 
\begin{equation}
 \frac{\textrm{d}^2}{ \textrm{d} x^2} w_{+2}(x) + \left(1 - \frac{2 \xi}{x} + 2 \left(\frac{\xi}{x} - 1 \right) \frac{c_2 \xi}{x} \right) w_{+2}(x) = 0 \,,
 \label{eq:eom2Intro}
\end{equation}
(where $x = - k \tau$ with $k$ the momentum of the Fourier mode $w_{+2}$) has an exact solution in terms of the Whittaker function in the limit of constant $\xi$:
\begin{equation}
w_{+2}^{(e)}(x)=e^{(1+c_{2})\pi\xi/2}W_{-i \kappa,\,-i\mu}\left(-2ix\right) \,,
\label{eq:AnalyticalSolutionIntro}
\end{equation}
with $\kappa = (1 + c_2) \xi$ and  $\mu = \xi \sqrt{2 c_2 - (2 \xi)^{-2}}$. Due to a tachyonic instability in Eq.~\eqref{eq:eom2Intro} in between $x_\text{max,min} = (1 + c_2 \pm \sqrt{1 + c_2^2})\xi$, this solution is strongly enhanced just before horizon crossing. At and after horizon crossing, Eq.~\eqref{eq:AnalyticalSolutionIntro} is well approximated by
\begin{equation}
  \, w_{+2}(x) \simeq 2 e^{(\kappa - \mu) \pi} \sqrt{\frac{ x}{ \mu}} \cos\left[\mu \ln(2 x) + \theta_0 \right] \,.
\end{equation}
With this solution at hand, we can approximately analytically solve the 
coupled system of helicity $+2$ gauge fields and gravitational waves 
(see Eq.~\eqref{eq:+2mode}), obtaining for the gravitational wave amplitude $w_{+2}^{(\gamma)}$ after freeze-out on super-horizon scales
\begin{equation}
 x \, w_{+2}^{(\gamma)}(x) \big|_{x \geq 1} \simeq -  \frac{2 H \xi^{5/2}}{e } 2^{3/4} e^{(2 - \sqrt{2}) \pi \xi} \,,
\end{equation}
and consequently for the amplitude of the chiral stochastic gravitational wave background (see Eq.~\eqref{eq:GWapprox} for details),
\begin{equation}
 \Omega_{\rm GW} \simeq \frac{1}{24} \Omega_r \left( \frac{\xi^3 H}{\pi M_P} \right)^2_{\xi = \xi_\text{cr}}  \left( \frac{2^{7/4} H}{e} \xi^{-1/2} e^{(2 - \sqrt{2}) \pi \xi}  \right)^2_{\xi = \xi_\text{ref}} \,.
\end{equation}

The scalar perturbations are not enhanced at the linear level in the parameter space in the focus of this work. However, non-linear contributions, sourced by two enhanced helicity $+2$ gauge field modes, yield an exponentially enhanced contribution to the scalar power spectrum. We report analytical estimates for the resulting contribution to the scalar power spectrum in Eq.~\eqref{eq:Ds_2nd}. 

Combining the results on the background evolution and the analysis of the perturbations, the following picture emerges: At early times, deep in de-Sitter space with small values of $\xi$, the non-abelian axion inflation model reduces to the abelian regime. Two factors are necessary to trigger the transition to the inherently non-abelian regime: The $c_2$ solution of the classical background emerges at $\xi \geq 2$ and the gauge field fluctuations have to reach a sufficient amplitude to trigger initial conditions for the classical motion which actually lead to the $c_2$ solution. We emphasize that the linearized description of this transition is based on two assumptions, which we will justify in Secs.~\ref{sec:axioninflation} and \ref{sec:linearized}, respectively: (i) the gauge fields sourced in the abelian regime are approximately homogeneous over a Hubble patch and (ii) the gauge field fluctuations in the non-abelian regime are small compared to this homogeneous background.\footnote{A quantification of the resulting uncertainties on our final results most likely requires a lattice simulation of the full non-linear theory in de Sitter space. Current state-of-the-art techniques~\cite{Adshead:2015pva,Cuissa:2018oiw} can however only evolve this system for a few Hubble times, insufficient to address this question. We hope that this work will trigger future research in this direction. }

As a proof of concept, we study a parameter example in Sec.~\ref{sec:example} in which the CMB scales exit the horizon in the abelian regime at relatively small $\xi$ (thus ensuring agreement with all CMB observations), whereas smaller scales exit the horizon after the transition to the inherently non-abelian regime. The resulting scalar and tensor power spectra are strongly enhanced at small scales, see Fig.~\ref{fig:spectraPS} and \ref{fig:spectraGW}.

\subsection{Notation and conventions \label{sec:notation}}
We summarize here the main conventions used throughout this paper. The metric signature is $(-,+,+,+)$ and we mostly employ conformal time $\tau$ instead of cosmic time $t$. Derivatives with respect to the conformal time are denoted by a prime, while derivatives with respect to the cosmic time are denoted by a dot. We often use the dimensionless variable
\begin{equation}
x = - k \tau \,.
\end{equation}
The first (second) derivative of the functional 
$\mathcal{S}(\phi)$
with respect to the field $\phi$ is denoted by 
$\mathcal{S}_{, \phi}$ ($\mathcal{S}_{, \phi \phi}$). 
The Fourier transform of the function  
$F(t,\vec{x})$
(or similarly for
$F(\tau,\vec{x})$)
is given by
\begin{equation}
F(t,\vec{x}) = \int \frac{\textrm{d}^3 \vec{k}}{\left(2 \pi\right)^{3/2}} \, \tilde{F}(t,\vec{k}) e^{-i \vec{k} \cdot \vec{x}} \,.
\end{equation}
(Anti-)Symmetrization is defined as
\begin{equation}
S_{(ij)} = \frac{S_{ij} + S_{ji}}{2} \,, \qquad A_{[ij]} = \frac{A_{ij} - A_{ji}}{2} \,.
\end{equation}
Greek letters refer to space-time indices ($\mu = 0, 1, 2, 3$), roman letters from the beginning of the alphabet refer to gauge indices (e.g. $a = 1, 2, 3$ for a $\mathrm{SU}(2)$ gauge group) and roman letters from the middle of the alphabet refer to spatial indices ($i = 1, 2, 3$). We use the usual conventions for $\mathrm{SU}(N)$ gauge fields $\mathbf{A}_\mu = A^a_\mu \mathbf{T}_a$. The field strength tensor is defined as 
\begin{equation}
	F^a_{\mu\nu} \mathbf{T}_a \equiv \mathbf{F}_{\mu\nu} \equiv \frac{i}{e}[\mathbf{D}_{\mu},\mathbf{D}_{\nu} ] = \partial_\mu \mathbf{A}_{\nu} - \partial_\nu \mathbf{A}_{\mu} - i e[\mathbf{A}_{\mu},\mathbf{A}_{\nu}]  \; ,
\end{equation}
where $e$ is the coupling constant, $\mathbf{T}_a$ is the $a$-th generator of the group, $\mathbf{A}_{\mu} \equiv A^a_{\mu} \mathbf{T}_a$ and where we have used the definition of covariant derivative:
\begin{equation}
	\mathbf{D}_{\mu} \equiv \partial_{\mu} - i e \mathbf{A}_{\mu} \equiv \partial_{\mu} - i e A^a_{\mu} \mathbf{T}_a\; .
\end{equation} 
With the commutation relation
\begin{equation}
	 [\mathbf{T}_a, \mathbf{T}_b] = i \varepsilon_{abc}  \mathbf{T}_c \; ,
\end{equation}
 the field strength can be expressed as
\begin{equation}
	\label{eq:field_strength}
	F^a_{\mu\nu} = \partial_\mu A^a_{\nu} - \partial_\nu A^a_{\mu} + e \varepsilon^{abc} A^b_{\mu}A^c_{\nu} \; .
\end{equation}
The dual tensor to the field strength is defined as
\begin{equation}
\tilde{F}_a^{\mu \nu} = \frac{\varepsilon^{\mu \nu \rho \sigma}}{2 \sqrt{-g}} F^a_{\rho \sigma} \,,
\end{equation}
where we use the convention $\varepsilon^{0123} = 1$ for the anti-symmetric tensor. Additional conventions related to the computation of the equations of motion in the ADM formalism are reported in App.~\ref{app:fulleom}.

\section{The role of gauge fields during inflation \label{sec:axioninflation}}

\subsection{The abelian limit \label{sec:abelian}}

In the limit of small gauge couplings and/or small gauge field amplitudes, any non-abelian $\mathrm{SU}(N)$ gauge group will (approximately) act as $N^2 - 1$ copies of an abelian group. Let us thus, also for later reference, begin by briefly reviewing the case of a pseudoscalar inflaton $\phi$ coupled to an abelian gauge field $A_\mu$~\cite{Turner:1987bw,Garretson:1992vt,Anber:2006xt} (for recent analyses see e.g.\ \cite{Barnaby:2010vf,Barnaby:2011qe,Domcke:2016bkh,Jimenez:2017cdr}),
\label{eq:rev_action_pseudoscalar}
\begin{equation}
\mathcal{S}= \int \textrm{d}^4 x \sqrt{|g|} \left[M_P^2 \frac{R}{2} -\frac{1}{2} \, \partial_\mu \phi \, \partial^\mu \phi - V(\phi) - \frac{1}{4} F_{\mu \nu} F^{\mu \nu} - \frac{\alpha}{4 \Lambda} \phi F_{\mu \nu} \tilde{F}^{\mu \nu} \right ]\, .
\end{equation}
Here $V(\phi)$ denotes the inflaton potential, $F_{\mu \nu} \, (\tilde F^{\mu \nu})$ is the (dual) field-strength tensor of the abelian gauge group and $\alpha/\Lambda$ encodes the coupling between the inflaton and the gauge field.\footnote{Identifying $\phi$ as an axion of a global $\mathrm{U}(1)$ symmetry constrains the coupling $\alpha/\Lambda$. For $\alpha = e^2/(4 \pi)$, the scale $\Lambda$ indicates the scale at which coupling of the axion to the chiral anomaly becomes relevant and the effective theory should be replaced by a more fundamental theory. This scale $\Lambda$ should lie above the Hubble scale of inflation. On the other hand, the scalar potential breaking the axion shift symmetry through non-perturbative contributions is periodic in $2 \pi \Lambda$, with slow-roll inflation requiring $\Lambda \gtrsim M_P$ (the extra friction arising from the last term in Eq.~\eqref{eq:rev_action_pseudoscalar} cannot evade this conclusion for the parameter values considered here). A UV-completion is thus far from obvious (see Ref.~\cite{Agrawal:2018mkd} for recent progress), and we here retain the effective field theory point of view, treating $V(\phi)$ and $\alpha/\Lambda$ as independent free parameters.}

Since $F_{\mu \nu} \tilde{F}^{\mu \nu}$ is CP-odd, it will prove useful to work with the Fourier-modes of the gauge field in the chiral basis, 
\begin{equation}
	\label{eq:Fourier_A}
	 \vec{A}(\tau, \vec{x}) = \int \frac{\textrm{d}^3 \vec{k}}{(2\pi)^{3/2}} \left[ \sum_{\lambda = \pm} \tilde{A}_{\lambda}(\tau, \vec{k}) \vec{e}_\lambda (\vec{k}) \hat{a}(\vec{k}) e^{i\vec{k}\cdot\vec{x}} + \tilde{A}_{\lambda}^*(\tau, \vec{k}) \vec{e}^*_{\lambda^\prime} (\vec{k}) \hat{a}^\dagger (\vec{k}) e^{-i\vec{k}\cdot\vec{x}} \right] \; ,
\end{equation}
with the polarization vectors fulfilling $\vec e_\lambda(\vec k) \cdot \vec k = 0$, $\vec e_\lambda(\vec k) \cdot \vec e_{\lambda'}(\vec k) = \delta_{\lambda \lambda'}$ and  $i \vec k \times \vec e_\lambda(\vec k) = \lambda  k\vec e_{\lambda}(\vec k)$ with $k = |\vec k|$. $\hat a^{(\dagger)}$ denotes the annihilation (creation) operator and $\tilde{A}_\lambda(\vec k)$ the corresponding Fourier coefficients. Here $\vec k$ and $\vec x$ denote the co-moving wave vector and coordinates,
\begin{equation}
 \textrm{d} s^2 = g_{\mu \nu }\textrm{d}x^\mu \textrm{d}x^\nu = a^2 (\tau) (-\textrm{d} \tau^2 + \textrm{d} \vec{x}^2 ) = a^2 (\tau) \eta_{\mu \nu }\textrm{d}x^\mu \textrm{d}x^\nu \,,
\end{equation}
with $a(\tau)$ the metric scale factor and $\eta_{\mu \nu}$ the Minkowski metric. Adopting temporal gauge, we have moreover set $A_0 = 0$. The equations of motion for the homogeneous inflaton field and for the gauge field then read,
\begin{align}
\label{eq:rev_motion}
\ddot \phi + 3 H \dot{\phi} + \frac{\partial V}{\partial \phi} & = \frac{\alpha}{\Lambda} \langle \vec{E} \cdot \vec{B} \rangle \ ,\\
\label{eq:rev_motionAfourier}
 \tilde{A}''_{\pm}(\tau,\vec{k})  + \left[ k^2 \mp \, k  \, \frac{2\xi}{-\tau} \right]\tilde{A}^{a}_{\pm}(\tau,\vec{k}) & =  \ 0 \ , 
\end{align}
where we have introduced the physical `electric' and `magnetic' fields as
\begin{equation}
	\label{eq:rev_electric_magnetic}
	\vec{E} \equiv -\frac{1 }{a^2} \frac{\textrm{d} \vec{A}}{\textrm{d} \tau}  \ , \qquad \qquad  \vec{B} \equiv \frac{1}{a^2} \vec{\nabla} \times \vec{A}  \ .
\end{equation}
The expectation values $\langle \bullet \rangle$ in Eq.~\eqref{eq:rev_motion} indicate the spatial average.
The parameter $\xi$, encoding the tachyonic instability in Eq.~\eqref{eq:rev_motionAfourier}, is given by
\begin{equation}
\xi \equiv \frac{\alpha \, \dot{\phi}}{2 \, \Lambda \, H} \ ,
\label{eq:rev_xi}
\end{equation}
with $H = \dot a/a$ denoting the Hubble rate during inflation. In the following we will consider $\dot \phi > 0$ and hence $\xi > 0$ without loss of generality.

In the slow-roll regime, $|\ddot \phi| \ll H |\dot \phi|, | V_{,\phi}|$, we can neglect the change of $\xi$ on the time-scales relevant in Eq.~\eqref{eq:rev_motionAfourier}. This enables us to approximately decouple the equations and solve the equation of motion for the gauge fields analytically, with a parametric dependence on the parameter $\xi$,
\begin{equation}
 \tilde A_\lambda(\tau,k) = \frac{1}{\sqrt{2k}} e^{\lambda \pi \xi/2} W_{- i \lambda \xi, 1/2}(2 i k \tau) \,.
 \label{eq:rev_Whittaker}
\end{equation}
Here $W_{k,m}(z)$ is the Whittaker function. For $\lambda = +$ this describes an oscillatory function which starts to grow exponentially around horizon crossing ($k |\tau| \sim 1$), before becoming approximately constant on super-horizon scales, see Fig.~\ref{fig:Whittakerabelian}. The $\lambda = -$ mode does not exhibit this tachyonic instability and remains oscillatory.
The overall normalization is obtained by matching to the Bunch--Davies vacuum in the infinite past, namely 
\begin{equation}
\tilde{A}_{\lambda}(\tau,k)\approx\frac{e^{-ik\tau}}{\sqrt{2k}}\quad\textrm{as }\tau\to-\infty.\label{eq:BDvac}
\end{equation}

\begin{figure}
 \center
 \includegraphics{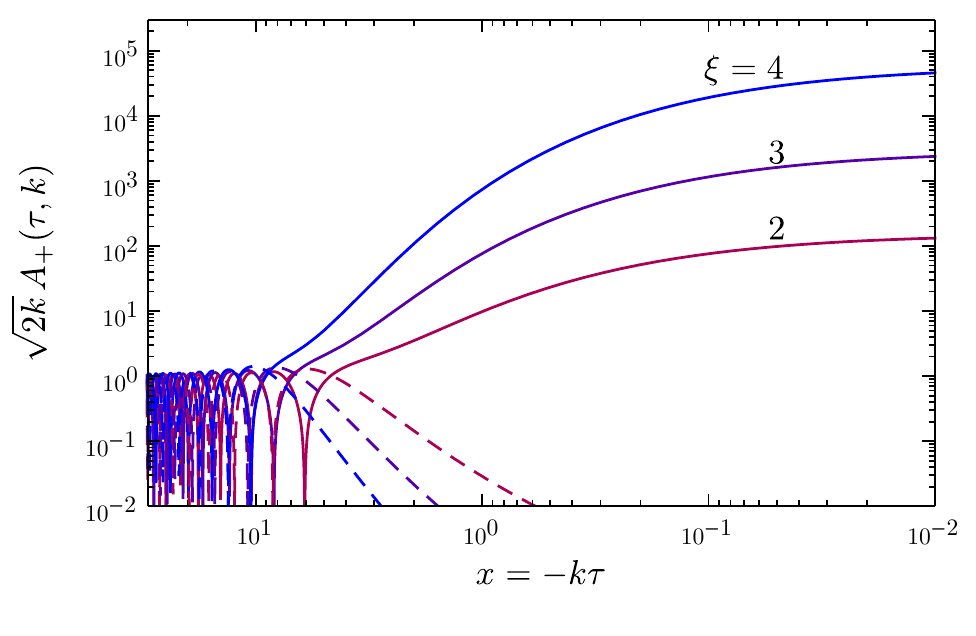}
 \caption{Enhanced helicity mode described by Eq.~\eqref{eq:rev_Whittaker} for different values of $\xi$. The solid (dashed) curves show the absolute value of growing (decaying) component. For a suitable choice of the complex phase of the Whittaker function (see also App.~\ref{app:asymptotics}), the growing (decaying) component is the real (imaginary) part.}
 \label{fig:Whittakerabelian}
\end{figure}

The explicit solution~\eqref{eq:rev_Whittaker} in turn enables us to explicitly evaluate the right-hand side of Eq.~\eqref{eq:rev_motion}. For $\xi \gtrsim 3$ this is well approximated by
\begin{equation}
 \langle \vec E \cdot \vec B \rangle  \simeq - 2.4 \cdot 10^{-4} \frac{H^4}{\xi^4} e^{2 \pi \xi} \,.
 \label{eq:rev_EB}
\end{equation}
Recalling the definition of $\xi$ in Eq.~\eqref{eq:rev_xi}, this enables us to (numerically) solve Eq.~\eqref{eq:rev_motion}. 
The resulting evolution of 
this system with the inflaton coupled to an abelian gauge field 
has been studied e.g.\ in Refs.~\cite{Anber:2009ua,Barnaby:2010vf,Cook:2011hg,Barnaby:2011qe,Anber:2012du,Linde:2012bt,Domcke:2016bkh}, obtaining the following key results:
\begin{itemize}
 \item The tachyonic enhancement of the $A_+$ modes leads to a significant backreaction in the equation of motion for $\phi$, which is exponentially sensitive to $\xi$. This can be interpreted as an additional friction term for the inflaton.
 \item In single field inflation models, $V_{,\phi}/V$ typically increases over the course of inflation, implying an increasing value of $\xi$. Constraints on non-gaussianities in the CMB impose $\xi_\text{CMB} \lesssim 2.5$ whereas the backreaction mentioned above dynamically limits the growth of $\xi$ over the course of 50-60 e-folds of inflation, typically leading to $\xi \lesssim 10$.\footnote{Note that for $\xi \gtrsim 4.5$, perturbative control has been shown to break down~\cite{Ferreira:2015omg, Peloso:2016gqs}.}
 \item The presence of the gauge fields leads to an additional source term for the scalar and tensor power spectra. Due to the increasing value of $\xi$ this effect is typically largest at small scales (i.e.\ towards the end of inflation).
\end{itemize}

For later reference, let us discuss in detail three quantities which will be relevant for the analysis carried out in the next parts of this work: the gauge field variance, the homogeneity scale and decoherence time.

\textit{Variance}. Isotropy ensures that averaged over the whole universe $\langle \vec A \rangle = 0$, but we may estimate the magnitude of the gauge fields in any Hubble patch by computing the variance,
\begin{align}
 \langle 0 | \vec A(\tau, \vec x) \vec A(\tau, \vec x) | 0 \rangle^{1/2} & = \left(\int  \frac{\textrm{d}^3 \vec{k} }{(2\pi)^3}  \left[ \tilde{A}_{+}(\vec{k})  \tilde{A}_{+}^*(\vec{k}) + \tilde{A}_{-}(\vec{k})  \tilde{A}_{-}^*(\vec{k} ) \right] \right)^{1/2} \nonumber \\
 & =  e^{\pi \xi/2} \sqrt{ \int \frac{ \textrm{d} k}{2\pi^2} \, \frac{k}{2} \left| W_{-i\xi,1/2}(2ik\tau)\right|^2 } \nonumber \\
 &  = \frac{aH}{2\pi} e^{\pi \xi/2} \sqrt{ \int_0^{x_{UV}} \textrm{d} x\, x \left|W_{-i\xi,1/2}(- 2 i x)\right|^2 } \nonumber \\
 & \simeq \frac{1}{(-\tau)}\,  0.008 \times e^{2.8 \xi} \,.	 \label{eq:variance_abelian}
\end{align}
Here we set the upper integration limit to $x_{UV} = 2 \xi$, so as to not count the vacuum contribution. In agreement with Ref.~\cite{Jimenez:2017cdr}, we find that all the integrals of this type performed in this paper are rather insensitive to the choice of $x_{UV}$ for $\xi \gtrsim 3$.

\textit{Homogeneity}. The energy density stored in the gauge fields can be computed as  $(\vec E^2 + \vec B^2)/2$ with
\begin{align}
 \frac{1}{2} \langle \vec E^2 \rangle & = \frac{1}{2 a^4} \int \frac{\textrm{d}^3 \vec{k}}{(2 \pi)^3} \bigg|\frac{\partial \tilde{A}_+^k(\tau)}{\partial \tau}\bigg|^2  = \frac{H^4}{8 \pi^2}e^{\pi \xi} \int_0^{2 \xi} \textrm{d} x \,\,x^3 \bigg| \frac{\partial W_{-i \xi, 1/2}(- 2 i x)}{\partial x} \bigg|^2 \,,\\
  \frac{1}{2} \langle \vec B^2 \rangle & = \frac{1}{2 a^4} \int \frac{\textrm{d}^3 \vec{k}}{(2 \pi)^3} k^2 \bigg|\tilde{A}_+^k(\tau)\bigg|^2  = \frac{H^4}{8 \pi^2}e^{\pi \xi} \int_0^{2 \xi} \textrm{d} x \,\,x^3 \bigg|  W_{-i \xi, 1/2}(- 2 i x)\bigg|^2 \,.
\end{align}
We can now determine for each value of $\xi$, the value of $x_{90}$ for which $90\%$ of the energy is contained in modes with $x < x_{90}$, see  left panel of Fig.~\ref{fig:propertiesabelian}. For any $x > x_{90}$, we can then safely model the gauge field as a homogeneous background field. We conclude that for (the phenomenologically interesting) large values of $\xi \gtrsim 3$, the homogeneity scale lies at $x_{90} \lesssim {\cal O}(1)$, so that on sub-horizon scales, this gauge field acts like a homogeneous background field. This approximation becomes better for larger values of $\xi$. For reference, the dashed line in Fig.~\ref{fig:propertiesabelian} indicates the value of $x$ for which $95\%$ of the energy is contained within $x < x_{95}$. 
\begin{figure}
\subfigure{
\includegraphics{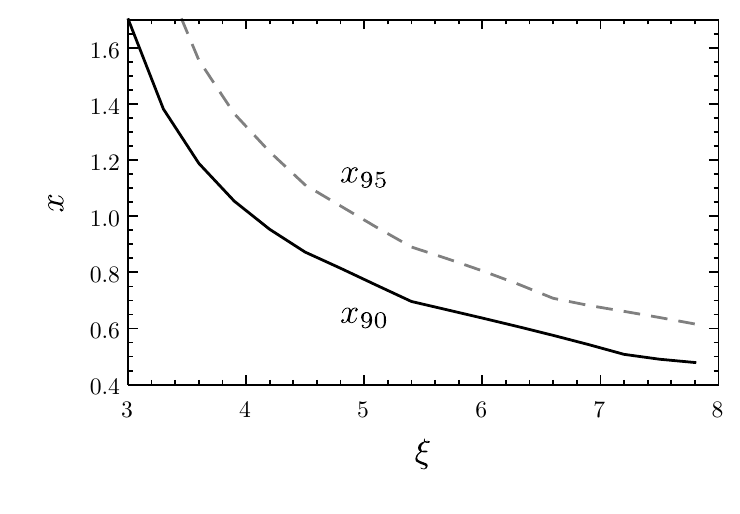}
}
\hfill
\subfigure{
\includegraphics{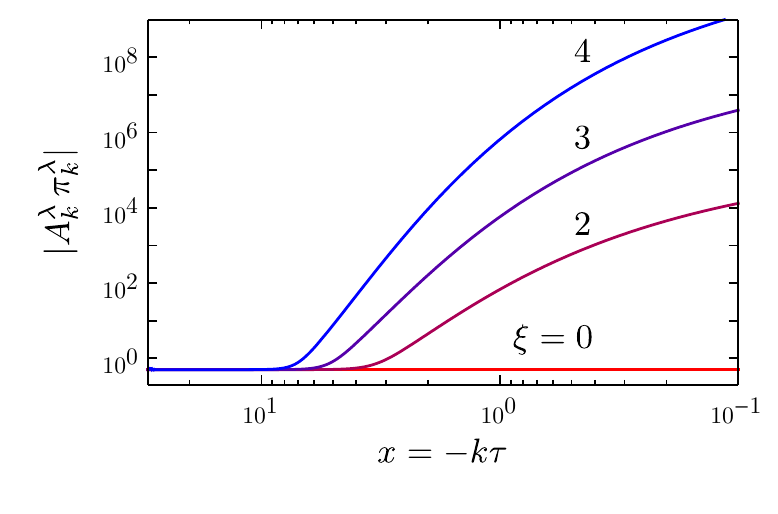}
}
\caption{Properties of the gauge fields in the abelian regime. \textit{Left panel:} Scale of homogeneity for different values of $\xi$, defined such that $90\%$ ($95\%$) of the energy is contained in modes with $- k \tau = x < x_{90}$ ($x < x_{95}$). \textit{Right panel}: $|A^\lambda_k \pi^\lambda_k|$ for $\xi = 0, 2, 3, 4$. Decoherence occurs once $|A^\lambda_k \pi^\lambda_k| \gg 1$.}
  \label{fig:propertiesabelian}
\end{figure}

\textit{Decoherence.} For any given mode decoherence is reached if $|\tilde{A}^\lambda_k \pi^\lambda_k| \gg 1$~\cite{Guth:1985ya}. Using the free-field expression for the conjugate momentum, $\pi^\lambda_k = \partial_0 \tilde{A}^\lambda_k$, the right panel of Fig.~\ref{fig:propertiesabelian} demonstrates that decoherence is reached at $x \sim \xi$. As a further check, in order to establish the transition to the classical behaviour, we computed the number of particles $n_k$ in each mode (see~\cite{Brax:2010ai}) and we checked at which point the regime $n_k \gg 1$ is reached. The results agree with those shown in Fig. \ref{fig:propertiesabelian}\,: decoherence is reached at $x \sim \xi$.

In summary, we find that in the abelian limit, any Hubble patch develops a classical, approximately homogeneous gauge field background, whose average magnitude grows exponentially with $\xi$ as indicated in Eq.~\eqref{eq:variance_abelian}. In the next section, we will highlight the key changes to this picture in the non-abelian regime. 

\subsection{Non-abelian regime}

Let us now consider the same action as in Eq.~\eqref{eq:rev_action_pseudoscalar}, but now in the case of an $\mathrm{SU}(2)$ gauge group,
\begin{align}
\mathcal{S} & = \int \textrm{d}^4 x \sqrt{|g|} \left[M_P^2 \frac{R}{2} -\frac{1}{2} \, \partial_\mu \phi \, \partial^\mu \phi - V(\phi) - \frac{1}{4} F^a_{\mu \nu} F_a^{\mu \nu} - \frac{\alpha}{4 \Lambda} \phi F^a_{\mu \nu} \tilde F_a^{\mu \nu} \right ] \nonumber \\
& \equiv \int d^4 x \, \sqrt{-g} \, \left[\mathcal{L}_{\rm EH} + \mathcal{L}_{\phi} + \mathcal{L}_{\rm YM} + \mathcal{L}_{\rm CS}\right] \,. \label{eq:action}
\end{align}
The resulting equations of motion for the homogeneous inflaton field $\phi(\tau)$ and the gauge fields $A^a_\mu(\tau,\vec x)$ read:
\begin{equation}
	\ddot{\phi} + 3H \dot{\phi} + V_{,\phi} + \frac{\alpha}{4 \Lambda} \, \frac{\varepsilon^{\mu \nu \rho \sigma} }{ a^3(t)} \, F^a_{\mu\nu} F^a_{\rho\sigma} = 0 \;,
	\label{eq:phib}
\end{equation}
and
\begin{equation}
\begin{aligned}
\label{eq:eq_motion_final}
	\eta^{\nu\sigma} \left\{\dAlemb A_\sigma^a - \partial_\sigma \left(\partial_\mu \eta^{\mu\rho} A_\rho^a \right)+ e \varepsilon^{a b c} \eta^{\mu\rho} \left[ 2A_\rho^b \partial_\mu A_\sigma^c + \left( \partial_\mu A_\rho^b\right) A_\sigma^c - A_\mu^b \partial_\sigma A_\rho^c \right] + \right. \\
	+ \left. e^2 \eta^{\mu\rho} \left[ A_\rho^a \left( A_\mu^b A_\sigma^b \right) - A_\sigma^a \left( A_\mu^b A_\rho^b \right) \right] \right\} + \frac{\alpha}{2 \Lambda} \phi^\prime \varepsilon^{0 \nu j k} \left[2 \partial_j A^a_k + e \varepsilon^{abc} A^b_j A^c_k\right] = 0 \; ,
	\end{aligned}
\end{equation}
where we have introduced the $\dAlemb$-operator defined as usual as $\dAlemb \equiv g^{\mu\nu} \partial_{\mu} \partial_{\nu}$ which here is expressed in co-moving coordinates.

The non-linear equation~\eqref{eq:eq_motion_final} is highly sensitive to the presence of a gauge field background as described in Sec.~\ref{sec:abelian}. An exact treatment of the system requires solving the non-linear coupled system of equations of motions in an exponentially expanding background, a very challenging task. Instead, we will work in a linear approximation (as in Refs.~\cite{Dimastrogiovanni:2012ew,Adshead:2013qp,Adshead:2013nka}), expanding the gauge fields around a homogeneous background, denoted by $A^{(0)}(\tau)$, so that 
\begin{equation}
\begin{aligned}
\label{eq:background_plus_linear}
A(\tau, \vec x) = A^{(0)}(\tau) + \delta A(\tau, \vec x).
\end{aligned}
\end{equation}
We will discuss the (classical) evolution of the background in Sec.~\ref{sec:background} and the (quantum) evolution of the fluctuations in Sec.~\ref{sec:linearized}. This treatment is valid as long as the evolution of the background is indeed governed by the classical equation of motion, i.e.\ as long as the growth of the fluctuations does not overcome the classical motion.  A similar condition ensures the washout of the initial inhomogeneities: As discussed above, the energy stored in the gauge fields enhanced during the abelian regime is peaked on super-horizon scales. This physical scale arises from a dynamical equilibrium between a continuous re-sourcing of the background gauge field by (enhanced) horizon-crossing modes and the red-shifting of longer wavelength modes. Consequently, a suppression of the growth of the gauge-field fluctuations in the non-abelian regime (with respect to their abelian counterpart) diminishes the supply of modes sourcing the peak in the Fourier spectrum, leading to a red-shift of this inhomogeneity scale to larger, far super-horizon scales. To make the overall picture clear from the start, we highlight in the following some of the key results, the derivation of these will follow in Secs~\ref{sec:background} and \ref{sec:linearized}, correspondingly. 

\begin{figure}[t]
\centering
 \includegraphics[width = 0.7 \textwidth]{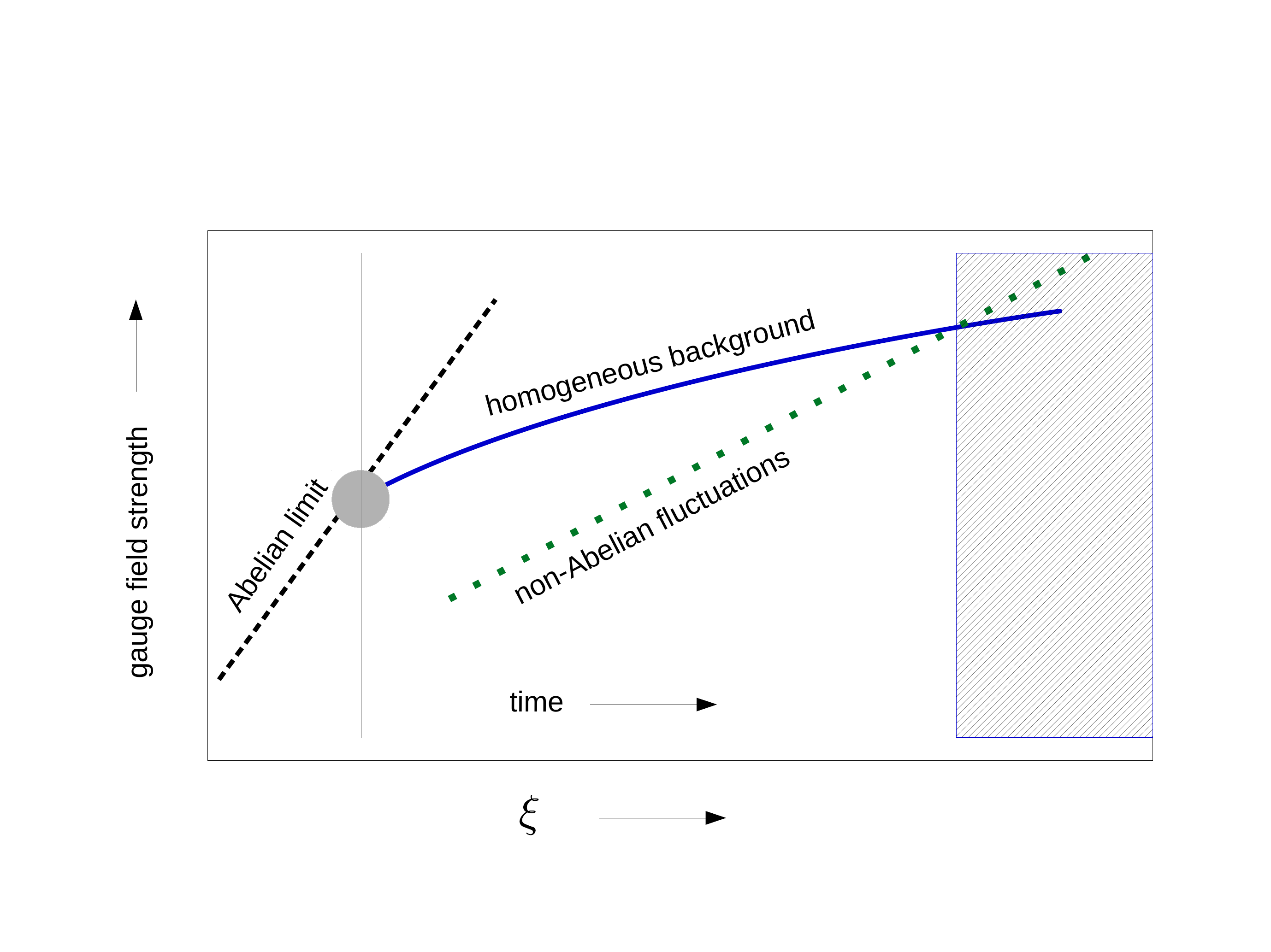}
 \caption{Sketch of the evolution of the average magnitude of the gauge fields from the abelian to the non-abelian regime. The vertical line marks the transition from the abelian limit to the full non-abelian theory, the gray circle indicates the requirement of matching the initial conditions accordingly. In the non-abelian regime, the fluctuations grow slower, but may nevertheless at some point overcome the classically evolving background. This region of parameter space is beyond the scope of the present paper, as indicated by gray shaded region. }
 \label{fig:sketch}
\end{figure}

 We find that the background field dynamically evolves towards an isotropic configuration with two distinct asymptotic behaviours. On the one hand, for small initial conditions, the co-moving background evolves towards a constant value, and thus remains small compared to tachyonically enhanced fluctuations, see Eq.~\eqref{eq:variance_abelian}. 
 In this regime, we are essentially back in the abelian limit, i.e.\ the fluctuations are well described by Eq.~\eqref{eq:rev_Whittaker} with 3 enhanced  and 3 oscillating modes.\footnote{ One may worry about the justification of the linearization~\eqref{eq:background_plus_linear} in this regime. From Eq.~\eqref{eq:eq_motion_final}, we note that in the limit $A^{(0)}(\tau) \rightarrow 0$, a necessary condition for the linearization to be valid is $e \, (\delta A)^2  \ll \partial_\mu \delta A$, or in other words $ e \, \delta A \ll k = x/(- \tau)$, indicating the regime where the non-abelian terms become irrelevant. For modes crossing the horizon  ($x = 1$), this condition holds if
 \begin{equation}
  0.008 \times \exp(2.8 \, \xi) \ll 1/e \,,
  \label{eq:Conditionabelian}
 \end{equation}
where we have inserted Eq.~\eqref{eq:variance_abelian}. For far super-horizon modes, the non-abelian terms become more important. However, at this point due to a red-shift in momentum and a decay in the amplitude, the contribution of these modes to e.g.\ the variance of the energy density is negligible. Note that the condition~\eqref{eq:Conditionabelian} is not sufficient to justify the linearization of the equation of motion for the inflaton~\eqref{eq:phib}. In the abelian regime, the last term contains at least two powers of $\delta A_\mu^a$, and its relative importance will depend on the coupling strength $\alpha/\Lambda$. We will return to the importance of these non-linear effects in detail in Sec.~\ref{sec:example}.} On the other hand, for sufficiently large initial conditions (and only if $\xi \geq 2$), there is an asymptotic solution for the background which, in terms of the comoving gauge field $A^{(0)}$, grows as $1/|\tau|$. 
We stress that this background is driven by classical motion and, contrary to the approximately homogeneous gauge field formed in the abelian case, it is not sourced by super-horizon fluctuations. In this regime, the background significantly modifies the equation of motion for the fluctuations. Consequently, we find that only a single gauge field mode is enhanced, and the enhancement is moreover significantly suppressed compared to the abelian case. Given the strong gauge field production in the abelian regime and the increasing value of $\xi$ over the course of inflation, eventually the growing background solution will be triggered. The point at which this happens depends on the gauge coupling $e$ and the CP-violating coupling $\alpha/\Lambda$. A sketch of this overall picture is given in Fig.~\ref{fig:sketch}. 

\section{The non-abelian homogeneous gauge field background \label{sec:background}}

In this section we study the classical evolution of the homogeneous non-abelian gauge-field background. In Sec.~\ref{subsec:Isotropic} we discuss three distinct types of solutions. Among these, of particular interest is the ``$c_{2}$\nobreakdash-type'' of solution which, in physical coordinates, describes a background gauge field whose magnitude, for any fixed $\xi$, approaches a positive constant. This is similar to the background field assumed in CNI (see~\cite{Dimastrogiovanni:2012st,Dimastrogiovanni:2012ew,Adshead:2013qp,Adshead:2013nka}). (In comoving coordinates, the background field grows in proportion to the scale factor $a(\tau)$, or equivalently\footnote{For the time scales we are considering, $H$ is effectively constant.} in proportion to $(-\tau)^{-1}$.) We will show that this solution is only possible for $\xi\geq2$, and it is stable under perturbations. The key result of this section is to describe the initial conditions necessary to reach this type of solution. To this end, we discuss the different types of solutions both at early and at late times. We close \prettyref{subsec:Isotropic} by showing a phase-space diagram of these solutions, illustrating the different types of solutions as well as their behaviour at early and late times. Based on this in-depth study of the non-equilibrium behaviour of the classical equation of motion we will conclude that 
\begin{itemize}
\item Once the magnitude of the initial conditions reaches a particular threshold, the classical equation of motion for the gauge field background evolves with high probability towards a $c_{2}$\nobreakdash-type homogeneous and isotropic background solution. 
\end{itemize}
These initial conditions in turn are understood to be sourced by the enhanced gauge field fluctuations generated before this $c_{2}$\nobreakdash-type solution developed. We will return to these quantum fluctuations in \prettyref{sec:linearized}. For now, we will only note that in the far past, these fluctuations are well described by the abelian limit discussed in Sec.~\ref{sec:abelian}.

The analysis of Sec.~\ref{subsec:Isotropic} will assume an isotropic gauge field background. We will justify this in Sec.~\ref{subsec:Anisotropic-background-fields} by demonstrating that the homogeneous background evolves towards isotropy. We will further see in Secs.~\ref{sec:linearized} and \ref{sec:example}, that this background suppresses the quantum gauge field fluctuations. We therefore conclude that after the homogeneous background is triggered, the dynamics of the gauge field background are accurately captured by the \emph{classical} equation of motion for a homogeneous and isotropic gauge field.

We conclude this discussion in Sec.~\ref{subsec:dynamical_background} by including the dynamical evolution of the inflaton background. Technical details and mathematical proofs are relegated to Appendices~\ref{app:sec3} and \ref{app:gaugefields}.

\subsection{\label{subsec:Isotropic}Equation of motion for an isotropic gauge field background}

In this section, we consider in detail the equation of motion for the non-abelian gauge field background $A^{(0)}$. (For context, see the discussion around \prettyref{eq:background_plus_linear}.) This is the zeroth order part of our approximation, so we ignore for now the inhomogeneous first-order perturbations $\delta A$ which we will add later in Sec.~\ref{sec:linearized}. We make the following explicit assumptions on the background field: 
\begin{itemize}
\item The background gauge field $A^{(0)}$ is homogeneous and isotropic. (We will show in \prettyref{subsec:Anisotropic-background-fields} that isotropy is a valid assumption in our regime of interest. For a discussion on homogeneity in the abelian limit, see \prettyref{sec:abelian}.) 
\item The inflaton field $\phi$ is homogeneous and evolves in the slow-roll regime. In particular, we consider $\xi$ to be constant. (See~\prettyref{eq:rev_xi} and the subsequent comments.) 
\end{itemize}
Any $\mathrm{SU}(2)$ gauge field $A^{(0)}(\tau)$ which is homogeneous and isotropic is (after applying a gauge transformation) of the form (see e.g.~\cite{Verbin:1989sg,Maleknejad:2012fw}),
\begin{equation}
(A^{(0)})_{0}^{a}=0,\quad(A^{(0)})_{i}^{a}=f(\tau)\,\delta_{i}^{a}.\label{eq:ansatz-A}
\end{equation}
We provide a rigorous proof of this statement as \prettyref{thm:isotropy} in \prettyref{app:iso-gauge}. We emphasize that although this particular choice of $A^{(0)}$ happens to be in temporal gauge, no gauge-fixing constraints have been imposed on $A^{(0)}+\delta A$. 

The corresponding equation of motion for $f(\tau)$ is 
\begin{equation}
\frac{\mathrm{d}^{2}}{\mathrm{d}\tau^{2}}ef(\tau)+2\left(ef(\tau)\right)^{3}-\frac{2\,\xi}{-\tau}(ef(\tau))^{2}=0,\label{eq:chromonatural_eom_final}
\end{equation}
obtained by inserting \prettyref{eq:ansatz-A} into \prettyref{eq:eq_motion_final}. Our task is now to analyze the qualitative behaviour of solutions to this ordinary differential equation, where $e$ and $\xi$ are constants.

It is helpful to observe the following symmetries of this equation. 
\begin{itemize}
\item There is always a factor of $e$ wherever $f(\tau)$ appears. Consequently, we focus our analysis on the quantity ``$ef(\tau)$'' instead of ``$f(\tau)$.'' The coupling constant $e$ is nothing but a scale factor. 
\item The substitution 
\begin{align*}
ef(\tau) & \mapsto-ef(\tau)\\
\xi & \mapsto-\xi
\end{align*}
preserves solutions of \prettyref{eq:chromonatural_eom_final}. Thus solutions with $\xi<0$, are identical (up to a sign) to solutions with $\xi>0$. As in \prettyref{sec:abelian} we assume without loss of generality that $\xi\geq0$. 
\item \label{enu:symmetry}For any positive real number $\lambda$, the transformation 
\begin{equation}
ef(\tau)\mapsto\lambda\,ef(\lambda\tau)\label{eq:symmetry3}
\end{equation}
preserves solutions of \prettyref{eq:chromonatural_eom_final}. The most straightforward consequence is that in \prettyref{fig:scaling-symmetry}, we may replace the axis labels $(\tau,ef(\tau))$ by $(H_{*}\tau,H_{*}^{-1}ef(\tau))$ for any constant $H_{*}$. For instance, one might take $H_{*}$ to be the value of the Hubble parameter at the end of inflation. Alternatively, for convenience, in this section (and this section only) we will work in units where $\tau$ and $ef(\tau)$ are dimensionless.\\
More generally, this transformation can be understood in terms of the physical quantity $g(N)$ defined as follows:  
\begin{equation}
g(N)\equiv-\tau\,ef(\tau),\ \textrm{ where }\tau=\tau_{0}\,e^{N},\ \textrm{and}\ \tau_{0}\equiv\tau(N=0).\label{eq:g-def}
\end{equation}
Here $N$ is the usual measure of e-folds during inflation with $dN=-Hdt=d\tau/\tau$, so that \prettyref{eq:chromonatural_eom_final} becomes 
\begin{equation}
\frac{\mathrm{d}^{2}}{\mathrm{d}N^{2}}g(N)-3\frac{\mathrm{d}}{\mathrm{d}N}g(N)+2g(N)\left(g(N)^{2}-\xi g(N)+1\right)=0\,.\label{eq:g-of-N}
\end{equation}
This is an autonomous \footnote{\emph{Autonomous} means that the time variable doesn't explicitly appear in the equation of motion. For example, the quartic oscillator equation $w''(x)+2w(x)^{3}=0$ is autonomous, while the Airy equation $w''(x)-xw(x)=0$ is not. } equation, so solutions are invariant under time translations of the form $g(N)\mapsto g(N+\ln\lambda)$. With comoving quantities, these time translations correspond precisely to the transformation~\eqref{eq:symmetry3}. When this transformation is applied in the limit $\lambda\to0^{+}$, it corresponds to the limiting behaviour as $\tau\to0^{-}$ (i.e.\ to the infinite future). As part of our analysis in the next subsection, we illustrate in \prettyref{fig:scaling-symmetry} how the transformation acts on the both the comoving quantity $ef(\tau)$ and physical quantity $g(N)$. Note that Refs.~\cite{Adshead:2012kp,Dimastrogiovanni:2012ew,Adshead:2013qp,Adshead:2013nka} work directly with the physical gauge field background. When studying the infinite future it is more convenient to work with $g(N)$, otherwise we find it more convenient to study $ef(\tau)$. 
\end{itemize}

\subsubsection{Three distinct types of solutions\label{subsec:solutions}}

\subsubsection*{Typical behaviour of solutions }

Before rigorously analyzing the behaviour of solutions, we begin with an informal discussion of the two most common types of solution to \prettyref{eq:chromonatural_eom_final}. Typical examples of these are depicted as solid black lines in \prettyref{fig:scaling-symmetry}. Details and proofs will be provided below. 
\begin{figure}[t]
\centering{}\includegraphics{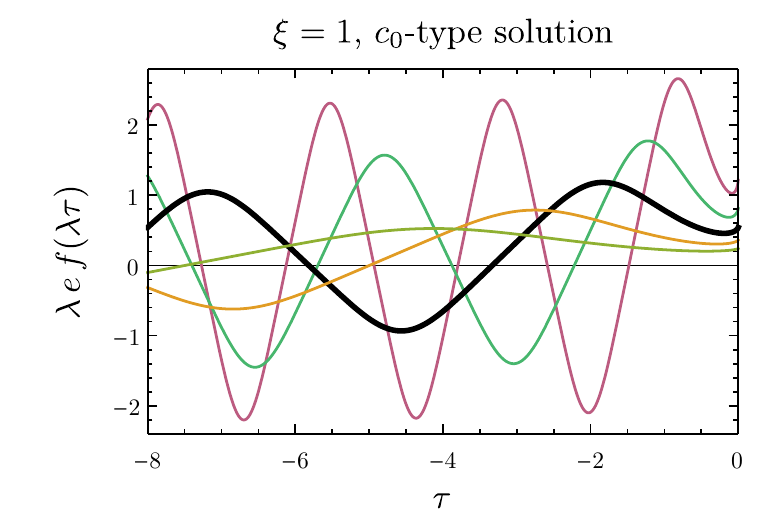}\hfill{}\includegraphics{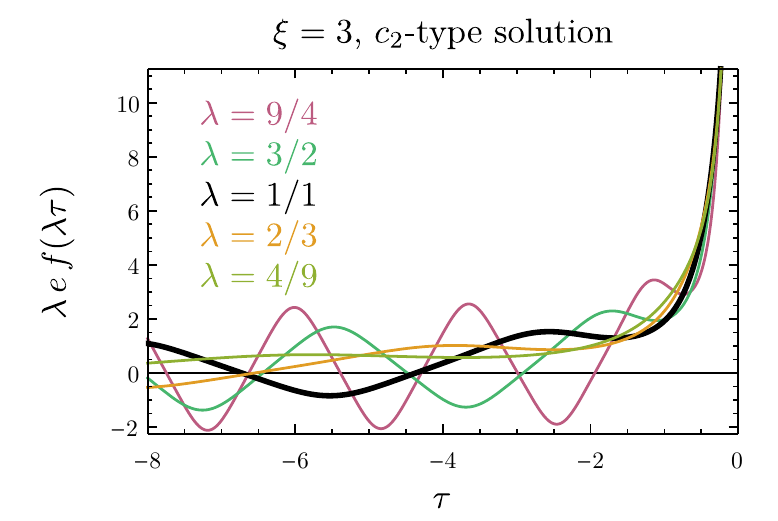}\\
\includegraphics{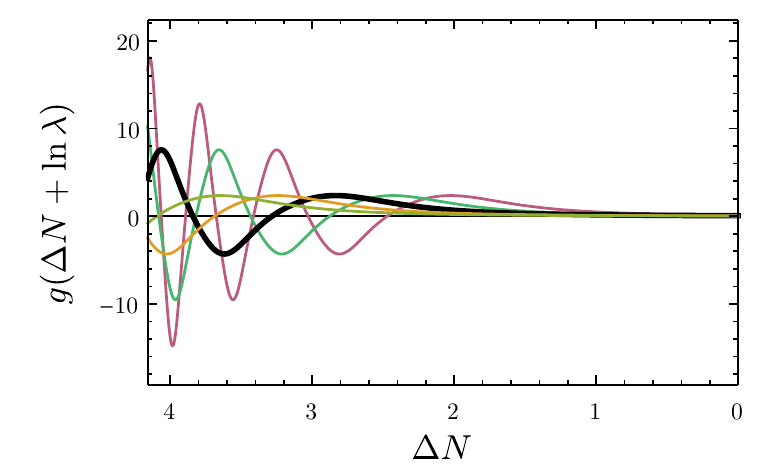}\hfill{}\includegraphics{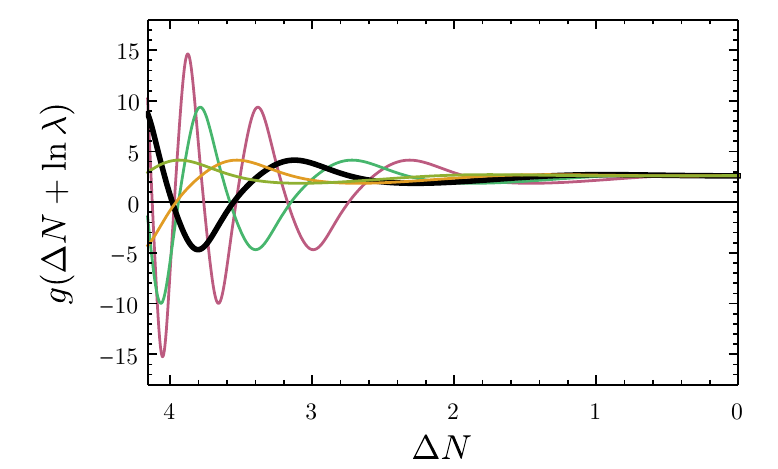}\caption{\label{fig:scaling-symmetry} Typical solutions for the classical gauge field background. The thicker black curve ($\lambda=1$) is obtained by numerically solving Eq.~\eqref{eq:chromonatural_eom_final} for specific (but generic) initial conditions. The coloured curves are obtained by applying the transformation \eqref{eq:symmetry3} for various $\lambda$. The top panels show the solutions in comoving coordinates and conformal time. The bottom panels show the same solutions in physical coordinates and e-folds, where \eqref{eq:symmetry3} simply corresponds to a time-shift. (To be definite, we choose $\Delta N$ to denote the number of e-folds before $\tau_{0}=-\tfrac{1}{8}$.) The left panels show a typical example of a $c_{0}$\protect\nobreakdash-type background solution. The right panels show a typical $c_{2}$\protect\nobreakdash-type background solution. In the far future, the bounded $c_{0}$\protect\nobreakdash-type (resp.~growing $c_{2}$\protect\nobreakdash-type) background solutions in comoving coordinates correspond in physical coordinates to decaying (resp.~bounded) solutions.}
\end{figure}

For large negative values of $\tau$ (the far past), solutions for $ef(\tau)$ are typically oscillatory of a fixed amplitude. We caution the reader that in our model, this oscillatory behaviour does not actually occur in the far past. This is because at early times, the gauge field background is dominated by small but growing fluctuations from super-horizon modes, and so the classical equation of motion breaks down there. \label{ref:no-osc}

On the other hand, we shall be primarily concerned with what happens as $\tau\to0^{-}$ (the infinite future), and the influence of the initial conditions on this behaviour. Based on the two parameters which determine the initial conditions, we can divide solutions into three categories based on their behaviour as $\tau\to0^{-}$:
\begin{description}
\item [{$c_{0}$\nobreakdash-type~solutions}] \label{c0typesol} The function $ef(\tau)$ remains bounded, and $ef(\tau)$ converges to a finite value as $\tau\to0^{-}$. In this case, the physical gauge field background $f(\tau)/a(\tau)$ approaches zero and will remain small compared to the tachyonically enhanced gauge field fluctuations, see Eq.~\eqref{eq:variance_abelian}. 
\item [{$c_{2}$\nobreakdash-type~solutions}] The function $ef(\tau)$ is unbounded as $\tau\to0^{-}$. In this case, the growth of $ef(\tau)$ is always proportional to $(-\tau)^{-1}$. These are the background solutions which will be most relevant throughout this work, and which are responsible for the inherently non-abelian regime of CNI. The physical gauge field background $f(\tau)/a(\tau)$ approaches a positive constant.
\item [{$c_{1}$\nobreakdash-type~solutions}] These solutions form the ``saddle points'' between $c_{0}$\nobreakdash-type and $c_{2}$\nobreakdash-type solutions. They arise only with finely-tuned initial conditions. Just like $c_{2}$\nobreakdash-type solutions, their growth is proportional to $(-\tau)^{-1}$ as $\tau\to0^{-}$, however with a smaller proportionality constant. 
\end{description}
Asymptotic formulas for these three families of solutions are given in \prettyref{app:t_to_0}.

Our two central questions are as follows: 
\begin{enumerate}
\item Given initial conditions for a solution to \prettyref{eq:chromonatural_eom_final}, will the solution be $c_{0}$\nobreakdash-type or $c_{2}$\nobreakdash-type? 
\item For which initial conditions is the solution oscillatory? When so, at what time do the oscillations stop? \label{enu:central-questions} 
\end{enumerate}

\subsubsection*{Ansatz}

We can write down up to three explicit solutions to \prettyref{eq:chromonatural_eom_final} with the ansatz 
\begin{equation}
ef(\tau)=c\,\xi/(-\tau)\,,\qquad\textrm{ equivalently }g(N)=c\,\xi,\label{eq:ansatz-f}
\end{equation}
where $c$ is a constant. Solutions of this form arise by rescaling any general solution of \prettyref{eq:chromonatural_eom_final} to its $\tau\to0^{-}$ limit, namely by applying the transformation \eqref{eq:symmetry3} in the limit as $\lambda\to0$. (This fact is part of \prettyref{thm:bg-future}, and can be readily verified from the formulas of \prettyref{app:t_to_0}.) Indeed, functions of the form of this ansatz are precisely the fixed points of \eqref{eq:symmetry3}.

We obtain a solution to \prettyref{eq:chromonatural_eom_final} when $c$ is one of 
\begin{equation}
c_{0}=0\,,\quad c_{1}=\tfrac{1}{2}(1-\sqrt{1-4/\xi^{2}})\,,\quad c_{2}=\tfrac{1}{2}(1+\sqrt{1-4/\xi^{2}})\,,\label{eq:c-definitions}
\end{equation}
motivating the nomenclature for $c_{i}$\nobreakdash-type solutions introduced above. Note that since $f(\tau)$ must be real, the $c_{1}$ and $c_{2}$ solutions exist only when $\xi\geq2$. In this case, 
\[
0+\xi^{-2}<c_{1}\leq\tfrac{1}{2}\leq c_{2}<1-\xi^{-2}\,,
\]
and asymptotically as $\xi\to\infty$ we have 
\begin{equation}
c_{1}=0+\xi^{-2}+O(\xi^{-4}),\quad c_{2}=1-\xi^{-2}+O(\xi^{-4})\,.\label{eq:c-asymptotics}
\end{equation}
The reader will find it especially useful to keep in mind that $c_{2}\approx1$ for large $\xi$.

We shall see in \prettyref{subsec:phase-space} that the $c_{0}$ and $c_{2}$ solutions are stable under all small perturbations of the initial conditions. Thus they both have a two-parameter basin of attraction. The $c_{1}$-solution is stable under just one direction of perturbations, so it is just part of a one-parameter family. \prettyref{app:t_to_0} contains explicit asymptotic formulas for these families. The structure of these families is explained in \prettyref{subsec:phase-space}.

The $c_{2}$ solution (which exists only when $\xi\geq2$) plays a central role in our story because it is an explicit stable non-abelian solution: 
\begin{equation}
ef(\tau)=c_{2}\,\xi/(-\tau)=\frac{\tfrac{1}{2}\left(1+\sqrt{1-4/\xi^{2}}\right)\xi}{-\tau}\,.\label{eq:c2solution}
\end{equation}

\begin{description}
\item [{Note}] We refer to the three particular solutions 
\begin{equation}
ef(\tau)=c_{i}\,\xi/(-\tau),\quad i\in\left\{ 1,2,3\right\} \label{eq:ci-solutions-123}
\end{equation}
respectively as \emph{the $c_{0}$ solution} (or simply \emph{the zero solution}), \emph{the $c_{1}$ solution} and \emph{the $c_{2}$ solution}. In contrast there are three \textbf{families} of $c_{i}$\nobreakdash-\textbf{type} \emph{solutions}, of which the $c_{i}$ solutions are respective members. A $c_{i}$\nobreakdash-type solution approaches the corresponding $c_{i}$ solution in the infinite future. More details on the families of $c_{i}$\nobreakdash-type solutions are given in \prettyref{app:t_to_0}. 
\end{description}

\subsubsection*{Oscillatory behaviour}

We remind the reader that although the oscillatory regime for the classical background field equation \prettyref{eq:chromonatural_eom_final} which we describe in this subsection extends to the infinite past, our model does not obey this classical equation at early times (see page~\pageref{ref:no-osc}). Nevertheless we will see in this section how the mathematical analysis of the oscillatory regime in the infinite past provides a nice criterion for determining which initial conditions lead to either $c_{0}$\nobreakdash-type solutions or $c_{2}$\nobreakdash-type solutions.

The oscillatory behaviour of solutions is explained by the following theorem: 
\begin{thm}
\label{thm:oscillatory}Any particular solution $ef(\tau)$ to \prettyref{eq:chromonatural_eom_final} has two associated constants: 
\begin{itemize}
\item $\omega\geq0$, 
\item $u_{0}\in[0,5.244)$. 
\end{itemize}
These constants depend on the solution, so they are determined once initial conditions are fixed. The solution can be written in the form 
\begin{align}
ef(\tau) & =\omega\cdot\mathrm{sn}(\omega\tau+u_{0})+\epsilon(\tau)\,,\label{eq:approx-quartic}
\end{align}
for some function $\epsilon(\tau)$ which is $\mathcal{O}((\xi+\xi^{2})/(-\tau))$ as $\tau\to-\infty$. Here $\mathrm{sn}(u)$ denotes the Jacobi $\mathrm{sn}(u|m)$ function with elliptic parameter $m=-1$ (see \prettyref{app:Jacobi} for details). We recall that the Jacobi $\mathrm{sn}$ function with argument $m$ is periodic with quarter-period given by the complete elliptic integral $K(m)$. The precise range for the periodic parameter $u_{0}$ is thus $[0,4K(-1))$. The constant $\omega$ is always uniquely determined by initial conditions. The constant $u_{0}$ is uniquely determined when $\omega\neq0$. The parameters $(\omega,u_{0})$ transform under \eqref{eq:symmetry3} as 
\begin{equation}
(\omega,u_{0})\mapsto(\lambda\omega,u_{0})\,.\label{eq:ic-transform}
\end{equation}
\end{thm}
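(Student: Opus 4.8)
The plan is to regard \eqref{eq:chromonatural_eom_final} as a perturbation of the autonomous quartic oscillator obtained by deleting the $\tfrac{2\xi}{-\tau}(ef)^{2}$ term, whose coefficient decays like $1/(-\tau)$, and to follow the amplitude and phase of the solution by variation of parameters. Write $y\equiv ef$, so that \eqref{eq:chromonatural_eom_final} reads $y''=-2y^{3}+\tfrac{2\xi}{-\tau}y^{2}$ (here and below $'=\mathrm{d}/\mathrm{d}\tau$ on functions of $\tau$). Its formal $\tau\to-\infty$ limit is $y''+2y^{3}=0$, and, as recalled in \prettyref{app:Jacobi}, the Jacobi function $\mathrm{sn}(u)\equiv\mathrm{sn}(u\,|\,{-}1)$ satisfies $\mathrm{sn}''=-2\,\mathrm{sn}^{3}$ and $(\mathrm{sn}')^{2}=1-\mathrm{sn}^{4}$ (primes here denoting derivatives with respect to the argument). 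Hence the general solution of the limiting equation is exactly $y=\omega\,\mathrm{sn}(\omega\tau+u_{0})$, with amplitude $\omega\ge0$ and phase $u_{0}$ defined modulo the period $4K(-1)\approx5.244$. This already identifies the two constants and the leading functional form; the work is to prove that a true solution locks onto fixed values of $(\omega,u_{0})$ with the stated error.

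First I would pass to osculating variables $(\omega(\tau),v(\tau))$ via the two relations $y=\omega\,\mathrm{sn}(v)$ and $y'=\omega^{2}\,\mathrm{sn}'(v)$. For a nontrivial solution, uniqueness for \eqref{eq:chromonatural_eom_final} forbids $y=y'=0$, so the (leading-order conserved) energy $E\equiv\tfrac12(y')^{2}+\tfrac12y^{4}=\tfrac12\omega^{4}$ is strictly positive; the map $(\omega,v)\mapsto(y,y')$ has Jacobian $-2\omega^{2}\neq0$, making it a diffeomorphism and fixing $\omega(\tau)>0$ and $v(\tau)\bmod 4K(-1)$. Differentiating the defining relations and inserting the ODE, the two $\mathrm{sn}$-identities reduce the system to the exact first-order form
\begin{equation} \omega'=\frac{\xi}{-\tau}\,\omega\,\mathrm{sn}^{2}(v)\,\mathrm{sn}'(v)\,,\qquad v'=\omega-\frac{\xi}{-\tau}\,\mathrm{sn}^{3}(v)\,. \end{equation}
The structural fact on which everything hinges is that \emph{both} forcing terms are $1/(-\tau)$ times a periodic function of the fast phase $v$ with vanishing mean over one period: $\mathrm{sn}^{2}\mathrm{sn}'=\tfrac13(\mathrm{sn}^{3})'$ is a total derivative, and $\int_{0}^{4K(-1)}\mathrm{sn}^{3}=0$ since $\mathrm{sn}(\,\cdot+2K(-1)\,)=-\mathrm{sn}(\cdot)$.

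For the amplitude, $E'=\tfrac{2\xi}{3(-\tau)}(y^{3})'$; integrating by parts and using that $\tfrac{\mathrm{d}}{\mathrm{d}\tau}\tfrac{1}{-\tau}=\tfrac1{\tau^{2}}$ is integrable near $-\infty$ shows $E(\tau)\to E_{\infty}\ge0$ with $E-E_{\infty}=\mathcal{O}(\xi/(-\tau))$, the leading tail being the purely oscillatory boundary term $\propto\tfrac{\xi}{-\tau}\mathrm{sn}^{3}(v)$. Thus $\omega=(2E)^{1/4}\to\omega_{\infty}$, uniquely fixed by the initial data, with $\omega-\omega_{\infty}\simeq\tfrac{\xi}{3(-\tau)}\mathrm{sn}^{3}(v)+\mathcal{O}(\xi/\tau^{2})$ --- in particular with \emph{no} secular (non-oscillatory) part at order $\xi/(-\tau)$.

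The delicate step, and the one I expect to be the main obstacle, is the phase: I must show $v(\tau)-\omega_{\infty}\tau$ converges to a constant $u_{0}$ rather than drifting like $\log(-\tau)$. Writing $v'-\omega_{\infty}=(\omega-\omega_{\infty})-\tfrac{\xi}{-\tau}\mathrm{sn}^{3}(v)$, the two pieces are \emph{each} zero-mean oscillatory at order $\xi/(-\tau)$ (the first by the previous paragraph, the second manifestly), so their improper integral from $-\infty$ converges by one further integration by parts against the bounded periodic antiderivative of $\mathrm{sn}^{3}$, contributing an $\mathcal{O}(\xi/(-\tau))$ tail. A logarithmic drift could only be produced by a nonzero \emph{mean} in $v'-\omega_{\infty}$; the cancellation above rules this out at order $\xi/(-\tau)$, and the first surviving mean appears only at second order, when the $\mathcal{O}(\xi/(-\tau))$ phase deviation $\delta\equiv v-\bar v$ (with $\bar v=\omega_{\infty}\tau+\mathrm{const}$) is fed back through $\mathrm{sn}^{3}(v)\simeq\mathrm{sn}^{3}(\bar v)+3\,\mathrm{sn}^{2}(\bar v)\,\mathrm{sn}'(\bar v)\,\delta$: the resulting contribution to $v'$ has size $\tfrac{\xi}{-\tau}\,\delta=\mathcal{O}(\xi^{2}/\tau^{2})$, whose mean integrates to the \emph{algebraic} tail $\mathcal{O}(\xi^{2}/(-\tau))$ rather than a logarithm. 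Combining, $v(\tau)-\omega_{\infty}\tau\to u_{0}\in[0,4K(-1))$, uniquely determined once $\omega_{\infty}\neq0$, with tail $\mathcal{O}((\xi+\xi^{2})/(-\tau))$. Assembling $ef=\omega\,\mathrm{sn}(v)=\omega_{\infty}\mathrm{sn}(\omega_{\infty}\tau+u_{0})+\epsilon(\tau)$ via the Lipschitz bound on $\mathrm{sn}$ together with the two estimates then yields \eqref{eq:approx-quartic} with $\epsilon=\mathcal{O}((\xi+\xi^{2})/(-\tau))$. Finally, the transformation law \eqref{eq:ic-transform} is immediate: under \eqref{eq:symmetry3}, $\omega_{\infty}\mathrm{sn}(\omega_{\infty}\tau+u_{0})\mapsto(\lambda\omega_{\infty})\mathrm{sn}((\lambda\omega_{\infty})\tau+u_{0})$, i.e.\ $(\omega,u_{0})\mapsto(\lambda\omega,u_{0})$, while $\epsilon(\tau)\mapsto\lambda\epsilon(\lambda\tau)$ retains the same order.
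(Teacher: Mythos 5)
Your setup (Jacobi polar coordinates, the identities $\sn''=-2\sn^3$ and $(\sn')^2+\sn^4=1$, and the exact first-order system for $(\omega,v)$) is the same machinery the paper builds in \prettyref{app:Jacobi} and \prettyref{eq:Jacobi-polar}, and your derived polar equations are correct. But the analytic core of your argument is genuinely different from the paper's. The paper never averages the equation for $ef$ directly: it first subtracts the secular mean, $s(\tau)\equiv ef(\tau)-\xi/(-3\tau)$ in \prettyref{eq:def-s}, chosen precisely so that the forcing in \prettyref{eq:s-eom} decays like $\xi^2/\tau^2$ and is \emph{linear} in $s$. In polar coordinates this yields $|\omega_s'|\lesssim(\xi^2/\tau^2)\left(1+\xi/(-\tau\omega_s)\right)$ (see \prettyref{eq:omega-eom}) — no positive power of $\omega_s$ on the right-hand side — so amplitude and phase converge by \emph{absolute} integrability (\prettyref{lem:ws-limit-exists}--\prettyref{lem:u0-error}), with no oscillatory cancellation and no a priori boundedness required. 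You keep the $O(\xi/(-\tau))$ quadratic forcing and compensate with zero-mean averaging; that route can be made to work, but the two places where the paper's trick does real work are exactly where your write-up has gaps.

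First gap: boundedness backward in time. Your identity $E'=\tfrac{2\xi}{3(-\tau)}(y^3)'$ is correct, but the integration by parts needs $\int^{-\infty}|y|^3/\tau^2\,\mathrm{d}\tau<\infty$, i.e.\ $y$ bounded as $\tau\to-\infty$, which you assume tacitly. The only a priori estimate available from the raw equation is Gronwall on $|E'|\le\tfrac{4\xi}{-\tau}E$, which gives $E\lesssim(-\tau)^{4\xi}$; under that bound your integral diverges for any $\xi\ge1/3$, so the convergence of $E$ is circular as stated. The circularity can be broken — the boundary and integral terms in your IBP carry $E^{3/4}$ rather than $E$, so $M\le C_0+C_1M^{3/4}$ for $M=\sup E$ self-improves to a uniform bound — but this step is missing and is not cosmetic: avoiding it is the entire point of the paper's subtraction trick (and, in the paper's global analysis, of the monotone quantity $D(N)$ of \prettyref{eq:def-D} and the dichotomy of \prettyref{lem:dichotomy}).

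Second gap: the $\omega=0$ solutions. The theorem covers \emph{all} solutions, including the $c_i$ and instanton-type ones, for which $\omega=0$ and the claim reduces to $|ef(\tau)|=O((\xi+\xi^2)/(-\tau))$. Your stated amplitude estimate $E-E_\infty=O(\xi/(-\tau))$ with $E_\infty=0$ gives only $|ef|\le(2E)^{1/4}=O\bigl((\xi/(-\tau))^{1/4}\bigr)$, which is far weaker than what the theorem asserts; to get the right power one must again exploit the multiplicative structure ($E\lesssim\tfrac{\xi}{-\tau}E^{3/4}$, hence $E^{1/4}\lesssim\xi/(-\tau)$). Moreover your phase argument — IBP against the bounded antiderivative of $\sn^3$ — requires $v'$ bounded below by roughly $\omega_\infty>0$, so it silently excludes exactly this case. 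The paper disposes of it separately and cleanly: when $D(N)\le0$ for all $N$, \prettyref{lem:D-negative-omega-zero} gives $\omega_{ef}(\tau)\le\tfrac{4}{3}\xi/(-\tau)$, which yields both $\omega=0$ and the required decay in one line. Finally, your second-order feedback estimate for the phase is heuristic (you linearize around the limiting phase whose existence is being proved); that is repairable by a standard bootstrap, but the two items above are substantive and must be filled before this counts as a proof.
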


Moreover, we have numerically verified the stronger statement that for \emph{all} $\tau$, 
\begin{equation}
\left|\epsilon(\tau)\right|\leq\frac{4\xi}{-3\tau}\,.\label{eq:epsilon-ineq}
\end{equation}
We prove \prettyref{thm:oscillatory} in \prettyref{app:pf-approx-w}. A rigorous proof of \prettyref{eq:epsilon-ineq} is likely possible using similar techniques, but it is beyond the scope of this paper. 

\prettyref{thm:oscillatory} tells us that when $\omega>0$, $ef(\tau)\approx\omega\cdot\mathrm{sn}(\omega\tau+u_{0})$ when $\left|\epsilon(\tau)\right|\ll\omega$. In particular combining this with \prettyref{eq:epsilon-ineq}, oscillation occurs at early times when 
\begin{equation}
\tau\ll-\frac{\xi}{\omega}\,,\label{eq:osc-from-omega}
\end{equation}
which we take as the definition of the oscillatory regime. In the case $\omega=0$, \eqref{eq:epsilon-ineq} implies that $|ef(\tau)|\leq\tfrac{4}{3}\xi/(-\tau)$ so that there are no oscillations.

We remark that \prettyref{thm:oscillatory} and \prettyref{eq:epsilon-ineq} are consistent with the results obtained from the ansatz~\eqref{eq:ansatz-f}. Namely the $c_{i}$ solutions correspond to $\omega=0$ and $\epsilon(\tau)=ef(\tau)=c_{i}\xi/(-\tau)$. (This satisfies \prettyref{eq:epsilon-ineq} because $\left|c_{i}\right|\leq\tfrac{4}{3}$.) To explain why $\omega=0$ is necessary for the $c_{i}$ solutions, recall that the $c_{i}$ solutions are fixed points of the transformation \eqref{eq:symmetry3}. Thus by \prettyref{eq:ic-transform} we have $\omega=\lambda\omega$ for all positive $\lambda$, and hence $\omega=0$. 

\prettyref{eq:osc-from-omega} is unfortunately not very practical for determining which initial conditions lead to oscillation, because $\omega$ is difficult to compute from given initial conditions. As a remedy, the following theorem suggests a very simple criterion in terms of initial conditions $ef(\tau_{1})$ and $ef'(\tau_{1})$ at time $\tau_{1}$. It introduces a function $\omega_{ef}(\tau)$ which serves as an approximation to the constant $\omega$. 
\begin{thm}
\label{thm:approx-w}(Criterion for oscillation) Let $ef(\tau)$ be a particular solution to \prettyref{eq:chromonatural_eom_final}. Define the associated function 
\begin{align}
\omega_{ef}(\tau) & \equiv\sqrt[4]{(ef'(\tau))^{2}+\left(ef(\tau)\right)^{4}}.\label{eq:omega_ef}
\end{align}
 As explained in \prettyref{app:Jacobi}, this approximates the envelope of $ef(\tau)$ as it oscillates. Then 
\[
\omega\equiv\lim_{\tau\to-\infty}\omega_{ef}(\tau)
\]
 coincides with the parameter $\omega$ specified in \prettyref{thm:oscillatory}. Furthermore, the solution is oscillatory (i.e.\ $\omega>0$) if there exists any time $\tau_{1}$ such that either 
\begin{itemize}
\item $\omega_{ef}(\tau_{1})>0$ when $0\leq\xi<2$, or 
\item $\omega_{ef}(\tau_{1})>\tfrac{4}{3}\xi/(-\tau_{1})$. 
\end{itemize}
\end{thm}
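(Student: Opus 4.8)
The plan is to treat $\omega_{ef}^{4}=(ef')^{2}+(ef)^{4}$ as the conserved energy of the ``quartic oscillator'' $w''+2w^{3}=0$, which is exactly the $\tau\to-\infty$ limit of \eqref{eq:chromonatural_eom_final}, and to control how the friction-like term $\tfrac{2\xi}{-\tau}(ef)^{2}$ spoils its conservation. Differentiating $\omega_{ef}^4$ and substituting \eqref{eq:chromonatural_eom_final} gives the exact identity
\begin{equation*}
\frac{\mathrm{d}}{\mathrm{d}\tau}\,\omega_{ef}^{4}(\tau)=\frac{4\xi}{-\tau}\,(ef)^{2}\,(ef)'\,,
\end{equation*}
in which all the oscillator's potential-energy terms cancel and only the single non-conservative piece survives. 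This identity is the engine behind every claim: the prefactor $1/(-\tau)\to0$ makes $\omega_{ef}^4$ asymptotically conserved, while its sign and size govern the criterion.

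To show the limit exists I would write $(ef)^{2}(ef)'=\tfrac13\frac{\mathrm d}{\mathrm d\tau}(ef)^{3}$ and integrate the identity by parts on $(-\infty,\tau_{0}]$. By \prettyref{thm:oscillatory}, $ef=\omega\,\mathrm{sn}(\omega\tau+u_{0})+\epsilon$ with $\epsilon\to0$ and $\mathrm{sn}$ bounded, so $ef$ is bounded; hence the boundary contribution $\propto(ef)^{3}/(-\tau)$ vanishes at $-\infty$, and the leftover integrand is $O((-\tau)^{-2})$, which is absolutely integrable. Thus $\lim_{\tau\to-\infty}\omega_{ef}^{4}$ exists. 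To identify it with $\omega^{4}$ I would insert the same decomposition into $\omega_{ef}^{4}$ and use the Jacobi identity $(\mathrm{sn}')^{2}+\mathrm{sn}^{4}=1$ (valid at $m=-1$, and itself the energy conservation of the quartic oscillator), so that the leading term is precisely $\omega^{4}$ while the remainder is controlled by $\epsilon$ and $\epsilon'$; both vanish as $\tau\to-\infty$, the decay of $\epsilon'$ following from $\epsilon\to0$ together with the boundedness of $\epsilon''$, which \eqref{eq:chromonatural_eom_final} supplies.

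The criterion itself rests on the sharp bound $|(ef)^{2}(ef)'|\le\tfrac12\,\omega_{ef}^{4}$, obtained by maximizing $ab$ under $a^{2}+b^{2}=\omega_{ef}^{4}$ with $a=(ef)^{2}\ge0$ and $b=ef'$. Passing to e-folds via $\mathrm{d}/\mathrm{d}N=\tau\,\mathrm{d}/\mathrm{d}\tau$ and setting $r\equiv(-\tau)\,\omega_{ef}/\xi$, the identity becomes
\begin{equation*}
\frac{\mathrm{d}}{\mathrm{d}N}\ln r=1-\xi\,\frac{(ef)^{2}(ef)'}{\omega_{ef}^{4}}\ \ge\ 1-\tfrac{\xi}{2}\,.
\end{equation*}
For $0\le\xi<2$ the right-hand side is a positive constant, so along any nontrivial solution ($\omega_{ef}>0$ everywhere, since $\omega_{ef}=0$ at one point forces $ef\equiv0$ by uniqueness) one has $\ln r\to+\infty$ toward the past ($N\to+\infty$), i.e.\ $r$ is unbounded. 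A non-oscillatory solution, however, has $\omega=0$, so by \prettyref{thm:oscillatory} $ef=O((\xi+\xi^{2})/(-\tau))$, whence $g=-\tau\,ef$ stays bounded and the orbit of \eqref{eq:g-of-N} converges to a fixed point, making $r$ bounded. This contradiction proves the first bullet, and the threshold $\xi<2$ falls out exactly from the constant $\tfrac12$ in the envelope bound.

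For the second bullet I would argue by contraposition that $\omega=0$ forces $r(\tau)\le\tfrac43$ for all $\tau$, i.e.\ $R^{4}\equiv((-\tau)\omega_{ef})^{4}=(g-g')^{2}+g^{4}\le(\tfrac43\xi)^{4}$, viewing $R$ as a phase-space function for the autonomous system \eqref{eq:g-of-N}. A non-oscillatory orbit converges as $N\to+\infty$ to one of the fixed points $g\in\{0,c_{1}\xi,c_{2}\xi\}$ of \eqref{eq:c-definitions}, where $g'=0$ and $R^{4}=c_{i}^{2}\xi^{2}+c_{i}^{4}\xi^{4}\le(\tfrac43\xi)^{4}$, holding comfortably since $c_{i}<1<\tfrac43$. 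The main obstacle is to upgrade this \emph{limiting} bound to one along the \emph{entire} orbit: for $\xi>2$ the estimate $\tfrac{\mathrm d}{\mathrm d N}\ln R\ge1-\tfrac\xi2$ turns negative, so the set $\{R>\tfrac43\xi\}$ is not manifestly past-invariant, and confining $R$ requires the finer correlation between the size of $R$ and the oscillation phase $\beta$ (defined by $(ef)^{2}=\omega_{ef}^{2}\sin\beta$, $ef'=\omega_{ef}^{2}\cos\beta$). This is precisely the pointwise content of \eqref{eq:epsilon-ineq}, verified numerically in the paper; a fully rigorous trapping-region or Lyapunov argument for the planar flow \eqref{eq:g-of-N} is where the real difficulty lies.
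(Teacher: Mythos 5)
Your exact identity $\tfrac{\mathrm{d}}{\mathrm{d}\tau}\omega_{ef}^{4}=\tfrac{4\xi}{-\tau}(ef)^{2}(ef)'$, the AM--GM envelope bound $|(ef)^{2}(ef)'|\le\tfrac12\omega_{ef}^{4}$, and the resulting inequality $\tfrac{\mathrm{d}}{\mathrm{d}N}\ln\bigl((-\tau)\,\omega_{ef}(\tau)\bigr)\ge 1-\tfrac{\xi}{2}$ give a more elementary route to the first bullet than the paper's own proof, which instead invokes the full Poincar\'e--Bendixson classification of \prettyref{lem:asymptotic-table}. One loose end there is repairable: \prettyref{thm:oscillatory} with $\omega=0$ bounds only $ef$, not $ef'$, so your assertion that the orbit of \prettyref{eq:g-of-N} ``converges to a fixed point, making $r$ bounded'' is unjustified as written; you must bootstrap through \prettyref{eq:chromonatural_eom_final} to get $ef''=\mathcal{O}((-\tau)^{-3})$, hence $ef'=\mathcal{O}((-\tau)^{-2})$, hence $p=(-\tau)^{2}ef'$ bounded. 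Likewise your identification $\lim_{\tau\to-\infty}\omega_{ef}=\omega$ via the $\mathrm{sn}$-decomposition and the Jacobi energy identity is sound and differs from the paper only in bookkeeping (the paper compares $\omega_{ef}$ with the polar radius $\omega_{s}$ of the shifted variable $s=ef-\xi/(-3\tau)$).

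The genuine gap is the second bullet, which you explicitly leave open, and which is the part of the theorem the paper actually uses (it underlies \prettyref{eq:BoundaryOscillatory} and \prettyref{eq:transition-time}). Moreover your diagnosis of what is missing is off: the paper's proof nowhere uses the numerically verified bound \prettyref{eq:epsilon-ineq}, nor does it need the phase-resolved analysis you describe. The missing idea is to deform your quantity $R^{4}=p^{2}+q^{4}$ into the auxiliary function $D(q,p)=p^{2}+q^{4}-\tfrac{4}{3}\xi q^{3}$ of \prettyref{eq:def-D}. Along the flow \prettyref{eq:q-p-eom} one computes $\tfrac{\mathrm{d}D}{\mathrm{d}N}=3D+(p^{2}+q^{4})$, so $D$ is strictly increasing toward the past wherever $D>0$; that is, the region $\{D>0\}$ \emph{is} past-invariant for every $\xi$ --- exactly the trapping region you said was lacking, and the origin of the constant $\tfrac{4}{3}$ in the statement. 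The proof then closes in three short steps: $\omega_{ef}(\tau_{1})>\tfrac{4}{3}\xi/(-\tau_{1})$ forces $D(N_{1})>0$ (the contrapositive of \prettyref{lem:D-negative-omega-zero}, a one-line computation); monotonicity plus a Poincar\'e--Bendixson argument excluding convergence of $D$ to a finite positive limit (\prettyref{lem:dichotomy}) force $D(N)\to+\infty$, equivalently $-\tau\,\omega_{ef}(\tau)\to+\infty$; and \prettyref{lem:omega-positive} (using $|\omega_{s}'(\tau)|\lesssim\xi^{2}/\tau^{2}$, so that $\omega=0$ would cap $-\tau\omega_{s}(\tau)$ at $\xi^{2}$) converts this into $\omega>0$. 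Without this replacement of $R$ by $D$, your approach fails precisely where you said it does, so the criterion for $\xi\ge 2$ remains unproven in your proposal.
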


We prove this theorem in \prettyref{app:pf-approx-w}. 

Based on the second bullet point, we identify the transition time $\tau_{1}$ between the oscillatory regime and non-oscillatory regime as occurring when 
\begin{equation}
\sqrt[4]{(ef'(\tau_{1}))^{2}+\left(ef(\tau_{1})\right)^{4}}\approx\frac{4\xi}{-3\tau_{1}}\approx\frac{\xi}{-\tau_{1}}\,.\label{eq:BoundaryOscillatory}
\end{equation}
 This answers the second question of page~\pageref{enu:central-questions}. As will become clear from the discussion in Sec.~\ref{subsec:phase-space}, we can use this to estimate the necessary amplitude of the gauge field fluctuations which is required to trigger a $c_{2}$\nobreakdash-type solution.

Here we pause to take account of the two notions of ``oscillatory'' that we have developed so far. Firstly, a solution is, according to \prettyref{thm:oscillatory} and \prettyref{eq:epsilon-ineq}, oscillatory in the far past if the constant $\omega$ associated with the solution is positive. In that case, the solution oscillates when $\tau\ll-\frac{\xi}{\omega}$. Thus $-\omega\tau\sim\xi$ sets the scale for the transition. In contrast, \prettyref{thm:approx-w} provides a particular criterion which is well-suited for checking whether initial conditions at some time $\tau_{1}$ have corresponding solutions which begin in this oscillatory regime: $\omega_{ef}(\tau_{1})>\tfrac{4}{3}\xi/(-\tau_{1})$ with $\omega_{ef}(\tau)$ defined in \prettyref{eq:omega_ef}. 

\begin{figure}[t]
\centering{}\includegraphics{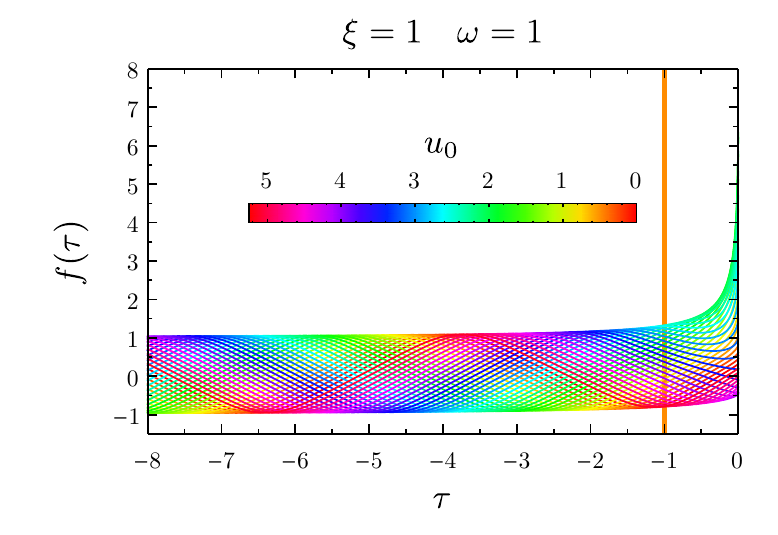}\hfill{}\includegraphics{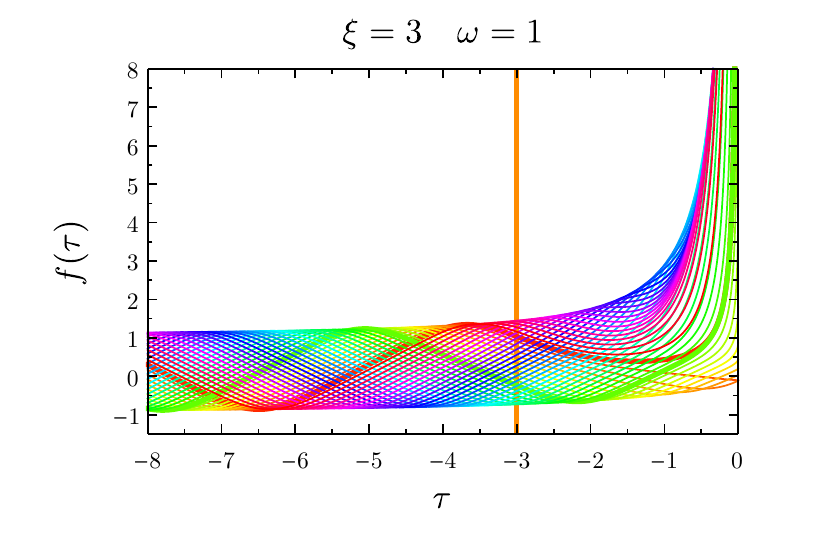}\\
\includegraphics{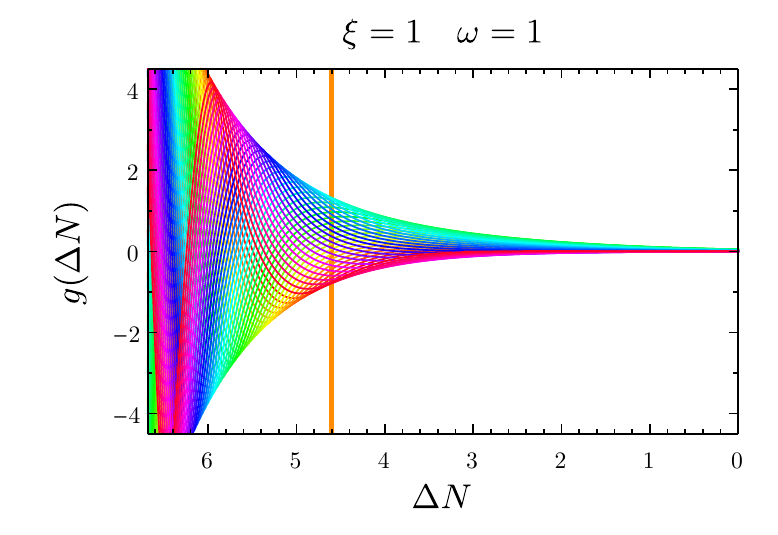}\hfill{}\includegraphics{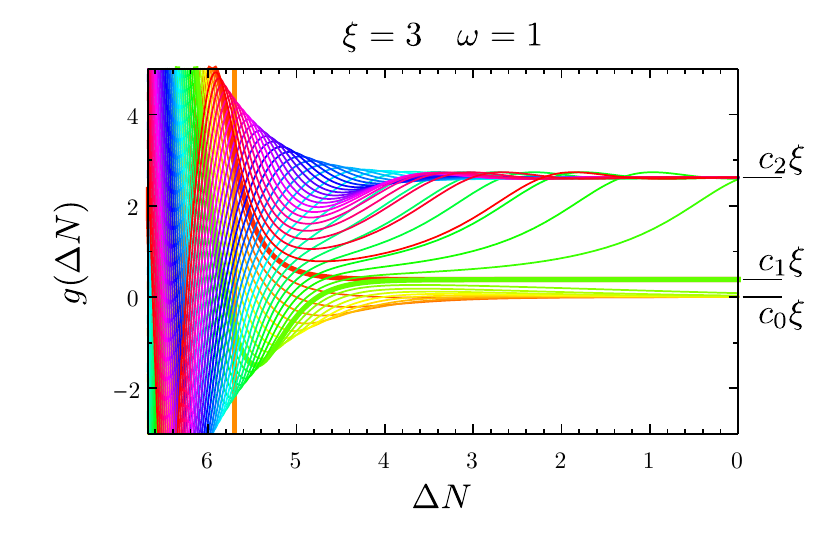} \caption{\label{fig:omega_eq_1} Systematic study of all possible solutions to the classical background equation~\eqref{eq:chromonatural_eom_final} for two different values of $\xi$. All solutions have been normalized to unit amplitude ($\omega=1$) and the phase $u_{0}$ is indicated by colour. The lower panels show the same solutions in physical coordinates. This illustrates that the limiting values of $g(N)$ as $\tau\to0^{-}$ are $c_{i}\xi$, and the value of the phase $u_{0}$ determines which of the $c_{i}\xi$ is reached. The two special $c_{1}$\protect\nobreakdash-type solutions are indicated by thicker lines. The orange vertical line indicates the transition time $\tau=-\xi/\omega$. We have chosen $\Delta N$ to denote the number of e-folds before $\tau_{0}=-10^{-2}$. }
\end{figure}
For solutions with $\omega\neq0$, we may normalize the amplitude of oscillations to $\omega=1$ by applying the transformation \eqref{eq:symmetry3} with $\lambda=\omega^{-1}$. (In terms of the physical quantity $g(N)$, this entails time-shifting the solutions so that they all exit the oscillatory regime at the same point in time.) \prettyref{fig:omega_eq_1} illustrates how the solutions of \prettyref{eq:chromonatural_eom_final} (normalized to $\omega=1$) depend on the remaining free phase parameter $u_{0}$. We note that the upper-left panel with $\xi=1$ does not admit unbounded solutions as $\tau\rightarrow0^{-}$, whereas the upper-right panel ($\xi=3$) admits both bounded and unbounded solutions. The value of $u_{0}$ is colour-coded, and we point out that for $\xi=3$, the $c_{0}$\nobreakdash-type solutions have colours which range only from orange-red to yellow-green. More precisely, this is the interval $u_{0}\in\left(0.136,1.409\right)$, which is approximately one fourth of the phase range. When $\xi=3$ and the phase $u_{0}$ is random, the probability of a $c_{0}$\nobreakdash-type solution is 24.3\%, and the probability of a $c_{2}$\nobreakdash-type solution is $75.7$\%.

The distinct categories of solutions are particularly evident in the lower panels of \prettyref{fig:omega_eq_1} depicting the physical gauge field amplitude. The limiting values in the infinite future are discrete: 
\begin{thm}
\label{thm:bg-future}When $\xi\neq2$, all solutions of \prettyref{eq:chromonatural_eom_final} satisfy
\[
\lim_{\tau\to0^{-}}-\tau\,ef(\tau)=c_{i}\xi
\]
for some $i\in\left\{ 0,1,2\right\} $, where $c_{i}$ is defined in \prettyref{eq:c-definitions}. Furthermore, the $\lambda\to0$ limit of the transformation \eqref{eq:symmetry3} applied to any solution of \prettyref{eq:chromonatural_eom_final} is the corresponding $c_{i}$ solution. 
\end{thm}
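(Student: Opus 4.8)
The plan is to exploit the autonomous reformulation \eqref{eq:g-of-N} in the physical variable $g(N)=-\tau\,ef(\tau)$, for which the limit $\tau\to0^{-}$ corresponds to $N\to-\infty$. Reversing time via $M=-N$ turns \eqref{eq:g-of-N} into the damped oscillator
\[
\frac{\mathrm{d}^{2}g}{\mathrm{d}M^{2}}+3\frac{\mathrm{d}g}{\mathrm{d}M}+V'(g)=0,\qquad V(g)=\tfrac{1}{2}g^{4}-\tfrac{2\xi}{3}g^{3}+g^{2},
\]
so that the claimed limit $\tau\to0^{-}$ becomes the long-time limit $M\to+\infty$ of a \emph{dissipative} system. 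The crucial observation is that the factor $-3\,\mathrm{d}g/\mathrm{d}N$ in \eqref{eq:g-of-N}, which acts as anti-friction running forward in conformal time, becomes genuine friction in the direction $\tau\to0^{-}$; this is precisely what forces convergence to equilibria.

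First I would introduce the energy $E(M)=\tfrac{1}{2}(\mathrm{d}g/\mathrm{d}M)^{2}+V(g)$ and compute $\mathrm{d}E/\mathrm{d}M=-3\,(\mathrm{d}g/\mathrm{d}M)^{2}\leq0$, so that $E$ is a Lyapunov function. Since $V$ is a quartic with positive leading coefficient it is coercive and bounded below, so the bound $E(M)\leq E(M_{0})$ confines the trajectory $(g,\mathrm{d}g/\mathrm{d}M)$ to a compact set for all $M\geq M_{0}$; in particular the solution extends to all $M\to+\infty$ without finite-time blow-up.

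Next I would invoke LaSalle's invariance principle: the $\omega$-limit set of the bounded trajectory is contained in the largest invariant subset of $\{\mathrm{d}g/\mathrm{d}M=0\}$. On that set invariance forces $\mathrm{d}^{2}g/\mathrm{d}M^{2}=0$ as well, hence $V'(g)=2g(g^{2}-\xi g+1)=0$, i.e.\ $g\in\{0,c_{1}\xi,c_{2}\xi\}$, exactly the fixed points whose values coincide with the roots in \eqref{eq:c-definitions} (for $0\le\xi<2$ only $g=0$ is real, consistent with the absence of $c_{1,2}$ there). The $\omega$-limit set of a bounded trajectory of an autonomous flow is connected, and for $\xi\neq2$ these equilibria are isolated; a connected set meeting only isolated points is a single point, so $g$ converges to one value $c_{i}\xi$. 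Translating back yields $\lim_{\tau\to0^{-}}-\tau\,ef(\tau)=c_{i}\xi$.

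Finally, for the statement about the transformation \eqref{eq:symmetry3}: in the variable $g$ this transformation is the pure time translation $g(N)\mapsto g(N+\ln\lambda)$, so $\lambda\to0^{+}$ sends the argument to $N\to-\infty$. For each fixed $N$ the transformed solution therefore tends to the constant $c_{i}\xi$, i.e.\ to the $c_{i}$ solution $ef(\tau)=c_{i}\xi/(-\tau)$, which proves the second assertion. The main obstacle I anticipate is not the dissipation estimate but the passage from LaSalle's ``convergence to the equilibrium set'' to convergence to a \emph{single} $c_{i}$; this is exactly where isolation of the equilibria (hence the hypothesis $\xi\neq2$, which merges $c_{1}$ and $c_{2}$ into a degenerate double root of $V'$) and connectedness of the $\omega$-limit set enter, and both facts are worth stating explicitly rather than taking for granted.
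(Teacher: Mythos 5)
Your proof is correct, and it takes a genuinely different route from the paper's. The paper works in the phase plane $(q,p)$ of \prettyref{eq:cov} and leans on planar topology: Bendixson's criterion (the vector field \eqref{eq:vec-field} has constant divergence $+3$, excluding periodic orbits and bounded invariant sets of positive area), the Poincar\'e--Bendixson theorem, and the auxiliary quantity $D=p^{2}+q^{4}-\tfrac{4}{3}\xi q^{3}$, which is monotone where positive and yields boundedness of forward trajectories; \prettyref{thm:bg-future} is then read off from the resulting classification (\prettyref{lem:asymptotic-table}). You instead reverse time, recognize the system as a damped oscillator in the coercive potential $V(g)=\tfrac{1}{2}g^{4}-\tfrac{2\xi}{3}g^{3}+g^{2}$, and run the standard Lyapunov/LaSalle argument, using connectedness of the $\omega$-limit set and isolation of the equilibria to single out a unique $c_{i}$; note that your energy $E$ and the paper's $D$ are close cousins ($D=2E-2g^{2}$), but $E$ decreases everywhere along the time-reversed flow while $D$ is monotone only where positive. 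Your route is more elementary and dimension-independent (it would survive generalizations, e.g.\ to anisotropic homogeneous backgrounds, where Poincar\'e--Bendixson is unavailable), and in fact it does not need $\xi\neq2$ at all: at $\xi=2$ the equilibria $g=0$ and $g=1$ are still isolated points, so LaSalle plus connectedness still forces convergence --- the paper excludes $\xi=2$ because the merged zero is a degenerate fixed point, which complicates its linearization-based classification, not because isolation fails; your closing parenthetical on this point is therefore slightly off, though harmless for the theorem as stated. What the paper's heavier machinery buys is everything beyond this theorem: the oscillatory description in the far past, the trajectory count by asymptotic type in \prettyref{lem:asymptotic-table}, and \prettyref{thm:phase-intervals}, none of which follow from the LaSalle argument alone. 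Your treatment of the second assertion (the $\lambda\to0$ limit as a time translation in $N$) coincides with the paper's.
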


The proof is provided in \prettyref{app:pf-bg-future}, and it uses the machinery developed in \prettyref{subsec:phase-space}.

\begin{figure}[t]
\begin{centering}
\includegraphics{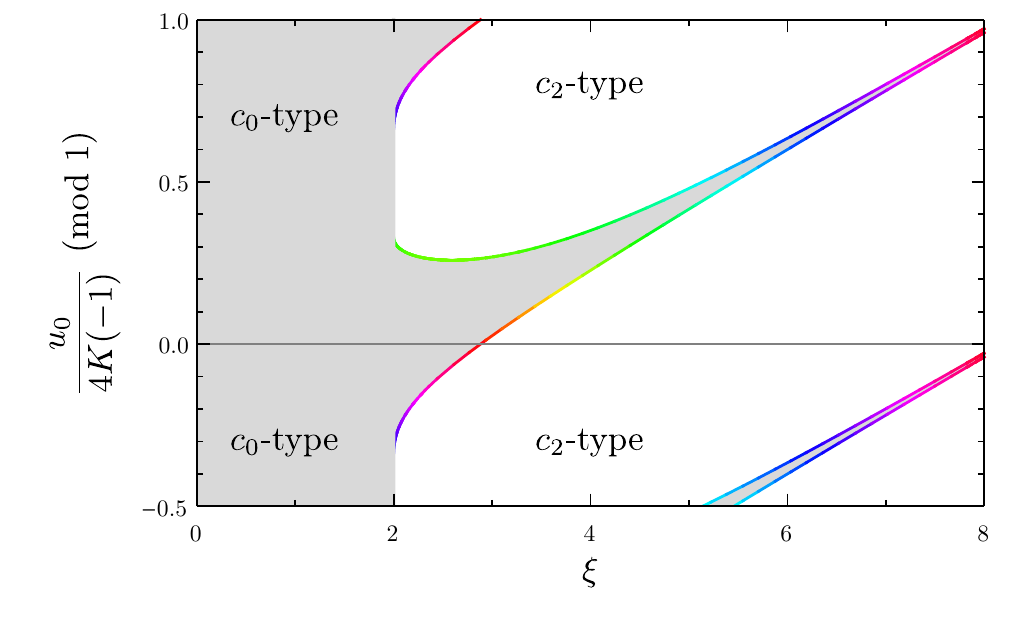} 
\par\end{centering}
\caption{\label{fig:u0-of-xi}Parameter space leading to different types of solutions for the classical background gauge field. The coloured curves indicate the phase (normalized mod 1) of the two $c_{1}$\protect\nobreakdash-type solutions as a function of $\xi$. Note that the $y$-axis has period 1. The grey and white regions respectively indicate $c_{0}$- and $c_{2}$\protect\nobreakdash-type solutions. This figure illustrates that $c_{0}$\protect\nobreakdash-type solutions are rare when $\xi$ is large. }
\end{figure}
This theorem gives us a way to classify solutions into three distinct categories. Solutions with generic initial conditions are always of type $c_{0}$ or $c_{2}$. Fine-tuning the phase $u_{0}$ to achieve a solution exactly between $c_{0}$ and $c_{2}$ leads to exactly two values of the phase which correspond to ``$c_{1}$\nobreakdash-type solutions.'' The corresponding red and green curves in the lower panels of \prettyref{fig:omega_eq_1} are indicated with thicker lines. The following theorem formalizes the notion that the $c_{1}$\nobreakdash-type solutions are the boundary between $c_{0}$\nobreakdash-type and $c_{2}$\nobreakdash-type solutions, the proof of which is also provided in \prettyref{app:pf-bg-future}.
\begin{thm}
\label{thm:phase-intervals}For each $\xi>2$, there exists exactly two distinct values of $u_{0}$ corresponding to $c_{1}$\nobreakdash-type solutions. The two complementary phase intervals correspond respectively to $c_{0}$\nobreakdash-type and $c_{2}$\nobreakdash-type solutions. 
\end{thm}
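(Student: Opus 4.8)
The plan is to reduce the statement to a purely topological count on the circle of phases furnished by \prettyref{thm:oscillatory}. By that theorem every solution with $\omega>0$ can be rescaled, using the symmetry \prettyref{eq:symmetry3}, to a solution with $\omega=1$, and these normalized solutions are parametrized by the phase $u_{0}$ running over the circle $S^{1}=[0,4K(-1))$. Since \prettyref{eq:symmetry3} preserves \prettyref{eq:chromonatural_eom_final} and acts on the autonomous form \prettyref{eq:g-of-N} merely as a time-translation $g(N)\mapsto g(N+\ln\lambda)$, it maps each orbit of the flow to itself and in particular preserves the $\tau\to0^{-}$ classification of \prettyref{thm:bg-future}. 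Hence the assignment $u_{0}\mapsto(\text{type of solution})$ descends to a well-defined map $T\colon S^{1}\to\{c_{0},c_{1},c_{2}\}$, and the whole theorem becomes the assertion that $T^{-1}(c_{1})$ consists of exactly two points whose two complementary arcs are $T^{-1}(c_{0})$ and $T^{-1}(c_{2})$.

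First I would record that $T^{-1}(c_{0})$ and $T^{-1}(c_{2})$ are open. As recalled in \prettyref{subsec:phase-space}, for $\xi>2$ the $c_{0}$ and $c_{2}$ solutions are the two hyperbolic attractors of \prettyref{eq:g-of-N} in the limit $\tau\to0^{-}$ (equivalently $N\to-\infty$), each with a two-parameter, hence open, basin of attraction in the $(g,g')$ phase plane. Because the normalized solutions depend continuously on $u_{0}$ and the basins are open, the preimages $U_{0}:=T^{-1}(c_{0})$ and $U_{2}:=T^{-1}(c_{2})$ are open subsets of $S^{1}$, so $C_{1}:=T^{-1}(c_{1})$ is closed.

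The heart of the argument is the count $\#C_{1}=2$. The $c_{1}$-type solutions are precisely those converging to the fixed point $P_{1}=(c_{1}\xi,0)$ as $N\to-\infty$, i.e.\ the unstable manifold $W^{u}(P_{1})$ of the flow of \prettyref{eq:g-of-N}. Linearizing \prettyref{eq:g-of-N} about $P_{1}$ yields characteristic roots with product $2(c_{1}\xi^{2}-2)$, and since $c_{1}\xi^{2}=2/(1+\sqrt{1-4/\xi^{2}})\in(1,2)$ for $\xi>2$ this product is negative, so $P_{1}$ is a hyperbolic saddle and $W^{u}(P_{1})$ is a one-dimensional curve made of $P_{1}$ together with exactly two orbit-branches. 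Because \prettyref{eq:symmetry3} acts as a time-translation along orbits, it maps each branch to itself; consequently, after normalization to $\omega=1$ each branch collapses to a single phase, and the two distinct branches give two distinct phases $u_{0}^{(1)}\neq u_{0}^{(2)}$, distinct because a normalized solution is uniquely determined by $u_{0}$ (the uniqueness clause of \prettyref{thm:oscillatory}). The fixed point $P_{1}$ itself carries $\omega=0$ and is off the normalized circle, so $C_{1}=\{u_{0}^{(1)},u_{0}^{(2)}\}$.

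Finally I would close with a connectedness step. Deleting the two points of $C_{1}$ splits $S^{1}$ into two open arcs; each arc is connected and contained in the disjoint open union $U_{0}\sqcup U_{2}$, hence lies entirely in $U_{0}$ or entirely in $U_{2}$. For $\xi>2$ both attractors have nonempty basins meeting the $\omega=1$ family (as exhibited numerically in \prettyref{fig:omega_eq_1}), so $U_{0}$ and $U_{2}$ are both nonempty; with only two arcs available, one must be $U_{0}$ and the other $U_{2}$, which gives the claimed correspondence. I expect the main obstacle to be the third step: one must make rigorous that the global unstable manifold $W^{u}(P_{1})$ comprises \emph{precisely} two branches and exhausts the $c_{1}$-type solutions, and that the identification of scaling with time-translation genuinely collapses each branch to a single phase. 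Ruling out that some $c_{1}$-type orbit escapes $W^{u}(P_{1})$, or that the two branches share a phase, is exactly what is controlled by the hyperbolicity of $P_{1}$ together with the completeness of the classification in \prettyref{thm:bg-future} and the uniqueness in \prettyref{thm:oscillatory}.
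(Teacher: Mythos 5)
Your overall strategy is essentially the paper's own: it too parameterizes the $\omega>0$ orbits by a circle, gets ``exactly two $c_{1}$\nobreakdash-type trajectories'' from the saddle nature of $\mathbf{c}_{1}$ (your explicit linearization, with eigenvalue product $2(c_{1}\xi^{2}-2)<0$, is correct and matches the paper's assertion), and finishes with openness of the two basins plus connectedness of the circle. The genuine gaps sit exactly where the paper deploys machinery you never invoke: Bendixson's criterion (the vector field \prettyref{eq:vec-field} has divergence $+3$, so areas expand and no periodic, homoclinic, or heteroclinic loops exist) and the resulting trajectory count of \prettyref{lem:asymptotic-table}. Concretely, your count $\#C_{1}=2$ presumes that \emph{both} separatrix branches of $W^{u}(P_{1})$ actually carry a phase, i.e.\ have $\omega>0$. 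Hyperbolicity gives two branches, but a priori one of them could be a bounded orbit with $\omega=0$ (e.g.\ a homoclinic connection back to $\mathbf{c}_{1}$), in which case it never appears on the normalized circle. Neither \prettyref{thm:bg-future} nor \prettyref{thm:oscillatory} excludes this --- the latter explicitly allows $\omega=0$ --- so your closing claim that hyperbolicity together with those two theorems ``controls'' the obstacle is incorrect. The paper excludes it inside the proof of \prettyref{lem:asymptotic-table}: a bounded branch would have to limit backward to a fixed point (Poincar\'e--Bendixson), the attractors $\mathbf{c}_{0},\mathbf{c}_{2}$ cannot be backward limits, and a homoclinic loop at $\mathbf{c}_{1}$ is forbidden by the area-expansion argument.

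The second gap is nonemptiness of $U_{0}$ and $U_{2}$: ``as exhibited numerically in \prettyref{fig:omega_eq_1}'' is not a proof. The paper derives nonemptiness from its counting table (infinitely many $\omega>0$ trajectories converge to each attractor), whose proof again runs through Bendixson and Poincar\'e--Bendixson; patching your version (e.g.\ by showing the $\omega=0$ locus is a thin union of finitely many orbits, so the open basins must contain $\omega>0$ points) requires the same machinery, cf.\ \prettyref{lem:dichotomy}. Note that one can reorganize your argument so that Gap 1 disappears --- take $\#C_{1}\le 2$ from hyperbolicity and deduce $\#C_{1}\ge 2$ from connectedness of $S^{1}$ and disjointness of the open sets $U_{0},U_{2}$ --- but this makes nonemptiness fully load-bearing, so Gap 2 must be closed rigorously. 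Finally, a smaller but real issue: openness of $U_{0},U_{2}$ and the very notion of ``complementary arcs'' require that $u_{0}$ be a \emph{continuous, surjective} parameterization of the $\omega>0$ orbits by $[0,4K(-1))$, whereas \prettyref{thm:oscillatory} gives only well-definedness and uniqueness. The paper sidesteps this by doing the topology on the level set $\left\{ D(q,p)=C\right\} $ of \prettyref{eq:def-D}, which is a manifest circle crossed exactly once by every $\omega>0$ trajectory (by monotonicity of $D$), and only transfers to the phase $u_{0}$ in its final sentence.
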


\prettyref{fig:u0-of-xi} visualizes the phase intervals leading to the respective $c_{0}$\nobreakdash-type and $c_{2}$\nobreakdash-type solutions, generalizing the above results to the entire $\xi$-range of interest. We note in particular that for $\xi\gtrsim4$, the $c_{0}$\nobreakdash-type solutions become highly unlikely for random initial conditions. This will be a crucial ingredient in answering the first question on page~\pageref{enu:central-questions}. 

\subsubsection{\label{subsec:phase-space}Phase space diagram}

\subsubsection*{Change of variables }

As we saw in \prettyref{eq:g-def}, there is a change of variables which puts \prettyref{eq:chromonatural_eom_final} into the form of an autonomous system. Thus the dynamics are captured by a 2-dimensional phase space diagram. This enables us to re-phrase the results obtained in \prettyref{subsec:solutions} in a more intuitive way.

Rather than choose $(g(N),g'(N))$ as phase space coordinates, we find the following choice more convenient: 
\begin{equation}
q(\tau)\equiv-\tau\,ef(\tau)\,,\quad p(\tau)\equiv(-\tau)^{2}ef'(\tau)\,.\label{eq:cov}
\end{equation}
The equations of motion under these new coordinates then become 
\[
\frac{\mathrm{d}q}{\mathrm{d}\tau}=\frac{p(\tau)-q(\tau)}{-\tau},\quad\frac{\mathrm{d}p}{\mathrm{d}\tau}=\frac{-2\left(q(\tau)^{3}-\xi\,q(\tau)^{2}+p(\tau)\right)}{-\tau}\,.
\]
The denominator of $-\tau$ can be eliminated via the substitution $\mathrm{d}\tau=\tau\,\mathrm{d}N$, rendering the system autonomous: 
\begin{align}
\frac{\mathrm{d}q}{\mathrm{d}N} & =q-p\,,\quad\frac{\mathrm{d}p}{\mathrm{d}N}=2(q^{3}-\xi\,q^{2}+p)\,.\label{eq:q-p-eom}
\end{align}
Just as for the physical quantity $g(N)$ defined in \prettyref{eq:g-def}, the transformation \eqref{eq:symmetry3} also acts on $q(N)$ and $p(N)$ as $N$-translation 
\begin{equation}
N\mapsto N+\ln\lambda\,.\label{eq:lambda-time-translation}
\end{equation}

This differential equation is solved by the flow lines of the vector field 
\begin{equation}
\left(q-p,2\left(q^{3}-\xi\,q^{2}+p\right)\right)\,,\label{eq:vec-field}
\end{equation}
in the $q$-$p$ plane.

We now begin a complete classification of solutions to \prettyref{eq:chromonatural_eom_final} based on an analysis of this vector field \eqref{eq:vec-field}. For simplicity we exclude the degenerate case when $\xi=2$ exactly.

The zeroes of this vector field are readily verified to be 
\begin{equation}
\mathbf{c}_{i}\equiv\left(q,p\right)=\left(c_{i}\xi,c_{i}\xi\right),\label{eq:ci-points}
\end{equation}
for $c_{i}$ defined in \eqref{eq:c-definitions}, and the corresponding constant trajectories are, up to the change of variables \eqref{eq:cov}, the $c_{i}$ solutions of \prettyref{eq:ci-solutions-123}. Therefore for $\xi<2$, $\mathbf{c}_{0}$ is the unique zero of \eqref{eq:vec-field}. As $\xi$ passes through the value $2$, the pair of zeroes $\mathbf{c}_{1}$ and $\mathbf{c}_{2}$ is created at the point $(1,1)$. Thus for $\xi>2$ there are three zeroes in total. The zeroes at $\mathbf{c}_{0}$ and $\mathbf{c}_{2}$ are stable, while $\mathbf{c}_{1}$ is a saddle point. Thus the stable trajectories of $\mathbf{c}_{0}$ and $\mathbf{c}_{2}$ form two-parameter families, while the stable trajectories of $\mathbf{c}_{1}$ form only a one-parameter family. These families correspond to the ``$c_{i}$\nobreakdash-type solutions'' described on page~\pageref{c0typesol}.

\subsubsection*{Visualizing solutions with a phase-like diagram}

Using the change of variables from \prettyref{eq:cov}, we can visualize the structure of solutions to \prettyref{eq:chromonatural_eom_final} in a very effective manner. Solutions to \prettyref{eq:chromonatural_eom_final} can be plotted as trajectories in the $q$-$p$ plane. Two solutions parameterize the same trajectory if and only if they are related by a shift in the time variable $N$ according to \prettyref{eq:lambda-time-translation}. In the first row of \prettyref{fig:trajectories-panel} we plot various such trajectories. Since the phase $u_{0}$ defined by \prettyref{thm:oscillatory} is invariant under \eqref{eq:symmetry3}, each trajectory has a well-defined phase which is indicated by colour in the first row of \prettyref{fig:trajectories-panel}, with the same colour coding as in Fig.~\ref{fig:omega_eq_1}. Oscillation is represented by the spirals in the top right panel of \prettyref{fig:trajectories-panel}. Solutions spiral inwards along a trajectory of fixed colour, and each crossing of the $p$-axis corresponds to a zero of the solution.

\begin{figure}
\centering{}\includegraphics{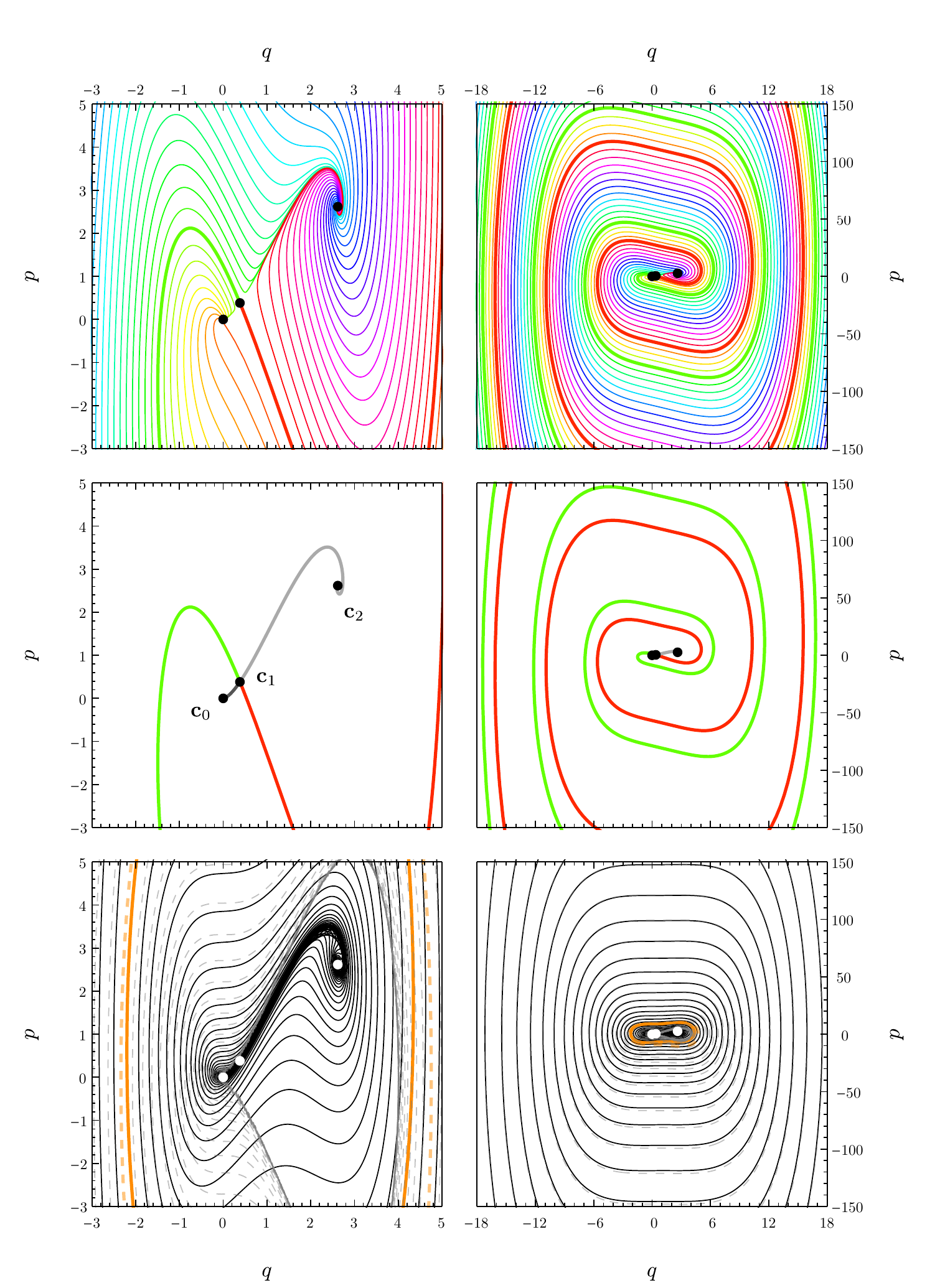} \caption{\label{fig:trajectories-panel} Evolution of the background gauge field in phase space, depicted by trajectories of the vector field \prettyref{eq:vec-field} for $\xi=3$. The dots are zeroes of the vector field, corresponding to the $c_{i}$ solutions of Sec.~\ref{subsec:solutions}. The left column shows the non-oscillatory regime around the zeroes, whereas the right column is a zoomed-out view showing the oscillatory regime. The second row shows some special trajectories, the third row depicts contours of constant $-\omega\tau$. See the text for further details.}
\end{figure}

Already at this point, since no two trajectories are allowed to cross, we observe that the boundary between the basins of attraction for $\mathbf{c}_{0}$ and $\mathbf{c}_{2}$ is given precisely by the two trajectories of $c_{1}$\nobreakdash-type solutions (plus of course their limit point $\mathbf{c}_{1}$.) These are depicted as red and green lines in the second row of Fig.~\ref{fig:trajectories-panel}.

We can construct a natural coordinate system on the $q$-$p$ plane by taking a coordinate complementary to the phase $u_{0}$. The complementary invariant $\omega$ of solutions is not a suitable candidate, because it transforms nontrivially under \eqref{eq:symmetry3} according to \prettyref{eq:ic-transform}. The quantity $-\omega\tau$ is however invariant under \eqref{eq:symmetry3}, and level curves are shown in the last row of \prettyref{fig:trajectories-panel} (with a spacing of $N/10$, see Eq.~\eqref{eq:g-def}). For a given trajectory, these level curves correspond to fixed-time contours, and hence the speed of approach to the respective $c_{i}$ solution is encoded in the spacing of these level curves. The contour $-\omega\tau=\xi$ is highlighted in orange, indicating the transition between the oscillatory and the non-oscillatory regime (see \prettyref{eq:osc-from-omega}). Moreover, we indicate the level curves of $D(N)^{1/4}$ (see Eq.~\eqref{eq:def-D}) as dashed lines, showing the excellent agreement between $-\omega\tau$ and the auxiliary function $D(N)^{1/4}$ in the oscillatory regime. These dashed level curves accumulate to the level curve $D(N)=0$, which plays a key role in the analysis of \prettyref{app:sec3}.

The resulting coordinate system is degenerate at $\omega=0$. In the left panel of the middle row of Fig.~\ref{fig:trajectories-panel}, we show the $\omega=0$ solutions corresponding to the two ``instanton-type'' trajectories (vacuum-to-vacuum transitions which tunnel from the $c_{1}$ solution in the infinite past to either the $c_{0}$ solution or $c_{2}$ solution in the infinite future) as grey curves. The asymptotic formula for the corresponding solutions is \prettyref{eq:instanton-type} with $\rho>0$ and $\rho<0$ respectively. These instanton-type solutions, together with all the $c_{i}$ solutions (which are the limit points), are all of the only non-oscillatory ($\omega=0$) solutions to \prettyref{eq:chromonatural_eom_final}. 
\begin{figure}
\begin{centering}
\includegraphics{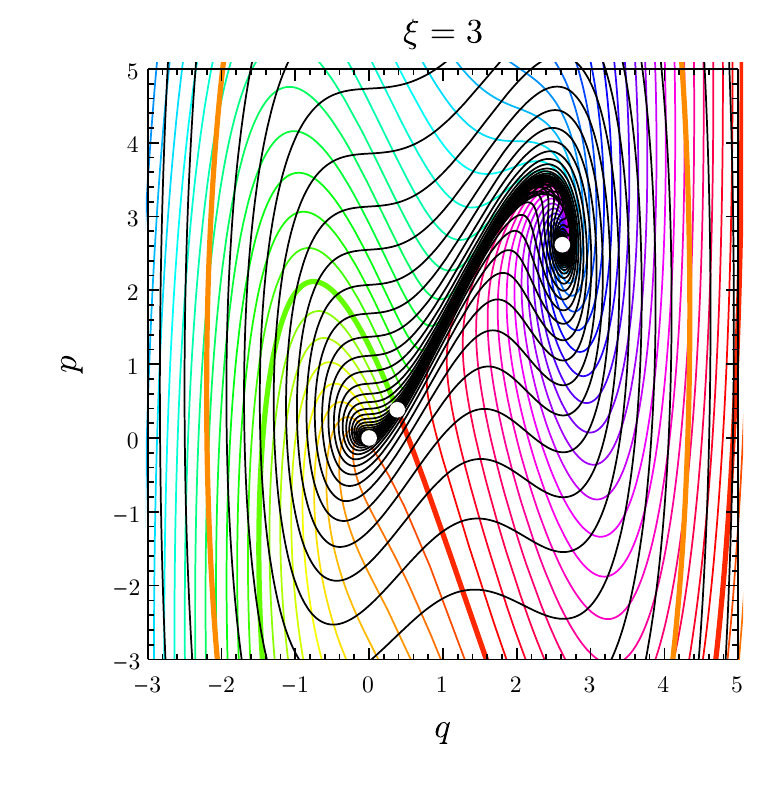}$\qquad$\includegraphics{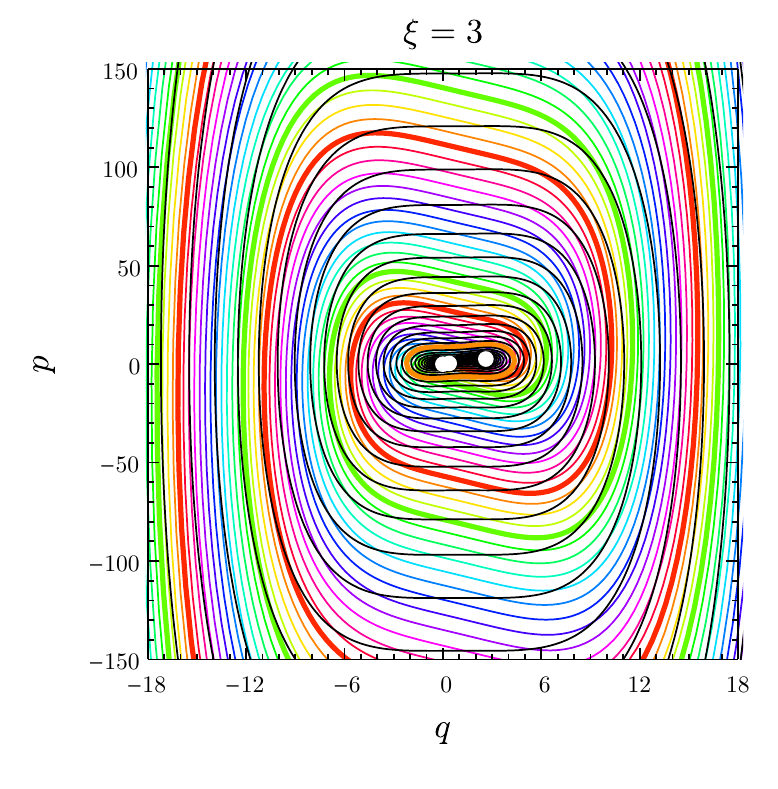} 
\par\end{centering}
\caption{\label{fig:rainbow} Summary of the evolution of the background gauge field. The coloured lines are trajectories of the classical background field evolution in phase space, with the colour coding corresponding to different phases $u_{0}$ as in Fig.~\ref{fig:omega_eq_1}. The black curves are contours of constant $-\omega\tau$. }
\end{figure}

In this way, we understand the structure of all the solutions to \prettyref{eq:chromonatural_eom_final} and how they fit together. The results are summarized in Fig.~\ref{fig:rainbow}, which simply combines the first and last row of Fig.~\ref{fig:trajectories-panel}.

We see how if generic initial conditions are chosen to be oscillatory, they will spiral inwards towards either $\mathbf{c}_{0}$ or $\mathbf{c}_{2}$. Finally, recall from \prettyref{fig:u0-of-xi} that $\mathbf{c}_{2}$ is favoured, overwhelmingly so as $\xi$ increases. This explains why Eq.~\eqref{eq:BoundaryOscillatory} can be used as a criterion for the required magnitude of the initial conditions necessary to trigger a $c_{2}$\nobreakdash-type solution.

We note that $c_{1}\approx0$ for large $\xi$, and hence the basin of attraction for $\mathbf{c}_{2}$ actually comes very close to $\mathbf{c}_{0}$. Thus it is quite likely that the abelian gauge field fluctuations will trigger a $c_{2}$\nobreakdash-type solution even before the oscillatory regime is entered. But given that the fluctuations grow exponentially with $\xi$, it is sufficient to simply have an order-of-magnitude estimate for the transition time. From \prettyref{eq:BoundaryOscillatory} we conclude that the transition occurs when 
\begin{equation}
e\langle A_{\text{ab}}^{2}\rangle^{1/2}\sim\xi/(-\tau).\label{eq:transition-time}
\end{equation}

\subsection{Anisotropic background gauge fields \label{subsec:Anisotropic-background-fields}}

Until now, throughout our analysis of the homogeneous background we have assumed isotropy, so that
\begin{equation}
(A^{(0)})_{i}^{a}=\delta_{i}^{a}\,f(\tau).\label{eq:iso-bg}
\end{equation}
 Here we study the effect of anisotropies in the background gauge field, assuming de-Sitter space. (Note that anisotropic CNI cosmologies have been studied e.g. in \cite{Maleknejad:2013npa}.) We verify that all anisotropies of a homogeneous gauge-field background decay (in physical coordinates) into one of the previously-studied isotropic solutions. 

First we show there are no anisotropic analogues of the $c_{i}$ solutions. Next we consider homogeneous anisotropic perturbations of \eqref{eq:iso-bg} to linear order. Finally, as a non-perturbative verification, we numerically solve the full nonlinear equations of motion for a homogeneous anisotropic background. This justifies our previous assumption of isotropy.

\subsubsection*{No anisotropic steady states }

We make the anisotropic analogue of our ansatz \prettyref{eq:ansatz-f}, namely 
\begin{equation}
e\,(A^{(0)})_{i}^{a}(\tau)=C_{i}^{a}\xi/(-\tau)\,.\label{eq:ansatz-aniso}
\end{equation}
\prettyref{eq:eq_motion_final} yields the following equations: 
\begin{equation}
(2\xi^{-2}+\sigma_{2}^{2}+\sigma_{3}^{2})\sigma_{1}-2\sigma_{2}\sigma_{3}=0\textrm{ and cyclic permutations,}\label{eq:svdansatz}
\end{equation}
where $\sigma_{1}$, $\sigma_{2}$ and $\sigma_{3}$ are the singular values of $C_{i}^{a}$. As can be verified by solving this with a computer algebra system, the only real solutions are equivalent to the three isotropic solutions we already found in \prettyref{eq:ansatz-f}.

\subsubsection*{Anisotropic perturbations of the background gauge field to linear order }

We wish to consider first-order perturbations of \prettyref{eq:iso-bg} which are anisotropic, and thus of the form 
\[
(A^{(0)})_{i}^{a}=f(\tau)\delta_{i}^{a}+P_{i}^{a}(\tau)\,\epsilon+\mathcal{O}(\epsilon^{2})\,.
\]
 As explained in \prettyref{app:global-symmetries}, we may decompose $P_{i}^{a}$ into irreducible representations of the diagonal $\mathrm{SO}(3)$ subgroup of $\mathrm{SO}(3)_{\mathrm{gauge}}\times\mathrm{SO}(3)_{\mathrm{spatial}}$ as 
\[
P_{i}^{a}=s(\tau)\delta_{i}^{a}+v^{j}(\tau)\varepsilon_{ija}+T_{i}^{a}(\tau).
\]
 Since $f(\tau)$ already accounts for the diagonal degree of freedom, we impose that $s(\tau)=0$. We now substitute this ansatz into our twelve equations of motion. It's implicit here that the three $(A^{(0)})_{0}^{a}$ components are determined by the equation of motion (as constraint equations). Expanding out the remaining nine equations of motion, we obtain \prettyref{eq:chromonatural_eom_final}, together with the rank-five equation for the perturbations 
\begin{align*}
T''+\frac{2\xi}{-\tau}ef(\tau)\,T & =0\,.
\end{align*}
We find no equations of motion involving $v^{j}(\tau)$, indicating that they are the gauge degrees of freedom. Accordingly, the remaining three equations are equivalent to $0=0$.

In the case of the $c_{0}$-solution $f(\tau)=0$, the general solution is $T(\tau)=T_{0}+\tau T_{1}$ so that $T(\tau)\to T_{0}$ as $\tau\to0^{-}$. The corresponding physical quantity thus decays as $(-\tau)T_{0}+\mathcal{O}(\tau^{2})$ as $\tau\to0$. 

In any $c_{2}$\nobreakdash-type solution, as the isotropic component $f(\tau)$ grows in the positive direction, WKB theory dictates that $T$ decays in proportion to $\left(-\tau^{-1}f(\tau)\right)^{-1/4}$ (see \prettyref{eq:exp-approx}). Thus when $ef(\tau)$ is any $c_{2}$\nobreakdash-type solution, $T$ decays in proportion to $\sqrt{-\tau}$, and the corresponding physical quantity decays as $(-\tau)^{3/2}$. In the case of the $c_{2}$ solution, the exact solution for $T(\tau)$ is 
\begin{align*}
T(\tau) & =T_{0}\sqrt{-\tau}\exp\left(\pm i\,\mu_{2}\log(-\tau)\right)+\textrm{h.c}\,,\\
\mu_{2} & \equiv\xi\sqrt{2c_{2}-(2\xi)^{-2}}\,,
\end{align*}
where $T_{0}$ is a complex symmetric traceless tensor determined by the initial conditions.

\subsubsection*{Numerical solutions of full nonlinear anisotropic background}

We showed above that any homogeneous background solution which has anisotropies at sub-leading order must evolve towards isotropy. While this supports the hypothesis that any homogeneous background tends toward isotropy, it does not prove anything about highly anisotropic backgrounds. For this we resort to numerical simulation. Specifically, we numerically solve the fully anisotropic\footnote{For the fully anisotropic case, while one may diagonalize the spatial components of $A(\tau_{\mathrm{ini}})$ at the initial time $\tau_{\mathrm{ini}}$, the spatial components of $A(\tau_{\mathrm{ini}})$ and $A'(\tau_{\mathrm{ini}})$ are not simultaneously diagonalizable.} system \prettyref{eq:eq_motion_final} for the twelve functions\footnote{We have twelve functions $(A^{(0)})_{\mu}^{a}$ subject to three constraint equations and six independent dynamical equations. To get a well-formed system, one must add three gauge-fixing constraints. We found it convenient to impose temporal gauge $\left(A^{(0)}\right)_{0}=0$.} $(A^{(0)})_{\mu}^{a}$ which determine the background field $A^{(0)}$.

\begin{figure}[t]
\centering{}\subfigure{ \includegraphics{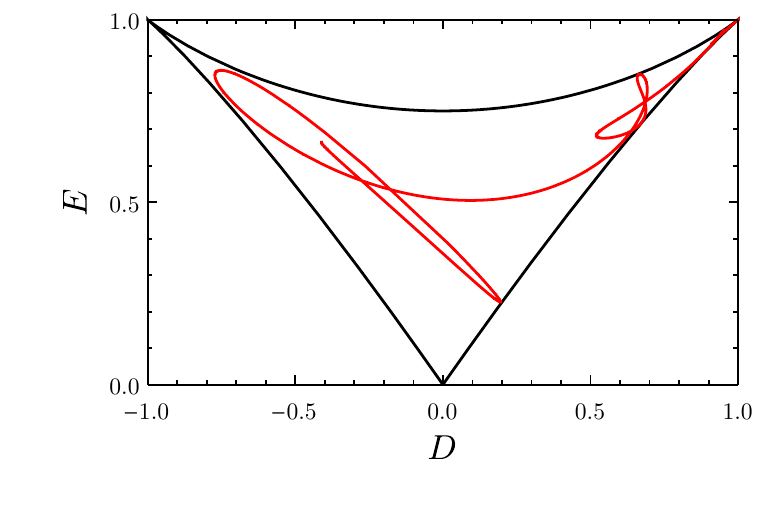} } \hfill{}\subfigure{ \includegraphics{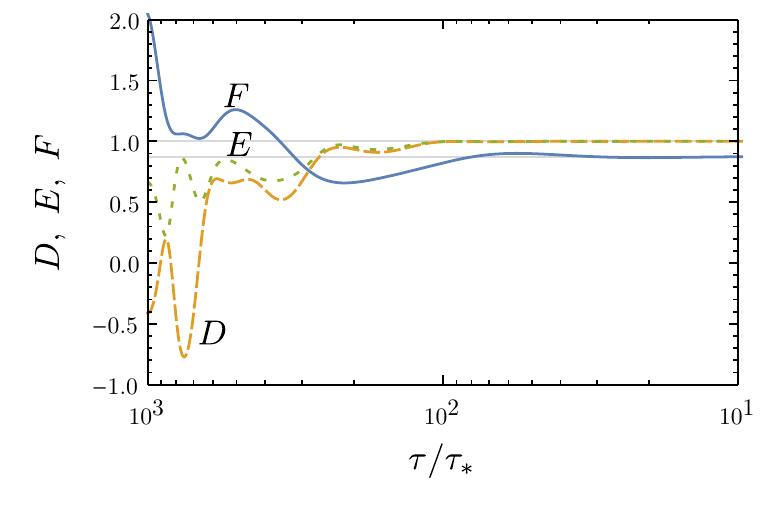} } \caption{\label{fig:c2-iso} A $c_{2}$\protect\nobreakdash-type solution with random anisotropic initial conditions evolving towards isotropy for $\xi=3$. Left panel: the positively orientated isotropic configuration corresponds to the top-right corner. Right panel: For the same parameter point, time evolution (relative to an arbitrary time $\tau_{*}$) of the quantities $D$, $E$ and $F$ as defined in the text. The horizontal lines denote the asymptotic values $1$ and $c_{2}\approx0.87$ characterizing an isotropic $c_{2}$ solution. }
\end{figure}

In order to understand the resulting numerical solutions, we need a way to visualize their properties. As a generalization of $ef(\tau)$ to the non-isotropic case, we define for any nonzero $3\times3$ matrix $A$: 
\begin{align}
F(A)\equiv\frac{-\tau|A|}{\sqrt{3}\xi}\ \textrm{ where }|A|\equiv\sqrt{A_{i}^{a}A_{i}^{a}}.\label{eq:F-norm-fn}
\end{align}
Then in the special case that $A^{(0)}$ is isotropic, $F(A^{(0)}(\tau))=-\tau\left|ef(\tau)\right|/\xi$. Thus if $A^{(0)}$ corresponds to an isotropic $c_{i}$\nobreakdash-type solution, then $\lim_{\tau\to0^{-}}F(A^{(0)}(\tau))=c_{i}$ in accordance with \prettyref{thm:bg-future}. Next we must quantify the degree to which $A^{(0)}$ is anisotropic. We define in \prettyref{app:quant-aniso} two further parameters $D(A)$ and $E(A)$ for this purpose, which are invariant under rotation, gauge symmetry, and multiplication by a positive scalar. Up to a normalization factor, $D(A)\in\left[-1,1\right]$ is $(\det A)/\left|A\right|^{3}$, while the definition of $E(A)$ is more involved. The pair of values $(D(A),E(A))$ determines a point in the triangular-shaped region in the left panel of \prettyref{fig:c2-iso} (see also \prettyref{fig:DE-triangle}). The matrix $A$ is isotropic when $D(A)=\pm1$, or equivalently when $E(A)=1$. When $D(A)=+1$ (resp.\ $-1$) the gauge field is positively (resp.\ negatively) oriented.\footnote{We say that an isotropic gauge field $A_{i}^{a}=f(\tau)\delta_{i}^{a}$ is \emph{positive} when $f(\tau)$ is positive. This has the following physical significance. An isotropic gauge field identifies an orthonormal basis of the Lie algebra with $|f(\tau)|$ times an orthonormal basis of 3-space (via contraction). The Lie algebra carries a natural orientation where the structure constants are $+i\varepsilon^{abc}$. For 3-space, the chiral term in our Lagrangian picks out a preferred orientation (which corresponds to the standard orientation when $\xi>0$). The relative orientation thus is the sign of $\xi f(\tau)$. Since we assume $\xi>0$, the relevant sign is that of $f(\tau)$.}

In \prettyref{fig:c2-iso} we show a typical example of a $c_{2}$\nobreakdash-type solution with random anisotropic initial conditions evolving towards isotropy. As expected for all $c_{2}$\nobreakdash-type solutions, $(D,E)\to(1,1)$ in the infinite future, indicating positively-oriented isotropy. (In contrast, $(D,E)$ need not approach $(1,1)$ for $c_{0}$\nobreakdash-type solutions since $F(A)\to0$ and zero is isotropic.) In our numerical simulations, we observe that within a few e-folds, all solutions converge towards an isotropic solution of the form \prettyref{eq:ansatz-aniso}, namely either a $c_{0}$\nobreakdash-type solution or a $c_{2}$\nobreakdash-type solution. The proportion of $c_{2}$\nobreakdash-type solutions was even higher than predicted in \prettyref{fig:u0-of-xi}. We conclude that 
\begin{itemize}
\item the $c_{2}$\nobreakdash-type solutions are stable against small anisotropic perturbations; 
\item sufficiently large anisotropic initial conditions with $\xi>2$ usually lead to $c_{2}$\nobreakdash-type solutions; 
\item the continuous sourcing of the background field through the enhanced abelian super-horizon modes will therefore inevitably lead to an isotropic $c_{2}$\nobreakdash-type background solution. 
\end{itemize}

\subsection{Coupled gauge field - inflaton background \label{subsec:dynamical_background}}

Previously in this section, we took $\xi$ to be a constant, external parameter in the equation of motion for the homogeneous gauge field background. We now turn to the complete dynamical background evolution, including also the evolution of the homogeneous inflaton field $\phi(\tau)$ and hence the (slow) evolution of $\xi$. This leads to the coupled system of equations
\begin{align}
f^{\prime\prime}(\tau)+2e^{2}f^{3}(\tau)-e\frac{\alpha}{\Lambda}\,\phi^{\prime}f^{2}(\tau)\,\  & =0 \,, \label{eq:fbackground2} \\
\phi''(\tau)+2aH\phi'(\tau)+a^{2}V_{,\phi}(\phi)+\frac{3\alpha e}{\Lambda a^{2}}f^{2}(\tau)f'(\tau) & =0 \,.
\label{eq:phibackground}
\end{align}
In single-field slow-roll inflation, $\xi$ typically increases over the course of inflation. This slowly evolving value of $\xi$ slightly modifies some of the results of the previous subsections (e.g.\ the precise values for the range of phases which lead to the $c_2$-solution in Fig.~\ref{fig:u0-of-xi} may be shifted), but the overall picture remains valid. After inserting the $c_2$-solution, Eq.~\eqref{eq:phibackground}  can be expressed as
\begin{equation}
 \frac{\dot \phi}{H} + \frac{V_{,\phi}}{V} +\frac{ \alpha}{e^2 \Lambda} H^2 (c_2 \, \xi)^3 = 0\,,
 \label{eq:phibackground2}
\end{equation}
where we have neglected the slow-roll suppressed term $\ddot \phi$. Assuming that the last term is sub-dominant, $\dot \phi/H$, $V_{,\phi}/V$ and $\xi$ are all proportional to $\sqrt{\varepsilon}$, with $\varepsilon = \dot \phi^2/(2 H^2) \simeq (V'/V)^2/2$ denoting the first slow-roll parameter. Moreover, for a quadratic or cosine potential as is usually considered in axion inflation models, $H^2$ is proportional to $1/\varepsilon$.  In summary, the time-dependence of all terms in Eq.~\eqref{eq:phibackground} is governed by the square root of the first slow-roll parameter. In particular, if the last term is sub-dominant at any point in time (after the $c_2$-solution has been reached), it will always remain sub-dominant. For the parameter example of Sec.~\ref{sec:example}, we find precisely this situation.

We note that this is a different regime than the `magnetic drift' regime studied in Refs.~\cite{Adshead:2012kp,Dimastrogiovanni:2012ew,Adshead:2013qp,Adshead:2013nka}. There, the friction term was taken to be large compared to the Hubble friction, $\xi \alpha H/(e \Lambda)  \gg 1$. Also in this regime, there is a local attractor for the gauge field background which scales as $f(\tau) \sim 1/\tau$, with however a different constant of proportionality. Within the non-abelian regime, the difference in our results with respect to these earlier works on CNI, in particular concerning the stability of the scalar sector, can be traced back to the fact that we do not restrict our analysis to this magnetic drift regime.

\section{Linearized equations of motion \label{sec:linearized}}

We now turn to the inhomogeneous equations of motion, adding perturbations to the homogeneous quantities discussed in the previous section.  This includes the perturbations of the gauge field, the inflaton and the metric. We start by deriving the linearized equations of motions for all relevant degrees of freedom in Sec.~\ref{sec:setup}, reproducing the results first obtained in Refs.~\cite{Dimastrogiovanni:2012ew,Adshead:2013qp}. The helicity basis, introduced in Sec.~\ref{GaugeAndBasis}, proves to be convenient to identify the physical degrees of freedom and simplify the system of equations. This becomes is particularly evident in Sec.~\ref{sec:eom_gauge_fields} which discusses the resulting equations of motion for the pure Yang--Mills sector. We can immediately identify the single enhanced mode and even give an exact analytical expression for the mode function in the limit of constant $\xi$. Finally, Sec.~\ref{sec:AllFluc} includes also the inflaton and metric tensor fluctuations. {Limitations of the linearized treatment of the perturbations are pointed out in Sec.~\ref{sec:AllFluc}. They primarily affect the helicity 0 sector and we will return to this point in more detail in Sec.~\ref{subsec:powerspectra}.} The results obtained in Secs.~\ref{sec:setup}, \ref{GaugeAndBasis} and \ref{sec:AllFluc} are in agreement with the findings of Refs.~\cite{Dimastrogiovanni:2012ew,Adshead:2013nka,Adshead:2013qp,Namba:2013kia}. Any differences in the results can be traced back to the different parameter regime for the background gauge field evolution, see discussion below Eq.~\eqref{eq:phibackground2}. In addition, we here provide analytical results for the simplified system of Sec.~\ref{sec:eom_gauge_fields}, setting the stage for semi-analytical estimates of the scalar and tensor power spectrum.  Throughout Sec.~\ref{sec:eom_gauge_fields} and \ref{sec:AllFluc}, the parameter $\xi$ is taken to be constant. Its evolution will be considered in Sec.~\ref{sec:example}.

\subsection{Setup for the linearized analysis \label{sec:setup}}

In this section we will derive the system of first-order differential equations for all gauge degrees of freedom and the inflaton fluctuations, assuming a homogeneous and isotropic gauge field background, for further details see App.~\ref{app:fulleom}.

The starting point is the action reported in Eq.~\eqref{eq:action}. We work in the ADM formalism~\cite{Arnowitt:1962hi}, i.e.\ we write the metric as
\begin{equation}
\textrm{d} s^2 = - N^2 \textrm{d} \tau^2 + h_{ij} \left(\textrm{d} x^i + N^i \textrm{d} \tau\right) \, \left(\textrm{d} x^j + N^j \textrm{d} \tau\right) \,.
\end{equation}
We decompose
\begin{equation}
h_{ij} = a^2 \left[\left(1+A\right) \delta_{ij} + \partial_i \partial_j B + \partial_{(i} C_{j)} + \gamma_{ij}\right] \,,
\end{equation}
where $\partial_i C_{i}=0$ and $\gamma_{ij}$ is transverse-traceless, i.e.\ $\gamma_{ii} = \partial_i \gamma_{ij} = 0$. There are four degrees of freedom arising from coordinate reparameterization: two scalar and two vector. In the scalar sector, we impose \textit{spatially flat gauge} which sets
\begin{equation}
A = B = 0 \,.
\end{equation}
In the vector sector, {we choose the gauge in such a way that} $C_i = 0$ (see Eq.~(A.115) of \cite{Baumann:2009ds}), which implies
\begin{equation}
h_{ij} = a^2 \left(\delta_{ij} + \gamma_{ij}\right) \,.
\end{equation}
As we numerically checked that the lapse ($N$) and shift ($N^i$) contributions to the subsequent equations do not affect the results, we discard them in this section\footnote{Hence, we take $N = a$ and $N^i = 0$.} for pedagogical reasons, and we refer to Appendix \ref{app:fulleom} for the complete expressions. 
{Note that the discarded vector $N^i$ contains two physical (but non-radiative) degrees of freedom from the metric.}\\

We expand the gauge fields and the inflaton field as (see also Eqs.~\eqref{eq:background_plus_linear} and \eqref{eq:ansatz-A})
\begin{align}
 A^a_i(\tau, \vec x) & = f(\tau) \, \delta^a_i + \delta A^a_i(\tau, \vec x) \,, \\
 \phi(\tau,x) &  = \langle\phi(\tau)\rangle + \delta \phi (\tau,x) \,,
\end{align}
where $\delta^a_i$ is the Kronecker delta function, $f(\tau) \, \delta^a_i$ and $\langle\phi(\tau)\rangle$ comprise the homogeneous background, while $\delta A^a_i(\tau, \vec x)$ and $\delta \phi(\tau,x)$ denote the quantum fluctuations around the homogeneous background. In order to infer the equations of motion which are linear in the fluctuations, we need to expand the Lagrangian up to quadratic order in all the field fluctuations. To make the computation easy to follow, we split it and the results into various terms arising from $\mathcal{S}_i = \int \textrm{d}^4 x \, \sqrt{-g} \, \mathcal{L}_i$, following Eq.~\eqref{eq:action}. The quadratic terms take the following form:
\begin{align}
 \delta^2 \mathcal{S}_\phi  = & \int \textrm{d}^4 x\, \left[\frac{a^2}{2} \left(\left(\delta \phi'\right)^2  - \left(\partial_i \delta \phi \right)^2 - a^2 V_{,\phi\phi} \left(\delta \phi\right)^2 \right)\right]\,, \\
\delta^2 \mathcal{S}_{\rm YM}  = & \int \textrm{d}^4x \, \left[- \frac{1}{2} \delta A^a_0 \partial_i \partial_i  \delta A^a_0 + \delta A^a_0 \partial_0 \left(\partial_i \delta A^a_i - e f \varepsilon^{abi} \delta A^b_i \right) + \right. \nonumber \\
& + \delta A^a_0 \left(2 e f' \varepsilon^{abi} \delta A^b_i + e f \varepsilon^{abi} \left(\partial_i\delta A^b_0\right) + e^2 f^2 \delta A^a_0\right) - \nonumber \\
& - \frac{1}{2}  \delta A^a_i  \left(\delta A^a_i\right)'' + \frac{1}{2} \delta A^a_j \left(\partial_i \partial_i \delta A^a_j\right) + \frac{1}{2} (\partial_i \delta A^a_i)^2 - \nonumber \\
& - e^2 f^2 \left(\left(\delta A^a_a\right)^2 + \frac{1}{2} \left(\delta A^b_i\right)^2 - \frac{1}{2} \delta A^b_i \delta A^i_b \right) - e f \varepsilon^{abc} \left( \delta^b_i \partial_i \delta A^a_k \delta A^c_k + \delta^c_k \partial_i \delta A^a_k \delta A^b_i \right) + \nonumber \\
& + \frac{\left(f'\right)^2 - e^2 f^4}{4} \gamma^{jk} \gamma^{kj} - f' \gamma^{aj} \partial_0 \delta A^{(a}_{j)} - e f^2 \varepsilon^{abc} \gamma^{ij} \left(\delta^b_{(i} \partial_{j)} \delta A^a_{c} - \delta^b_{(i} \partial_{c} \delta A^a_{j)} \right) \bigg] \,,
\label{eq:QuadraticLYMMainText}
\end{align}
 \begin{align}
\delta^2 \mathcal{S}_{\rm CS}  = & \int \textrm{d}^4 x \, \left[-\frac{\alpha}{\Lambda} \langle \phi \rangle \left(\varepsilon^{ijk} \left(\delta A^a_i\right)' \partial_j \delta A^a_k + 2 e f \left(\delta A^a_{[a}\right)' \delta A^k_{k]} + e f'  \delta A^b_{[b} \delta A^c_{c]}  \right) - \right. \nonumber \\
& \left. - \frac{\alpha}{ \Lambda} \delta \phi \left[f' \varepsilon^{ajk}  \partial_j \delta A^a_k + 2 e f f' \delta A^a_a +  e f^2 \left(\delta A^a_a\right)' -  e f^2 \partial_i \delta A^i_0 \right]\right] \,,
\label{eq:CSQuadraticMainText} \\
\label{eq:EH_to_use}
\delta^2 \mathcal{S}_{EH}  = & \int \textrm{d}^4 x \, \frac{a^2}{2} \left[\frac{ \gamma_{ij} \partial_l \partial_l \gamma_{ij} }{4} + \frac{\gamma^\prime_{ij} \gamma^\prime_{ij}}{4} \right] \,. 
\end{align}

As we will see in Sec. \ref{sec:GCG}, the term proportional to $\delta A^a_0$ in the first line of Eq.~\eqref{eq:QuadraticLYMMainText} vanishes after imposing the generalized Coulomb condition that reads (see Sec.~\ref{sec:GCG} for further details):
\begin{equation}
\partial_i \delta A^a_i - e f \varepsilon^{abi} \delta A^b_i = 0 \,.
\end{equation}

The equation of motion for $\delta A^a_0$ gives Gauss's law, which reads
\begin{align}
\label{eq:GaussLaw}
0 & = \left(\partial_i \partial_i  - 2 e^2 f^2 \right) \delta A^a_{0} - \left(\partial_i \delta A^a_i - e f \varepsilon^{abi} \delta A^b_i\right) - \nonumber \\
& - 2 e f' \varepsilon^{abi} \delta A^b_i - 2 e f \varepsilon^{abi} \left(\partial_i \delta A^b_0\right) + \frac{\alpha e}{\Lambda} f^2 \delta^{ai} \partial_i \delta \phi \,.
\end{align}

We write the linear equation of motion for the inflaton fluctuations in terms of the variable $(a \delta \phi)$ for later convenience
\begin{align}
0 & = - \left(a \delta \phi\right)'' + \partial_i \partial_i \left(a \delta \phi\right) + \frac{a''}{a} \left(a \delta \phi\right) - 2 \mathcal{H} \left(a \delta \phi\right) + 2 \mathcal{H}^2 \left(a \delta \phi\right) - a^2 V_{,\phi\phi} \left(a \delta \phi\right) - \nonumber \\
& - \frac{\alpha}{\Lambda a} \left[f' \varepsilon^{ijk} \partial_j \delta A^i_k + 2 e f f' \delta A^a_a + e f^2 \left(\delta A^a_a\right)' - e f^2 \partial_i \delta A^i_0 \right] \,,
\label{eq:EOMInflatonFluc}
\end{align}
where $\mathcal{H} = \frac{a'}{a}$. The linear equations of motion for the dynamical gauge field degrees of freedom are
\begin{align}
0 &= \delta A_i^{a \prime \prime } - \partial_j \partial_j \delta A_i^{a}  + \partial_i \left( -\delta A_0^{a \prime} - \partial_j \delta A^{a}_j \right) + \nonumber \\
& - e \varepsilon^{a b c } \left[ -2 \delta A_0^{b} \delta^c_i f^\prime  + 2 f \delta^b_j \partial_j \delta A_i^{c} + f \delta^c_i  (- \delta A_0^{b \prime} + \partial_j \delta A^{b}_j) - f \delta^b_j \partial_i \delta A_j^{c} \right] - \nonumber \\
& - e^2 f^2 \left[ \delta^a_j \delta^b_j \delta A_i^{b}+ \delta^a_j\delta^b_i \delta A_j^{b}+ \delta^b_j\delta^b_i \delta A_j^{a} - 3 \delta A_i^{a} - 2 \delta^a_i\delta^b_j \delta A_b^{j} \right] - \nonumber \\
& - \frac{\alpha}{2 \Lambda}\left[ \phi^\prime \varepsilon_{ i j k} \left[2 \partial_j A^a_k + 2 e f \varepsilon^{abc} \delta A_j^{b} \delta^c_k\right] + 2 f^\prime \varepsilon^{aji}\partial_j \delta \phi + 2 e f^2 \delta^a_i \delta \phi' \right] - \nonumber \\
& - f'' \gamma^{a}_i + f' \left(\gamma^a_i\right)' + e f^2  \varepsilon^{ajk} \partial_k \gamma_{ij} + e^2 f^3 \gamma^a_{i}  \equiv {\mathbf L}(\delta A,\phi,\gamma)\,,
\label{eq:eom_i_linear}
\end{align}
where for later convenience we have defined the linear operator $\mathbf L$. Finally, we give the equation of motion for the metric fluctuations in terms of the variable $(a \gamma_{ij})$ 
\begin{align}
\frac{a}{4}\left[(a \gamma_{ij})'' + \left(-\partial_l\partial_l - \frac{a''}{a} \right) (a \gamma_{ij})\right] & = \frac{f^{\prime \ 2} - e^2 f^4}{2 a}(a \gamma_{ij} ) - f^\prime \partial_0 \delta A^{(i}_{j)}  + f^\prime \partial_{(i} \delta A^{j)}_0 + \nonumber \\
& + e f^2 \gamma^{ij} \left[\varepsilon^{aic} \partial_{[j} \delta A^a_{c]} + \varepsilon^{ajc} \partial_{[i} \delta A^a_{c]} \right] + e^2  f^3  \delta A^{(i}_{j)} \,.
\label{eq:eomGWs}
\end{align}
{We point out that the right-hand side of this equation is given by the transverse traceless component of the anisotropic energy momentum tensor, and hence this equation is equivalent to the linearized Einstein equations used in gravitational wave physics~\cite{Maggiore:1999vm}. }

\subsection{Choice of gauge and basis \label{GaugeAndBasis}}

In the following we explain our choice of basis for dealing with the gauge field fluctuations, which will greatly simplify the analysis. After introducing the generalization of Coulomb gauge to a non-vanishing gauge field background, we decompose the 12 degrees of freedom of the gauge fields into helicity eigenstates. We further identify the degrees of freedom associated with gauge transformations and constraint equations, leaving us with six physical degrees of freedom. The explicit form of these basis vectors is given in App.~\ref{app:basis}.

\subsubsection{Generalized Coulomb gauge}
\label{sec:GCG}

In Eq.~\eqref{eq:ansatz-A}, we chose a particular representative $(A^{(0)})^a_i=f(\tau)\delta^a_i$ for our homogeneous and isotropic background field.  This is just one representative from the corresponding gauge-equivalence class.  When considering physical fluctuations around this background configuration, we restrict ourselves to fluctuations which are orthogonal to the space spanned by gauge-equivalent configurations
\begin{equation}
 U f(\tau) \delta^a_i \,  U^\dagger + \frac{i}{e} U \partial_i U^\dagger\,, \quad \forall \; U  = \exp(i \xi^a \textbf{T}_a)
 \label{eq:GaugeSpace}
\end{equation}
where $\xi^a$ denotes a infinitesimal gauge transformation parameter.  This condition should apply on each time slice. This orthogonality condition reads
\begin{equation}
 0 = \langle \mathbf{D}_i^{(A^{(0)})} \xi^a \textbf{T}_a | \delta A_i^c \textbf{T}_c \rangle = \text{Tr} \int (\mathbf{D}_i^{(A^{(0)})} \xi^a \textbf{T}_a) \cdot (\delta A_i^c \textbf{T}_c ) \, \textrm{d}^3 \vec{x} \quad \forall \, \xi^a \,,    
 \label{eq:orthogonal}
\end{equation} 
where 
\begin{equation}
 \mathbf{D}_\mu^{(A)} \xi^a \textbf{T}_a = \partial_\mu \xi^a \textbf{T}_a - i e [A_\mu^b \textbf{T}_b, \xi^a \textbf{T}_a]
 \label{eq:infinitesimal_gauge}
\end{equation}
denotes the gauge-covariant derivative. After some algebra, Eq.~\eqref{eq:orthogonal} becomes
\begin{equation}
 - \frac{1}{2} \int \xi^a (\partial_i \delta A^a_i + e \varepsilon_{a b c} (A^{(0)})^b_i \, \delta A^c_i )\, \textrm{d}^3 \vec{x}  = 0  \quad \forall \, \xi^a  \quad \Rightarrow \quad 
 D_i^{(A^{(0)})} \delta A^a_i = 0 \,.
\end{equation}
Inserting Eq.~\eqref{eq:ansatz-A} for $A^{(0)}$ we obtain the gauge fixing condition 
\begin{equation}
 \mathbf{C}(\delta A)^{\, a} \equiv  \partial_i(\delta A^a_i) + e f(\tau) \varepsilon^{a i c}  \delta A^c_i = 0 \,.
 \label{eq:GenCoulombGauge}
\end{equation}
In the following, we will in fact not fix the gauge, but we will choose a basis in which the 6 physical degrees of freedom (obeying Eq.~\eqref{eq:GenCoulombGauge}) and the 3 gauge degrees of freedom (contained in the subspace~\eqref{eq:GaugeSpace}) are explicit and orthogonal. This preserves gauge invariance as a consistency check at any point of the calculation, while clearly separating physical and gauge degrees of freedom. Together with the constraint equation~\eqref{eq:GaussLaw}, this splits the 12 degrees of freedom contained in the $3 \times 4$ matrix $\delta A^a_\mu$ into 6 physical, 3 gauge and 3 non-dynamical degrees of freedom, as expected for a massless  $\mathrm{SU}(2)$ gauge theory.

\subsubsection{The helicity basis}
\label{sec:helbas}

In the absence of a background gauge field ($f(\tau) = 0$), Eq.~\eqref{eq:action} is invariant under two independent global $\mathrm{SO}(3)$ rotations: one acting on the spatial index and the other acting on the  $\mathrm{SU}(2)$ index of the gauge field $A^a_i(\tau, \vec x)$. In the presence of the background Eq.~\eqref{eq:ansatz-A}, this symmetry is reduced to a single $\mathrm{SO}(3)$ symmetry, which is the diagonal subgroup of $\mathrm{SO}(3)_\text{gauge} \times \mathrm{SO}(3)_\text{spatial}$ (see  \prettyref{app:global-symmetries} for details). The Fourier decomposition introduces a preferred direction $\vec k$, which without loss of generality we will choose to be along the $x$-axis, $\vec k = k \hat e_1$. This breaks the diagonal $\mathrm{SO}(3)$ symmetry down to an $\mathrm{SO}(2)$ symmetry of rotations around $\vec k$.  The generator of this symmetry is a helicity operator of massless particles. This generator is given explicitly by\footnote{ The helicity operator can be extended to act on the full $3\times4$ matrix $\delta A$ by defining $\mathbf{H}(\delta  A_0)^a  =  i \varepsilon^{1 c a } \delta A^c_0$.} 
\begin{align}
  \mathbf{H}(\overrightarrow{\mathbf{\delta  A}}) & \equiv [\mathbf{T}_1, \overrightarrow{\mathbf{\delta  A}}] + i \hat e_1 \times  \overrightarrow{\mathbf{\delta  A}} \,, \nonumber 
 \\
  \Rightarrow \mathbf{H}(\delta  A)^a_i & = i \varepsilon^{1 c a } \delta A^c_i + i \varepsilon^{i1j} \delta A^a_j  \,,
     \label{eq:helicity_operator}
\end{align}
where $\overrightarrow{\mathbf{\delta  A}} = \delta A^a_i \mathbf{T}_a$. 
Expressing the linearized system of equations of motion in terms of the linear operator $\mathbf L$, see Eq.~\eqref{eq:eom_i_linear}, 
 $\mathbf{L}(\delta A, \delta \phi, \gamma) = 0 $,
the symmetry properties above imply that this linear operator must commute with the helicity operator, $[\mathbf{L}, \mathbf{H} ] = 0$. It will thus be useful to decompose $\delta A$ into helicity eigenstates, which will lead to a block-diagonal structure for $\mathbf{L}$.  This formalism is best known in the context of metric perturbations under the name ``SVT decomposition.''

Let us look at the eigenvalues and multiplicities of these states. With respect to the diagonal $\mathrm{SO}(3)$ group, the $3 \times 3$ matrix {$\delta A$} decomposes as $\mathbf{3} \otimes \mathbf{3} = \mathbf{1} \oplus \mathbf{3} \oplus \mathbf{5}$: a scalar (S), a vector (V), and a tensor (T). The corresponding helicities are
\begin{equation}
 (S) : 0 \,, \qquad (V) : -1, 0, + 1 \,, \qquad (T) : -2, -1, 0, +1, + 2 \,,
\end{equation}
implying multiplicities 3, 2 and 1 for the helicities $0$, $\pm 1$ and $\pm 2$ respectively. These nine degrees of freedom correspond to the six physical and three gauge degrees of freedom mentioned in the previous subsection. Since the gauge transformation acts only on the gauge indices and not the spatial indices (see e.g.\ Eq.~\eqref{eq:infinitesimal_gauge}), the three gauge degrees of freedom form a vector (helicities $-1,0,1$). The helicities of the remaining six physical degrees of freedom must thus be $-2, -1 , 0 (\times 2), +1, +2$. Hence in this basis, the linear operator $\mathbf L$ (and hence our equations of motion for $\delta A^a_i$) decomposes into four decoupled equations (for $\pm 1$ and $\pm 2$) and two (generically) coupled equations for the two helicity 0 modes. In the following we describe the basis we use for the gauge, constraint and physical degrees of freedom. The corresponding explicit basis vectors can be found in App.~\ref{app:basis}.  In Sec.~\ref{sec:AllFluc} we will include also the inflaton (helicity 0) and tensor metric (helicity $\pm2$) fluctuations, which will couple to the helicity 0 and $\pm2$ gauge field modes, respectively.

Let us first consider the pure gauge degrees of freedom, which can be decomposed in terms of  basis vectors $\hat g$  (see Appendix.~\ref{app:basis}) as 
\begin{equation}
 (\delta \tilde{A}^a_\mu)_\text{gauge}(\tau, \vec k) = \sum_b (\hat g_b)^a_\mu w_b^{(g)}(\tau, \vec k) 
 =  \sum_{\lambda}( \hat g_\lambda)^a_\mu \, w^{(g)}_\lambda(\tau, \vec k)\,,
\end{equation}
where $\hat g_b$ $(b = \{1,2,3\})$ denotes the basis vectors of the gauge degrees of freedom in $\mathrm{SU}(2)$ space, $\hat g_\lambda$ with $\lambda = \{-,0,+\}$ denotes the basis vectors in terms of helicity states and we denote the corresponding coefficients by $w_{b}^{(g)}$ and $w_{\lambda}^{(g)}$,
respectively. Introducing a helicity basis for the elements of the Lie algebra, $w_b^{(g)} \mathbf{T}_b = w_\lambda^{(g)} \mathbf{T}_\lambda$, with
\begin{equation}
 \mathbf{T}_\pm =  (\mathbf{T}_2 \pm i \mathbf{T}_3)/\sqrt{2} \,, \qquad  \mathbf{T}_0 = \mathbf{T}_1 \,,
\end{equation}
the infinitesimal gauge transformation~\eqref{eq:infinitesimal_gauge}  defines the basis vectors $\hat g^{(g)}_\lambda$,
\begin{equation}
 \mathbf{D}_{\mu}^{(A^{(0)})} \left( w_\lambda^{(g)} \mathbf{T}_\lambda \right) = (\hat g_\lambda)^a_\mu w_\lambda^{(g)} \mathbf{T}_a \,.
 \label{eq:gaugebasis0}
\end{equation}
The explicit form of the three basis vectors $\hat g_\lambda$ which satisfy Eqs.~\eqref{eq:gaugebasis0} and are eigenstates of \eqref{eq:helicity_operator} are given in App.~\ref{app:basis}. We note that in  any background which is a fixed point of Eq.~\eqref{eq:symmetry3} (e.g.\ if the background follows the $c_2$-solution), the $k$ and $\tau$ dependence of the basis vectors is fully encoded in $x = - k \tau$ only.

So far, we have considered only the spatial components of $\delta A$. The time components $\delta A_0$ are subject to the constraint equations~\eqref{eq:GaussLaw}. We can solve these explicitly and substitute the solution back into the equation of motion for the spatial components. 
However in practice we will find it more convenient to introduce basis vectors also for these constraint degrees of freedom, extending the differential operator $\mathbf L$ to a differential-algebraic operator. The explicit form of the  corresponding `constraint' basis vectors in the helicity basis is given in App.~\ref{app:basis}.

The remaining eigenspace of the helicity operator~\eqref{eq:helicity_operator} is spanned by the basis vectors of the physical degrees of freedom $\hat e_\lambda$, see App.~\ref{app:basis} for the explicit form.
As anticipated, we find two states with helicity 0, and one state each with helicity $-2, -1, +1, +2$. One can immediately verify explicitly that the basis vectors presented here have the desired qualities, i.e. they are  orthonormal, eigenfunctions of $\mathbf{H}$ with eigenvalues giving the helicity, $\mathbf L(\hat g_\lambda) = 0$ (gauge invariance) and  $\mathbf{C}(\hat e_\lambda) = 0$ (compatibility with generalized Coulomb gauge, see \eqref{eq:GenCoulombGauge}). The choice of basis derived here closely resembles the basis used in Refs.~\cite{Dimastrogiovanni:2012ew,Adshead:2013nka,Adshead:2013qp,Namba:2013kia}. The main difference is that we explicitly separate the $\pm \lambda$ states and normalize our basis vectors. As we will see in the next section, this simplifies the resulting equations of motion (in particular when considering only the degrees of freedom of the gauge sector).

\subsection{Equations of motion for the gauge field fluctuations}
\label{sec:eom_gauge_fields}

In this section we will compute the equations of motion for the gauge field fluctuations in the canonically normalized helicity basis introduced above (see also App.~\eqref{app:basis}) and discuss their key properties. In Sec.~\ref{sec:AllFluc} we will extend this to the inflaton and metric tensor fluctuations.

Inserting $\delta A$ in terms of the helicity basis,
\begin{equation}\label{eq:w-lambda-def}
 \delta \tilde{A}^a_\mu(\tau,k) =   \sum_{\lambda}( \hat e_\lambda)^a_\mu \, \frac{ w^{(e)}_\lambda(x) }{\sqrt{2k}} +  \sum_{\lambda}( \hat f_\lambda)^a_\mu \, \frac{ w^{(f)}_\lambda(x) }{\sqrt{2k}} + \sum_{\lambda}( \hat g_\lambda)^a_\mu \, \frac{ w^{(g)}_\lambda(x) }{\sqrt{2k}} \,,
\end{equation}
into the first order equations of motion (see Sec.~\ref{sec:setup} and App.~\ref{app:fulleom}) we  obtain the equations of motions for the coefficients $w^{(i)}_\lambda$ with $i = \{e, f, g \}$ denoting the physical, constraint and gauge degrees of freedom, respectively. Here we have absorbed a factor of $\sqrt{2k}$  (originating from the normalization of the Bunch--Davies vacuum, c.f.\ Eq.~\eqref{eq:BDvac}) into $w_\lambda^{i}$. As we will see below,  for the background solutions of interest, this will render $w^{(i)}_\lambda$ a function of $x = - k \tau$ only. The three equations for the gauge degrees of freedom simply read $0 = 0$, reflecting gauge invariance. For the helicity $\pm2$ modes we obtain
\begin{align*}
\frac{\textrm{d}^2 }{\textrm{d}x^2} w^{(e)}_{-2}(x)+\left(1+\frac{2\xi}{x}+2\left(\frac{\xi}{x}+1\right)y_k(x)\right)w^{(e)}_{-2}(x) & =0\,, \label{eq:-2mode}\\
\frac{\textrm{d}^2}{\textrm{d}x^2}w^{(e)}_{+2}(x) +\left(1-\frac{2\xi}{x}+2\left(\frac{\xi}{x}-1\right)y_k(x)\right)w^{(e)}_{+2}(x) & =0\,,
\label{eq:+2mode}
\end{align*}
with 
\begin{equation}
\label{eq:y_definition}
 y_k(x) \equiv \frac{e f(\tau)}{k} \,.
\end{equation}

We can now appreciate some of the advantages of the canonically normalized helicity basis. The equations of motion for the $\pm 2$ modes are fully decoupled, and moreover contain no terms involving the first derivatives $w_\lambda'(x)$. This makes them amenable to WKB analysis. We immediately see that for $\xi \geq 0$ and $y(x) \geq 0$ the $-2$ mode always has a positive effective squared mass, whereas the $+2$ mode can be tachyonic. Consider momentarily the limit where $\xi$ is constant and $f(\tau)$ is one of the three fixed points of the symmetry~\eqref{eq:symmetry3},  $y_k(x) = y(x) = c_i\xi/x$ for some $i\in\{0,1,2\}$, where $c_i$ is defined in \eqref{eq:c-definitions}. In this case, the solutions of Eq.~\eqref{eq:+2mode} are Whittaker functions:  
 \[ w_{+2}^{(e)}(\tau)= e^{(1+c_{i})\pi\xi/2} W_{-i(1+c_{i})\xi,\,-i\sqrt{2\xi^{2}c_{i}-1/4}}\left(2ik\tau\right) \,,  \label{eq:AnalyticalSolution}  \]  
with the normalization set by the Bunch--Davies vacuum \eqref{eq:BDvac} in the infinite past. For  $c_i=c_0=0$, this solution coincides with the abelian solution, Eq.~\eqref{eq:rev_Whittaker}. The region of tachyonic instability for the helicity $+2$ mode as well as some useful approximative expressions for Eq.~\eqref{eq:AnalyticalSolution} will be discussed below.

Next we turn to the $\pm 1$ modes. Here we need to consider the two equations for the dynamical degrees of freedom and two constraint equations. For shorter notation, we introduce two reparameterizations of $y_k(x)$, 
\begin{equation}
y_k(x)=\tfrac{1}{2}\left(\tan\theta_{-}(x)-1\right)=\tfrac{1}{2}\left(\tan\theta_{+}(x)+1\right).
\end{equation}
with $\theta_\pm \in (- \pi/2, \pi/2)$. With this, the equations for the dynamical and constraint degrees of freedom read
\begin{align}
0 = & \, \frac{\textrm{d} w^{(e)}_{\pm1}(x)}{\textrm{d}x^2} \pm \sqrt{2}  i \sec \theta_{\pm}\left( \tfrac{\textrm{d}}{\textrm{d}x}\theta_{\pm}\right) w^{(f)}_{\pm1}(x) \,  + \nonumber \\
& +\left(-\left(\tfrac{\textrm{d}}{\textrm{d}x}\theta_{\pm}\right)^{2}+\tfrac{1}{2}\left(1\pm\sin(2\theta_{\pm})\right)+\left(\frac{1}{2}\mp\frac{\xi}{x}\right)\left(2\cos^{2}\theta_{\pm}\mp\tan\theta_{\pm}\right)\right) w^{(e)}_{\pm1}(x) \,, \nonumber \\
0 = & \, \sec \theta_{\pm} w^{(f)}_{\pm1}(x) \pm 2 \sqrt{2}  i w^{(e)}_{\pm1}(x) \tfrac{\textrm{d}}{\textrm{d}x}\theta_{\pm}  \,.
\end{align}
After inserting the constraint equations, this simplifies to
\begin{align}
 \frac{\textrm{d} w^{(e)}_{-1}(x)}{\textrm{d}x^2}+\left(3\left(\tfrac{\textrm{d}}{\textrm{d}x}\theta_{-}\right)^{2}+\tfrac{1}{2}\left(1-\sin(2\theta_{-})\right)+\left(\frac{1}{2}+\frac{\xi}{x}\right)\left(2\cos^{2}\theta_{-}+\tan\theta_{-}\right)\right)w^{(e)}_{-1}(x) & =0 \nonumber \\
  \frac{\textrm{d} w^{(e)}_{+1}(x)}{\textrm{d}x^2}+\left(3\left(\tfrac{\textrm{d}}{\textrm{d}x}\theta_{+}\right)^{2}+\tfrac{1}{2}\left(1+\sin(2\theta_{+})\right)+\left(\frac{1}{2}-\frac{\xi}{x}\right)\left(2\cos^{2}\theta_{+}-\tan\theta_{+}\right)\right) w^{(e)}_{+1}(x) & =0.
  \label{eq:pm1mode}
\end{align}
For the $c_2$ background attractor solution given in Eq.~\eqref{eq:c2solution}, the resulting effective masses are always positive. We will turn to a more detailed stability analysis in the next subsection. 

Finally let us consider the two helicity zero modes. Since these expressions are somewhat more lengthy, we only give the final expression after substituting the constraint equation
\begin{equation}
 -  \sec^2 \theta_0  w^{(f)}_{0}(x)- \sqrt{2}  \sec \theta_0  \tfrac{\textrm{d}}{\textrm{d}x}\theta_{0} w^{(e)}_{02}(x)  = 0\label{eq:0constraint} \,,
\end{equation}
where we have introduced $\theta_0 \in [- \pi/2, \pi/2]$ as
\begin{equation}
 y_k(x)=\tfrac{1}{\sqrt{2}}\tan\theta_{0}(x) \,.
\end{equation}
With this,
\begin{equation}
 \frac{\textrm{d} }{\textrm{d}x^2}\left(\begin{array}{c}
  w^{(e)}_{01}(x)\\
 w^{(e)}_{02}(x)
\end{array}\right)+M_{0}(x)\left(\begin{array}{c}
w^{(e)}_{01}(x)\\
w^{(e)}_{02}(x)
\end{array}\right)  = 0
\label{eq:0modes}
\end{equation}
with the $2\times2$ Hermitian mass matrix $M_{0}$ for the two $e_{0i}$ modes given by
\[
M_{0}=\left(\begin{array}{cc}
1-\sqrt{2}\frac{\xi}{x}\tan\theta_{0}+2\tan^{2}\theta_{0} & -\frac{2i}{\cos\theta_0}\,\left(\frac{\xi}{x}-\tfrac{1}{\sqrt{2}}\tan\theta_{0}\right)\\
\frac{2i}{\cos\theta_0}\left(\frac{\xi}{x}-\tfrac{1}{\sqrt{2}}\tan\theta_{0}\right) & \sin^{2}\theta_{0}+\cos^{-2}\theta_{0}-\frac{\xi}{\sqrt{2}x}\sin2\theta_{0}+3\left(\tfrac{\textrm{d}}{\textrm{d}x}\theta_{0}\right)^{2}
\end{array}\right)\,.
\label{eq:M0}
\]
{For the background solution of Eq.~\eqref{eq:c2solution} and for $\xi \gg 1$, the off-diagonal elements vanish. Furthermore, on far sub-horizon scales ($x \gg 1$) and far super-horizon scales  $x \ll 1$, the diagonals elements approach unity and $2 \xi^2/x^2$, respectively.}
One may be tempted to diagonalize the general expression of $M_{0}$, but the diagonalization
would be time-dependent and hence re-introduces first-derivatives of $w^{(e)}_{0i}$. {We note that the helicity 0 sector is particularly sensitive to non-linear contributions neglected in our analysis so far, arising from two enhanced helicity $2$ modes coupling to the helicity 0 modes, see also Eq.~\eqref{eq:caveat}. We will discuss this effect in more detail in Sec.~\ref{subsec:powerspectra}.}

In summary and as anticipated, the modes with helicity $\pm 1$ and $\pm2$ form four decoupled harmonic oscillators with the time-dependent mass terms specified in Eqs.~\eqref{eq:-2mode}, \eqref{eq:+2mode} and \eqref{eq:pm1mode}. The two helicity zero modes form  a system of coupled, mass-dependent harmonic oscillators given by Eq.~\eqref{eq:0modes}.

 \subsubsection*{Stability analysis}
 Let us look at these fluctuations in two different background limits (taking $\xi$ to be constant): $f(\tau) \rightarrow 0$ and $e f(\tau) = c_2 \,  \xi/(- \tau)$ (see Eq.~\eqref{eq:c-definitions}). In the former case, defining $e_{\pm0}(x):=(e_{01}(x)\mp ie_{02}(x))/\sqrt{2}$, the effective squared mass\footnote{Since we are considering ODEs as a function of $x$, $\tfrac{\textrm{d}^2}{\textrm{d}x^2} w(x) + m^2 w(x) = 0$, the `squared mass' is dimensionless quantity.} of $e_{\pm0}$, $e_{\pm1}$, and $e_{\pm2}$ is $1\mp2\xi/x$, so that as in the abelian case, the `$-$' modes are
unenhanced, while the `$+$' modes are enhanced for $x<2\xi$. In this case, the spatial components of the helicity basis simplify to
\begin{equation}
e_{\pm0}=\frac{1}{2}\left(\begin{array}{ccc}
0 & 0 & 0\\
0 & 1 & \pm i\\
0 & \mp i & 1
\end{array}\right),\quad e_{\pm1}=\frac{1}{\sqrt{2}}\left(\begin{array}{ccc}
0 & 1 & \pm i\\
0 & 0 & 0\\
0 & 0 & 0
\end{array}\right),\quad e_{\pm2}=\frac{1}{2}\left(\begin{array}{ccc}
0 & 0 & 0\\
0 & 1 & \pm i\\
0 & \pm i & -1
\end{array}\right).
\end{equation}

 \begin{figure}
 \centering
  \includegraphics{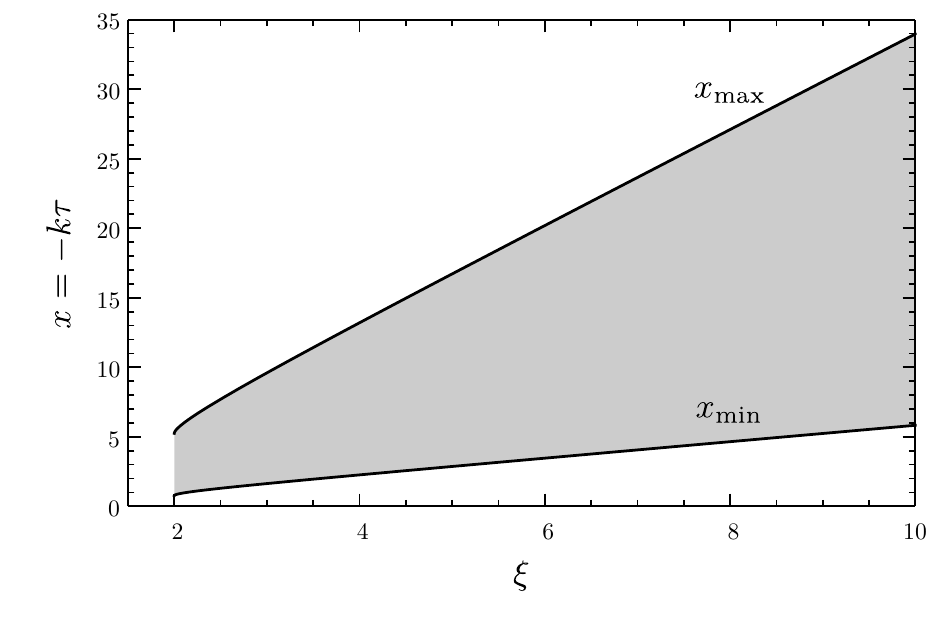}
  \caption{Tachyonic region of the $+2$ mode in a non-abelian, $c_2$-background (shaded in gray). Contrary to the abelian regime, the instability region is bounded from both sides and only affects a single mode. }
  \label{fig:stabilitytensor}
 \end{figure}

On the other hand, for $e f(\tau) = c_2 \,  \xi/(- \tau)$, the squared mass terms appearing in Eqs.~\eqref{eq:-2mode} and Eq.~\eqref{eq:pm1mode} for $w_{-2}^{(e)}(x)$ and $w_{\pm1}^{(e)}(x)$, respectively, are positive for all $x, \xi > 0$. Similarly, the matrix $M_0$ in Eq.~\eqref{eq:0modes} is positive definite if and only if $\xi > 3/\sqrt{2} \simeq 2.12$, as can be immediately checked from the sign of the trace and the determinant. The instability in the scalar sector for $\xi < 3/\sqrt{2}$ corresponds precisely to the catastrophic instability observed in \cite{Dimastrogiovanni:2012ew} for $m_g > 2 H$, where in our notation $m_g = \sqrt{2} e f(\tau)/a \mapsto \sqrt{2} c_2 \xi H$. Note however that a non-abelian background can only form for $\xi > 2$, and it is likely to form only for $\xi \gg 2$ (see Sec.~\ref{subsec:Isotropic} and in particular Fig.~\ref{fig:u0-of-xi}). Moreover, as we will see in Sec.~\ref{subsec:gaugefluctuations} (see in particular Fig.~\ref{fig:matching}) the transition from the abelian regime to the non-abelian regime occurs at $\xi \gtrsim 3$ for perturbative gauge couplings $e < 0.1$. As a consequence, despite the presence of a potentially dangerous instability  in the scalar sector, the corresponding region of the parameter space is naturally avoided by the mechanism described in this work.

 The only mode which can experience a tachyonic instability in a $c_2$-background is the $e_{+2}$ mode. The mass term for this mode is then given by
\begin{align}
 m_{+2}^2 = 1 - \frac{2 \xi}{x} + \frac{\xi (\xi - x)}{x^2} \left(1 + \sqrt{1 - 4/\xi^2} \right) \rightarrow 1 - \frac{4 \xi}{x} + \frac{2 \xi^2}{x^2} \,,
 \label{eq:mass-term-2p}
\end{align}
where in the last step we have assumed $\xi \gg 2$.
The region in which this mass term becomes tachyonic is shown as gray shaded region in Fig.~\ref{fig:stabilitytensor} and is given by
\begin{equation}
 x_\text{min} \equiv \left[ 1 + c_2 - \sqrt{1 + c_2^2}\right] \xi < x <  \left[1 + c_2 + \sqrt{1 + c_2^2}\right] \xi  \equiv x_\text{max}\,,
 \label{eq:def-xmin-xmax}
\end{equation}
which for $\xi \gg 2$ yields\footnote{As a word of caution, we note that in particular for small $\xi$, the lower part of this range can come close to the inhomogeneity scale of the initial conditions determined by the abelian regime, cf.~Fig.~\ref{fig:propertiesabelian}. In this case, inhomogeneities in the initial conditions may affect the instability band depicted in Fig.~\ref{fig:stabilitytensor}.}
\begin{equation}
 \xi (2 - \sqrt{2}) < x < \xi (2 + \sqrt{2}) \,.
\end{equation}

In Fig.~\ref{fig:TensorModeEvolution} we show the evolution of the helicity $+2$ mode in both regimes. The initial conditions are set by imposing the Bunch--Davies vacuum on far sub-horizon scales,
\begin{equation}
 w_{+2}^{(e)}(x) =   e^{i x} \qquad \text{for} \; x \gg 1 \,. \label{eq:BDsimple}
\end{equation}
Note that these solutions are only functions of $x = - k \tau$ and $\xi$. They are in particular independent of the value of the gauge coupling $e$ and the absolute time $\tau$ (although of course the slowly varying value of $\xi$ will introduce an implicit dependence on $\tau$).

\begin{figure}
\subfigure{
\includegraphics[width = 0.48 \textwidth]{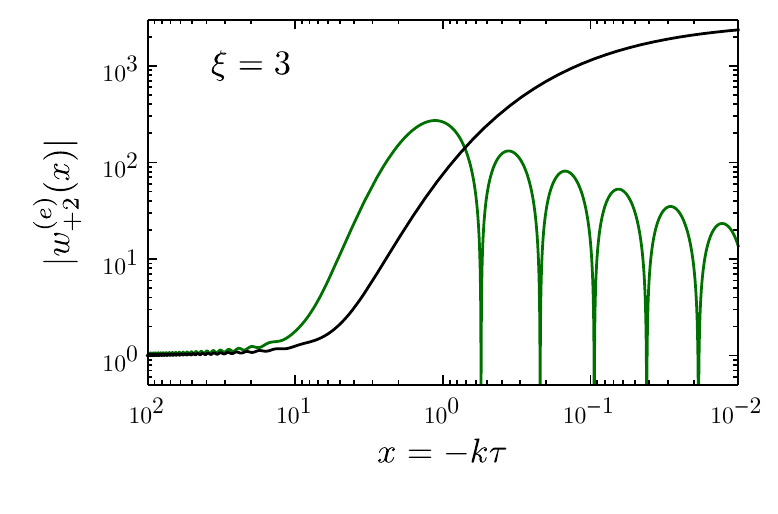}
}
\hfill
\subfigure{
\includegraphics[width = 0.48  \textwidth]{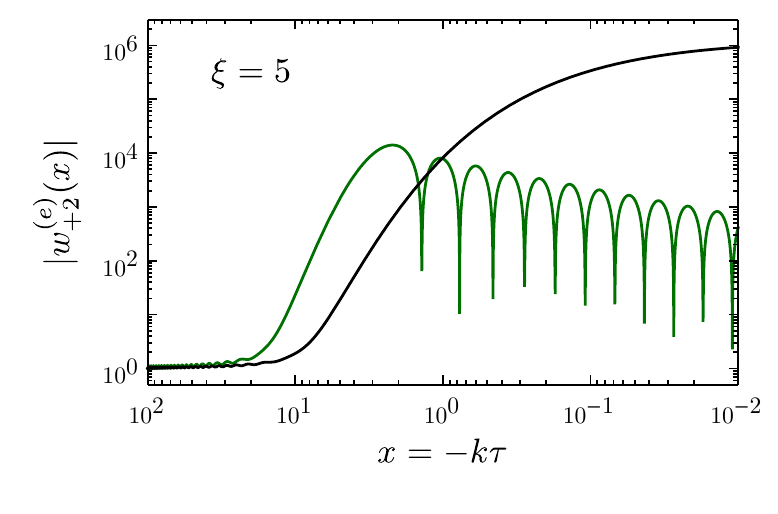}
}
\caption{Evolution of the helicity $+2$ mode for $\xi = 3$ (left panel) and $\xi = 5$ (right panel). The black curves correspond to the abelian regime ($f(\tau) \simeq 0$), the green curves to the non-abelian regime ($e \, f(\tau) = c_2 \xi /(- \tau)$. }
  \label{fig:TensorModeEvolution}
\end{figure}

{A key observation here is that in the presence of a vanishing or $c_0$-type background solution, the helicity $+2$ mode of the linearized non-abelian theory behaves very much like the enhanced helicity mode of the abelian theory, see Fig.~\ref{fig:Whittakerabelian}. With this in mind, we will refer to the time before the $c_2$-solution develops as the `abelian regime', in contrast to the `non-abelian regime' characterized by the inherently non-abelian effects induced by the  $c_2$-background solution.}

In summary, in the abelian regime ($f(\tau) = 0$), 3 modes become enhanced as soon as $x < 2 \xi$. In the non-abelian regime ($e f(\tau) = c_2 \, \xi/(- \tau)$), only a single mode is enhanced. The enhancement occurs earlier (as soon as $x \lesssim \xi(2 + \sqrt{2})$) compared to the abelian regime  but contrary to the abelian regime only lasts for some finite period of time (for $\Delta x \simeq 2  \sqrt{2} \xi $). As we will see below, these differences lead to a significant changes between the properties of gauge field fluctuations arising in the abelian and non-abelian regime. 
In particular, due to the helicity decomposition, the single enhanced mode of the non-abelian regime can only source (at the linear level) tensor perturbations (i.e.\ gravitational waves) but not scalar perturbations (i.e.\ no curvature perturbations). 

\subsubsection*{Approximate solutions for the enhanced helicity +2 mode}

The tachyonically enhanced modes in the abelian regime have been discussed in much detail in the literature (see Sec.~\ref{sec:abelian}). Here we focus on the enhanced mode in the inherently non-abelian regime, i.e.\ the helicity $+2$ mode in a $c_2$ gauge-field background.  In the limit of constant $\xi$, the exact solution to Eq.~\eqref{eq:+2mode} is given by Eq.~\eqref{eq:AnalyticalSolution},\footnote{To leading order in $1/\xi$, this expression agrees with the one given in~\cite{Adshead:2013nka}. The discrepancy at higher orders is due to the different background solution chosen (see also discussion in Sec.~\ref{subsec:dynamical_background}).}
\[
w_{+2}^{(e)}(\tau)= e^{\kappa \pi/2} W_{-i\kappa,\,-i\mu}\left(2ik\tau\right) \,,
\label{eq:whittakeragain}
\]
with $\kappa = (1 + c_2) \xi \simeq 2 \xi$ and $\mu = \xi \sqrt{2 c_2 - (2 \xi)^{-2}} \simeq \sqrt{2} \xi$. For the $c_2$ background solution, we derive useful asymptotic expressions in App.~\ref{app:asymptotics}, approximating the enhanced component of Eq.~\eqref{eq:AnalyticalSolution} on super-horizon scales and around the epoch of maximal enhancement, respectively:
\begin{align}
\, w_{+2} & \simeq 2 e^{(\kappa - \mu) \pi} \sqrt{\frac{ x}{ \mu}} \cos\left[\mu \ln(2 x) + \theta_0 \right]  \qquad  &&\text{for   } x \ll  x_\text{min} \,,
 \label{eq:NonabelianAsymptotics}\\
 w_{+2}(x)& \simeq\sqrt{4\pi}\ e^{(\kappa-\mu)\pi}\left(\frac{\zeta(x)}{V(x)}\right)^{1/4}\mathrm{Ai}\left(\zeta(x)\right) \qquad &&\text{for    } x \simeq x_\text{min} \,,
  \label{eq:NonabelianAsymptotics2}
\end{align}
with Ai$(x)$ denoting the Airy Ai function and
\begin{align}
  V(x)  & =-\left(1-\frac{2\kappa}{x}+\frac{\mu^{2}}{x^{2}}\right) \,, \qquad
\zeta(x)\approx\left(2\mu^{2}-2\kappa x_\text{min}\right)^{1/3}\ln\left(\frac{x}{x_\text{min}}\right) \,.
\label{eq:Airyapp}
 \end{align}
These expressions will prove useful to obtain analytical estimates. For details see App.~\ref{app:asymptotics}.

 \subsection{Including the inflaton and gravitational wave fluctuations \label{sec:AllFluc}}
 
With this understanding of the growth of the gauge field fluctuations, let us now include the scalar and metric tensor fluctuations. The former will couple to the helicity 0 gauge field modes, the latter to the helicity $\pm2$ modes. 

Let us start with the helicity 0 modes. After inserting the constraint equation which now reads
\begin{align}
 -    w^{(f)}_{0}(x)- \sqrt{2}  \cos \theta_0  \tfrac{\textrm{d}}{\textrm{d}x}\theta_{0} w^{(e)}_{02}(x) & = -  \frac{i k \alpha}{\sqrt{2} e \Lambda} \sin \theta^2_0 \delta \phi \,, \label{eq:0constraintphi}
\end{align}
the equations for the dynamical degrees of freedom read
\begin{align}
  \frac{\textrm{d} }{\textrm{d}x^2}\left(\begin{array}{c}
  w^{(e)}_{01}(x)\\
 w^{(e)}_{02}(x) \\
 a \delta \phi(x) 
\end{array}\right)
+ N^k_0(x)  \frac{\textrm{d} }{\textrm{d}x}\left(\begin{array}{c}
  w^{(e)}_{01}(x)\\
 w^{(e)}_{02}(x) \\
 a \delta \phi(x) 
\end{array}\right) + 
 \tilde M^k_{0}(x)\left(\begin{array}{c}
w^{(e)}_{01}(x)\\
w^{(e)}_{02}(x) \\
 a \delta \phi(x)
\end{array}\right)  = 0\,, \label{eq:fullscalar}
\end{align}
with 
\begin{align}
 N^k_0 & = \frac{\gamma x}{\sqrt{2}} \begin{pmatrix}
     0 & 0 &  \tan^2 \theta_0 \\
     0 & 0 & - \frac{i}{\sqrt{2}} \sin \theta_0 \tan^2 \theta_0 \\
     - \tan^2 \theta_0 & - \frac{i}{\sqrt{2}} \sin \theta_0 \tan^2 \theta_0  & 0
                                            \end{pmatrix},
\end{align}
\begin{align}
 \tilde M^k_0 & = \begin{pmatrix}
        \left(M_0\right)_{11} &  \left(M_0\right)_{12}  & \frac{\gamma}{\sqrt{2}} \tan^2 \theta_0 \\
      \left(M_0\right)_{21}  &
        \left(M_0\right)_{22} 	& -\frac{i \gamma}{2} \left(\sin \theta_0 \tan^2 \theta_0 - 2 x \cos \theta_0 \theta_0'\right) \\
        - \sqrt{2} \gamma x  \frac{\tan \theta_0}{\cos^2 \theta_0} \theta_0' & -\frac{i \gamma x}{16} \frac{\left(15 + \cos\left(4 \theta_0\right) \right)}{\cos^3 \theta_0} \theta_0' & m_{\phi \phi}^2
                     \end{pmatrix},  \label{eq:M0full}         
\end{align}
where $M_0(x)$ is given in Eq.~\eqref{eq:M0}, $\theta_0' = \textrm{d} \theta_0 / \textrm{d}x $, $\gamma = \alpha H  / (e \Lambda) $ and
\begin{equation}
 m_{\phi \phi}^2 = 1 - \frac{2}{x^2} + \frac{\alpha^2 H^2 x^2}{e^2 \Lambda^2} \frac{ \sin^6 \theta_0}{\sin^2\left(2 \theta_0\right)} + \frac{V_{, \phi \phi}}{H^2 x^2	}\,.
 \label{eq:mphi}
\end{equation}
As long as the gauge coupling is not very small, $e \gg \xi H  \alpha/\Lambda$, the coupling between the gauge field modes and the inflaton mode is suppressed around horizon crossing, and the two helicity 0 gauge field modes are to a good approximation described by the unperturbed system~\eqref{eq:0modes}. 
Recalling that $\tan \theta_0/\sqrt{2} = y(x) = e f(\tau)/k$, we note that all off-diagonal terms, including the entire matrix $N_0^k$, vanish in the absence of a background gauge field, $f(\tau) = 0$. In the case of a gauge field background following the $c_2$-solution, $\tan \theta_0/\sqrt{2} = c_2 \, \xi / x$, we note that all off-diagonal terms, including the matrix $N_0^k$, vanish for $x \rightarrow \infty$, i.e.\ in the infinite past, and the matrix $\tilde M_0^k$ reduces to the unit matrix, allowing us to impose Bunch--Davies initial conditions in the infinite past.

In the opposite regime, on far super-horizon scales, the second term in Eq.~\eqref{eq:mphi} is responsible for the freezing out of the $\delta \phi$ fluctuations. In the limit $\alpha \rightarrow 0$, $x \ll1$ and $V_{,\phi\phi} \rightarrow 0$, the equations of motion for $w_0^{(\phi)} = a \, \delta \phi$ simply reads
\begin{equation}
 \left(w_0^{(\phi)}\right)''(x)  - \frac{2}{x^2} \,  w_0^{(\phi)}(x) = 0 \,,
 \label{eq:freezeout}
\end{equation}
with the solution $x \, w_0^{(\phi)}(x) = A x^3 + B $ with the integration constants $A$ and $B$. For $x \rightarrow 0$ this leads to a decaying solution ($A = 1, B = 0$) and a constant solution ($A = 0, B = 1$). This is the usual freeze-out mechanism for scalar (and tensor) fluctuations. Note that the sign in Eq.~\eqref{eq:freezeout} is crucial to obtain a constant solution. {The last two terms in Eq.~\eqref{eq:mphi} could in principle interfere with this freeze-out mechanism, however the last term in ensured to be sub-dominant in slow-roll inflation and the second-last term only becomes large together with all the off-diagonal terms, in which case the full coupled system must be analyzed.} {We point out that the freeze-out of the inflaton perturbation, $(a \delta \phi) \propto 1/x$ also entails its decoupling from the helicity 0 gauge field modes on super-horizon scales. }

In the left panel of Fig.~\ref{fig:fluctuations} we show the evolution of these helicity 0 modes for a parameter example of the benchmark scenario of the next section. Here $w_0^{(\phi)}$ denotes the coefficient of the comoving scalar mode $(a \, \delta \phi)$. {We clearly see the freeze-out of the inflation fluctuations after horizon crossing. The oscillations visible on sub-horizon are induced by the time-dependence of the eigenstates of the system. We have verified that the sum of the absolute value squared of all three states is $x$-independent as expected in this regime.\footnote{{The time-dependence of these (interacting)  eigenstates induces some ambiguity when imposing the Bunch--Davies initial conditions at any finite value of $x$. We have verified that our final results are not affected by this.}}
}

\begin{figure}
\subfigure{
\includegraphics{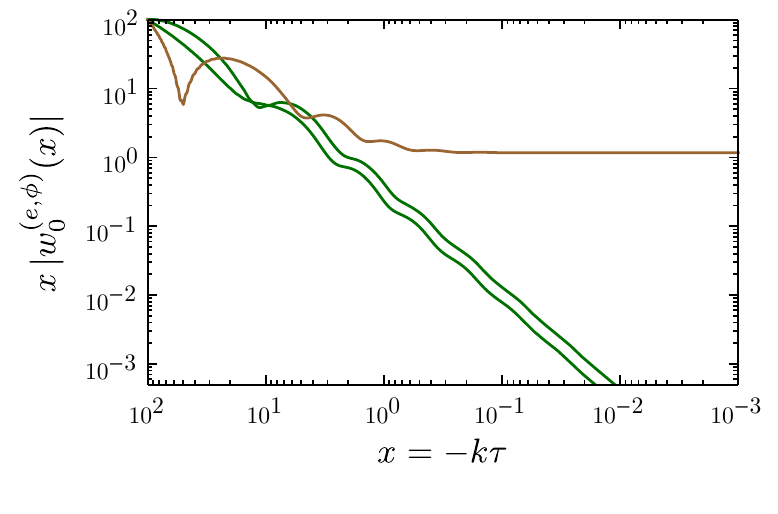}
}
\hfill
\subfigure{
\includegraphics{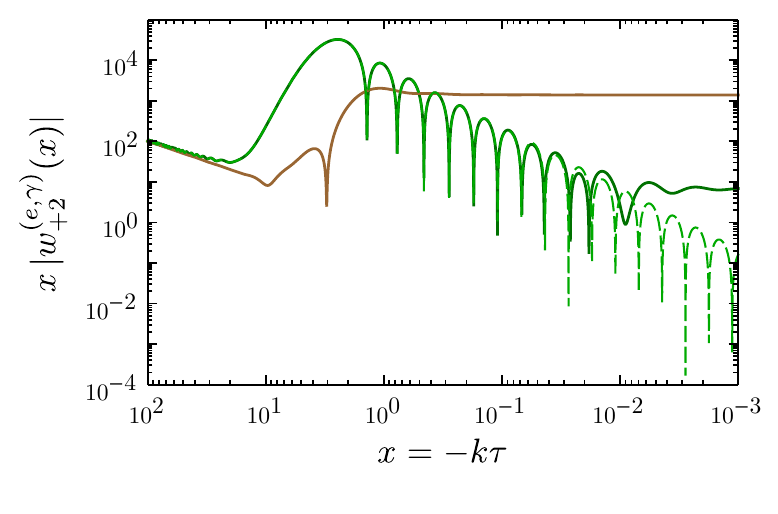}
}
\caption{Evolution of the (physical) scalar  and tensor fluctuations in a $c_2$-type background solution. \textbf{Left panel}: Helicity $0$ modes of the gauge field (dark green) and of the inflaton (brown). \textbf{Right panel}: Helicity $+2$ modes of the gauge field (dark green) and the metric tensor (brown). For reference, the dashed green curve shows the gauge field mode in the absence of a coupling to the metric tensor mode. Here we have set $\xi = 5$, $e = 5 \times 10^{-3}$, $H = 10^{-5} M_P$, $\alpha/\Lambda = 30$.  {Moreover, working in slow-roll approximation, we have set $V_{,\phi \phi} = 0$.}}
  \label{fig:fluctuations}
\end{figure}

{Our study so far is based on the linearized system of equations given in Sec.~\ref{sec:setup}, which forbids a coupling between the enhanced helicity $+2$ mode and the helicity $0$ modes. To higher orders in $\delta A$, this is no longer true since two tensor modes can combine into a scalar mode. Schematically, e.g.\ a  term bilinear in $A$ in the action can be expressed as
\begin{equation}
 \text{linear: }  f \cdot w_0^{(e)} + h.c. \,, \qquad \text{quadratic: }  |w_{+2}^{(e)}|^2  
 \label{eq:caveat}
\end{equation}
at the linearized level and to next order, respectively. With $\delta A \sim |w_{+2}^{(e)}| \gg |w_{0}^{(e)}|$, the condition $\delta A \ll f(\tau)$ is not sufficient to ensure that the linear term is the dominant one.  
In fact, this observation is well known in the case of abelian axion inflation, where the backreaction of the enhanced gauge fields mode occurs precisely true the $(\delta A)^2 \rightarrow \delta \phi$ process. Generalizing the procedure of Refs.~\cite{Linde:2012bt} (see also \cite{Barnaby:2011qe}) to the non-abelian case, we will estimate the contribution to the scalar power spectrum arising from the non-linear contributions in Sec.~\ref{subsec:powerspectra}.\footnote{{In the abelian limit, these non-linear contributions are also responsible for a friction-type backreaction of the produced gauge fields on the background equation for the inflaton, see Eq.~\eqref{eq:rev_motion}. On the contrary, in the non-abelian regime (and in particular for the parameter example studied in the next section), the corresponding contribution is subdominant to the gauge field background contribution, given by the last term in Eq.~\eqref{eq:phibackground}, as long as $\delta A \ll f(\tau)$.}}
}

Next we turn to the helicity $\pm2$ modes, i.e.\ the gravitational waves $\gamma_{ij}$ coupled to the $e_{\pm 2}$ gauge field modes. We express the metric tensor perturbations in the helicity basis as 
\begin{equation}
 a \, \gamma_{ij} = \frac{1}{2} \sum_{\lambda = \pm 2} \frac{w^{(\gamma)}_{\pm 2}(x) }{\sqrt{2k}}\begin{pmatrix}
                                                                    0 & 0 & 0\\
                                                                    0 & \mp i & 1 \\
                                                                    0 & 1 & \pm i
                                                                   \end{pmatrix} \,.
\end{equation}
The  equations of motion are then given by
\begin{align}
 \frac{\textrm{d} }{\textrm{d}x^2}  \left(\begin{array}{c}
   w^{(e)}_{\pm 2}(x)\\
    w^{(\gamma)}_{\pm 2}(x)  
\end{array}\right)
+ N_{\pm 2}(x)  \frac{\textrm{d} }{\textrm{d}x}\left(\begin{array}{c}
  w^{(e)}_{\pm 2}(x)\\
  w^{(\gamma)}_{\pm 2}(x) 
\end{array}\right) + 
  M_{\pm 2}(x)\left(\begin{array}{c}
w^{(e)}_{\pm 2}(x)\\
 w^{(\gamma)}_{\pm 2}(x) 
\end{array}\right)  = 0\,, \label{eq:fulltensor}
\end{align}
with
\begin{align}
 N_{\pm 2}(x) &  =  \frac{y'(x) H x}{e} \begin{pmatrix}
                    0 & - 1 \\
                   4  & 0
                   \end{pmatrix} \,, \\
M_{\pm 2}(x) & = \begin{pmatrix} 
              1  \mp \frac{2\xi}{x} + 2 (\frac{\xi}{x} \mp 1) y (x)  & \frac{H}{e} \left (-y' (x) - (2\xi \mp x) y (x)^2 + 
      x y (x)^3 \right) \\
                   -\frac{4 H x y (x)^2}{e}  (y (x)\mp 1) & 
                 1 - \frac{2}{x^2} +  \frac{2 H^2 x^2}{e^2} \left( y (x)^4 -  y' (x)^2\right)  
                 \end{pmatrix} \,.
\end{align}
where $y' \equiv \tfrac{\textrm{d} }{\textrm{d}x} y(x) $.
In the large-$\xi$ limit of the $c_2$-solution, $y(x) = \xi/x$, this becomes\footnote{In the following expressions we set $M_P = 1$.}
\begin{align}
 N_{\pm 2}(x) &  =  - \frac{ H \xi }{e x} \begin{pmatrix}
                    0 & -1 \\
                   4  & 0
                   \end{pmatrix} \,, \\
M_{\pm 2}(x) & = \begin{pmatrix} 
               1 \mp 4 \frac{\xi}{x} + 2 \frac{\xi^2}{x^2} & \frac{H \xi}{e x^2} \left( \pm 1 + x \xi \mp \xi^2 \right) \\
              \frac{4 H \xi^2}{e x^2 } \left(\pm x - \xi \right) &
                1 - \frac{2}{x^2}  + \frac{2 H^2 \xi^2}{ e^2 x^2} (-1 + \xi^2)
                 \end{pmatrix} \,.
\end{align}
We recognize the tachyonic instability in the $+2$ gauge field mode in the top left entry of $M_{+2}$, leading to an exponential growth for this mode for $(2-\sqrt{2})\xi \leq x \leq (2+\sqrt{2}) \xi$. In the bottom right corner we find the (helicity conserving) mass for the metric tensor mode. Here the first term accounts for the free oscillation on sub-horizon scales, whereas the second term is responsible for the freeze-out on super-horizon scales. The last term is the source term arising from the background gauge field, see the first term on the right-hand side of Eq.~\eqref{eq:eomGWs}, and contributes a positive mass term for $\xi > 2$. Similar to the helicity 0 case discussed above, the system reduces to $N_{\pm2} = 0$ and $M_{\pm2} = \mathbb{1}$ in the far past as $x \rightarrow \infty$. For sub-horizon modes ($x > 1$), the off-diagonal elements are small as long as $e M_P \ll  H$, ensuring that the gauge field modes are well described by Eqs.~\eqref{eq:-2mode} and \eqref{eq:+2mode}. See also the right panel of Fig.~\ref{fig:fluctuations}.

{Finally, let us perform an analytical estimate of the super-horizon amplitude of the enhanced gravitational wave mode in the large-$\xi$ limit. For $x \leq 1$, freeze-out of the gravitational wave implies $w_{+2}^{(\gamma)}(x) \propto 1/x$ and hence $\tfrac{d^2}{dx^2} w_{+2}^{(\gamma)}(x) = 2/x^2 w_{+2}^{(\gamma)}(x)$, thus precisely canceling the second term of the bottom right element of $M_{\pm2}$. For $x \leq 1$, the gauge field mode $w_{+2}^{(e)}$ is well described by the solution~\eqref{eq:NonabelianAsymptotics}, implying that $\overline{(w_{+2}^{(e)})'(x)} \sim 1/x \, \overline{w_{+2}^{(e)}(x)}$ where the bar indicates an averaging over the (co)sine. With this we see that the off-diagonal first derivative term is suppressed by a factor $\xi^2$ compared to the off-diagonal mass term, and the equation of motion for $w_{+2}^{(\gamma)}(x)$ at $x = 1$ reads
\begin{equation}
   \frac{4 H \xi^3}{e }  w_{+2}^{(e)}(x = 1) \simeq \left(1  + \frac{2 H^2 \xi^4}{ e^2}  \right) w_{+2}^{(\gamma)}(x = 1) \,.
\end{equation}
If furthermore $H \xi^2 \ll e$, we can immediately obtain the value of the gravitational wave mode at (and beyond) horizon crossing as
\begin{equation}
 x \, w_{+2}^{(\gamma)}(x) \big|_{x \lesssim 1} = -  \frac{4 H \xi^{3}}{e } w_{+2}^{(e)}(x = 1) \simeq -  \frac{2 H \xi^{5/2}}{e }  2^{3/4} e^{(2 - \sqrt{2}) \pi \xi} \,,
 \label{eq:GWanalytical}
\end{equation}
where in the last step we have inserted Eq.~\eqref{eq:NonabelianAsymptotics}, replacing the cosine with a factor of 1/2. For the parameter point of Fig.~\ref{fig:fluctuations} this yields $ \, x \, w_{+2}^{(\gamma)}(x) \big|_{x \lesssim 1} \simeq 3.7 \times 10^3$, agreeing with the full numerical solution up to an order one factor.} We emphasize that due to helicity conservation only one of the two metric tensor modes is enhanced in this manner, resulting in a chiral gravitational wave spectrum.

 In a similar manner, the `freeze-out'-like behaviour visible for the gauge field modes in Fig.~\ref{fig:fluctuations} can be traced back to the coupling to the metric tensor perturbation through to top-right element of $M_{+ 2}$. {On far super-horizon scales, the contribution from the frozen gravitational wave mode becomes comparable to the contribution from the decaying gauge field modes in the in the equation of motion for the gauge fields. 
 In this regime, the derivative terms are suppressed by a factor of $\xi^{-2}$ compared to the $M_{+2}$ terms.} The amplitude of $w^{(e)}_{+2}$ can then be estimated by comparing the two terms in the first line of $M_{+2}$  to get 
 \begin{equation}
  	w^{(e)}_{+2} (x \rightarrow 0)\simeq \frac{H\xi}{2 e} w^{(\gamma)}_{+2}(x \rightarrow 0)\, ,
  \end{equation} 
 For the parameter point of Fig.~\ref{fig:fluctuations} this yields  $w^{(e)}_{+2} (x \rightarrow 0)\simeq 10^{-2} \, \times \, w^{(\gamma)}_{+2}(x \rightarrow 0) $, in good agreement with the full numerical result.
For the parameter example of this paper, the contribution of the far super-horizon modes to both the energy and variance of the gauge fields is negligible, due to a suppression both in amplitude and momentum compared to the modes crossing the horizon at the same time. Consequently, these are well described by employing the solutions of Eq.~\eqref{eq:+2mode}. On the other hand, if the gauge coupling is very small, this description is no longer accurate and the gauge and gravity sector need to be treated as a fully coupled system. {In this regime, the gauge field/gravity interactions induce an exchange of energy between the $e_{\pm 2}$ modes and gravitational waves~\cite{Caldwell:2016sut, Caldwell:2017sto}.}

Both the scalar and tensor sector preserve the usual scaling behaviour of de Sitter space. In the limit of constant $H$ and $\xi$, we obtain a scale-invariant scalar and tensor power spectrum. The slow variation of $H$ and $\xi$ obtained in any realistic inflation model will lead to deviations from this exact scale invariance. We will discuss this in more detail in the next section.

 \section{A worked example \label{sec:example}}
 
 To illustrate the results obtained so far, we will discuss an explicit parameter example in this section. The most natural scalar potential for an axion is a periodic potential, breaking the shift symmetry of the axion down to a discrete symmetry due to non-perturbative effects,
 \begin{equation}
  V = V_0 \left[1 - \cos\left( \frac{\phi}{f_\phi} \right) \right] \simeq \frac{1}{2} m^2 \phi^2 - \frac{\lambda}{4} \phi^4 \,.
  \label{eq:ScalarPotential}
 \end{equation}
In the following we will take $m = 7.5 \times 10^{-6}~M_P$ and $\lambda = 1.1 \times 10^{-13}$ (corresponding to $V_0^{1/4} 
\simeq 8.3 \times 10^{-3}~M_P$ and $f_\phi \simeq 9.2~M_P$). This parameter choice ensures the correct normalization of the scalar power spectrum at CMB scales as well as a tensor-to-scalar ratio in agreement with the Planck data~\cite{Ade:2015lrj}.\footnote{We note that natural inflation, described by Eq.~\eqref{eq:ScalarPotential} is in some tension with the latest Planck data. For our purposes, the precise form of the potential is not relevant, so we stick with Eq.~\eqref{eq:ScalarPotential} for simpicity. For a discussion of the impact of different types of scalar potentials in abelian axion inflation, see Ref.~\cite{Domcke:2016bkh}.} 
The remaining parameters are then the gauge-field inflaton coupling $\alpha/\Lambda$ which directly controls the size of the parameter $\xi$ and the gauge coupling $e$. In the following we choose $\alpha/\Lambda = 30$ and $e = 5 \times 10^{-3}$. This parameter choice places the matching point between the abelian and the non-abelian regime within the observable last 55 e-folds of inflation while keeping a safe distance from the CMB scales. It serves to illustrate the main results as well as the difficulties encountered in a concrete realization of emerging chromo-natural inflation. A generalization of this setup is briefly discussed at the end of Subsection~\ref{subsec:gaugefluctuations}.
 
 The discussion in this section is organized as follows. In Subsection~\ref{subsec:gaugefluctuations} we will discuss the growth of the gauge field fluctuations with particular emphasis on the tachyonic modes as well as their backreaction on the homogeneous background field. Upon determining the range of validity of our linearized approach, we turn to the scalar and tensor power spectra in Subsection~\ref{subsec:powerspectra}.
 
 \subsection{Growth of gauge field fluctutions \label{subsec:gaugefluctuations}}
 
We first recall some key results about the homogeneous background evolution and the gauge field fluctuations from the previous sections:
\begin{itemize}

\item In single field inflation models, and in particular for the scalar potential considered here, the inflaton velocity $\dot \phi$ and hence the parameter $\xi$ increases during inflation.
 
 \item In the far past, when $\xi<2$, the only stable solution for a classical isotropic gauge field background is the zero solution. General solutions are described by small perturbations around the zero solution.
 
  \item As long as the homogeneous background  is sufficiently small, three of the six gauge field modes are tachyonically enhanced, corresponding to three copies of the abelian limit described in Sec.~\ref{sec:abelian} (see Fig.~\ref{fig:TensorModeEvolution}). In this abelian limit, the variance  $\langle A_\text{ab}^2 \rangle^{1/2}$ grows exponentially with $\xi$ and is well described by Eq.~\eqref{eq:variance_abelian}.  

  \item When $\xi > 2$, a stable, non-zero background solution develops (see Sec.~\ref{sec:background}).  We refer to this second solution as the ``$c_2$-solution.'' It becomes possible that at some point, large fluctuations arising from the tachyonically enhanced modes will push the background away from the zero solution and towards the $c_2$-solution. 
 
  \item The transition from an approximately-zero homogeneous background field to the $c_2$-solution occurs once the fluctuations become large enough to trigger the $c_2$-solution, $ e \langle A_\text{ab}^2 \rangle^{1/2} \sim \xi / (- \tau)$, see Eq.~\eqref{eq:transition-time}.\footnote{Eq.~\eqref{eq:BoundaryOscillatory} marks the boundary to the oscillatory regime, from where $c_i$-type solutions spiral inwards to their asymptotic $c_i$ values. As Fig.~\ref{fig:u0-of-xi} illustrated, for sufficiently large $\xi$ the $c_2$ solution becomes overwhelmingly likely. In the unlikely event that the classical background begins to evolve towards a $c_0$ solution, the gauge field background would be continued to be dominated by $\langle A_\text{ab}^2 \rangle^{1/2}$, growing according to Eq.~\eqref{eq:variance_abelian}. The resulting stochastic initial conditions will eventually trigger a $c_2$-type background. Numerically, the condition $ e \langle A_\text{ab}^2 \rangle^{1/2} \sim \xi / (- \tau)$ is basically equivalent to requiring that the magnitude of the fluctuations be of the same order as the $c_2$ solution and very similar to the requirement of Eq.~\eqref{eq:Conditionabelian}. This `matching condition' is conservative in the sense that even smaller fluctuations which reach the  $c_1$ saddle point solution (see Sec.~\ref{subsec:phase-space}) could (classically) evolve towards the $c_2$-solution. This is depicted by the dotted blue line in Fig.~\ref{fig:matching}. } This is depicted by the solid black line in Fig.~\ref{fig:matching}.

  \item As the background grows, we enter the non-abelian regime.  In this regime, the background field evolves towards the isotropic $c_2$-solution where only the helicity $+2$-mode is enhanced.
  \end{itemize}

\begin{figure}
 \centering
 \includegraphics[width = 0.6 \textwidth]{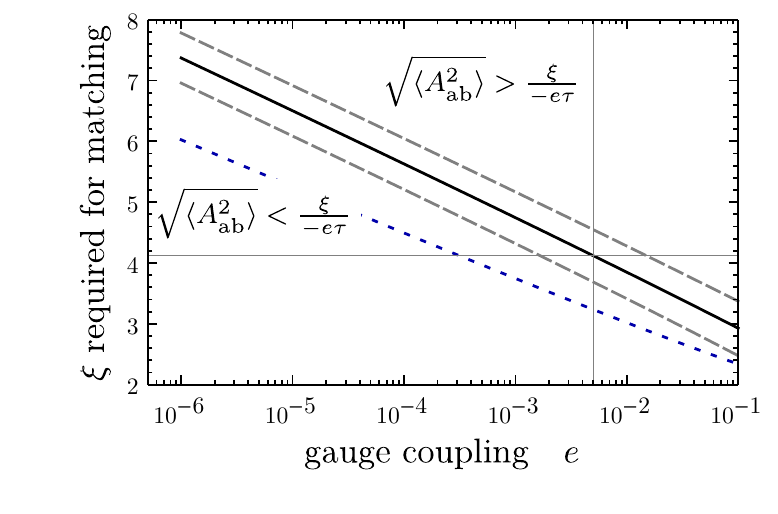}
 \caption{Values of $\xi$ required to match the abelian and non-abelian regime as a function of the gauge coupling $e$. The horizontal and vertical gray lines indicate the parameter point used in this section. {For reference, the dashed gray lines indicate modified matching conditions, $\langle A_\text{ab}^2 \rangle^{1/2} = {\cal N} \xi /(- \tau e)$ with ${\cal N} = \{1/3, 3 \}$ parametrizing the theoretical uncertainties in the matching condition, see text. The dotted blue line indicates where $\langle A_\text{ab}^2 \rangle^{1/2}$ can reach the unstable $c_1$ background solution  (see Sec.~\ref{subsec:phase-space}). }}
 \label{fig:matching}
\end{figure}

Based on these observations, our strategy will be the following: (i) As long as the abelian variance $\langle A_\text{ab}^2 \rangle^{1/2}$ as given in Eq.~\eqref{eq:variance_abelian} is much smaller than $\xi /(- \tau e)$ we work in the abelian limit with $f(\tau) = 0$. (ii) When $\langle A_\text{ab}^2 \rangle^{1/2} \simeq \xi /(- \tau e)$ we take $f(\tau)$ to be given by the $c_2$-solution~\eqref{eq:c2solution}. We match $\phi(\tau)$ and $\phi'(\tau)$ at this point, but turning on the term on the right-hand side of Eq.~\eqref{eq:rev_motion} will cause a discontinuity in $\phi''(\tau)$.\footnote{In the abelian regime, the corresponding term is given by the right-hand side of Eq.~\eqref{eq:rev_motion} after inserting Eq.~\eqref{eq:rev_EB}. For the parameter point discussed in this section, this is about a factor 10 smaller than the non-abelian expression at the matching point.} We ensure that for the parameter point we consider this term is sub-dominant, so as to limit any unphysical effects here. (iii) We compute the evolution of all degrees of freedom in this background, tracking each mode from the sub- to the super-horizon regime. Note that for smaller values of the gauge coupling, the transition from the abelian to the non-abelian regime at $ e \langle A_\text{ab}^2 \rangle^{1/2} = \, \xi/(- \tau)$ requires larger values of $\xi$, see Fig.~\ref{fig:matching}.

Several comments are in order. Firstly, our matching procedure from the abelian to the non-abelian regime should be seen as a rough order-of-magnitude estimate only, and the results for the evolution of the background and of the fluctuations in this transition regime should be treated with care accordingly. Secondly, the linearization we are using is justified~\footnote{Notice that this is the same criterion adopted in~\cite{Adshead:2016omu} . In particular Eq.~\eqref{eq:estimate-variance} is equivalent to Eq.(7.6) of~\cite{Adshead:2016omu} .} as long as $\langle \delta A^2 \rangle^{1/2} \ll f(\tau)$. Once the growth of the $e_{+2}$ mode overcomes the background evolution, a different treatment of the gauge field background becomes necessary, which is beyond the scope of this paper.

\begin{figure}[t]
 \centering
 \subfigure{
 \includegraphics[width = 0.48 \textwidth]{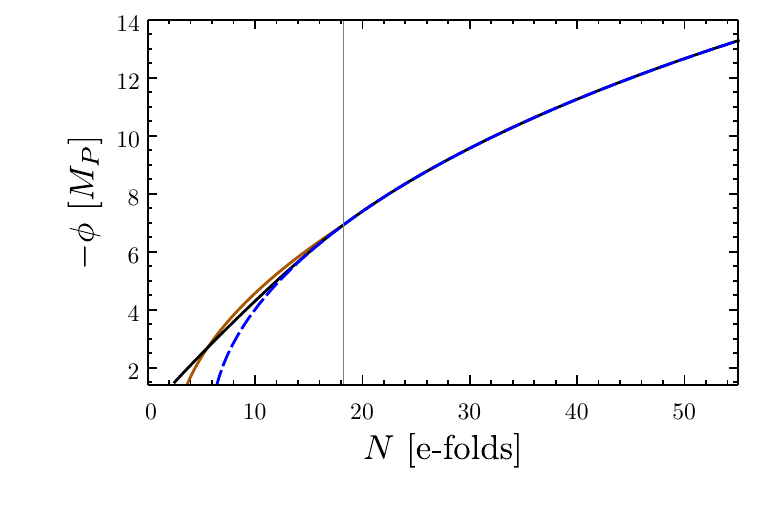}}
 \hfill 
 \subfigure{
  \includegraphics[width = 0.48 \textwidth]{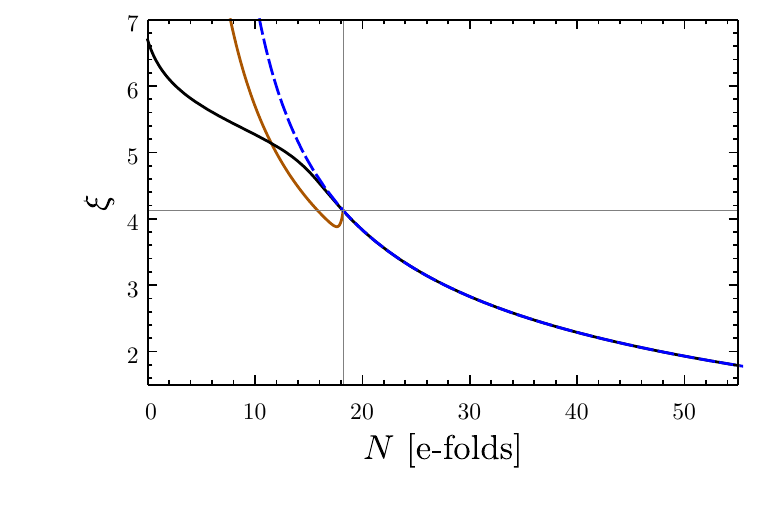}}
 \caption{Evolution of the inflaton field $\phi$ for $\alpha/\Lambda = 0$ in the absence of the inflaton - gauge field coupling (dashed blue), in the abelian limit ($\alpha/\Lambda = 30, \, e = 0$, solid black) and including non-abelian effects in the weak coupling limit ($\alpha/\Lambda = 30, \, e = 5 \times 10^{-3}$, solid orange). The gray lines indicate the matching between the abelian and non-abelian regime as detailed in the text.}
 \label{fig:phi-evolution}
\end{figure}

In Fig.~\ref{fig:phi-evolution}, we depict the evolution of the homogeneous inflaton background in the abelian (solid black curve) and in the non-abelian regime (solid orange curve), obtained by numerically solving Eq.~\eqref{eq:phibackground} with $f(\tau)$ set to 0 and to $c_2 \, \xi/(- e \tau)$ in the two regimes, respectively. The matching point $\langle A^2_\text{ab} \rangle^{1/2} = c_2 \, \xi/(- e \tau)$ is indicated by the horizontal and vertical gray lines. For reference, we also show the evolution in the absence of the inflaton - gauge field coupling (dashed blue curve). The x-axis of Fig.~\ref{fig:phi-evolution} is labeled in e-folds, $dN = - H dt$, where we use the convention that inflation ends at $N = 0$ and the CMB scales exit the horizon at $N \simeq 55$.

{The right panel of Fig.~\ref{fig:phi-evolution} shows the velocity of $\phi$, encoded by the parameter $\xi$. After the matching, the velocity drops abruptly, since turning on the background gauge field enhances the gauge-field induced term in the inflaton equation of motion. {The details are sensitive to the matching procedure we invoke, and for our computations of the scalar and tensor power spectra in Sec.~\ref{subsec:powerspectra} we will therefore exclude a few e-folds around this transition regime.}

More importantly, after this transition regime the last term in Eq.~\eqref{eq:phibackground} in only proportional to $\xi^3$ instead of being  exponentially sensitive to $\xi$ as in the abelian regime (see Eq.~\eqref{eq:rev_EB}). Consequently, the dominant terms in the equation for $\phi$ in this regime are the second and third term of Eq.~\eqref{eq:phibackground}, and, similar to the situation in the absence of the gauge field - inflaton coupling, $\xi \propto \sqrt{\varepsilon} \propto 1/\sqrt{N}$~\cite{Domcke:2016bkh} (see also Sec.~\ref{subsec:dynamical_background}). This in particular implies that in this regime we are not in the `magnetic drift regime' studied in \cite{Adshead:2013nka}, which is characterized by the gauge friction dominating over the Hubble friction. }

Next we consider the evolution of the gauge field fluctuations in this background. From the discussion in the previous section, we know that the helicity $+2$ mode captures the enhancement both in the abelian and in the non-abelian regime. For simplicity, we will restrict our discussion here to this mode, however we have checked numerically that including the full system does not lead to any significant changes. In de Sitter space, any mode with co-moving momentum $k$ exits the horizon at $k = a H = - x/\tau$, where to good approximation $3 H^2 M_P^2 = V(\phi)$. Setting $a = 1$ at the end of inflation, this implies that at e-fold $N$, the mode $k_N = \exp(-N) H$ exits the horizon. In Fig.~\ref{fig:modes-evolution} we show the evolution of six modes which exit the horizon in the abelian regime (left panel) and in the non-abelian regime (right panel). The gradual change within each panel is due to the evolving background, i.e.\ the slow increase of $\xi$.

\begin{figure}[t]
\subfigure{
\includegraphics[width = 0.48 \textwidth]{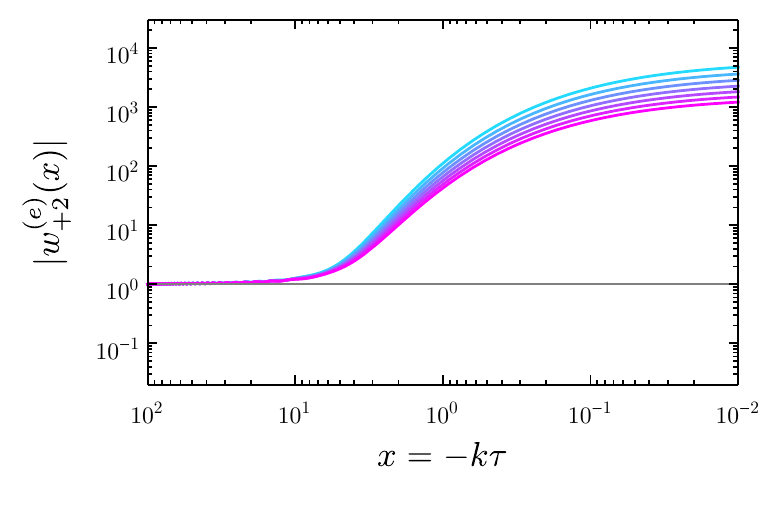}
}
\hfill
\subfigure{
\includegraphics[width = 0.48  \textwidth]{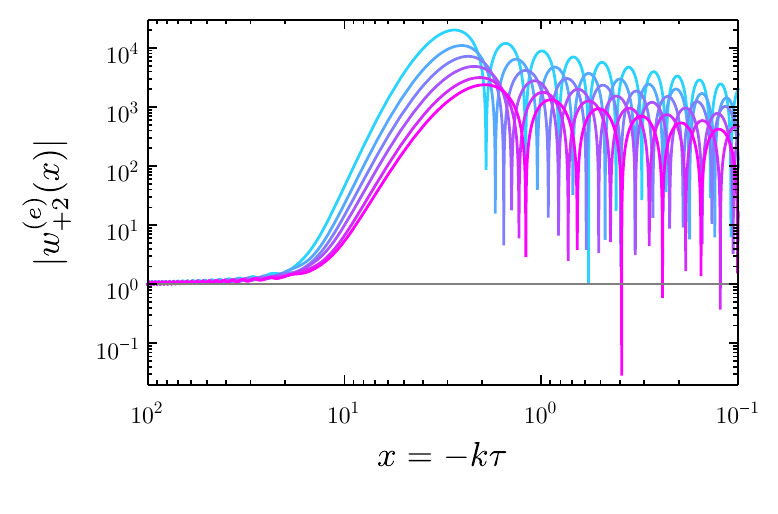}
}
\caption{Evolution of the helicity $+2$ mode in a dynamical background. \textbf{Left panel}: abelian regime, modes exiting at $N = 25,26\dots 31$. \textbf{Right panel}: Non-abelian regime, modes exiting at $N = 8,9,\dots 13$. In both panels, $k$ increases (i.e.\ the value of $N$ labelling the horizon exit decreases) from purple to blue. Parameters as in Fig.~\ref{fig:phi-evolution}.}
  \label{fig:modes-evolution}
\end{figure}

Since the change of $\xi$ is slow, we can estimate the variance of these fluctuations by {(see Eq.~\eqref{eq:variance_abelian})}
{ \begin{equation}
 \langle \delta A^2 \rangle_N = \int \frac{\textrm{d}^3 \vec{k}}{(2 \pi)^3 } \frac{1}{2 k} | w_{+2}^{(e)}\left(k, 
 \tau(N) \right) |^2  \simeq \frac{1}{\left(- \tau(N)\right)^2} \int \frac{x \,  \textrm{d}x}{4 \pi^2} | w_{+2,N}^{(e)}(x) |^2 \,,
 \label{eq:estimate-variance}
\end{equation} 
where $w_{+2,N}$ indicates the mode function of the wave vector $k_N$ (see Fig.~\ref{fig:modes-evolution}), which we use to approximate the full mode function at the e-fold $N$. } In the left panel of Fig.~\ref{fig:variance} we show the resulting variance (green dots), together with the homogeneous background solution $f(\tau)$ (solid blue). In the abelian regime, the semi-analytical expression Eq.~\eqref{eq:variance_abelian} (shown as a dashed green line) gives a good approximation over most of this regime. {The deviation at large $N$ simply reflects that our fitting formula~\eqref{eq:variance_abelian} is not optimized for very small values of $\xi$,} 
whereas the deviations a few e-folds before the matching point reflect that the super-horizon parts of these modes are affected by the non-abelian regime, which leads to a suppression of the variance. In the following analysis we will exclude the modes which exit the horizon within 3 e-folds before or after the matching point, so as to minimize artifacts introduced by the specific matching procedure. To illustrate the uncertainties involved, the gray curves indicate the variance we obtain in the transition region with the matching procedure above (dashed gray curve) and by imposing the matching at a later point (dotted gray curve), as indicated by the vertical dashed line.

In the non-abelian regime, the fluctuations are initially suppressed compared to the abelian case, due to a combination of two effects: { Firstly, as is evident from Fig.~\ref{fig:phi-evolution}, the parameter $\xi$, which controls the tachyonic instability in the helicity $+2$ gauge field mode, initially drops after switching on the background gauge field. Secondly, in the non-abelian regime the variance grows more slowly as a function of $\xi$, c.f.\ right panel of Fig.~\ref{fig:variance}.} The smallness of $\langle \delta A^2 \rangle$ compared to its counterpart in the abelian regime and compared to the classical background $f(\tau)$ is a crucial ingredient in justifying the ansatz of a linearized analysis around a homogeneous background field in the non-abelian regime. In the parameter example at hand, the non-Abelian fluctuations are at best mildly suppressed compared to the homogeneous background, implying significant uncertainties in our analysis of the non-abelian regime for this parameter point. The situation improves for smaller gauge - inflaton couplings $\alpha/\Lambda$ and for smaller gauge couplings $e$. In particular, we note the gauge friction dominated regime of CNI is free from this problem~\cite{Adshead:2016omu}. However, in these cases the systematic uncertainties associated with the matching procedure tend to be larger. Given these limitations, we refrain from tweaking the parameter point discussed here, emphasizing that its purpose is to illustrate our main line of thought as well as the encountered obstacles, leaving a more detailed investigation of the parameter space to future works. 

Once the fluctuations reach values close to the background field we cannot trust our treatment anymore, since the motion of the background field is now no longer determined by its classical motion.\footnote{We note that the perturbativity criterion employed in~\cite{Caldwell:2017chz} (see also~\cite{Adshead:2016omu,Adshead:2017hnc}), which measures the fluctuations per logarithmic frequency interval, $(\tfrac{\textrm{d} }{\textrm{d} \ln k} \langle \delta A^2 \rangle^{1/2})/f$ is less restrictive. This quantity does not exceed the percent level in the entire regime depicted in Fig.~\ref{fig:variance}. This may indicate the possibility of pushing the linearized analysis somewhat further than  the conservative cut-off implemented in this analysis. Moreover, we are somewhat overestimating the variance in Eq.~\eqref{eq:estimate-variance}, since we are integrating over the super-horizon part of the mode $ w_{+2, k_N}^{(e)}(\tau)$, whereas we should be taking the super-horizon contributions of the modes which crossed the horizon accordingly earlier, at correspondingly smaller $\xi$. } 
This regime calls for a dedicated lattice simulation to capture the non-linear effects, which is beyond the scope of this work.

\begin{figure}[t]
\subfigure{
\includegraphics{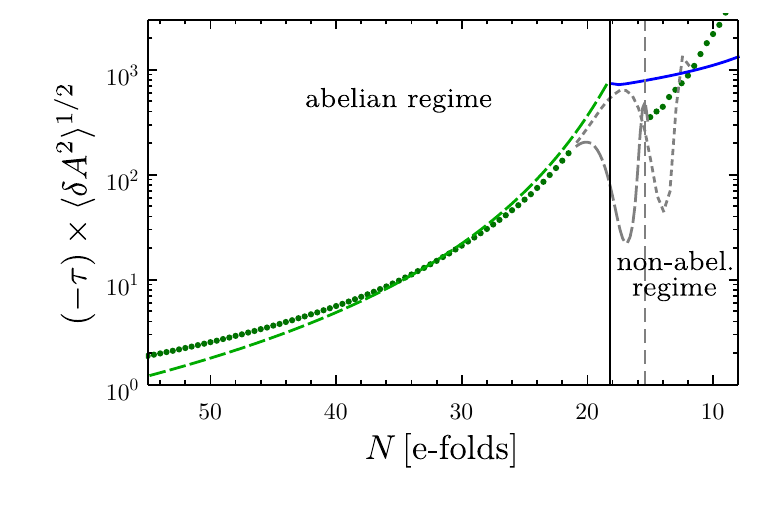}
}
\hfill
\subfigure{
\includegraphics{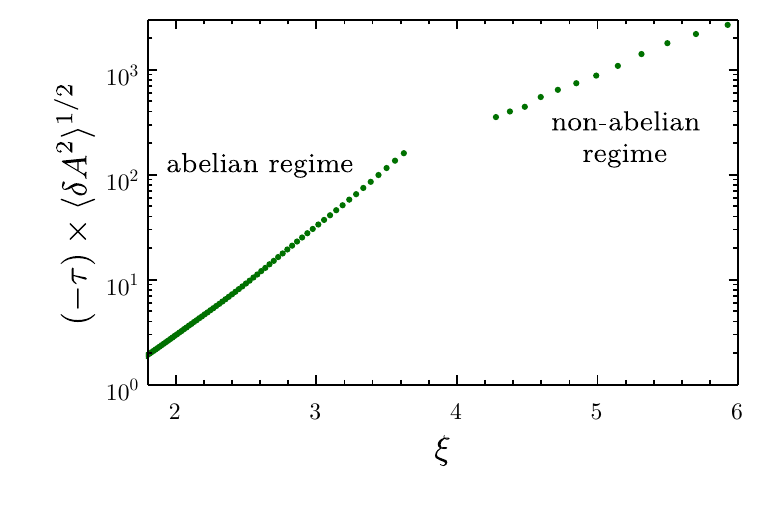}
}
\caption{Magnitude of the gauge fields. Background solution $f(\tau)$ (blue) and estimated variance of the fluctuations (green). The solid vertical line denotes the matching point. The gray dashed lines are auxiliary quantities as described in the text. Same parameters as in Fig.~\ref{fig:phi-evolution}.}
 \label{fig:variance}
\end{figure}

The results in this section were obtained for the parameter choice given below Eq.~\eqref{eq:ScalarPotential}. The parameters of the scalar potential, $m$ and $\lambda$, are directly related to the CMB observables and do not impact the discussion of this section much. On the other hand, the coupling parameters $\alpha/\Lambda$ and $e$ are crucial for our discussion. Increasing $\alpha/\Lambda$ increases $\xi$ at the CMB scales and consequently leads to an earlier transition to the non-abelian regime. Current observations constrain $\xi_\text{CMB} \lesssim 2.5$ in the contest of abelian axion inflation~\cite{Barnaby:2011qe}. Increasing the gauge coupling $e$ leads to a lower threshold of $\xi$ to trigger the non-vanishing gauge field background, see Fig.~\ref{fig:matching}. Correspondingly, the transition happens earlier and also more smoothly, since the gauge field source term in the inflaton equation of motion~\eqref{eq:phibackground}, proportional to $\xi^3/e^2$ when inserting the $c_2$-solution, is less important.

\subsection{Scalar and tensor power spectra \label{subsec:powerspectra}}

We now turn to the scalar and tensor power spectra of the benchmark model of the previous subsection, the key observables of any inflation model (for a review, see~\cite{Baumann:2009ds}):
\begin{align}
 \langle \zeta_{\vec k} \zeta_{\vec k}' \rangle & = (2 \pi)^3 \delta (\vec k + \vec k') {\cal P}_\zeta(k)  \,, \\
 \langle h_{\vec k}^\lambda h_{\vec k'}^{\lambda'} \rangle & = (2 \pi)^3 \delta (\vec k + \vec k') \delta_{\lambda \lambda'} {\cal P}_{h^\lambda}(k)  \,.
\end{align}
Identifying the Mukhanov variables which are canonically normalized on far sub-horizon scales as $v^\zeta = (a \, \delta \phi)$ and $v^h = (a \gamma)$ for scalars and tensors, respectively, the power spectra read
\begin{align}
  {\cal P}_\zeta(k) & =  \left( \frac{H}{\dot \phi} \right)^2 \left(\frac{|v^\zeta_k(x)|}{a}\right)^2 \,, \\
  {\cal P}_{h^\lambda}(k) & =  \left( \frac{2}{M_P} \right)^2 \left(\frac{|v^h_k(x)|}{a}\right)^2 \,,
\end{align}
where $\zeta$ denotes the gauge invariant curvature perturbation and $\lambda$ denotes the helicity of the gravitational wave $h^\lambda$.
Due to the freeze-out of $(a \delta \phi)$ and $(a \gamma)$ on super-horizon scales (see Sec.~\ref{sec:AllFluc}), it suffices to evaluate these power spectra at horizon crossing $(x = 1)$. Since at this point in time the coupling to the gauge fields can be very relevant, we perform this task numerically, solving the mode equations Eq.~\eqref{eq:fullscalar} and \eqref{eq:fulltensor} in the evolving background discussed in Sec.~\ref{sec:example}.

Recalling that $ v_k^{\zeta,h}(x)$ is a function of $x$ only, it is convenient to introduce
\begin{equation}
 \Delta_s^2 = \frac{k^3}{2 \pi^2} {\cal P}_\zeta(k) \,, \qquad  (\Delta_t^\lambda)^2 = \frac{k^3}{2 \pi^2} {\cal P}_{h^\lambda}(k) \,,
\end{equation}
such that
\begin{align}
 \Delta_s^2 & = \left( \frac{H_*}{2 \pi_*}\right)^2 \left( \frac{H_*}{\dot \phi_*}\right)^2  \left(  x \, |w^{(\phi)}_0(x)| \right)^2 \bigg|_{x \ll 1}  \,, \label{eq:Ds}\\
 (\Delta_t^\pm)^2 & = \left( \frac{H_*}{2 \pi}\right)^2 \left( \frac{2}{M_P}\right)^2  \left(  x \, |w^{(\gamma)}_{\pm2}(x)| \right)^2 \bigg|_{x \ll 1}  \,, \label{eq:Dt}
\end{align}
where $H_*$ and $\dot \phi_*$ denote the Hubble parameter and inflaton velocity at the point in time when the mode in question crosses the horizon $(x = 1)$. To ensure that we are fully in the freeze-out regime, the last parenthesis in Eqs.~\eqref{eq:Ds} and \eqref{eq:Dt} is evaluated at $x = 0.1$. We emphasize that the numerical evaluation of the scalar and tensor power spectrum presented below should be taken with a grain of salt, due to the lack of a clear hierarchy between the gauge field background and its fluctuations in the non-abelian regime for this particular parameter point. In other parts of the parameter space, where this problem does not arise, the analysis below applies without this caveat.

\subsubsection{Scalar power spectrum \label{subsec:PS}}

The scalar power spectrum computed in this way is subject to the caveat described around Eq.~\eqref{eq:caveat}:  to linear order in $\delta A$, it is sourced only by helicity zero objects, including the helicity zero gauge field fluctuations. However at ${\cal O}(\delta A^2)$, the helicity $+2$ gauge fluctuations contribute too, and due to their strong enhancement, can become the dominant source.
Generalizing the procedure of Ref.~\cite{Linde:2012bt} (see also \cite{Barnaby:2011qe}) to the non-abelian case, we can obtain an estimate for the full scalar power spectrum including this second order contribution. For simplicity, let us consider only the  helicity 0 mode associated with $\delta \phi$, whose coupling to $\langle F \tilde F \rangle$ in the action contains a coupling to two enhanced helicity $+2$ gauge field modes. In real space, the equation of motion for $\delta \phi$ reads,
\begin{equation}
 \ddot{\delta \phi} + 3 H \dot{\delta \phi} - \frac{\nabla^2}{a^2} \delta \phi + m_{\phi \phi}^2 \delta \phi = - \frac{\alpha}{4 \Lambda} \delta(\langle F \tilde F \rangle) \,,
 \label{eq:dphiRealSpace}
\end{equation}
where we have neglected here the couplings to the helicity 0 gauge field modes, which are discussed in depth in Sec.~\ref{sec:AllFluc}. The right-hand side is the variation of the Chern-Simons term with respect to the average value entering in the 0th order equation, taking also into account the $\dot \phi$-dependence of $\langle F \tilde F\rangle$ through the parameter $\xi$,
\begin{align}
 \delta (\langle F \tilde F \rangle ) & = [F \tilde F - \langle F \tilde F \rangle ]_{\delta \phi = 0} + \frac{\partial \langle F \tilde F \rangle }{\partial \dot \phi} \dot{\delta \phi} \,. 
\end{align}
As demonstrated in App.~\ref{app:variance_computation}, this can be re-expressed as 
\begin{align}
 \delta (\langle F \tilde F \rangle ) = \frac{H^4}{ \pi^2} e^{2 \pi(\kappa - \mu)} \left[  \left(\frac{ \tilde T_1 + \tilde T_2}{5} \right)^{1/2} + \frac{  2 \pi \alpha (\kappa - \mu ) }{2 \Lambda  \xi} \,  \tilde T_0  \, \delta \phi\right] \,.
\end{align}
with 
\begin{align}
\tilde T_1 = 0.0082 \cdot \xi^8   \,, \quad \tilde T_2 = 0.051 \cdot  \xi^6   \,, \quad \tilde T_0 = - 0.24 \cdot  \xi^3\,,
\end{align}
where $\kappa$ and $\mu$ are defined below Eq.~\eqref{eq:whittakeragain}. At horizon crossing the first and third term in Eq.~\eqref{eq:dphiRealSpace} cancel. Moreover, neglecting the slow-roll suppressed term proportional to $m_{\phi \phi}^2$ and using $\dot{\delta \phi} \simeq H \delta \phi$, we obtain
\begin{equation}
 \delta \phi \simeq - \frac{\alpha}{12 \Lambda H^2} \delta(\langle F \tilde F \rangle) \simeq \frac{- \frac{e^{2 \pi(\kappa - \mu)}}{{12} \pi^2} \frac{\alpha}{\Lambda} H^2 \sqrt{\frac{\tilde T_1 + \tilde T_2}{5}}}{1 + \frac{e^{2 \pi(\kappa - \mu)}}{24 \pi^2} \left(\frac{\alpha}{\Lambda} \right)^2  \frac{2 \pi (\kappa - \mu )}{\xi} H^2 \tilde T_0} \,.
 \label{eq:deltaphi_2nd}
\end{equation}
Note that for sufficiently large $\xi$, when the right-hand side dominates Eq.~\eqref{eq:dphiRealSpace} simply becomes
\begin{equation}
 \delta \phi \simeq \frac{-2 \xi \Lambda}{ 2 \pi \alpha (\kappa - \mu )} \left( \frac{\tilde T_1 + \tilde T_2}{5} \right)^{1/2} \frac{1}{\tilde T_0 } \,.
\end{equation}
More generally, the two limiting cases of Eq.~\eqref{eq:deltaphi_2nd} are
\begin{align}
 \left(\Delta_s^2\right)^\text{2nd} & \sim \left( \frac{H}{\dot \phi} \delta \phi \right)^2   \simeq \begin{cases}
    \left( \frac{\alpha H}{2 \pi \Lambda} \right)^4  \frac{\tilde T_1 + \tilde T_2}{180 \, \xi^2}  e^{4 \pi(\kappa - \mu)}
     &\qquad  \text{for     } \xi^3 e^{2 \pi(\kappa - \mu)} \ll 270  \left( \frac{\Lambda}{\alpha H}\right)^2    \label{eq:Ds_2nd} \\
  \frac{\tilde T_1 + \tilde T_2}{5 \, \tilde T_0^2}   \left[2 \pi (\kappa - \mu )\right]^{-2}   &\qquad \text{for    } \xi^3 e^{2 \pi(\kappa - \mu)} \gg 270  \left( \frac{\Lambda}{\alpha H}\right)^2  
   \end{cases} \,.
\end{align}

\begin{figure}
\centering
\includegraphics{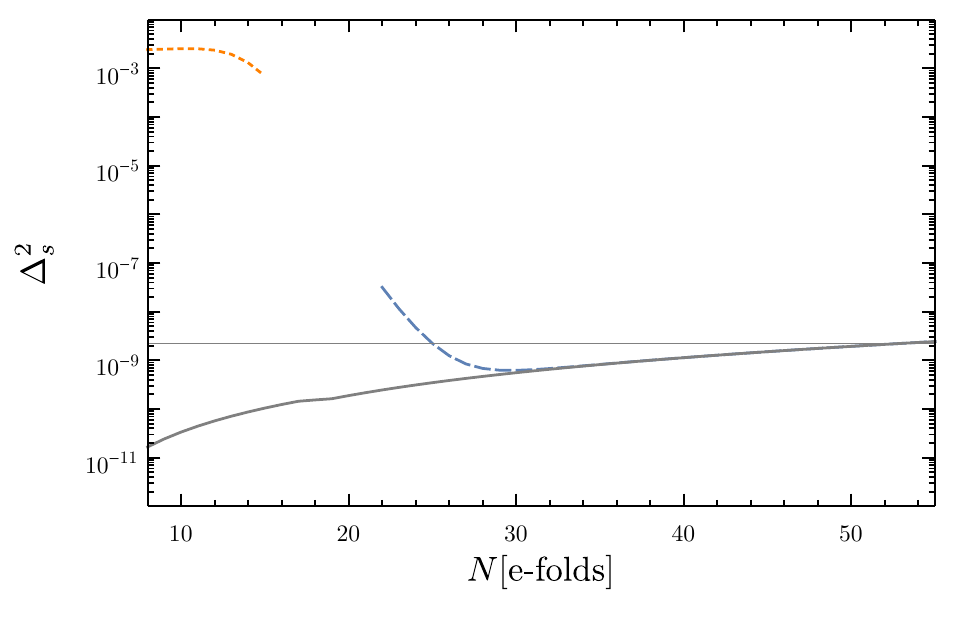}
\caption{Scalar power spectrum. Our semi-analytical estimate~\eqref{eq:Ds_2nd}  in the non-abelian regime is shown as a dotted orange line. For reference, we show the standard contribution of the vacuum fluctuations which also well describe the results of the linearized analysis (solid gray) and the (non-linear) contribution in the abelian regime (dashed blue). Same parameters as in Fig.~\ref{fig:phi-evolution}.}
 \label{fig:spectraPS}
\end{figure}

We note that initially, for small values of $\xi$, this grows as $\exp[4 \pi (\kappa - \mu)] \simeq \exp[4 \pi \xi (2 - \sqrt{2})]$ whereas for large values of $\xi$ we find the same $1/\xi^2$ dependence as in the abelian case~\cite{Linde:2012bt}. The total scalar power spectrum is the sum of the vacuum contribution~\eqref{eq:Ds} and the contribution sourced by the enhanced gauge field mode~\eqref{eq:Ds_2nd}.\footnote{{As this work was being finalized, Refs.~\cite{Dimastrogiovanni:2018xnn,Papageorgiou:2018rfx} appeared, which also consider these nonlinear couplings. The main focus of Ref.~\cite{Dimastrogiovanni:2018xnn} is the three-point correlators between the scalar and tensor perturbations, whereas Ref.~\cite{Papageorgiou:2018rfx} is a dedicated study of the leading nonlinear contribution to the scalar power spectrum.  A direct comparison of our results is difficult due to the different background evolution (see discussion in Sec.~\ref{subsec:dynamical_background}) and due to the fact that in our parameter space we typically encounter larger values of $\xi$ than encountered in \cite{Papageorgiou:2018rfx}. Using a very different methodology than presented here, Ref.~\cite{Papageorgiou:2018rfx} concludes that the non-linear contributions to the scalar power spectrum begin to dominate over the vacuum contributions at  $m_Q \simeq 2.7$, where  $m_Q \simeq c_2 \, \xi$, {and hence $\xi (m_Q = 2.7) \simeq 3$. From Eq.~\eqref{eq:Ds_2nd}, we can estimate that in our analysis, the non-linear term comes to dominate at $\xi \simeq 2.5$.   Moreover, the exponential sensitivity on $\xi$, $\Delta_s^2 \sim e^{2 \pi m_Q}$ in \cite{Papageorgiou:2018rfx}, is similar to what we find here. 
Within the uncertainties inherent to both methods, we consider this a  good agreement.}  For related work in the abelian case see also Refs.~\cite{Barnaby:2010vf,Barnaby:2011vw}.}

} In  Fig.~\ref{fig:spectraPS}, we show the resulting estimate for scalar spectrum in the non-abelian regime (dotted orange). The `strong backreaction regime', where the simplified expression in the second line of Eq.~\eqref{eq:Ds_2nd} applies is reached only around $N \simeq 10$. For reference, we show also corresponding estimate in the abelian regime (dashed blue)~\cite{Linde:2012bt},
\begin{align}
 (\Delta_s^2)_\text{ab.}  = (\Delta_s^2)_\text{vac} +  (\Delta_s^2)_\text{gauge} = \left( \frac{H^2}{2 \pi \dot \phi}\right)^2 + \left(\frac{ \alpha \langle \vec E \vec B \rangle}{2 \beta \Lambda H \dot \phi} \right)^2 \,, 
\end{align}
with
\begin{equation}
 \beta = 1 - 2 \pi \xi \frac{\alpha \langle \vec E \vec B \rangle}{3 \Lambda H \dot \phi} \,, \quad  \langle \vec E \vec B \rangle = - 2.0 \cdot 10^{-4} \frac{H^4}{\xi^4} e^{2 \pi \xi} \,,
\end{equation}
as well as  the standard vacuum contribution (solid gray line, obtained by setting the last parenthesis in Eq.~\eqref{eq:Ds} to 1), which agrees well with the results obtained from the linearized analysis. The horizontal gray line indicates the observed value at the CMB scales.

\subsubsection{Gravitational wave spectrum \label{subsec:TS}}
Next we turn to the tensor power spectrum. For the purpose of direct gravitational wave searches (Pulsar timing arrays (PTAs) and interferometers), it is customary to express the stochastic gravitational wave background (SGWB) as the energy in gravitational waves per logarithmic frequency interval normalized to the critical energy density~$\rho_c$ \cite{Turner:1993vb,Seto:2003kc,Smith:2005mm}, 
\begin{equation}
 \Omega_{\rm GW}(k) = \frac{1}{\rho_c} \frac{\partial \rho_{GW}(k)}{\partial \ln k}  = \frac{(\Delta_t^+)^2 + (\Delta_t^-)^2}{24} \Omega_r \frac{g_*^k}{g_*^0} \left( \frac{g_{*,s}^0}{g_{*,s}^k} \right)^{4/3} \,,
\end{equation}
for modes entering during the radiation dominated epoch of the universe, where $g_*^{k,0}$ ($g_{*,s}^{k,0}$) denotes the effective number of degrees of freedom contributing to the energy (entropy) of the thermal bath at the point in time when the mode $k$ entered the horizon and today, respectively. $\Omega_r = 8.5 \times 10^{-5}$ denotes the fraction of radiation energy today. Neglecting the change in the number of degrees of freedom, this leads to
\begin{align}
 \Omega_{\rm GW}(k) & = \frac{\Omega_r}{24} \left( \frac{H}{2 \pi}\right)^2 \left( \frac{2}{M_P}\right)^2 \left[  \left(  x \, |w^{(\gamma)}_{-2}(x)| \right)^2  +  \left( x \, |w^{(\gamma)}_{+2}(x)| \right)^2 \right]\bigg|_{x = 1}  \\
 & = \frac{\Omega_r}{24} \left( \frac{H}{2 \pi}\right)^2 \left( \frac{2}{M_P}\right)^2 \left[  \left( 1  +  x \, |w^{(\gamma)}_{+2}(x)| \right)^2 \right]\bigg|_{x = 1}  \,,
 \label{eq:OmegaGW} 
\end{align}
where we have made use of the observation that the $w_{-2}^{(\gamma)}$ mode is not enhanced and hence is given by the usual solution of the Mukhanov-Sasaki equation, $w_{-2}^{(\gamma)}(x = 1) = 1 $.

\begin{figure}
\centering
\includegraphics{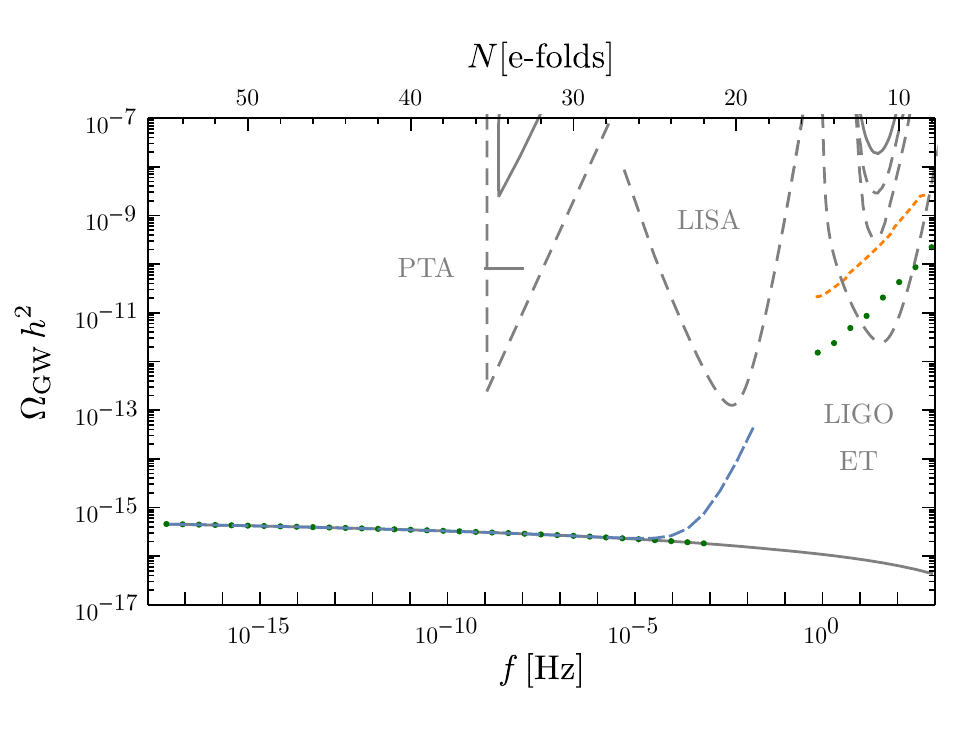}
\caption{Stochastic gravitational wave background. Our semi-analytical estimate~\eqref{eq:GWapprox} in the non-abelian regime  is shown as dotted orange line. For reference, we show the standard contribution of the vacuum fluctuations (solid gray)and the (non-linear) contribution in the abelian regime (dashed blue). The results of the linearized analysis are shown as green dots. Same parameters as in Fig.~\ref{fig:phi-evolution}.}
 \label{fig:spectraGW}
\end{figure}

In Fig.~\ref{fig:spectraGW} we show the resulting  SGWB compared to current and upcoming experimental constraints from pulsar timing arrays~\cite{vanHaasteren:2011ni,Kramer:2004rwa} and from the interferometer experiments LIGO~\cite{TheLIGOScientific:2016wyq}, LISA~\cite{Audley:2017drz} and the Einstein Telescope~\cite{ET}. For reference, we also show the standard vacuum solution (in gray), obtained by setting $  w^{(\gamma, \phi)}(x = 1) = 1$, as well as the analytical results from the abelian regime (dashed blue line),
\begin{align}
 (\Omega_{\rm GW})_\text{ab.}  = \frac{1}{24} \Omega_r \left( \frac{H}{\pi M_P} \right)^2 \left( 2 + 4.3 \cdot 10^{-7} \frac{2 H^2}{M_P^2} \frac{e^{4 \pi \xi}}{\xi^6}\right) \,.
 \label{eq:GWab}
\end{align}
 {A key feature of the non-abelian regime is that the gravitational waves couple to the enhanced gauge-field mode at the linear level, resulting in the enhanced SGWB at large frequencies~\cite{Adshead:2013nka}. In the abelian regime, such a source term is absent at the linear level and only appears at the non-linear level. This is simply because the energy-momentum tensor, the source term of the GW equation of motion, is bi-linear in the gauge fields and we do not have a background gauge field in the abelian regime. This non-linear term is of course not captured by the linearized analysis performed here, and hence we simply include the non-linear contribution in the abelian regime given by Eq.~\eqref{eq:GWab} a posteriori. Note that in the non-abelian regime this non-linear contribution is sub-dominant as long as $\delta A \ll f$. For the parameter point studied here, we find the SGWB to be out of reach of the GW interferometers LIGO and LISA, and barely reachable with the Einstein Telescope. However we stress that a different parameter choice, leading e.g.\ to an earlier matching between the abelian and non-abelian regime, could change this picture. A full-fledged study of the parameter space is beyond the scope of the current paper. We note that as in Ref.~\cite{Agrawal:2017awz}, we expect this gravitational wave background to be very non-gaussian, which may be used as a powerful model discriminator in future observations~\cite{Bartolo:2018qqn}.

Combining Eqs.~\eqref{eq:GWanalytical} and \eqref{eq:OmegaGW} we can obtain an analytical estimate of the SGWB in the non-abelian regime,
\begin{equation}
 \Omega_{\rm GW} \simeq \frac{1}{24} \Omega_r \left( \frac{\xi^3 H}{\pi M_P} \right)^2_{\xi = \xi_\text{cr}}  \left( \frac{2^{7/4} H}{e} \xi^{-1/2} e^{(2 - \sqrt{2}) \pi \xi}  \right)^2_{\xi = \xi_\text{ref}} \,.
 \label{eq:GWapprox}
\end{equation}
Here, to take into account the variation of $\xi$, we evaluate the first parenthesis (related to the GW - gauge field coupling at horizon crossing) at $\xi_\text{cr} = \xi(x = 1)$, whereas we evaluate the second parenthesis (describing the helicity $+2$ gauge field solution) at a the reference value $x = (2 + \sqrt{2}) \xi_\text{cr}$ roughly corresponding to the onset of the instability in the gauge field mode. The resulting estimate for the GW spectrum is depicted by the dotted orange line in  Fig.~\ref{fig:spectraGW}. The discrepancy to the numerical result can be traced back to Eq.~\eqref{eq:GWanalytical}, which overestimates the GW amplitude by about a factor of three.}

In summary, we reproduce the vacuum contribution for both the scalar and tensor power spectrum at large $N$,  thus ensuring agreement with all CMB observations. As we approach smaller scales, we observe an enhancement of both spectra as we pass first through the abelian and then through the non-abelian regime. For the specific parameter point studied in this section, the effects seem out of reach of current and upcoming experiments, though a more careful analysis is required to correctly account for the uncertainties due the lack of hierarchy between the gauge field background and its fluctuations. We generically expect larger signals with increasing $\alpha/\Lambda$ and/or increasing gauge coupling, see discussion at the end of Sec.~\ref{subsec:gaugefluctuations}. This part of the parameter space however comes with even larger gauge field fluctuations, requiring a different computational strategy to even reliably describe the background evolution in this regime.

\section{Conclusions and Outlook \label{sec:conclusions}}

A shift-symmetric coupling between the inflation sector and a (non-abelian) gauge sector opens up new avenues to probe the microphysics of inflation. Based on earlier works studying the coupling of the inflaton field to abelian~\cite{Barnaby:2010vf} and non-abelian~\cite{Adshead:2012kp,Dimastrogiovanni:2012st,Dimastrogiovanni:2012ew,Adshead:2013qp,Adshead:2013nka} gauge groups, we demonstrate here how the former can be understood as a limit of the latter in the far past. To this end, we extend and complement the analysis of CNI (see Refs.~\cite{Dimastrogiovanni:2012ew,Adshead:2013qp,Adshead:2013nka}) in the following ways: (i) Most importantly, we propose a mechanism for the dynamical generation of the homogeneous isotropic gauge field background, a key ingredient to CNI, starting from standard Bunch--Davies initial conditions in the infinite past. The evolution of the background is then governed by two competing effects: the classical background motion and the stochastic motion due to the strong growth of gauge field perturbations. We demonstrate how the system converges to an attractor solution, where the gauge field background is dominated by the classical motion. (ii) This attractor solution is more general than the solution employed in~\cite{Adshead:2012kp}, and in particular also allows us to study the parameter space in which the backreaction of the gauge fields on the equation of motion of the inflaton is small. (iii) We provide relatively simple explicit expressions for the equations of motion of the Fourier components of all physical perturbations modes, after integrating out all gauge degrees of freedom and constraint equations. Due to a slightly different choice of basis, our expressions are somewhat simpler than those found in the literature, in particular when studying only the gauge field fluctuations. (iv) Although the bulk of our analysis is performed for the linearized system of perturbations, we highlight the importance of non-linear contributions in the scalar sector, giving an estimate for the resulting contribution to the scalar power spectrum. Finally, we emphasize that the accuracy and efficiency of the proposed mechanism for the generation of the gauge field background should be verified in numerical lattice simulations. Research in this direction is ongoing (see e.g.\ Refs.~\cite{Adshead:2015pva,Cuissa:2018oiw} for recent developments) and we hope that this paper will arouse interest in these questions.

In the inherently non-abelian regime of our analysis, we reproduce the phenomenology of CNI: a single tensor (i.e.\ helicity 2) gauge field mode is tachyonically enhanced. A linear coupling of this mode to one helicity of the tensor metric perturbation sources a strongly enhanced, chiral gravitational wave background.  We point out that the instability observed in the scalar sector in Ref.~\cite{Dimastrogiovanni:2012ew} only occurs in a tiny part of the parameter space (achievable only for large gauge couplings when the transition to the non-abelian regime can occur for very low inflaton velocities close to the theoretical threshold value), and is in particular absent for the parameter point studied in this paper. Large contributions to the scalar power spectrum are however sourced by non-linear contributions. We propose a method to  estimate this non-linear contribution based on a generalization of the procedure described in~\cite{Linde:2012bt} and it should be seen as an order-of-magnitude estimate only. Interestingly, we have found that the non-linear corrections are expected to quickly dominate over vacuum and linear contributions inducing a strong enhancement in the scalar power spectrum at small scales. Such an enhancement may lead to a rich phenomenology (e.g.\ PBH formation~\cite{Linde:2012bt, Domcke:2017fix,Garcia-Bellido:2016dkw}, $\mu$-distortions~\cite{Meerburg:2012id}, etc).

The transition between an abelian limit at early times and a non-abelian regime at later times provides a natural way to obtain results in agreement with all CMB constraints while obtaining a phenomenologically very interesting enhanced scalar and tensor power spectra at small scales. The key parameter driving this transition is the instability parameter $\xi$ (see Eq.~\eqref{eq:rev_xi}), measuring the velocity of the inflaton. In single-field slow-roll inflation models, this parameter is small at early times but increases over the course of inflation. This parameter controls the tachyonic enhancement of the gauge field perturbations, as well as the amplitude of the homogeneous background solution. For the parameter point studied in this paper, the predicted enhancement of the scalar and tensor power spectrum is out of reach of current and upcoming experiments. A comprehensive exploration of the parameter space of these models, in particular regarding the prospects of detecting the gravitational wave signal with LIGO, LISA or the Einstein Telescope, is left for future work.

For the numerical study in this paper, we focused on a parameter point for which the gauge friction on the inflaton motion was small at the point of transition between the abelian and non-abelian regime. We leave a study of the opposite regime, corresponding to the usual so-called magnetic drift regime of CNI to future work. In this case, the dynamics at the transition point change more violently, requiring a more sophisticated modeling of this transition regime. In a similar spirit, the transition region itself deserves more attention. In this paper we simply refrained from making any statements very close to the transition region, to avoid sensitivity to the precise modeling procedure of this transition. Clearly, this calls for further improvement. Finally, the restriction to a linearized system of perturbations in the bulk of our analysis limits our ability to explore parameters for which the transition to the non-abelian regime occurs significantly before the end of inflation. This constraint is already problematic for the parameter point studied here, and likely becomes even more relevant in the parameter space most promising for direct gravitational wave searches, calling for a full non-linear treatment of the system. 

Our analysis is a further step towards the embedding of axion inflation models in a fully realistic particle physics description of the early Universe. In particular, our unified framework for abelian and non-abelian couplings may be applied to more complex gauge groups such as $\mathrm{SU}(2) \times \mathrm{U}(1)$ and eventually the full Standard Model of Particle Physics.

\vspace{1cm}
\subsubsection*{Acknowledgements}

We thank P.~Adshead, J.~Beltran, D.~Figueroa, A.~Maleknejad, M.~Peloso and S.~Rodrigues Sandner for valuable comments and discussions at various stages of this work. M.P. acknowledges the support of the Spanish MINECOs ``Centro de Excelencia Severo Ocho'' Programme under grant SEV-2012-0249. This project has received funding from the European Unions Horizon 2020 research and innovation programme under the Marie Sk\l{}odowska-Curie grant agreement No 713366.

\newpage
\appendix
\addtocontents{toc}{\protect\setcounter{tocdepth}{1}}

\section{Full equations of motion \label{app:fulleom}}

\noindent In this Appendix we report the full computation of the linearized equations of motion for the gauge fields and metric perturbations, including the non-dynamical metric degrees of freedom that were discarded in Sec.~\ref{sec:linearized} of the main text, labeled lapse $N$ and shift $N^i$ in the ADM formalism~\cite{Arnowitt:1962hi}. We found some minor discrepancies with respect to the results in \cite{Adshead:2013nka} that however do not affect their results.

We start from the expressions for the action and the metric reported in Sec.~\ref{sec:linearized}. The metric in matrix form is
\begin{equation}
g_{\mu \nu} =
\begin{pmatrix}
-N^2 + h_{ij} N^i N^j & h_{ji} N^i \\
h_{ij} N^j & h_{ij}
\end{pmatrix}\,,
\end{equation}
and it can be easily established that the components of the inverse matrix are
\begin{equation}
g^{00} = - \frac{1}{N^2} \quad g^{0i} = \frac{N^i}{N^2} \quad g^{ij} = h^{ij} - \frac{N^i N^j}{N^2} \,.
\end{equation}
Moreover, it is useful to notice that $\sqrt{-g} = N \sqrt{h}$,\footnote{The four-dimensional metric is always denoted by the letter $g$, while the three-dimensional metric is denoted by $h$. $h_{ij}/h^{ij}$ is used in order to raise/lower indices in the three-dimensional space.} where $h = \text{det} \left(h_{ij}\right)$ and $h^{ij} = \left(h_{ij}\right)^{-1}$. Using these definitions, the Einstein-Hilbert term takes the standard form derived in the original paper on the ADM formalism~\cite{Arnowitt:1962hi}
\begin{equation}
\label{eq:FullEH}
\mathcal{L}_{\text{EH}} = \sqrt{h} \left[N R^{(3)} + \frac{1}{N} \left(E^{ij} E_{ij} - E^2\right) \right]\,,
\end{equation}
where $R^{(3)}$ is the spatial curvature (computed using $h_{ij}$), while\footnote{The covariant derivative $\nabla$ is computed using $h_{ij}$.}
\begin{equation}
E_{ij} = \frac{1}{2} \left(h'_{ij} - \nabla_i N_j - \nabla_j N_i\right) \,, \quad E = E^i_i\,.
\end{equation}

\noindent Expressing the Lagrangian in terms of its individual contributions as in Eq.~\eqref{eq:action}, we obtain for the contribution ${\cal L}_\phi$,
\begin{equation}
\mathcal{L}_{\phi} = \sqrt{h} \, \left[\frac{1}{2 N} \left(\phi' - N^i \partial_i \phi\right)^2 - \frac{N}{2} h^{ij} \partial_i \phi \partial_j \phi - N V(\phi)\right]\,,
\end{equation}
while, after using some simple algebra, we split the Yang--Mills Lagrangian as follows
\begin{align}
\mathcal{L}_{\text{YM}} & = \frac{\sqrt{h}}{2 N} \left[h^{ij} \left(F^a_{0i} F^a_{0j} + 2 F^a_{0i} F^a_{jl} N^l + F^a_{ik} F^a_{jl} N^k N^l\right)\right] - \frac{\sqrt{h} N}{4} h^{ij} h^{kl} F^a_{ik} F^a_{jl}  \nonumber \\
& \equiv \mathcal{L}_{\text{YM},1} + \mathcal{L}_{\text{YM},2} + \mathcal{L}_{\text{YM},3} + \mathcal{L}_{\text{YM},4} \,.
\label{eq:LagrangianYMF}
\end{align}

Expanding each of the four terms in Eq.~\eqref{eq:LagrangianYMF} we get:
\begin{align}
\mathcal{L}_{\text{YM},1} & = \frac{\sqrt{h}}{2 N} h^{ij} F^a_{0i} F^a_{0j} = \nonumber \\
& = \frac{\sqrt{h}}{2 N} h^{ij} \left[\partial_0 A^a_i \partial_0 A^a_j + \partial_i A^a_0 \partial_j A^a_0 - \partial_0 A^a_i \partial_j A^a_0  - \partial_0 A^a_j \partial_i A^a_0 + \right.\nonumber \\
& + e \varepsilon^{abc} \left(\partial_0 A^a_i A^b_0 A^c_j - \partial_i A^a_0 A^b_0 A^c_j + \partial_0 A^a_j A^b_0 A^c_i - \partial_j A^a_0 A^b_0 A^c_i\right) \nonumber \\
& \left.+ e^2 \left(A^b_0 A^b_0 A^c_i A^c_j - A^b_0 A^b_j A^c_0 A^c_i\right)\right] \,,
\end{align}
\begin{align}
\mathcal{L}_{\text{YM},2} & = \frac{\sqrt{h}}{N} h^{ij} F^a_{0i} F^a_{jl} N^l = \nonumber \\
& = \frac{\sqrt{h}}{N} h^{ij} \left[\partial_0 A^a_i \partial_j A^a_l - \partial_0 A^a_i \partial_l A^a_j - \partial_i A^a_0 \partial_j A^a_l + \partial_i A^a_0 \partial_l A^a_j +\right. \nonumber \\
& + e \varepsilon^{abc} \left(\partial_0 A^a_i A^b_j A^c_l - \partial_i A^a_0 A^b_j A^c_l + \partial_j A^a_l A^b_0 A^c_i - \partial_l A^a_j A^b_0 A^c_i\right) + \nonumber \\
& \left. + e^2 \left(A^b_0 A^b_j A^c_i A^c_l - A^b_0 A^b_l A^c_i A^c_j\right)\right] N^l \,,
\end{align}

\begin{align}
\mathcal{L}_{\text{YM},3} & = \frac{\sqrt{h}}{2 N} h^{ij} F^a_{ik} F^a_{jl} N^k N^l = \nonumber \\
& = \frac{\sqrt{h}}{N} h^{ij} \left[\partial_i A^a_k \partial_j A^a_l - \partial_i A^a_k \partial_l A^a_j - \partial_k A^a_i \partial_j A^a_l + \partial_k A^a_i \partial_l A^a_j +\right. \nonumber \\
& + e \varepsilon^{abc} \left(\partial_i A^a_k A^b_j A^c_l - \partial_k A^a_i A^b_j A^c_l + \partial_j A^a_l A^b_i A^c_k - \partial_l A^a_j A^b_i A^c_k\right) + \nonumber \\
& \left. + e^2 \left(A^b_i A^b_j A^c_k A^c_l - A^b_i A^b_l A^c_k A^c_j\right)\right] N^k N^l \,,
\end{align}

\begin{align}
\mathcal{L}_{\text{YM},4} & = - \frac{\sqrt{h}}{4 N} h^{ij} h^{kl} F^a_{ik} F^a_{jl} = \nonumber \\
& = - \frac{\sqrt{h} N}{4} h^{ij} h^{kl} \left[\partial_i A^a_k \partial_j A^a_l - \partial_i A^a_k \partial_l A^a_j - \partial_k A^a_i \partial_j A^a_l + \partial_k A^a_i \partial_l A^a_j +\right. \nonumber \\
& + e \varepsilon^{abc} \left(\partial_i A^a_k A^b_j A^c_l - \partial_k A^a_i A^b_j A^c_l + \partial_j A^a_l A^b_i A^c_k - \partial_l A^a_j A^b_i A^c_k\right) + \nonumber \\
& \left. + e^2 \left(A^b_i A^b_j A^c_k A^c_l - A^b_i A^b_l A^c_k A^c_j\right)\right]\,.
\end{align}

The Chern-Simons term is
\begin{align}
\mathcal{L}_{\text{CS}} & = - \frac{\alpha}{8 \Lambda} \phi F^a_{\mu \nu} F^{a}_{\rho \sigma} \varepsilon^{\mu \nu \rho \lambda} = -\frac{\alpha}{2 \Lambda} \phi \left[2 \varepsilon^{0ijk} \partial_0 A^a_i \partial_j A^a_k - \right. \nonumber \\
& \left. - 2 \varepsilon^{0ijk} \partial_i A^a_0 \partial_j A^a_k  + e f^{abc} \varepsilon^{0ijk} \left(\partial_0 A^a_i A^b_j A^c_k - \partial_i A^a_0 A^b_j A^c_k + 2 \partial_i A^a_j A^b_0 A^c_k\right)  \right]\,.
\end{align}

Given the expressions for each term of the action, we can compute the expansion at  the quadratic order in the field fluctuations. We expand the three-dimensional metric to second order as follows:
\begin{equation}
h_{ij} = a^2 \left(\delta_{ij} + \gamma_{ij} + \frac{\gamma_{ik} \gamma_{kj}}{2} \right) \,,
\end{equation}
where $\gamma_{ij}$ is a transverse and traceless quantum fluctuation: $\gamma_{ii} = \partial_i \gamma_{ij} = 0$.\footnote{Notice that spatial indices are raised and lowered by $h_{ij}/h^{ij}$, while gauge indices are raised and lowered with just a $\delta_{ij}$. Everytime we write a gauge field fluctuation with two lowered indices we imply that the first one is the gauge index.} We also notice that, using the standard expression $\delta g = g g^{ij} \delta g_{ij}$, where $g = \text{det}(g)$ the determinant $h = \text{det}(h_{ij})$ can be expanded as
\begin{equation}
\frac{h}{a^2} = \text{det} \left(e^{\text{Tr} \left[\ln (h_{ij}/a^2)\right]}\right) = \text{det} \left(e^{\text{Tr} \left[ \gamma_{ij} + \frac{\gamma_{ik}\gamma_{kj}}{2} - \frac{\gamma_{ik}\gamma_{kj}}{2} \right]}\right) = 1 \,.
\end{equation}
We also notice that in spatially flat gauge the Christoffel symbols vanish ${\Gamma^i}_{jk} = 0$ and then $\nabla_i = \partial_i$. Finally we expand the lapse and shift around the FRW background\footnote{{Formally, lapse and shift have to be expanded up to second order. It turns out that the second order can be eliminated using the background equations of motion (see~\cite{Adshead:2013nka}), and does not affect the subsequent results.}}
\begin{equation}
N = a \left(1+ \delta N\right) \,, \qquad N^i \equiv \delta N^i\,.
\end{equation}

We proceed with the computation of the quadratic action and we report the results term by term to make it easier tracking back the various terms. We start from the Einstein-Hilbert component
\begin{align}
S_{\rm EH} & = \int \textrm {d}^4 x \, \frac{a^2}{2} \left[  \frac{ \gamma_{ij} \partial_l \partial_l \gamma_{ij} }{4} - 6 \mathcal{H}^2 + \frac{\gamma^\prime_{ij} \gamma^\prime_{ij}}{4}  + \right. \nonumber \\ 
& \left. + \frac{4 \mathcal{H}}{a^2} (1 - \delta N) \partial_{i} N_{i} + 6 \mathcal{H}^2 \delta N (1- \delta N) - \left( \frac{\partial_{i} N_{i}}{a^2} \right)^2 + \frac{\partial_{(i} N_{j)} \partial_{(i} N_{j)}}{a^4} \right] \,,
\end{align}
where the contribution from the three-dimensional curvature $R^{(3)}$ is just the first term $a^2/8 \, \gamma_{ij} \partial_l \partial_l \gamma_{ij}$. Concerning the scalar field, the second-order action takes the form:
\begin{align}
\delta^2 S_\phi & = \int \textrm{d}^4 x \, \left[\frac{a^2}{2} \left( \left(\delta \phi'\right)^2 - \partial_i \delta \phi \partial_i \delta \phi - a^2 V'' \left(\delta \phi \right)^2 \right) - \right. \nonumber \\
& \left.- \frac{a^2}{2} \left(\frac{2}{a^2} \langle \phi' \rangle N_i \partial_i \delta \phi + 2 \langle \phi' \rangle\delta N \delta \phi' + \langle \phi' \rangle^2 \left(\delta N\right)^2 - 2 a^2 V' \delta \phi \delta N\right)\right]\,.
\end{align}

Finally, we report the various contributions to the Yang--Mills $\delta^2 \mathcal{S}_{\rm YM}$ and Chern-Simon $\delta^2 S_{\text{CS}}$ quadratic actions
\begin{align}
\delta^2 \mathcal{S}_{\text{YM},1}  = \int \textrm{d}^4 x\, & \left[\frac{1}{4} \left(f'\right)^2 \gamma^{jk} \gamma^{kj} - f' \gamma^{aj} \partial_0 \delta A^{(a}_{j)} - \right. \nonumber \\
& -\frac{1}{2} \delta A^a_i \partial_0\partial_0 \delta A^a_i  - \frac{1}{2} \delta A^a_0 \partial_i \partial_i  \delta A^a_0 +  \delta A^a_0 \partial_0 \partial_i \delta A^a_i + \nonumber \\
& + e \varepsilon^{abi}  f \partial_0 \delta A^a_i \delta A^b_0 + e \varepsilon^{jbc} f' \delta A^b_0 \delta A^c_j - e \varepsilon^{abi}  f \partial_i \delta A^a_0 \delta A^b_0 + e^2 f^2 \delta A^b_0 \delta A^b_0 \nonumber \\
& \left.+ \frac{3}{2} \left(f'\right)^2 \delta N^2 - f' \partial_0 \delta A^i_i \delta N + f' \partial_j \delta A^j_0 \delta N \right] \,,
\end{align}
\begin{align}
\delta^2 \mathcal{S}_{\text{YM},2} & = \int \textrm{d}^4 x \, \frac{1}{a^2} \left[f' \partial_a \delta A^a_l N_l -  f' \partial_l \delta A^a_a N_l + \right. \nonumber \\
& \left. + e f^2 \varepsilon_{ail} \partial_0 \delta A^a_i N_l + e f f' \varepsilon_{ibl} \delta A^b_i N_l - e \varepsilon_{ail} f^2 \partial_i \delta A^a_0 N_l - 2 e^2 f^3 \delta A^l_0 N_l \right] \,, \\
\delta^2 \mathcal{S}_{\text{YM},3} & = \int d^4 x\, e^2 f^4 N^k N^k \,,
\end{align}
\begin{align}
\delta^2 \mathcal{S}_{\text{YM},4} & =  \int \textrm{d}^4 x \, \left[-\frac{3}{4} e^2 f^4 \delta N^2 -\frac{e^2 f^4}{4} \gamma^{jk} \gamma^{kj} -\right. \nonumber \\
& - e f^2 \varepsilon^{abc} \gamma^{ij} \left( \delta^b_{(i} \partial_{j)} \delta A^a_{c} - \delta^b_{(i} \partial_{c} \delta A^a_{j)} \right) + e^2 f^3 \gamma^{bi} \delta A^{(b}_{i)} - \nonumber \\
& - e f^2 \varepsilon^{abc}  \partial_{[b} \delta A^a_{c]} \delta N - 2 e^2 f^3 \delta^b_i \delta A^b_i \delta N + \nonumber \\
& + \frac{1}{2} \delta A^a_j \partial_i \partial_i \delta A^a_j + \frac{1}{2} (\partial_i \delta A^a_i)^2 - e^2 f^2 \left( \delta A^a_a \delta A^b_b + \frac{1}{2} \delta A^b_i \delta A^b_i - \frac{1}{2} \delta A^b_i \delta A^i_b \right) - \nonumber \\
& \left. - e f \varepsilon^{abc} \left( \delta^b_i \partial_i \delta A^a_k \delta A^c_k + \delta^c_k \partial_i \delta A^a_k \delta A^b_i \right) \right] \,,
\end{align}
\begin{align}
\delta^2 \mathcal{S}_{\rm CS} & = \int \textrm{d}^4 x \, \left[-\frac{\alpha}{2 \Lambda} \langle \phi \rangle \left(2 \varepsilon^{0ijk} \partial_0 \delta A^a_i \partial_j \delta A^a_k- 2 \varepsilon^{0ijk} \partial_i \delta A^a_0 \partial_j \delta A^a_k+ \right.\right. \nonumber \\
& + 2 e f \partial_0 \delta A^a_a \delta A^k_k - 2ef \partial_0 \delta A^c_k \delta A^k_c+ e f' \delta A^b_b \delta A^c_c - e f' \delta A^b_j \delta A^j_b - \nonumber \\
& \left. - 2 e f \partial_i \delta A^i_0 \delta A^b_b + 2 e f \partial_i \delta A^j_0 \delta A^i_j + 2 e f \partial_i \delta A^i_j \delta A^j_0 - 2 e f \partial_i \delta A^j_j \delta A^i_0 \right) - \nonumber \\
& - \frac{\alpha}{2 \Lambda} \delta \phi \left(2 f' \varepsilon^{0ijk} \delta^a_i \partial_j \delta A^a_k + 4 e f f' \delta A^b_b + 2 e f^2 \partial_0 \delta A^a_a- 2 e f^2 \partial_i \delta A^i_0\right)\Big] \,.
\end{align}

From these expressions we can infer the full equations of motion and the constraints, which we report below.

\noindent - \textit{Gauss' law:}
\begin{align}
- \frac{f'}{a^2} \partial_j \gamma_{gj} - f' \partial_g \delta N - \partial_i \partial_i \delta A_{g0} + \partial_0 \partial_i \delta A_{gi}
+ e \varepsilon_{agi} f \partial_0 \delta A_{ai} + e \varepsilon_{jgc} f' \delta A_{cj} &+ \nonumber \\
+ 2 e f \varepsilon_{gbi} \partial_i \delta A_{b0} + 2 e^2 f^2 \delta A_{g0} 
+ \frac{e f^2}{a^2} \varepsilon_{gil} \partial_i N_l - \frac{2 e^2 f^3}{a^2} N_g  - \frac{\alpha e f^2}{\Lambda} \partial_g \delta \phi  &= 0 \,. 
\end{align}

\noindent - $\delta N$-\textit{constraint:}
\begin{align}
- 6 a^2 \mathcal{H}^2 \delta N - 2 \mathcal{H} \partial_i N_i - a^2 \langle\phi'\rangle \delta\phi' + a^2 \langle\phi'\rangle^2 \delta N - a^4 V' \delta \phi & + \nonumber \\
 + 3 \left(f'\right)^2 \delta N - f' \partial_0 \delta A^i_i + f' \partial_j \delta A^j_0 - \frac{3}{2} e^2 f^4 \delta N- e f^2 \varepsilon^{aik} \partial_{[i} \delta A^a_{k]} - 2 e^2 f^3 \delta A^i_i & = 0 \, . 
\end{align}

\noindent - $N_l$-\textit{constraint:}
\begin{align}
2 \mathcal{H} \partial_l \delta N + \frac{1}{2 a^2} \left[ \partial_l \partial_i N_i - \partial_i \partial_i N_l \right] -\langle\phi'\rangle \partial_i \delta \phi  + \frac{ 2 e^2 f^4  N_l } {a^4} & + \nonumber \\
+ \frac{1}{a^2} \left[  f' \left( \partial_i \delta A^i_{l} -  \partial_l \delta A^i_{i} \right) + e f^2 \varepsilon_{lai} \partial_0 \delta A^a_{i} - e f f' \varepsilon_{lbi} \delta A^b_{i} - e f^2  \varepsilon_{lai} \partial_i \delta A^a_{0} - 2 e^2 f^3 \delta A_{l0}  \right] & = 0 \,.
\end{align}

\noindent - \textit{Gauge field equation of motion}:
\begin{align}
& -\partial_0 \partial_0 \delta A^a_i - 2 e \varepsilon^{abi} f' \delta A^b_0 + \partial_l \partial_l \delta A^a_i - \partial_i \partial_j \delta A^a_j - \nonumber \\
& - e^2 f^2 (2 \delta^a_i \delta A^b_b  +  \delta A^a_i -\delta A^i_a  ) + 2 e f \varepsilon^{abc} \partial_b \delta A^c_i + e f \varepsilon^{abc} \partial_i \delta A^b_c + e f \varepsilon^{abi} \partial_l \delta A^b_l + \nonumber \\
& + \langle \phi^\prime \rangle  \frac{\alpha}{\Lambda}  \left( \varepsilon^{ijk}  \partial_j \delta A^a_k +  \delta^a_i e f \delta A^b_b - ef \delta A^i_a \right)  + \frac{\alpha}{\Lambda} \left[  f^\prime \varepsilon^{aji}\partial_j \delta \phi  + e f^2 \delta^a_i \delta \phi^\prime  \right] + \nonumber \\
& + f^{\prime \prime} \gamma^{a}_i + f^{\prime} \gamma^{a \ \prime}_i + e f^2  \varepsilon^{ajk} \partial_k \gamma_{ij} + e^2 f^3 \gamma^a_{i} + \nonumber \\ 
& - f^{\prime } \partial_i  N^a +  f^{\prime } \delta^a_i \partial_l  N^l - e \varepsilon^{ail} (3 f f^{\prime} N^l + f^2 N^{l \, \prime} ) + \nonumber \\
& + \delta^a_i (f^{\prime \prime} \delta N + f^{\prime } \delta N^\prime ) + e f^2 \varepsilon^{abi} \partial_b \delta N - 2 e^2 f^3 \delta^a_i \delta N \,  = 0 . 
\end{align}

\noindent - \textit{Inflaton equation of motion:}
\begin{align}
\dAlemb \delta \phi - 2 a' a^{-3} \delta \phi' + a^{-2} \partial_0( \langle \phi' \rangle \delta N) +  2 \frac{H}{a} \langle \phi' \rangle \delta N  + \frac{\langle \phi' \rangle}{a^4} \partial_i N_i - V_{,\phi \phi} \delta \phi - V_{, \phi} \delta N & - \nonumber \\
- \frac{\alpha}{2 \Lambda a^4} \left[ 2 f' \varepsilon^{ijk} \partial_j \delta A^i_k + 4 e f f' \delta A^b_b + 2 e f^2 \partial_0 \delta A^a_a - 2 e f^2 \partial_i \delta A^i_0 \right] & = 0 \, ,
\end{align}
where the $\dAlemb$-operator is expressed in co-moving coordinates.

\noindent - \textit{Gravitational wave equation of motion:}
\begin{align}
&\frac{a}{4}\left[(a \gamma_{ij})'' + \left(-\partial_l\partial_l - \frac{a''}{a} \right) (a \gamma_{ij})\right] = \nonumber \\
&+ \frac{1}{2 a}( f^{\prime \ 2}- e^2 f^4 ) (a \gamma_{ij} ) -  f^\prime \partial_0 \delta A^{(i}_{j)}  + f^\prime \partial_{(i} \delta A^{j)}_0 - 2e f^2 \varepsilon^{a(ic}  \partial_{[j)} \delta A^a_{c]}  + e^2  f^3  \delta A^{(i}_{j)}  \, . 
\end{align}
\section{Basis vectors for the gauge fields in the helicity basis \label{app:basis} }

In this appendix we collect the explicit forms of the basis vectors derived in Sec.~\ref{GaugeAndBasis}. The six physical degrees of freedom are spanned by the helicity states $\hat e_\lambda$,
\begin{equation*}
 (\hat e_{01})^{b}{}_{\mu} = \frac{1}{\sqrt{2}}
 \begin{pmatrix}
 0 & 0 & 0 & 0 \\ 0 & 0 & 1 & 0 \\  0 &0 & 0 & 1
 \end{pmatrix}  \,,  \quad 
 (\hat e_{02})^{b}{}_{\mu} = \frac{1}{\sqrt{2 + 4 y_k(x)^2}}
 \begin{pmatrix}
  0 & 2 i y_k(x) & 0 & 0 \\  0 & 0 & 0 & -1 \\ 0 & 0 & 1 & 0 
 \end{pmatrix} \,,\nonumber \\
 \end{equation*}
 
 \begin{align}
 (\hat  e_{-1})^{b}{}_{\mu} & = \frac{1}{\sqrt{2 + 4 y_k(x) + 4 y_k(x)^2}}
 \begin{pmatrix}
  0 & 0 & i(1 + y_k(x)) & 1 + y_k(x) \\ 0 &  i y_k(x) & 0 & 0 \\ 0 &  y_k(x) & 0 & 0 
 \end{pmatrix} \,,
   \nonumber \\
 (\hat e_{+1})^{b}{}_{\mu} &= \frac{1}{\sqrt{2 - 4 y_k(x) + 4 y_k(x)^2}}
 \begin{pmatrix}
0 &   0 & -i(1 - y_k(x))  & 1 - y_k(x) \\ 0 & i y_k(x) & 0 & 0 \\ 0 & - y_k(x) & 0 & 0
 \end{pmatrix} \,, 
 \nonumber \\
 &(\hat e_{-2})^{b}{}_{\mu} = \frac{1}{2}
 \begin{pmatrix}
   0 & 0 & 0 & 0 \\ 0 &  0 & i & 1 \\ 0 & 0 & 1 & -i 
 \end{pmatrix} \,,
 \quad
 (\hat e_{+2})^{b}{}_{\mu} = \frac{1}{2}\begin{pmatrix}
  0 & 0 & 0 & 0 \\ 0 & 0 & -i & 1 \\ 0 & 0 & 1 & i 
 \end{pmatrix}  \,,
 \label{eq:basis}
\end{align}
with $y_k(x) \equiv e f(\tau)/k$. Note that in any background which is a fixed point of Eq.~\eqref{eq:symmetry3} (e.g.\ if the background follows the $c_2$-solution), we can drop the index $k$ on $y_k(x)$ as this quantity becomes a function of $(- k \tau)$ only: $y_k(x) = e k^{-1} f(- k^{-1} x) = e f(-x) \equiv y(x)$.

The gauge degrees of freedom (simultaneously satisfying Eqs.~\eqref{eq:gaugebasis0} and \eqref{eq:helicity_operator}) read
\begin{align*}
\left(\hat g_{-1}\right)^{b}{}_{\mu} & =\left(\begin{array}{cccc}
0 & 0 & -y_k(x) & i\,y_k(x)\\
i\,\frac{\textrm{d}}{\textrm{d}x} & (1+y_k(x)) & 0 & 0\\
\frac{\textrm{d}}{\textrm{d}x} & -i(1+y_k(x)) & 0 & 0
\end{array}\right)\,,\\
\left(\hat g_{0}\right)^{b}{}_{\mu} & =\left(\begin{array}{cccc}
i\,\frac{\textrm{d}}{\textrm{d}x} & 1 & 0 & 0\\
0 & 0 & 0 & -i\,y_k(x)\\
0 & 0 & i\,y_k(x) & 0
\end{array}\right)\,,\\
\left(\hat g_{+1}\right)^{b}{}_{\mu} & =\left(\begin{array}{cccc}
0 & 0 & y_k(x) & i\,y_k(x)\\
i\,\frac{\textrm{d}}{\textrm{d}x} & (1-y_k(x)) & 0 & 0\\
-\frac{\textrm{d}}{\textrm{d}x} & i(1-y_k(x) & 0 & 0
\end{array}\right) \,,
\end{align*}
where the entry $\textrm{d}/\textrm{d}x$ indicates that the corresponding coefficient $w_\lambda^{(g)}(x)$ is replaced by $\frac{\textrm{d}}{\textrm{d}x} w_\lambda^{(g)}(x)$. 

Finally the  basis vectors encoding the constraint equations (Gauss' law) are given by
\begin{align*}
\left(\hat f_{-1}\right)^{b}{}_{\mu} & =\frac{1}{\sqrt{2}}\left(\begin{array}{cccc}
0 & 0 & 0 & 0\\
1 & 0 & 0 & 0\\
-i & 0 & 0 & 0
\end{array}\right)\,,\\
\left(\hat f_{0}\right)^{b}{}_{\mu} & =\frac{1}{\sqrt{2}}\left(\begin{array}{cccc}
1 & 0 & 0 & 0\\
0 & 0 & 0 & 0\\
0 & 0 & 0 & 0
\end{array}\right)\,,\\
\left(\hat f_{+1}\right)^{b}{}_{\mu} & =\frac{1}{\sqrt{2}}\left(\begin{array}{cccc}
0 & 0 & 0 & 0\\
1 & 0 & 0 & 0\\
+i & 0 & 0 & 0
\end{array}\right)\,.
\end{align*}

\section{ Computation of the variance of \texorpdfstring{$F \tilde F$}{FFdual} \label{app:variance_computation}}
In this appendix we are interested in computing the variation of $\langle F \tilde F \rangle$ with respect to its background value, which we employ to estimate the non-linear contribution to the scalar power spectrum in Sec.~\ref{subsec:PS}. In general this can be expressed as
\begin{equation}
 \delta (\langle F \tilde F \rangle ) = [F \tilde F - \langle F \tilde F \rangle ]_{\delta \phi = 0} + \frac{\partial \langle F \tilde F \rangle }{\partial \dot \phi} \dot{\delta \phi} \equiv \delta_{\vec{EB}} + \frac{\partial \langle F \tilde F \rangle }{\partial \dot \phi} \dot{\delta \phi} \,,
\end{equation}
where we are using the sign convention $\dot \phi > 0$, $\langle F \tilde F \rangle >  0$ (see \cite{Linde:2012bt}). Notice that this corresponds to considering the variation of $\langle F \tilde F \rangle$ due to variations of $\delta \phi$ (second term) plus the variations of $\langle F \tilde F \rangle$ due to the variations of the gauge fields. In order to perform the computation of the latter it is useful to introduce the electric and magnetic fields (in conformal time)
\begin{equation}
  F^{a}_{0i} \equiv - a^2(\tau) E_{i}^a \, , \qquad \qquad F^{a}_{ij} \equiv \varepsilon^{ijk} a^2(\tau) B_{k}^a \, .
\end{equation}
With these definitions we can trivially show that:
\begin{eqnarray}
  \label{eq:FF_EB}
  F^a_{\mu \nu} F^a_{\rho \sigma} g^{\mu \rho} g^{\nu \sigma } & = & - \frac{2}{a^4} F^a_{0 i} F^a_{0 i} + \frac{1}{a^4} F^a_{i j} F^a_{i j} = - 2 \left[ \left( \vec{E}^a \right)^2 - \left( \vec{B}^a \right)^2 \right] \, . \\
  \label{eq:FFdual_EB}
  F^a_{\mu \nu} \tilde{F}^a_{\rho \sigma} & = & \frac{4}{2 \sqrt{-g}} F^a_{0 i} F^a_{j k} \varepsilon^{ijk} = - 2 E^a_{i} B_{l}^a \varepsilon^{ijk}  \varepsilon^{ljk} = - 4 \vec{E}^a \cdot \vec{B}^a \, .
\end{eqnarray}
We can then proceed by computing the expressions of $E^a_{i}$ and $B^a_{i}$ (neglecting terms depending on $A_0^a$) up to second order:
\begin{eqnarray}
  E^a_{i} & = & - \frac{1}{a^2} \left[\delta^a_i \partial_0 f + \partial_0 \delta A^a_i  \right]\,, \\
  B^a_{i} & = & \frac{1}{ a^2} \left[ \varepsilon_{ijk} \partial_j \delta A^a_k + e f^2 \delta^a_i  - e f \delta A^i_a  +e \frac{\varepsilon_{ijk} \varepsilon^{abc}}{2} \delta A^b_j  \delta A^c_k  \right]\, , 
\end{eqnarray}
where we have neglected terms proportional to the helicity 0 quantity $\delta A^a_a$ since they do not feature the exponential enhancement present in the $e_{+2}$ mode. Notice that setting $e = 0$ we can easily recover the usual abelian terms.\\

\noindent With this notation it is now trivial to check that the expectation value of $F \tilde{F}$ can be expressed as:
\begin{equation}
  \langle F^a_{\mu \nu} \tilde{F}^{a \, \mu \nu} \rangle = - 4 \langle \vec{E}^a \cdot \vec{B}^{a } \rangle \equiv  - 4 T_0 \, ,
\end{equation}
and the variance of $F \tilde{F}$ can be expressed as
\begin{equation}
  \delta_{\vec{EB}}^2 = \langle ( F^a \tilde{F}^a )^2 \rangle  -  \langle  F^a \tilde{F}^a  \rangle ^2 = 16\left[  \langle  E_i^a E_j^b \rangle \langle B_i^a B_j^b \rangle + \langle E_i^a B_j^b \rangle \langle B_i^a E_j^b \rangle \right] \, ,
\end{equation}
which has exactly the same shape as in the abelian limit (for comparison see appendix A of~\cite{Linde:2012bt}). At this point it is useful to introduce 
\begin{eqnarray}
  T_1  \equiv \left \langle E^a_{i}  E^b_{j} \right \rangle \left \langle B^a_{i}  B^b_{j} \right \rangle \, , \qquad \qquad T_2  \equiv \left \langle E^a_{i} B^b_{j} \right \rangle \left \langle B^a_{i}  E^b_{j} \right \rangle \, ,
\end{eqnarray}
so that the two contributions can be computed independently. Before substituting the explicit expressions of $E$ and $B$ it is important to notice that (i) in order to compute the variance we only need terms that are exactly quadratic in the fluctuations and (ii)  the base vectors of the $\pm 2$ helicity modes are traceless in all bases (implying $\delta^a_i \left[e_{\pm2} (\vec{k})\right]^a_i = 0 $ for all $\vec{k}$). 
We can now  directly compute 
\begin{align}
  \left \langle E^a_{i}  E^b_{j} \right \rangle  =& \frac{1}{a^4} \left \langle \partial_0 \delta A^a_i \partial_0 \delta A^b_j \right \rangle \, , \\
  \left \langle B^a_{i}  B^b_{j} \right \rangle =& \frac{1}{a^4} \left \langle  \varepsilon_{ilk} \varepsilon_{jnm} \partial_l   \delta A^a_k  \partial_n \delta A^b_m + e^2 f^2 \delta A^i_a \delta A^j_b  - e f \varepsilon_{ilk} \partial_l \delta A^k_a \delta A^b_j  - e f \varepsilon_{jnm}  \delta A^i_a \partial_n \delta A^b_m \right \rangle \nonumber \\
  \equiv&  \left \langle B^a_{i}  B^b_{j} \right \rangle_1 +  \left \langle B^a_{i}  B^b_{j} \right \rangle_2 +  \left \langle B^a_{i}  B^b_{j} \right \rangle_3 +  \left \langle B^a_{i}  B^b_{j} \right \rangle_4 , \\
  \left \langle E^a_{i} B^b_{j} \right \rangle =& - \frac{1}{a^4} \left \langle \partial_0 \delta A^a_i \varepsilon_{jnm} \partial_n \delta A^b_m  - e f  \partial_0 \delta A^a_i \delta A^j_b +  e \delta^a_i \partial_0 f \frac{\varepsilon_{jnm} \varepsilon^{bdh}}{2} \delta A^d_n  \delta A^h_m \right \rangle \nonumber \\
  \equiv&  \left \langle E^a_{i}  B^b_{j} \right \rangle_1 +  \left \langle E^a_{i}  B^b_{j} \right \rangle_2 +  \left \langle E^a_{i}  B^b_{j} \right \rangle_3 \, .
\end{align}
Notice that again setting $e = 0$ our expressions reduce to the abelian case. 

At this point we can expand $ \delta A^a_\nu(\tau, \vec x)$ in terms of its Fourier modes as in Eq.~\eqref{eq:Fourier_A}. Notice that in general the basis vectors satisfy 
\begin{equation}
   \left[ e_{\lambda, \nu}^a (\hat{k}) \right]^* = e_{\lambda, \nu}^a (-\hat{k}) \, , \qquad \qquad i \varepsilon_{ijl} k_j e_{\lambda, l}^a (\hat{k}) = \text{sgn}(\lambda) |\vec{k}| e_{\lambda, i}^a (\hat{k}) \, .
\end{equation}
Moreover, the helicity $\pm2$ vectors are symmetric, \emph{i.e.,} $e_{\pm2, i}^a = e_{\pm2, a}^i$. Using the properties of the basis vectors it is possible to show that (from now on we restrict our analysis to the $+2$ mode only)
\begin{equation}
  \left \langle E^a_{i}  E^b_{j} \right \rangle  = \frac{1}{a^4} \int \frac{\textrm{d} k \textrm{d} \Omega_{\vec{k}} }{(2 \pi)^{3}} \, k^2   \frac{ \partial_0 \tilde{\delta A}_{+2}(\tau,  k) \left[ \partial_0 \tilde{\delta A}_{+2}(\tau,  k) \right]^*  e_{+2, i}^a (\hat{k}) \left[ e_{+2, j}^b (\hat{k}) \right]^*  +h. c.}{2}  \, ,
\end{equation}
and analogously for the terms of $ \left \langle E^a_{i} B^b_{j} \right \rangle$ and $ \left \langle B^a_{i} B^b_{j} \right \rangle$
\begin{align}
  2 \left \langle E^a_{i} B^b_{j} \right \rangle_1  = & - \frac{1}{a^4} \int \frac{\textrm{d} k \textrm{d} \Omega_{\vec{k}} }{(2 \pi)^{3}} \, k^3 \left\{ \partial_0 \tilde{\delta A}_{+2}(\tau,  k) \left[ \tilde{\delta A}_{+2}(\tau,  k) \right]^*  e_{+2, i}^a (\hat{k}) \left[ e_{+2, j}^b (\hat{k}) \right]^*  +h. c. \right \} \, , \\
  2 \left \langle E^a_{i} B^b_{j} \right \rangle_2  = & \frac{ e f }{a^4} \int \frac{\textrm{d} k \textrm{d} \Omega_{\vec{k}} }{(2 \pi)^{3}} \, k^2 \left\{ \partial_0 \tilde{\delta A}_{+2}(\tau,  k) \left[ \tilde{\delta A}_{+2}(\tau,  k) \right]^*  e_{+2, i}^a (\hat{k}) \left[ e_{+2, j}^b (\hat{k}) \right]^*  +h. c. \right \} \, ,\\
  2 \left \langle E^a_{i} B^b_{j} \right \rangle_3  = & - \frac{e \partial_0 f}{a^4} \int \frac{\textrm{d} k \textrm{d} \Omega_{\vec{k}} }{(2 \pi)^{3}} \, k^2 \left\{ \tilde{\delta A}_{+2}(\tau,  k) \left[ \tilde{\delta A}_{+2}(\tau,  k) \right]^*  
  \times \right. \nonumber   \\  & \left. 
  \delta^a_i \frac{\varepsilon_{jnm} \varepsilon^{bdh}}{2} \left[ e_{+2, n}^d (\hat{k}) \right] \left[ e_{+2, m}^h (\hat{k}) \right]^*  +h. c. \right \} \, , \\
  2 \left \langle B^a_{i}  B^b_{j} \right \rangle_1  = & \frac{1}{a^4} \int \frac{\textrm{d} k \textrm{d} \Omega_{\vec{k}} }{(2 \pi)^{3}} \, k^4 \left\{ \tilde{\delta A}_{+2}(\tau,  k) \left[ \tilde{\delta A}_{+2}(\tau,  k) \right]^*  e_{+2, i}^a (\hat{k}) \left[ e_{+2, j}^b (\hat{k}) \right]^*  +h. c. \right \} \, , \\
   2 \left \langle B^a_{i}  B^b_{j} \right \rangle_2  = & \frac{e^2 f^2}{a^4} \int \frac{\textrm{d} k \textrm{d} \Omega_{\vec{k}} }{(2 \pi)^{3}} \, k^2 \left\{ \tilde{\delta A}_{+2}(\tau,  k) \left[ \tilde{\delta A}_{+2}(\tau,  k) \right]^*  e_{+2, i}^a (\hat{k}) \left[ e_{+2, j}^b (\hat{k}) \right]^*  +h. c. \right \} \, , \\
   2 \left \langle B^a_{i}  B^b_{j} \right \rangle_3  = & \frac{-ef}{a^4} \int \frac{\textrm{d} k \textrm{d} \Omega_{\vec{k}} }{(2 \pi)^{3}} \, k^3 \left\{ \tilde{\delta A}_{+2}(\tau,  k) \left[ \tilde{\delta A}_{+2}(\tau,  k) \right]^*   e_{+2, i}^a (\hat{k}) \left[ e_{+2, j}^b (\hat{k}) \right]^*  +h. c. \right \}  \, ,\\
  2 \left \langle B^a_{i}  B^b_{j} \right \rangle_4  = & \frac{-ef}{a^4} \int \frac{\textrm{d} k \textrm{d} \Omega_{\vec{k}} }{(2 \pi)^{3}} \, k^3 \left\{ \tilde{\delta A}_{+2}(\tau,  k) \left[ \tilde{\delta A}_{+2}(\tau,  k) \right]^*  e_{+2, i}^a (\hat{k}) \left[ e_{+2, j}^b (\hat{k}) \right]^*  +h. c. \right \} \, .
\end{align}
Since $T_1 $ and $T_2 $ are respectively given by 
\begin{eqnarray}
  T_1 &=& \left \langle E^a_{i} E^b_{j} \right \rangle \left(  \left \langle B^a_{i} B^b_{j} \right \rangle_1 +\left \langle B^a_{i} B^b_{j} \right \rangle_2 +\left \langle B^a_{i} B^b_{j} \right \rangle_3 +\left \langle B^a_{i} B^b_{j} \right \rangle_4 \right) \, , \\ 
  T_2 &=& \left( \left \langle E^a_{i} B^b_{j} \right \rangle_1 +\left \langle E^a_{i} B^b_{j} \right \rangle_2 +\left \langle E^a_{i} B^b_{j} \right \rangle_3 \right) \left( \left \langle B^a_{i} E^b_{j} \right \rangle_1 +\left \langle B^a_{i} E^b_{j} \right \rangle_2 +\left \langle B^a_{i} E^b_{j} \right \rangle_3 \right) \, , \hspace{1cm}
\end{eqnarray}
we can immediatly see that, all the angular integrals reduce to three combinations:
\begin{eqnarray}
  \Omega_1 &\equiv& \int \textrm{d} \Omega_{\vec{k}} \ \textrm{d} \Omega_{\vec{q}} \ e_{+2, i}^a (\hat{k}) \left[ e_{+2, j}^b (\hat{k}) \right]^*  e_{+2, i}^a (\hat{q}) \left[ e_{+2, j}^b (\hat{q}) \right]^* \, , \\
  \Omega_2 &\equiv& \frac{3}{4} \int \textrm{d} \Omega_{\vec{k}} \ \textrm{d} \Omega_{\vec{q}} \ \varepsilon_{jnm} \varepsilon^{bdh} \left[ e_{+2, n}^d (\hat{k}) \right] \left[ e_{+2, m}^h (\hat{k}) \right]^* \varepsilon_{jlu} \varepsilon^{bpr} \left[ e_{+2, l}^p (\hat{q}) \right] \left[ e_{+2, u}^r (\hat{q}) \right]^* , \hspace{1cm} \\
  \Omega_3 &\equiv& \int \textrm{d} \Omega_{\vec{k}} \ \textrm{d} \Omega_{\vec{q}} \ e_{+2, i}^a (\hat{k}) \left[ e_{+2, j}^b (\hat{k}) \right]^* \delta^a_i \frac{\varepsilon_{jnm} \varepsilon^{bdh}}{2} \left[ e_{+2, n}^d (\hat{k}) \right] \left[ e_{+2, m}^h (\hat{k}) \right]^* \, ,
\end{eqnarray}
whose direct evaluation gives
\begin{equation}
   \Omega_1 = \frac{(4\pi)^2}{5} \, ,\qquad \qquad \Omega_2 = \frac{4\pi^2}{3} \, ,\qquad \qquad \Omega_3 = 0 \, . 
\end{equation}
Notice that $\Omega_3 = 0$ can be used to simplify $T_2$ as:
\begin{equation}
  T_2 = \left( \left \langle E^a_{i} B^b_{j} \right \rangle_1 +\left \langle E^a_{i} B^b_{j} \right \rangle_2  \right) \left( \left \langle B^a_{i} E^b_{j} \right \rangle_1 +\left \langle B^a_{i} E^b_{j} \right \rangle_2 \right) +\left \langle E^a_{i} B^b_{j} \right \rangle_3 \left \langle B^a_{i} E^b_{j} \right \rangle_3 \, ,
\end{equation}
At this point we are finally left with only integrals over the absolute value of the momenta. In order to perform these integrals, we will employ Eq.~\eqref{eq:NonabelianAsymptotics2}, defining $\tilde w(x)$ as follows:
\begin{align}
  w_{+2}(x) & \simeq e^{(\kappa-\mu)\pi}  \sqrt{4\pi} \left(\frac{\zeta(x)}{V(x)}\right)^{1/4}\mathrm{Ai}\left(\zeta(x)\right) \\
 &  \equiv e^{(\kappa-\mu)\pi} \tilde w(x) \,.
\end{align}
We have verified that integrating this approximate expression over the range $0 \leq x \leq 2 \xi$ agrees with the integral over the exact expression~\eqref{eq:AnalyticalSolution} extremely well, and we are moreover insensitive to the choice of the UV-cutoff. Let us start by computing for example:
\begin{equation}
\begin{aligned}
  4 \left \langle E^a_{i} E^b_{j} \right \rangle \left \langle B^a_{i} B^b_{j} \right \rangle_1   = \frac{\Omega_1 e^{4 \pi (\kappa - \mu)}}{(2 \pi)^6 a^8} \int  \frac{ \textrm{d} k  \, k^2 }{k} \left\{ \partial_0 \tilde w( x_k)  \partial_0 \tilde w( x_k)  \right \}      \int  \frac{\textrm{d} q  \, q^4}{ q} \left\{ \tilde w( x_q) \tilde w( x_q) \right \} \,,
\end{aligned}
\end{equation}
where we use the notation $x_k = - k \tau$ to keep track of the different momentum variables. We can then use $\partial_0 \tilde w( x_k)  = - k \tilde w( x_k)$ ($\prime$ here is used to denote the derivative with respect to $x_k$), $k = - x_k/ \tau$ and $\tau = - 1/(a H)$ to get
\begin{equation}
\begin{aligned}
  4 \left \langle E^a_{i} E^b_{j} \right \rangle \left \langle B^a_{i} B^b_{j} \right \rangle_1   = \frac{\Omega_1 e^{4 \pi (\kappa - \mu)}}{(2 \pi)^6} H^8 \int  \textrm{d} x_k  \, x_k^3  \left\{\tilde w'( x_k)  \tilde w'( x_k)   \right \}     \int \textrm{d} x_q  \, x_q^3 \ \left\{\tilde w( x_q)  \tilde w( x_q)   \right \}  \, .
\end{aligned}
\end{equation}
Since all the other terms can also be expressed as a product of two integrals it is useful to introduce the six integrals:
\begin{align}
  \mathcal{I}_1 & =  \int  \textrm{d} x  \, x^3 \  \tilde w'( x)  \tilde w'( x) \,, \qquad && 
  \mathcal{I}_2  =  \int  \textrm{d} x  \, x^3 \ \tilde w( x) \tilde w( x)  \,, \qquad 
  \mathcal{I}_3 & =  \int  \textrm{d} x  \, x \ \tilde w( x) \tilde w( x)  \,, \\ 
  \mathcal{I}_4 &  =  \int  \textrm{d} x  \, x^2 \ \tilde w( x) \tilde w( x) \,, \qquad &&
  \mathcal{I}_5  =  \int  \textrm{d} x  \, x^3 \ \tilde w'( x)  \tilde w( x) \,, \qquad 
  \mathcal{I}_6  &=  \int  \textrm{d} x  \, x^2 \  \tilde w'( x) \tilde w( x)   \,,
\end{align}
so that we can easily express
\begin{align}
  4 \left \langle E^a_{i} E^b_{j} \right \rangle \left \langle B^a_{i} B^b_{j} \right \rangle_1   = & \frac{\Omega_1 e^{4 \pi (\kappa - \mu)}}{ (2 \pi)^6} H^8  \mathcal{I}_1 \mathcal{I}_2  \, , \quad 
  && 4 \left \langle E^a_{i} E^b_{j} \right \rangle \left \langle B^a_{i} B^b_{j} \right \rangle_2   =  \frac{\Omega_1 e^{4 \pi (\kappa - \mu)}}{ (2 \pi)^6} H^8 \xi^2  \mathcal{I}_1 \mathcal{I}_3  \, , \nonumber  \\
  4 \left \langle E^a_{i} E^b_{j} \right \rangle \left \langle B^a_{i} B^b_{j} \right \rangle_3  = & \frac{\Omega_1 e^{4 \pi (\kappa - \mu)}}{ (2 \pi)^6} H^8 (-\xi) \mathcal{I}_1  \mathcal{I}_4  \, , \quad
  && 4 \left \langle E^a_{i} E^b_{j} \right \rangle \left \langle B^a_{i} B^b_{j} \right \rangle_4   =  \frac{\Omega_1 e^{4 \pi (\kappa - \mu)}}{ (2 \pi)^6} H^8 (-\xi)  \mathcal{I}_1  \mathcal{I}_4  \, , \nonumber \\
  4 \left \langle E^a_{i} B^b_{j} \right \rangle_1 \left \langle B^a_{i} E^b_{j} \right \rangle_1   = & \frac{\Omega_1 e^{4 \pi (\kappa - \mu)}}{ (2 \pi)^6} H^8  \mathcal{I}_5^2  \, , \quad 
  && 4 \left \langle E^a_{i} B^b_{j} \right \rangle_1 \left \langle B^a_{i} E^b_{j} \right \rangle_2   =  \frac{\Omega_1 e^{4 \pi (\kappa - \mu)}}{ (2 \pi)^6} H^8 (-\xi)  \mathcal{I}_5  \mathcal{I}_6  \, , \nonumber \\
  4 \left \langle E^a_{i} B^b_{j} \right \rangle_2 \left \langle B^a_{i} E^b_{j} \right \rangle_1   = & \frac{\Omega_1 e^{4 \pi (\kappa - \mu)}}{ (2 \pi)^6} H^8 (-\xi)  \mathcal{I}_5 \mathcal{I}_6  \, , \quad
  && 4 \left \langle E^a_{i} B^b_{j} \right \rangle_2 \left \langle B^a_{i} E^b_{j} \right \rangle_2   =  \frac{\Omega_1 e^{4 \pi (\kappa - \mu)}}{ (2 \pi)^6} H^8 \xi^2  \mathcal{I}_6^2 \, , \nonumber \\
  4 \left \langle E^a_{i} B^b_{j} \right \rangle_3 \left \langle B^a_{i} E^b_{j} \right \rangle_3   = & \frac{\Omega_2 e^{4 \pi (\kappa - \mu)}}{ (2 \pi)^6} H^8 \xi^2 \mathcal{I}_3^2 \, , 
\end{align}
where we have also used $ef = -\xi/\tau$ and $\textrm{d}k k^2 = - \textrm{d}x_k x_k^2 /\tau^3 $. At this point we have all in hand to compute $ T_0 $, $ T_1 $ and $ T_2 $. Let us start with $T_0$:
\begin{equation}
  T_0 = \left \langle E^a_{i} B^b_{j} \right \rangle_1 + \left \langle E^a_{i} B^b_{j} \right \rangle_2 + \left \langle E^a_{i} B^b_{j} \right \rangle_3 = - \frac{H^4}{4 \pi^2} e^{2 \pi (\kappa - \mu)} \,  \left( \mathcal{I}_5 + \frac{\xi}{2} \mathcal{I}_3 - \xi \mathcal{I}_6 \right) \, ,
\end{equation}
which to a good approximation is given by
\begin{equation}
  T_0 = - 0.24 \times \frac{H^4}{4 \pi^2} \times \xi^3  e^{2 \pi (\kappa - \mu)}  \equiv   \frac{H^4}{4 \pi^2} \, e^{2 \pi (\kappa - \mu)}  \, \tilde T_0 \, .
\end{equation}
Analogously we can compute $T_1$ and $T_2$
\begin{align}
  T_1 & = \frac{\Omega_1 H^8}{ 4 (2\pi)^6} e^{4 \pi (\kappa - \mu)} \  \mathcal{I}_1\left[ \mathcal{I}_2  + \xi^2 \mathcal{I}_3 - 2 \xi \mathcal{I}_4 \right]  \,, \\
T_2 & = \frac{\Omega_1 H^8}{ 4 (2\pi)^6} e^{4 \pi (\kappa - \mu)} \ \left[\mathcal{I}_5^2 - 2 \xi \mathcal{I}_5 \mathcal{I}_6+ \xi^2 \mathcal{I}_6^2 + \xi^2 \Omega_2/\Omega_1 \mathcal{I}_3^2 \right]  \, .
\end{align}
Performing the integrals yields
\begin{align}
  T_1 & \simeq 0.0082 \times \frac{H^8}{80 \pi^4} \times \xi^8  e^{4 \pi (\kappa - \mu)}  \equiv  \frac{H^8}{80 \pi^4} \,  e^{4 \pi (\kappa - \mu)} \, \tilde T_1  \, , \\
  T_2 & \simeq 0.051 \times \frac{H^8}{80 \pi^4} \times  \xi^6 e^{4 \pi (\kappa - \mu)} \equiv  \frac{H^8}{80 \pi^4} \,  e^{4 \pi (\kappa - \mu)} \, \tilde T_2  \, ,
\end{align}
where we have also substituted the values of $\Omega_1$ and $\Omega_2$.

\section{\label{app:sec3}Supplemental material for \prettyref{sec:background}}

In \prettyref{app:Jacobi} we describe the properties of the Jacobi $\sn$ function, which is used to describe the oscillatory regime for solutions to \prettyref{eq:chromonatural_eom_final} for the homogeneous and isotropic gauge-field background. In \prettyref{app:pf-bg-future} we qualitatively describe the asymptotic behaviour of solutions in the infinite past and infinite future. \prettyref{app:t_to_0} gives asymptotic expansions for solutions in the infinite future, while \prettyref{app:pf-approx-w} proves asymptotics in the far past. Along the way, we prove the theorems stated in \prettyref{subsec:solutions}.

\subsection{\label{app:Jacobi}Solution of the quartic oscillator equation via the Jacobi \texorpdfstring{$\sn$}{sn} function}

In the case $\xi=0$ (or in the limit $\tau\to-\infty$ when the last term can be neglected), the equation of motion \prettyref{eq:chromonatural_eom_final} is a quartic oscillator, which is solved by the Jacobi \texorpdfstring{$\sn$}{sn} function. We review some basic facts about this function while providing a derivation. Let $v(\tau)$ denote any solution to the $\xi=0$ version of \prettyref{eq:chromonatural_eom_final}. For comparison, let $h(\tau)$ be any solution to the harmonic oscillator equation. Then 
\begin{align}
v''(\tau)+2\left(v(\tau)\right)^{3} & =0, & h''(\tau)+h(\tau) & =0.\label{eq:jacobi-eom}
\end{align}
 Multiplying \prettyref{eq:jacobi-eom} by $2v'(\tau)$ and $2h'(\tau)$ respectively, after integration we obtain 
\begin{align}
v'(\tau)^{2}+v(\tau)^{4} & =\omega^{4}, & h'(\tau)^{2}+h(\tau)^{2} & =A^{2},\label{eq:jacobi-energy}
\end{align}
 where $\omega^{4}$ and $A^{2}$ are integration constants. We will identify $\omega$ and $A$ with the respective amplitudes of oscillation. Identifying the second terms on the left-hand sides of \eqref{eq:jacobi-energy} as twice the potential energy, we note that the first equation in \eqref{eq:jacobi-energy} describes a quartic oscillator. Separating variables and integrating, 
\begin{align}
\int\mathrm{d}\tau & =\int\frac{\mathrm{d}v}{\sqrt{\omega^{4}-v^{4}}}, & \int\mathrm{d}\tau & =\int\frac{\mathrm{d}h}{\sqrt{A^{2}-h^{2}}}\nonumber \\
\tau+u_{0}/\omega & =F\left(\sin^{-1}\left(v/\omega\right)\mid-1\right)/\omega & \tau+\theta_{0} & =\sin^{-1}\left(h/A\right),\label{eq:elliptic-F}
\end{align}
 where $u_{0}/\omega$ and $\theta_{0}$ are constants of integration, and $F(\phi\mid m)\equiv\int_{0}^{\phi}\left(1-m\,\sin^{2}(\theta)\right)^{-1/2}\,\mathrm{d}\theta$ is the \emph{incomplete elliptic integral of the first kind} with \emph{elliptic parameter} $m$. For fixed $m$, the inverse function of $u=F(\phi\mid m)$ is defined as the \emph{Jacobi amplitude} $\phi=\am(u\mid m)$. Solving for $v(\tau)$ and $h(\tau)$ respectively, we obtain the general solution 
\begin{equation}
v(\tau)=\omega\,\sn\left(\omega\tau+u_{0}\mid-1\right),\qquad h(\tau)=A\,\sin\left(\tau+\theta_{0}\right),\label{eq:quartic-solution}
\end{equation}
 where $\sn(u\mid m)\equiv\sin\left(\am\left(u\mid m\right)\right)$. Since we will be concerned only with the case $m=-1$, we define as shorthand
\[
\sn(u)\equiv\sn(u\mid-1),\quad\sn'(u)\equiv\tfrac{\mathrm{d}}{\mathrm{d}u}\sn(u\mid-1).
\]
 The functions $\sn(u)$ and $\sn'(u)$ both oscillate with unit amplitude and satisfy $\sn(0)=0$ and $\sn'(0)=1$. Just as the quarter-period of $\sin(\theta)$ is given by $\sin^{-1}(1)=\pi/2$, it follows from \prettyref{eq:elliptic-F} that the quarter-period of $\sn(u)$ is $K(-1)\approx5.244$, where $K(m)\equiv F(\sin^{-1}(1)\mid m)$ denotes the \emph{complete elliptic integral of the first kind}. From \prettyref{eq:jacobi-eom} and the $\omega=1$ version of \prettyref{eq:jacobi-energy}, we obtain the identities
\begin{equation}
\tfrac{\mathrm{d}}{\mathrm{d}u}\sn'(u)=-2\sn(u)^{3},\qquad\sn'(u)^{2}+\sn(u)^{4}=1.\label{eq:sn-identities}
\end{equation}
 Note that the general solution \eqref{eq:quartic-solution} to \prettyref{eq:jacobi-eom} has amplitude $\omega$ and frequency $\tfrac{1}{4}\omega/K(-1)$. 

For any function $s(\tau)$, it will be useful to define the Jacobi version of polar coordinates in phase space. We introduce the radial coordinate $\omega_{s}(\tau)\geq0$ and the angular coordinate $u_{s}(\tau)$ according to 
\begin{equation}
\left(s(\tau),s'(\tau)\right)=\left(\omega_{s}(\tau)\sn(u_{s}(\tau)),\ \omega_{s}(\tau)^{2}\sn'(u_{s}(\tau))\right).\label{eq:polar-def}
\end{equation}
 It follows from the second identity in \eqref{eq:sn-identities} that 
\[
\omega_{s}(\tau)=\sqrt[4]{s'(\tau)^{2}+s(\tau)^{4}}.
\]
 Moreover if $s(\tau)$ satisfies an equation of the form $s''(\tau)+2s(\tau)^{3}=F(s(\tau),s'(\tau))$, then the corresponding polar equations are 
\begin{align}
\omega_{s}'(\tau)  =\frac{\sn'(u_{s}(\tau))}{2\omega_{s}(\tau)}F,  \qquad \qquad u_{s}'(\tau)  =\omega_{s}(\tau)-\frac{\sn(u_{s}(\tau))}{2\omega_{s}(\tau)^{2}}F, \nonumber \\
F  =F\left(\omega_{s}(\tau)\sn(u_{s}(\tau)),\ \omega_{s}(\tau)^{2}\sn'(u_{s}(\tau))\right).  \hspace{1.cm}  \label{eq:Jacobi-polar}
\end{align}
 In particular, if $F=0$ as in the case $s(\tau)=v(\tau)$ given in \prettyref{eq:jacobi-eom}, then $\omega_{v}'(\tau)=0$ and $u_{v}'(\tau)=\omega$, which agrees with \prettyref{eq:quartic-solution}. 

\subsection{\label{app:pf-bg-future}The limiting behaviour of solutions to the homogeneous background gauge-field equation in the infinite past and future}

Here we analyze the asymptotic behaviour of the equations of motion \prettyref{eq:q-p-eom}. Solutions are given by trajectories of the vector field \eqref{eq:vec-field}, as pictured in \prettyref{fig:trajectories-panel}. In summary, we prove that the observations made in \prettyref{subsec:phase-space} are generally true. We consider two asymptotic limits: the infinite past ($N\to+\infty)$ and the infinite future\footnote{Although inflation ends at $N=0,$ it is still mathematically useful to analyze our background equation of motion by taking limits. The appropriate future limit corresponding to $\tau\to0^{-}$ is $N\to-\infty$. } $(N\to-\infty)$. As we shall see, Bendixson's criterion and the Poincar\'e--Bendixson theorem \cite{bendixson1901} allow us to give a complete classification of all trajectories. Namely, Bendixson's criterion rules out pathologies such as periodic or homoclinic\footnote{A \emph{homoclinic orbit} is a trajectory for which the same (saddle) point is the limit in both the infinite past and infinite future.} orbits, and then the Poincar\'e--Bendixson theorem implies that some limit exists, being either a zero of the vector field or $\bm{\infty}$.\footnote{We say that the limit of a trajectory is $\bm{\infty}$ when the magnitude of the trajectory limits to infinity. Equivalently, when $\mathbb{R}^{2}$ is compactified to $S^{2}$ by adding a point denoted by $\bm{\infty}$, then the limit converges to $\bm{\infty}$. } Further use of Bendixson's criterion will allow us in \prettyref{lem:asymptotic-table} to exactly count the number of trajectories as classified by their asymptotic limits.

The vector field \eqref{eq:vec-field} has at most three zeroes denoted $\mathbf{c}_{0}$, $\mathbf{c}_{1}$, and $\mathbf{c}_{2}$ (see \eqref{eq:ci-points}). We will make use of the following readily-verified properties of the vector field \eqref{eq:vec-field}:
\begin{itemize}
\item The divergence $\nabla\cdot$ is strictly positive (the constant function $+3$). 
\item When $0\leq\xi<2$, there is a single zero at the point $\mathbf{c}_{0}$. When $\xi>2$ there are three zeroes at $\mathbf{c}_{0}$, $\mathbf{c}_{1}$ and $\mathbf{c}_{2}$. (To avoid irrelevant complications arising from degenerate zeroes\footnote{A zero of the vector field $V$ is \emph{degenerate} if its linearization $\frac{\partial V_{i}}{\partial x_{j}}$ at that zero is given by a matrix whose determinant is $0$.} of a vector field, we exclude from analysis the case when $\xi$ is exactly equal to $2$.) In either case, the zeroes are non-degenerate, and $\mathbf{c}_{1}$ is a saddle point while $\mathbf{c}_{0}$ and $\mathbf{c}_{2}$ are attractors (they arise as limit points in the infinite future $N\to-\infty$ but not in the infinite past $N\to+\infty$). 
\item Whenever it is positive, the auxiliary quantity $D(N)$ given by 
\begin{equation}
D(q,p)\equiv p^{2}+q^{4}-\tfrac{4}{3}\xi q^{3},\qquad D(N)\equiv D(q(N),p(N)),\label{eq:def-D}
\end{equation}
 is decreasing in time along trajectories. Namely by \prettyref{eq:q-p-eom}, it satisfies 
\begin{equation}
-\frac{\mathrm{d}D}{\mathrm{d}N}=-3D-(p^{2}+q^{4})<0\ \textrm{ whenever }D>0.\label{eq:D-decreasing}
\end{equation}
\end{itemize}
To explain the significance of $D(N)$, note that 
\begin{equation}
\tau^{-4}D(N)=\left(ef'(\tau)\right)^{2}+\left(ef(\tau)\right)^{4}-\tfrac{4}{3}\xi(ef(\tau))^{3}/(-\tau)=\omega_{ef}(\tau)^{4}\left(1+\mathcal{O}\left(\frac{\xi}{-\tau\omega_{ef}(\tau)}\right)\right).\label{eq:DN-asymptotic}
\end{equation}
 When $\tau\ll-\xi/\omega_{ef}(\tau)$ we have $D(N)\approx(\tau\omega_{ef}(\tau))^{4}$. Thus $D(N)$ encodes the approximate amplitude $\omega_{ef}(\tau)$ of oscillations while also satisfying the monotonicity property \eqref{eq:D-decreasing}.  
\begin{lem}
\label{lem:oscillation-condition}For a given trajectory, the following conditions are equivalent:
\begin{enumerate}
\item \label{enu:inf-past-trajectory-infinity}The trajectory converges to $\bm{\infty}$ in the infinite past.
\item \label{enu:inf-past--tomega-infinity}$-\tau\omega_{ef}(\tau)\to+\infty$ as $\tau\to-\infty$. 
\item \label{enu:inf-past-D-infinity}$D(N)\to+\infty$ as $N\to+\infty$. 
\end{enumerate}
Moreover, if $D(N)$ is bounded, then the trajectory is bounded.
\end{lem}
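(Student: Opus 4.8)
The plan is to funnel all three conditions through the single coercive quantity $D(N)$ of \prettyref{eq:def-D}, using one exact algebraic identity together with a pair of two-sided estimates; recall that by definition (i) means that the magnitude $|(q,p)|=(q^{2}+p^{2})^{1/2}$ of the trajectory tends to infinity as $N\to+\infty$. First I would record the identity behind \prettyref{eq:DN-asymptotic}: in the coordinates \prettyref{eq:cov} one has $p^{2}+q^{4}=D+\tfrac{4}{3}\xi q^{3}$, while $\bigl((-\tau)\,\omega_{ef}\bigr)^{4}=(-\tau)^{4}\bigl((ef')^{2}+(ef)^{4}\bigr)=p^{2}+q^{4}$. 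Writing $R\equiv(-\tau)\,\omega_{ef}$ and noting that $\omega_{ef}\ge|ef|$ forces $|q|\le R$, these combine into
\[
\bigl|\,D-R^{4}\,\bigr|=\tfrac{4}{3}\xi\,|q|^{3}\le\tfrac{4}{3}\xi\,R^{3}.
\]
This estimate yields (ii)$\Leftrightarrow$(iii) immediately: if $R\to+\infty$ then $D\ge R^{4}-\tfrac{4}{3}\xi R^{3}\to+\infty$, whereas if $R$ stayed bounded along some sequence $\tau_{n}\to-\infty$ then $D$ would be bounded along it, contradicting $D(N)\to+\infty$.

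Next I would establish the two-sided coercivity of $D$. Writing the cubic term as $q^{3}(q-\tfrac{4}{3}\xi)$ and using Young's inequality to absorb $\tfrac{4}{3}\xi|q|^{3}$ into a fraction of $q^{4}$ plus a $\xi$-dependent constant, I would produce constants $c_{1},c_{2},C_{1},C_{2}>0$ with
\[
c_{1}\,(p^{2}+q^{4})-C_{1}\ \le\ D\ \le\ c_{2}\,(p^{2}+q^{4})+C_{2}.
\]
Since $p^{2}+q^{4}\to\infty$ exactly when $|(q,p)|\to\infty$ (a short subsequence argument splitting on whether $q$ stays bounded), the lower bound gives (i)$\Rightarrow$(iii) and the upper bound gives (iii)$\Rightarrow$(i). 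The lower bound alone also settles the final assertion: if $D(N)$ is bounded above then $p^{2}+q^{4}$, and hence the trajectory, is bounded.

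It remains to guarantee that the limits in (i) and (ii) genuinely exist rather than holding only along subsequences, and here I would invoke the monotonicity \prettyref{eq:D-decreasing}. From \prettyref{eq:q-p-eom} one computes $\tfrac{\mathrm{d}D}{\mathrm{d}N}=3D+(p^{2}+q^{4})$, which is strictly positive on $\{D>0\}$ and reduces to $p^{2}+q^{4}\ge0$ on $\{D=0\}$, vanishing only at the fixed point $\mathbf{c}_{0}$. Hence $\{D>0\}$ is forward-invariant and $D$ is strictly increasing there, so once a trajectory has $D>0$ the limit $\lim_{N\to+\infty}D(N)$ exists in $(0,+\infty]$, and any $\limsup D=+\infty$ is thereby upgraded to a genuine limit. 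If instead $D\le0$ for all $N$, the lower coercivity bound confines $D$ to $[-C_{1},0]$ and the trajectory is bounded, so none of (i)--(iii) hold. This dichotomy is what makes the three limit statements well posed.

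I expect the main obstacle to be the uniform control of the subleading cubic term $\tfrac{4}{3}\xi q^{3}$: it is precisely what prevents $D$ from being exactly proportional to $\omega_{ef}^{4}$ or to $|(q,p)|^{2}$, so the substance of the argument is showing it is dominated by the quartic along \emph{every} sequence $\tau\to-\infty$, not merely along subsequences. Beyond the two displayed estimates, the only dynamical input needed is the sign of $\tfrac{\mathrm{d}D}{\mathrm{d}N}$ on $\{D\ge0\}$, which is immediate from \prettyref{eq:q-p-eom}.
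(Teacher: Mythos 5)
Your proof is correct and follows essentially the same route as the paper: both hinge on the exact identity $p^{2}+q^{4}=\bigl((-\tau)\,\omega_{ef}(\tau)\bigr)^{4}$ together with the observation that the cubic term $\tfrac{4}{3}\xi q^{3}$ in $D$ is dominated by the quartic (the paper phrases this via the asymptotic \prettyref{eq:DN-asymptotic}, you via the explicit bounds $|D-R^{4}|\le\tfrac{4}{3}\xi R^{3}$ and Young's inequality). Your closing monotonicity digression using \prettyref{eq:D-decreasing} is correct but not needed for this lemma — it essentially anticipates \prettyref{lem:dichotomy}, which the paper proves separately.
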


We will show later that these three conditions are also equivalent to $\omega>0$. 
\begin{proof}
Consider the limit in the infinite past. A trajectory converges to $\bm{\infty}$ if and only if $p^{2}+q^{4}\to+\infty$. Since $p^{2}+q^{4}=(-\tau\omega_{ef}(\tau))^{4}$ it follows that items \ref{enu:inf-past-trajectory-infinity} and \ref{enu:inf-past--tomega-infinity} are equivalent. Supposing that $-\tau\omega_{ef}(\tau)\to+\infty$, it follows from \prettyref{eq:DN-asymptotic} that $D(N)\to+\infty$. One also sees from \prettyref{eq:DN-asymptotic} that $D(N)$ is large only when $-\tau\omega_{ef}(\tau)$ is. Thus item \ref{enu:inf-past-D-infinity} is also equivalent. Finally, if $D(N)$ is bounded, then $-\tau\omega_{ef}(\tau)$ is bounded and hence $p^{2}+q^{4}$ and thus the trajectory are bounded.
\end{proof}
\begin{lem}[Bendixson's criterion]
\label{lem:Bendixson-criterion}The vector field \eqref{eq:vec-field} has no periodic trajectory. More generally, there are no bounded sets of positive area which are invariant under the flow of \eqref{eq:vec-field}. In particular, no finite union of bounded trajectories forms a loop.
\end{lem}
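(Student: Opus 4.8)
The statement to prove is that the vector field
\[
X(q,p) = \left(q-p,\ 2(q^3 - \xi q^2 + p)\right)
\]
has no periodic orbits, and more generally no bounded invariant set of positive area.

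<br>

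The plan is to invoke the classical Bendixson--Dulac argument, which exploits the sign of the divergence. First I would compute the divergence of the vector field explicitly:
\[
\nabla \cdot X = \frac{\partial}{\partial q}(q-p) + \frac{\partial}{\partial p}\bigl(2(q^3 - \xi q^2 + p)\bigr) = 1 + 2 = 3,
\]
which is the constant $+3$ already noted among the readily-verified properties of \eqref{eq:vec-field}. The key point is that this divergence is \emph{strictly positive everywhere}, never vanishing and never changing sign.

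<br>

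The core of the argument is then a contradiction via Green's theorem (the divergence theorem in the plane). Suppose for contradiction that there were a periodic trajectory $\Gamma$. A periodic orbit of a planar flow bounds a region $\Omega$ (by the Jordan curve theorem), and since $\Gamma$ is a flow line, the vector field $X$ is everywhere tangent to $\Gamma$, so its outward normal component along $\partial\Omega = \Gamma$ vanishes identically. Applying the divergence theorem,
\[
0 = \oint_{\Gamma} X \cdot \hat n \,\mathrm{d}s = \iint_{\Omega} (\nabla \cdot X)\,\mathrm{d}q\,\mathrm{d}p = 3 \iint_{\Omega} \mathrm{d}q\,\mathrm{d}p = 3\,\mathrm{Area}(\Omega) > 0,
\]
a contradiction. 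Hence no periodic trajectory can exist. The same computation rules out any bounded invariant set $\Omega$ of positive area: the net outward flux through $\partial\Omega$ must be zero for an invariant region, yet the divergence theorem forces this flux to equal $3\,\mathrm{Area}(\Omega) > 0$. Finally, a finite union of bounded trajectories forming a loop would enclose a region of positive area invariant under the flow, which the same flux argument excludes; this covers the homoclinic and heteroclinic loop cases that would otherwise obstruct the application of the Poincar\'e--Bendixson theorem in \prettyref{app:pf-bg-future}.

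<br>

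I do not expect a genuine obstacle here, since the divergence is a trivial constant; the only subtlety worth flagging is the careful statement of what ``invariant set of positive area'' means for the flux argument, and ensuring the loop case (for the Poincar\'e--Bendixson application) is handled by noting that any loop of trajectories bounds such a region. The argument is entirely standard, and the strictly positive constant divergence makes it immediate.
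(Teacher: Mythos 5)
Your proposal is correct in substance but proves the lemma by a different mechanism than the paper. You use the Eulerian form of Bendixson's criterion: Green's theorem applied to a periodic orbit (or to a piecewise-smooth loop of trajectories), where the tangency of the field forces the outward flux to vanish while the constant divergence $+3$ forces it to equal $3\,\mathrm{Area}(\Omega)>0$. The paper instead uses the Lagrangian (Liouville) form: since $\nabla\cdot X=3$, the area of \emph{any} bounded measurable set scales as $e^{3N}$ under the time-$N$ flow map, so an invariant set must have zero area; the periodic-orbit and loop cases then follow because the enclosed interior would be a bounded, positive-area, invariant region. For the two cases that actually matter downstream (ruling out periodic orbits and saddle-connection loops in the Poincar\'e--Bendixson argument), your flux computation is complete and classical. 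The one place your route is weaker is the intermediate general statement ``no bounded invariant set of positive area'': an arbitrary invariant set need not have a rectifiable boundary to which Green's theorem applies, nor is the zero-flux claim automatic for such a set, whereas the paper's area-growth argument needs no boundary regularity at all. This is a minor caveat rather than a fatal gap, since your direct treatment of orbits and loops bypasses the general statement; but if you want the lemma exactly as stated, the Liouville-style argument is the cleaner way to get it.
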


\begin{proof}
Since the divergence is $+3$, the area of any bounded set is proportional to $e^{3N}$ under the flow of \eqref{eq:vec-field}. Since the area of any invariant set must be constant under the flow, any bounded invariant set must have zero area. No periodic orbit is possible: the interior of the enclosed region would be bounded, positive area, and invariant. Similarly, the union of finitely many bounded trajectories cannot form any loop. This general argument goes back to \cite{bendixson1901}. 
\end{proof}
Note that \prettyref{lem:Bendixson-criterion} forbids loops formed by any bounded trajectories, regardless of the direction of the flow. Just as it is impossible to flow $\mathbf{a}\to\mathbf{b}$ and $\mathbf{b}\to\mathbf{a}$, it is also forbidden to flow $\mathbf{a}\to\mathbf{b}$ along two distinct trajectories, because the enclosed region would be invariant and have positive area. 
\begin{lem}[Poincar\'e--Bendixson theorem]
\label{lem:Poincare-Bendixson} In either limit $N\to\pm\infty$, any trajectory either converges to $\mathbf{c}_{0}$, $\mathbf{c}_{1}$ or $\mathbf{c}_{2}$, or the trajectory is unbounded. 
\end{lem}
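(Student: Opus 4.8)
The plan is to invoke the classical Poincar\'e--Bendixson theorem for the planar autonomous system generated by the vector field \eqref{eq:vec-field}, in combination with Bendixson's criterion (\prettyref{lem:Bendixson-criterion}), which we have already established. First I would fix an arbitrary trajectory and treat the two limits $N\to\pm\infty$ separately. For the forward limit $N\to+\infty$ I consider the $\omega$-limit set of the semi-orbit; for the backward limit $N\to-\infty$ I consider the $\alpha$-limit set, which is simply the $\omega$-limit set of the time-reversed flow. Reversing time flips the sign of the divergence from $+3$ to $-3$, which is still a nonzero constant, so Bendixson's criterion applies verbatim in both directions and it suffices to analyze one case.

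The unbounded case is disposed of immediately: if the relevant semi-orbit is unbounded, then we are in the stated ``unbounded'' alternative and there is nothing to prove. So assume the semi-orbit is bounded. Then its limit set is a nonempty, compact, connected set, invariant under the flow, by the standard properties of $\omega$-limit sets of bounded orbits in the plane.

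Next I would apply the Poincar\'e--Bendixson theorem in the refined form valid when the equilibria are isolated and finite in number. Since \eqref{eq:vec-field} has at most the three non-degenerate zeroes $\mathbf{c}_{0}$, $\mathbf{c}_{1}$, $\mathbf{c}_{2}$ (recall we exclude the degenerate value $\xi=2$), the limit set must be one of the following: (i) a single equilibrium, (ii) a periodic orbit, or (iii) a \emph{graphic}, i.e.\ a connected union of equilibria together with heteroclinic or homoclinic orbits joining them into a cycle. Bendixson's criterion rules out both (ii) and (iii): a periodic orbit would enclose a bounded invariant region of positive area, while a graphic is precisely a loop formed by finitely many bounded trajectories, and \prettyref{lem:Bendixson-criterion} forbids both whenever the divergence has constant sign. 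The only surviving alternative is (i), and connectedness of the limit set then forces the trajectory to converge to exactly one of $\mathbf{c}_{0}$, $\mathbf{c}_{1}$, $\mathbf{c}_{2}$.

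The main obstacle will be invoking a sufficiently strong version of the Poincar\'e--Bendixson theorem: the elementary statement (a bounded orbit whose limit set contains no equilibrium must limit to a periodic orbit) is not by itself enough here, because the limit set could a priori contain equilibria without reducing to a single point. I would therefore rely on the standard refinement for flows with finitely many isolated equilibria, which restricts the limit set to the three alternatives above, and then lean on \prettyref{lem:Bendixson-criterion}---which already excludes \emph{all} loops of bounded trajectories, not merely periodic orbits---to eliminate the graphic case. This is exactly why the stronger ``no loops'' phrasing of Bendixson's criterion, rather than just ``no periodic orbits,'' was needed.
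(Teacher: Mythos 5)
Your proposal is correct and follows essentially the same route as the paper: both invoke the Poincar\'e--Bendixson classification for planar flows with finitely many non-degenerate equilibria and then use Bendixson's criterion (\prettyref{lem:Bendixson-criterion}), in its stronger ``no loops of bounded trajectories'' form, to eliminate the periodic-orbit and graphic/heteroclinic-loop alternatives, leaving only convergence to a zero or unboundedness. The only presentational difference is that you treat the backward limit explicitly via time reversal (noting the divergence flips to $-3$ but stays of constant sign), whereas the paper's statement of the theorem covers both limits at once; this does not change the argument.
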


\begin{proof}
According to the Poincar\'e--Bendixson theorem \cite{bendixson1901}, given any smooth vector field on $\mathbb{R}^{2}$ which has a finite number of non-degenerate zeroes, the limiting behaviour of any trajectory as $N\to+\infty$ or respectively as $N\to-\infty$ is given by one of the following five distinct possibilities: 
\begin{enumerate}
\item \label{enu:pb-zero}The trajectory limits to a zero of the vector field. 
\item \label{enu:pb-unbounded}The trajectory is unbounded. 
\item \label{enu:pb-periodic}The trajectory is periodic.
\item \label{enu:pb-almost-periodic}The trajectory spirals towards a periodic trajectory.
\item \label{enu:pb-broken}The trajectory spirals towards some limiting loop which is a union of finitely many trajectories which connect subsequent saddle-point zeroes of the vector field to each other. 
\end{enumerate}
In our case, Bendixson's criterion (\prettyref{lem:Bendixson-criterion}) applies. Thus \ref{enu:pb-periodic}, \ref{enu:pb-almost-periodic} and \ref{enu:pb-broken} are impossible, so \ref{enu:pb-zero} or \ref{enu:pb-unbounded} must be true. 
\end{proof}
The next two lemmas imply that $\bm{\infty}$ behaves as a repeller (it arises as a limit in the infinite past $N\to+\infty$ for all sufficiently large initial conditions, but it is never a limit in the infinite future).

\begin{lem}
\label{lem:D-unbounded} The forward evolution (in the direction $N\to-\infty$) of any trajectory converges to $\mathbf{c}_{0}$, $\mathbf{c}_{1}$ or $\mathbf{c}_{2}$. 
\end{lem}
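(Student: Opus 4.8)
The plan is to show that the forward evolution ($N\to-\infty$) of any trajectory of the vector field \eqref{eq:vec-field} cannot escape to $\bm{\infty}$, so that the only remaining possibility in the Poincar\'e--Bendixson classification (\prettyref{lem:Poincare-Bendixson}) is convergence to one of the three zeroes $\mathbf{c}_{0}$, $\mathbf{c}_{1}$, $\mathbf{c}_{2}$. By \prettyref{lem:Poincare-Bendixson}, every trajectory either converges to a zero or is unbounded in the forward limit, so it suffices to rule out the unbounded case. First I would invoke \prettyref{lem:oscillation-condition}: the forward limit $N\to-\infty$ corresponds to the infinite future $\tau\to0^{-}$, and the equivalence there tells us that escape to $\bm{\infty}$ is controlled by the growth of the auxiliary quantity $D(N)$. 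The key tool is the monotonicity property \eqref{eq:D-decreasing}, namely $-\mathrm{d}D/\mathrm{d}N=-3D-(p^{2}+q^{4})<0$ whenever $D>0$; equivalently $\mathrm{d}D/\mathrm{d}N=3D+(p^{2}+q^{4})>0$, so $D(N)$ is strictly \emph{increasing} in $N$ wherever it is positive.

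The main step is a dichotomy on the sign of $D$ along the forward trajectory. If $D(N_{0})\leq0$ at some time, I would argue that $D$ stays bounded above in the forward direction: on the region $\{D\le 0\}$ the trajectory is confined to the bounded set $p^{2}+q^{4}\le \tfrac{4}{3}\xi q^{3}$ (since $D=p^{2}+q^{4}-\tfrac{4}{3}\xi q^{3}\le0$), and whenever the trajectory crosses into $\{D>0\}$ the derivative $\mathrm{d}D/\mathrm{d}N>0$ means $D$ can only have been smaller at earlier $N$, i.e. in the forward direction $D$ does not blow up by exiting through positive values in a way that forces unboundedness. More cleanly, since $D$ is increasing in $N$ on $\{D>0\}$, its value in the forward limit $N\to-\infty$ is the infimum over the forward orbit, hence bounded above by its current value; combined with the confinement on $\{D\le0\}$, $D(N)$ is bounded in the forward direction. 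Then the final clause of \prettyref{lem:oscillation-condition} — if $D(N)$ is bounded then the trajectory is bounded — shows the trajectory stays in a compact set. Thus the unbounded alternative of \prettyref{lem:Poincare-Bendixson} is excluded, and the trajectory must converge to $\mathbf{c}_{0}$, $\mathbf{c}_{1}$ or $\mathbf{c}_{2}$.

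I would present the argument most carefully as follows: for any trajectory, let $N\mapsto D(N)$ be the associated function \eqref{eq:def-D}. Reversing the sign of the inequality \eqref{eq:D-decreasing}, $D$ is strictly increasing in $N$ at every point where $D>0$. Consequently, as $N$ decreases toward $-\infty$, $D(N)$ cannot grow without bound through positive values: any positive value attained in the forward limit would already have been exceeded at the later (larger-$N$) starting point, contradicting monotonicity. Hence $\limsup_{N\to-\infty}D(N)\le \max\{D(N_{0}),\,0\}<\infty$, so $D$ is bounded in the forward direction, and the trajectory is bounded by \prettyref{lem:oscillation-condition}. The Poincar\'e--Bendixson theorem then forces convergence to one of the three zeroes.

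\textbf{Main obstacle.} The delicate point is the bookkeeping of the \emph{direction} of monotonicity: \eqref{eq:D-decreasing} is stated as $D$ decreasing ``in time,'' but the physically relevant forward direction here is $N\to-\infty$ (the infinite future $\tau\to0^{-}$), so I must be scrupulous that ``forward evolution'' in this lemma means decreasing $N$, under which $D$ is increasing on $\{D>0\}$. Getting this sign consistent — and handling the boundary behaviour where a trajectory repeatedly crosses the level set $D=0$ — is where the argument could go wrong, and it is worth stating explicitly that on $\{D\le0\}$ the orbit is trapped in the compact region $p^{2}+q^{4}\le\tfrac{4}{3}\xi q^{3}$ so that no escape is possible there either.
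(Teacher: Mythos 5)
Your proposal is correct and takes essentially the same route as the paper's proof: bound $D(N)$ in the future direction by $\max(0,D(N_{0}))$ using the monotonicity property \eqref{eq:D-decreasing}, conclude that the trajectory is bounded via the final clause of \prettyref{lem:oscillation-condition}, and then apply \prettyref{lem:Poincare-Bendixson} to force convergence to a zero. Your added observation that $\{D\leq0\}$ is itself a compact region is a harmless (and correct) supplement to what the paper does implicitly through \prettyref{lem:oscillation-condition}.
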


\begin{proof}
For any trajectory beginning at $N_{0}$, consider $D(N)$ for values with $N\leq N_{0}$. Since $D(N)$ cannot be simultaneously positive and increasing with $-N$, it satisfies $D(N)\leq\max(0,D(N_{0}))$ for all $N\leq N_{0}$. Thus $D(N)$ is bounded in the future direction. It follows from \prettyref{lem:oscillation-condition} that the trajectory itself is bounded (see \prettyref{lem:oscillation-condition}), and thus \prettyref{lem:Poincare-Bendixson} implies that the limit is a zero of the vector field. 
\end{proof}
\begin{lem}
\label{lem:dichotomy}Every trajectory satisfies exactly one of the following two conditions:
\begin{enumerate}
\item \label{enu:bounded-traj}$D(N)\leq0$ for all $N$, and the trajectory converges to $\mathbf{c}_{0}$, $\mathbf{c}_{1}$ or $\mathbf{c}_{2}$ in the infinite past.
\item \label{enu:conv-to-inf}The trajectory converges to $\bm{\infty}$ in the infinite past, and $-\tau\omega_{ef}(\tau)\to+\infty$ as $\tau\to-\infty$.
\end{enumerate}
\end{lem}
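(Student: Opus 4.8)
The plan is to split every trajectory according to whether the auxiliary quantity $D(N)$ of \eqref{eq:def-D} is ever strictly positive, using as the sole engine the sharpened form of \eqref{eq:D-decreasing}. A direct computation from the equations of motion \eqref{eq:q-p-eom} gives
\[
\frac{\mathrm{d}D}{\mathrm{d}N}=3D+\left(p^{2}+q^{4}\right),
\]
so that $\mathrm{d}D/\mathrm{d}N\geq 3D$ at every point where $D\geq0$. This single inequality, together with \prettyref{lem:oscillation-condition} and \prettyref{lem:Poincare-Bendixson}, will dictate the behaviour in the infinite past $N\to+\infty$.

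First I would treat the case in which $D(N_{1})>0$ for some time $N_{1}$. Since $\mathrm{d}D/\mathrm{d}N>0$ wherever $D>0$, the value of $D$ cannot decrease back to zero as $N$ increases past $N_{1}$; hence $D(N)\geq D(N_{1})>0$ for all $N\geq N_{1}$, and applying the comparison principle to $\mathrm{d}D/\mathrm{d}N\geq 3D$ yields $D(N)\geq D(N_{1})\,e^{3(N-N_{1})}\to+\infty$ as $N\to+\infty$. By the equivalence of conditions \ref{enu:inf-past-D-infinity}, \ref{enu:inf-past-trajectory-infinity} and \ref{enu:inf-past--tomega-infinity} in \prettyref{lem:oscillation-condition}, this is precisely alternative \ref{enu:conv-to-inf}: the trajectory converges to $\bm{\infty}$ and $-\tau\omega_{ef}(\tau)\to+\infty$. (No actual blow-up of $ef$ occurs; the escape in phase space is produced entirely by the $-\tau$ and $\tau^{2}$ factors in the change of variables \eqref{eq:cov}, while $ef$ itself stays bounded, so the limit $N\to+\infty$ is genuinely attained. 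The same identity bounds $\mathrm{d}D/\mathrm{d}N$ above by a multiple of $D+1$, ruling out finite-$N$ escape and confirming that the trajectory is defined for all $N\to+\infty$.)

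Next I would treat the complementary case $D(N)\leq0$ for all $N$, aiming at alternative \ref{enu:bounded-traj}. The key observation is that the sublevel set $\{D\leq0\}$ is compact: $D(q,p)\leq0$ forces $p^{2}\leq q^{3}(\tfrac{4}{3}\xi-q)$, which requires $0\leq q\leq\tfrac{4}{3}\xi$ and then bounds $p^{2}$ as well. A trajectory with $D\leq0$ throughout therefore stays in this compact set, hence is bounded; \prettyref{lem:Poincare-Bendixson} applied in the limit $N\to+\infty$ then forces convergence to one of the zeroes $\mathbf{c}_{0},\mathbf{c}_{1},\mathbf{c}_{2}$, which is exactly \ref{enu:bounded-traj}.

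Finally, the two alternatives are exhaustive and mutually exclusive by construction: either $D$ is strictly positive at some instant (giving \ref{enu:conv-to-inf}) or it is nonpositive for all $N$ (giving \ref{enu:bounded-traj}), and the two cannot coincide since the former forces $D\to+\infty$. I expect the only delicate points to be the two quantitative claims underpinning the split: that positivity of $D$ at a single instant forces genuine divergence $D\to+\infty$ rather than mere boundedness (for which the sharp inequality $\mathrm{d}D/\mathrm{d}N\geq 3D$, not just the monotonicity of \eqref{eq:D-decreasing}, is essential), and that $\{D\leq0\}$ is bounded so that Poincar\'e--Bendixson applies. Both reduce to elementary estimates on the cubic-quartic polynomial $D$.
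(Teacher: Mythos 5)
Your proof is correct, and it departs from the paper's in the one step that carries the real weight. The skeleton is the same: split on whether $D$ is ever strictly positive, and in the nonpositive case use boundedness (your direct compactness argument for $\{D\le0\}$ reproduces the last assertion of \prettyref{lem:oscillation-condition}) together with \prettyref{lem:Poincare-Bendixson} to land on a zero $\mathbf{c}_{i}$. The difference is the implication $D(N_{1})>0\Rightarrow D(N)\to+\infty$. The paper keeps only the qualitative content of \prettyref{eq:D-decreasing} ($D$ increasing while positive), so it must exclude convergence of $D$ to a finite positive limit, which it does by contradiction: a finite limit would make the trajectory bounded, \prettyref{lem:Poincare-Bendixson} would force convergence to some $\mathbf{c}_{i}$, and the easily checked fact $D(\mathbf{c}_{i})\le0$ contradicts $D\ge D(N_{1})>0$. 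You instead retain the full identity $\mathrm{d}D/\mathrm{d}N=3D+(p^{2}+q^{4})\ge3D$ and integrate, obtaining $D(N)\ge D(N_{1})e^{3(N-N_{1})}$. That is more elementary (Poincar\'e--Bendixson is not needed for this half) and quantitative (an explicit divergence rate), whereas the paper's soft argument recycles machinery it has already built and never needs to control $p^{2}+q^{4}$ in terms of $D$. Your Gr\"onwall route also surfaces a point the paper leaves tacit, namely global existence in $N$. Your claimed bound $\mathrm{d}D/\mathrm{d}N\le C(D+1)$ is indeed valid on $\{D>0\}$, but it deserves the short verification: writing $p^{2}+q^{4}=D+\tfrac{4}{3}\xi q^{3}$, the cubic term is nonpositive for $q\le0$, is bounded by a constant depending on $\xi$ for $0<q<\tfrac{8}{3}\xi$, and is at most $\tfrac{1}{2}q^{4}\le\tfrac{1}{2}(p^{2}+q^{4})$ for $q\ge\tfrac{8}{3}\xi$; in every case $p^{2}+q^{4}\le C(\xi)\,(D+1)$. (Alternatively, global existence follows from a Gr\"onwall estimate on the energy $\tfrac{1}{2}(ef')^{2}+\tfrac{1}{2}(ef)^{4}$ in the original variable $\tau$, and is in any case presupposed by the way \prettyref{lem:Poincare-Bendixson} is stated.)
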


We will see in \prettyref{lem:D-negative-omega-zero} that the first case corresponds to $\omega=0$, which consists of constant trajectories and, by \prettyref{lem:asymptotic-table}, also the instanton-type trajectories. In \prettyref{lem:omega-positive} we will see that the second case corresponds to $\omega>0$, which are the solutions which oscillate in the far past.
\begin{proof}
If $D(N)\leq0$ for all $N$ then by \prettyref{lem:oscillation-condition} the trajectory is bounded, and hence by \prettyref{lem:Poincare-Bendixson} it converges to $\mathbf{c}_{0}$, $\mathbf{c}_{1}$ or $\mathbf{c}_{2}$ in the infinite past. For the remainder of the assertion, it suffices to show that if $D(N_{1})>0$ for some $N_{1}$ then $D(N)\to+\infty$ as $N\to\infty$, because then \prettyref{lem:oscillation-condition} implies \ref{enu:conv-to-inf}. From \prettyref{eq:D-decreasing}, it follows that $D(N)$ is increasing as $N$ increases for $N\geq N_{1}$. Thus either $D(N)\to+\infty$ or $D(N)$ increases to some finite positive limit. Suppose for contradiction that $D(N)$ increases to a finite positive limit. Then $D(N)$ and hence also the trajectory are bounded as $N\to\infty$. By \prettyref{lem:Poincare-Bendixson} it follows that the trajectory must limit to some $\mathbf{c}_{i}$. However, as is easily verified, $D(\mathbf{c}_{i})\leq0$ so this is impossible. In summary, if $D(N)\leq0$ then the trajectory is bounded and so \ref{enu:bounded-traj} holds. Otherwise there exists some $N_{1}$ such that $D(N_{1})>0$ from which it follows that $D(N)\to+\infty$, and \ref{enu:conv-to-inf} holds. 
\end{proof}
\begin{lem}
\label{lem:D-negative-omega-zero}If $D(N_{1})\leq0$ then $\omega_{ef}(\tau_{1})\leq\tfrac{4}{3}\xi/(-\tau_{1})$, where $N_{1}$ and $\tau_{1}$ correspond to the same time. In particular, if $D(N)\leq0$ for all $N$, then $\omega=0$. 
\end{lem}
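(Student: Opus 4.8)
The plan is to strip away the dynamical interpretation and reduce the lemma to a one-line algebraic inequality, using nothing more than the definitions of $D$, $\omega_{ef}$, and the change of variables \eqref{eq:cov}. First I would record the exact (not merely asymptotic) version of \eqref{eq:DN-asymptotic}. Writing $s=-\tau>0$, $a=ef(\tau)$, $b=ef'(\tau)$, and substituting $q=s\,a$, $p=s^{2}b$ into \eqref{eq:def-D} gives
\begin{equation}
D(N)=s^{4}b^{2}+s^{4}a^{4}-\tfrac{4}{3}\xi\,s^{3}a^{3}=s^{4}\left(\omega_{ef}(\tau)^{4}-\tfrac{4}{3}\xi\,a^{3}/s\right),\nonumber
\end{equation}
where in the last step I used $\omega_{ef}(\tau)^{4}=b^{2}+a^{4}$ from \eqref{eq:omega_ef}. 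Since $s^{4}>0$, the hypothesis $D(N_{1})\leq0$ is \emph{equivalent} to the inequality $\omega_{ef}(\tau_{1})^{4}\leq\tfrac{4}{3}\xi\,a^{3}/s$ at the corresponding time.

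The core step is then a short comparison argument turning this cubic-in-$a$ bound into a bound on $\omega_{ef}$ alone. If $a\leq0$, the right-hand side is nonpositive (recall $\xi>0$), so the equivalent inequality forces $\omega_{ef}(\tau_{1})=0$ and the claim holds trivially. If $a>0$, I would exploit $\omega_{ef}(\tau_{1})^{4}=b^{2}+a^{4}\geq a^{4}$, which gives $0<a\leq\omega_{ef}(\tau_{1})$ and hence $a^{3}\leq\omega_{ef}(\tau_{1})^{3}$. Substituting yields the chain
\begin{equation}
\omega_{ef}(\tau_{1})^{4}\leq\tfrac{4}{3}\xi\,a^{3}/s\leq\tfrac{4}{3}\xi\,\omega_{ef}(\tau_{1})^{3}/s,\nonumber
\end{equation}
and dividing through by $\omega_{ef}(\tau_{1})^{3}$ (the case $\omega_{ef}(\tau_{1})=0$ being immediate) produces exactly $\omega_{ef}(\tau_{1})\leq\tfrac{4}{3}\xi/(-\tau_{1})$, as desired.

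For the ``in particular'' clause I would simply take a limit. If $D(N)\leq0$ for \emph{all} $N$, the bound just established holds at every time, so $\omega_{ef}(\tau)\leq\tfrac{4}{3}\xi/(-\tau)$ for all $\tau$. Letting $\tau\to-\infty$ sends the right-hand side to $0$, and since $\omega_{ef}\geq0$ a squeeze gives $\omega_{ef}(\tau)\to0$. By \prettyref{thm:approx-w} we have $\omega=\lim_{\tau\to-\infty}\omega_{ef}(\tau)$, and therefore $\omega=0$.

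I do not expect a genuine obstacle here: the argument is elementary once the exact identity for $D$ is in hand. The only place demanding care is sign bookkeeping. Because $a^{3}$ is the sole \emph{odd} power appearing, the sign of $a=ef$ splits the proof into a trivial case ($a\leq0$) and the substantive case ($a>0$), and it is the comparison $a\leq\omega_{ef}$ in the latter that converts the cubic bound into a clean bound on $\omega_{ef}$; reversing that inequality would break the argument, so getting its direction right is the one subtle point.
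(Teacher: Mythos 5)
Your proposal is correct and follows essentially the same route as the paper's proof: both use the exact identity $D=s^{4}\bigl(\omega_{ef}^{4}-\tfrac{4}{3}\xi\,(ef)^{3}/s\bigr)$, the comparison $(ef)^{3}\leq\omega_{ef}^{3}$, division by $\omega_{ef}^{3}$, and the limit $\tau\to-\infty$ for the final clause. The only difference is that you make the sign split on $ef(\tau_{1})$ explicit, whereas the paper absorbs it silently into the comparison step; this is a harmless refinement, not a different argument.
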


\begin{proof}
From \prettyref{eq:DN-asymptotic}, $D(N_{1})\leq0$ is equivalent to 
\[
\omega_{ef}(\tau_{1})^{4}=\left(ef'(\tau_{1})\right)^{2}+\left(ef(\tau_{1})\right)^{4}\leq\tfrac{4}{3}\xi(ef(\tau_{1}))^{3}/(-\tau_{1}).
\]
 Thus 
\[
\omega_{ef}(\tau_{1})^{4}\leq\tfrac{4}{3}\xi\omega_{ef}(\tau_{1})^{3}/(-\tau_{1})\implies\omega_{ef}(\tau_{1})\leq\tfrac{4}{3}\xi/(-\tau_{1}).
\]
 If this inequality holds for all $\tau$, then $\omega\equiv\lim_{\tau\to-\infty}\omega_{ef}(\tau)=0$.
\end{proof}
\begin{lem}
\label{lem:asymptotic-table}When $0\leq\xi<2$, all non-constant trajectories limit to $\bm{\infty}$ in the infinite past and to $\mathbf{c}_{0}$ in the infinite future. If we identify trajectories which differ by a shift in $N$ so that each point $(q,p)$ belongs to a unique trajectory, then for $\xi>2$, the number of non-constant trajectories classified by their asymptotic behaviour is given by \prettyref{tab:asymptotic-classification}.
\begin{table}[h]
\begin{centering}
\begin{tabular}{|l|>{\centering}p{3cm}|>{\centering}p{3cm}|>{\centering}p{3cm}|}
\hline 
\diagbox[width=5.9cm,height=1.81cm]{infinite past}{infinite future} & $\mathbf{c}_{0}$ as $N\to-\infty$\linebreak{}
($c_{0}$\nobreakdash-type) & $\mathbf{c}_{1}$ as $N\to-\infty$\linebreak{}
($c_{1}$\nobreakdash-type) & $\mathbf{c}_{2}$ as $N\to-\infty$\linebreak{}
($c_{2}$\nobreakdash-type)\tabularnewline
\hline 
$\bm{\infty}$ as $N\to+\infty$ ($\omega>0$) & $\infty$ & $2$ & $\infty$\tabularnewline
\hline 
$\mathbf{c}_{1}$ as $N\to+\infty$ (instanton-type) & $1$ & $0$ & $1$\tabularnewline
\hline 
\end{tabular}
\par\end{centering}
\caption{\label{tab:asymptotic-classification}Number of non-constant trajectories for $\xi>2$ as classified by their behaviour in the infinite past ($N\to+\infty)$ and the infinite future ($N\to-\infty$).}
\end{table}
\end{lem}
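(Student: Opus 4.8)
The plan is to determine, for every non-constant trajectory of the vector field \prettyref{eq:vec-field}, the set of admissible limits in the infinite past ($N\to+\infty$) and infinite future ($N\to-\infty$), and then to count the trajectories in each resulting category using the stable-manifold structure of the saddle $\mathbf{c}_{1}$ together with Bendixson's criterion. First I would nail down the admissible limits. By \prettyref{lem:D-unbounded} the future limit of any trajectory is one of the zeroes $\mathbf{c}_{0},\mathbf{c}_{1},\mathbf{c}_{2}$ of \prettyref{eq:ci-points}. For the past limit, \prettyref{lem:dichotomy} gives the dichotomy ``$\bm{\infty}$, or a zero''. The crucial observation is that the attractors $\mathbf{c}_{0}$ and $\mathbf{c}_{2}$ (limit points only as $N\to-\infty$) are \emph{sources} for the forward-$N$ flow, since their linearizations have eigenvalues of positive sign; hence only the constant trajectory can converge to them as $N\to+\infty$. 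Therefore the only zero admissible as a past limit is the saddle $\mathbf{c}_{1}$, so every non-constant trajectory has past limit in $\{\bm{\infty},\mathbf{c}_{1}\}$ and future limit in $\{\mathbf{c}_{0},\mathbf{c}_{1},\mathbf{c}_{2}\}$ --- exactly the cells of \prettyref{tab:asymptotic-classification}.

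For $0\le\xi<2$ the only zero is $\mathbf{c}_{0}$, an attractor, so the future limit is forced to be $\mathbf{c}_{0}$; and since $\mathbf{c}_{0}$ cannot be a past limit of a non-constant trajectory, the past limit must be $\bm{\infty}$. This immediately gives the claim that all non-constant trajectories run from $\bm{\infty}$ to $\mathbf{c}_{0}$. For $\xi>2$, I would next handle the finite entries via the saddle manifolds. Because $\mathbf{c}_{1}$ is a non-degenerate hyperbolic saddle, the stable-manifold theorem makes its stable manifold $W^{s}$ (points converging to $\mathbf{c}_{1}$ as $N\to+\infty$) and unstable manifold $W^{u}$ (as $N\to-\infty$) one-dimensional, each contributing exactly two trajectory-branches after removing $\mathbf{c}_{1}$. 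The instanton-type trajectories are precisely the two branches of $W^{s}$, and the $c_{1}$-type are the two branches of $W^{u}$, which accounts for the total count $2+2$. A homoclinic orbit $\mathbf{c}_{1}\to\mathbf{c}_{1}$ is forbidden by \prettyref{lem:Bendixson-criterion}, giving the entry $0$. Each $W^{u}$-branch has past limit that is neither an attractor (source argument above) nor $\mathbf{c}_{1}$ (homoclinic), hence $\bm{\infty}$, filling the entry $2$. Each $W^{s}$-branch has a future limit which is an attractor ($\mathbf{c}_{0}$ or $\mathbf{c}_{2}$), so it remains only to split these two branches between the two attractors.

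The crux is this splitting, together with showing the two remaining cells are infinite. Compactifying $\mathbb{R}^{2}$ to $S^{2}$ by adjoining $\bm{\infty}$, the set $W^{u}\cup\{\mathbf{c}_{1},\bm{\infty}\}$ becomes a closed curve which, by \prettyref{lem:Bendixson-criterion} (no loops formed by bounded trajectories, and distinct trajectories cannot cross), is \emph{simple}; it therefore separates $S^{2}$ into the two future-basins $B_{0},B_{2}$ of $\mathbf{c}_{0},\mathbf{c}_{2}$. Locally at the saddle, $W^{u}$ is the basin boundary while $W^{s}$ meets it transversally, so the two branches of $W^{s}$ lie on opposite sides of $W^{u}$, i.e.\ one in $B_{0}$ and one in $B_{2}$; since the basins are invariant, these branches converge respectively to $\mathbf{c}_{0}$ and $\mathbf{c}_{2}$, yielding the entries $1$ and $1$. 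Finally, both $B_{0}$ and $B_{2}$ are open and nonempty; every trajectory in a basin other than the measure-zero curve $W^{s}$ and the constant one has, by \prettyref{lem:dichotomy} and \prettyref{lem:oscillation-condition}, past limit $\bm{\infty}$, and modulo the $N$-shift \eqref{eq:lambda-time-translation} these sweep out a one-parameter family, filling the two $\infty$ entries. The main obstacle is precisely the topological separation step: rigorously verifying that $W^{u}$ compactifies to a simple closed curve bounding exactly the two basins and that $W^{s}$ crosses it transversally so that its branches fall into distinct basins. This rests on hyperbolicity of $\mathbf{c}_{1}$ (its non-degeneracy) for the manifold theorem and on Bendixson's criterion to exclude self-intersections and loops; once these are in place, every remaining entry of \prettyref{tab:asymptotic-classification} is straightforward bookkeeping.
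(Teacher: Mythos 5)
Your proposal is correct and reaches every entry of Table~\ref{tab:asymptotic-classification}, and most of it coincides with the paper's own proof: both arguments obtain the admissible rows and columns from Lemmas~\ref{lem:D-unbounded} and \ref{lem:dichotomy} together with the fact that the attractors $\mathbf{c}_{0},\mathbf{c}_{2}$ cannot be past limits of non-constant trajectories, both count the $c_{1}$\nobreakdash-type and instanton-type trajectories (two each) from the saddle structure of $\mathbf{c}_{1}$, and both kill the homoclinic entry with Bendixson's criterion (Lemma~\ref{lem:Bendixson-criterion}). Where you genuinely diverge is the $(1,1)$ split of the bottom row. You compactify $\mathbb{R}^{2}$ to $S^{2}$, show that $W^{u}\cup\{\mathbf{c}_{1},\bm{\infty}\}$ is a Jordan curve, and use the alternation of separatrices at the saddle to place the two instanton branches in different complementary components, hence at different attractors. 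The paper instead disposes of this step with a one-line reuse of the same Bendixson lemma: if both instanton-type trajectories converged to the same attractor, their union with the two endpoints would form a loop of bounded trajectories, enclosing a bounded, positive-area, flow-invariant region, which Lemma~\ref{lem:Bendixson-criterion} forbids; since no bottom-row entry can equal $2$ and the row sums to $2$, both entries are $1$. This is precisely the step you flagged as your main obstacle: your Jordan-curve/transversality route can be made rigorous (the stable-manifold theorem gives the local separatrix structure, and invariance of the complementary components follows from non-crossing of trajectories), but it imports the Jordan curve theorem and hyperbolicity input that the loop argument makes unnecessary. The remaining differences --- your direct argument that each $W^{u}$ branch must run to $\bm{\infty}$ in the past versus the paper's column-sum bookkeeping, and the identical handling of the $\infty$ entries and of the case $0\leq\xi<2$ --- are equivalent rearrangements of the same facts.
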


\begin{proof}
First assume that $\xi>2$. Then by Lemmas~\ref{lem:D-unbounded} and \ref{lem:dichotomy}, the limit in the infinite past or future of any trajectory exists and belongs to the set $\left\{ \mathbf{c}_{0},\mathbf{c}_{1},\mathbf{c}_{2},\bm{\infty}\right\} $. Since $\mathbf{c}_{0}$ and $\mathbf{c}_{2}$ are attractors, they cannot arise as $N\to+\infty$ except as constant trajectories. In \prettyref{lem:D-unbounded}, we showed that $\infty$ is not a limit as $N\to-\infty$. Thus the rows and columns listed in \prettyref{tab:asymptotic-classification} correspond to all possibilities. 

Now we deduce the values listed in \prettyref{tab:asymptotic-classification}. Since $\mathbf{c}_{1}$ is a saddle point, exactly two trajectories limit to $\mathbf{c}_{1}$ as $N\to+\infty$ (resp. $N\to-\infty$). We call the pair converging to $\mathbf{c}_{1}$ as $N\to+\infty$ the ``instanton-type trajectories'' and to $\mathbf{c}_{1}$ as $N\to-\infty$ the ``$c_{1}$\nobreakdash-type trajectories.'' First consider the possibility that a trajectory is simultaneously an instanton-type and $c_{1}$\nobreakdash-type trajectory. No non-constant trajectory can limit to $\mathbf{c}_{1}$ both as $N\to+\infty$ and $N\to-\infty$ since that would form a loop, which is forbidden by Bendixson's criterion (\prettyref{lem:Bendixson-criterion}). Thus the only simultaneously instanton-type and $c_{1}$\nobreakdash-type trajectory is constant, so the corresponding table entry is $0$. Since the middle column corresponds to $c_{1}$\nobreakdash-type solutions, of which there are two, the remaining table entry must be $2$, and thus both $c_{1}$\nobreakdash-type solutions must converge to $\bm{\infty}$ as $N\to+\infty$. Next we consider the two instanton-type solutions which correspond to the bottom row of the table. It is impossible for any entry in the bottom row to be $2$, because that would lead to a loop. Therefore the remaining two entries along the bottom row must be $1$. Finally, since $\mathbf{c}_{0}$ and $\mathbf{c}_{2}$ are attractors, infinitely many trajectories limit to them as $N\to-\infty$. Since the corresponding columns must sum to $\infty$, the remaining two entries must be $\infty$. 

In order to deduce the result for the case $\xi<2$, note that the $\mathbf{c}_{1}$ and $\mathbf{c}_{2}$ points do not exist, so the corresponding version of \prettyref{tab:asymptotic-classification} collapses to the single entry in the upper-left. 
\end{proof}
The previous lemmas lead easily to the proof of \prettyref{thm:bg-future} and \prettyref{thm:phase-intervals}:
\begin{proof}[Proof of \prettyref{thm:bg-future}]
\label{pf:bg-future} We must show that $\lim_{\tau\to0^{-}}-\tau\,ef(\tau)=c_{i}\xi$ for some $c_{i}$, and that $\lim_{\lambda\to0^{+}}\lambda\,ef(\lambda\tau)=c_{i}\xi/(-\tau)$. For the former, 
\[
\lim_{\tau\to0^{-}}-\tau\,ef(\tau)=\lim_{N\to-\infty}q(N)=c_{i}\xi,
\]
 where $q(\tau)$ is defined in \eqref{eq:cov}, the change of variables to $N$ is defined in \eqref{eq:g-def}, and the last equality follows from \prettyref{lem:asymptotic-table}: the trajectory limits to one of the points $\mathbf{c}_{i}$, and the $q$-coordinate of $\mathbf{c}_{i}$ is equal to $c_{i}\xi$ by \eqref{eq:ci-points}. Finally, 
\[
\lim_{\lambda\to0^{+}}\lambda\,ef(\lambda\tau)=(-\tau)^{-1}\lim_{\lambda\to0^{+}}q(N+\ln\lambda)=c_{i}\xi/(-\tau).
\]
 
\end{proof}
\begin{proof}[Proof of \prettyref{thm:phase-intervals}]
 Fix any real number $C>0$. For any fixed $\xi>2$, we first observe that the set of all trajectories (modulo shifts in $N$) which limit to $\bm{\infty}$ in the infinite past is parameterized by the level set $\left\{ (q,p)\mid D(q,p)=C\right\} $, which is topologically a circle. In particular, each trajectory must intersect this level set exactly once. This is because $D$ is decreasing in time when positive, $D(N)\to+\infty$ in the infinite past, and by \prettyref{lem:asymptotic-table}, $D(N)\to D(\mathbf{c}_{i})\leq0$ for some $\mathbf{c}_{i}$ in the infinite future. Also from \prettyref{lem:asymptotic-table}, exactly two trajectories of this type correspond to $c_{1}$\nobreakdash-type solutions. The non-constant $c_{0}$\nobreakdash-type and $c_{2}$\nobreakdash-type solutions comprise the remaining points, which topologically are two disjoint open intervals. Since $c_{0}$\nobreakdash-type and $c_{2}$\nobreakdash-type solutions are basins of attraction for the attractors $\mathbf{c}_{0}$ and $\mathbf{c}_{2}$, they are open subsets of the $q$-$p$ plane. Thus their intersections with the level set are non-empty open subsets of the circle. The only way to partition two disjoint intervals into two non-empty open subsets is when each subset corresponds to an interval. Thus the two points on the circle corresponding to $c_{1}$\nobreakdash-type solutions divide the complement of the circle into two open intervals of respective $c_{0}$\nobreakdash-type and $c_{2}$\nobreakdash-type solutions. 

The phase $u_{0}$ also gives a continuous parameterization of all trajectories (modulo shifts in $N$), and it can be shown that $u_{0}$ parameterizes each level set. Thus for each fixed $\xi>2$, there are two phases corresponding to $c_{1}$\nobreakdash-type solutions, and the two complementary phase intervals correspond to intervals of respective $c_{0}$\nobreakdash-type and $c_{2}$\nobreakdash-type solutions.
\end{proof}

\subsection{\label{app:t_to_0}Asymptotics for the general solution of the background gauge field equations in the infinite future (\texorpdfstring{$\tau\to0^-$}{\tau\to0-}) }

We provide here a complete description of the asymptotic behaviour of solutions to \prettyref{eq:chromonatural_eom_final} as $\tau\to0^{-}$. According to \prettyref{thm:bg-future}, for any fixed $\xi$, the limit 
\[
\lim_{\tau\to0^{-}}-\tau\,ef(\tau)
\]
achieves at most three distinct values as $ef(\tau)$ ranges over all solutions: $c_{i}\xi$ for $i\in\left\{ 0,1,2\right\} $, where $c_{i}$ is defined in \prettyref{eq:c-definitions}. In this way, each solution falls into one of three families: two two-parameter families ($c_{0}$\nobreakdash-type and $c_{2}$\nobreakdash-type) and a one-parameter family ($c_{1}$\nobreakdash-type). For each of these three families we provide the leading terms of an asymptotic series solution to \prettyref{eq:chromonatural_eom_final} around $\tau=0$ from which all the parameters can be determined. For brevity, we omit the degenerate case $\xi=2$. The following expressions can be verified by substituting them into \prettyref{eq:chromonatural_eom_final}. Higher-order expressions can be derived by the method of undetermined coefficients.
\begin{enumerate}
\item The $c_{0}$\nobreakdash-type solutions with parameters $\beta$ and $\eta$ are 
\[
ef(\tau)=\beta+2\xi\beta^{2}\cdot(-\tau)\ln(-\tau)+\eta\cdot(-\tau)+\mathcal{O}\left(\left(\xi^{2}\left|\beta^{3}\ln(-\tau)\right|+\xi\eta^{2}\right)(-\tau)^{2}\right)\ \mathrm{as}\ \tau\to0^{-}.
\]
 Under the transformation \eqref{eq:symmetry3}, the parameters transform as $(\beta,\eta)\mapsto(\lambda\beta,\lambda^{2}(\eta+2\xi\beta^{2}\ln\lambda)).$
\item The $c_{1}$\nobreakdash-type solutions with the single parameter $\rho$ are 
\begin{equation}
ef(\tau)=\frac{c_{1}\xi}{-\tau}\left(1+\rho\left(-\tau\right)^{\tfrac{1}{2}\left(3+\sqrt{d_{1}}\right)}+\mathcal{O}\left(\rho^{2}\left(-\tau\right)^{3+\sqrt{d_{1}}}\right)\right)\ \mathrm{as}\ \tau\to0^{-},\label{eq:c1-family}
\end{equation}
 where 
\[
d_{1}=25-8c_{1}\xi^{2}.
\]
 As $\xi$ increases from $2$ to $\infty$, $\sqrt{d_{1}}$ increases from $3$ to $\sqrt{17}$. The parameter $\rho$ transforms under \eqref{eq:symmetry3} as $\rho\mapsto\lambda^{\tfrac{1}{2}\left(3+\sqrt{d_{1}}\right)}\rho$. 
\item The $c_{2}$\nobreakdash-type solutions when\footnote{If $2<\xi<\sqrt{\frac{625}{136}}$ then the square root is negative, so \prettyref{eq:c2-sol} must be rewritten in overdamped form. The case $\xi=\sqrt{\frac{625}{136}}$ corresponds to critical damping. Qualitatively, the only difference in these cases is that the perturbations around the $c_{2}$ solution decay without oscillating. } $\xi>\sqrt{\frac{625}{136}}\approx2.14$ with parameters $\upsilon$ and $\theta$ are of the form 
\begin{equation}
ef(\tau)=\frac{c_{2}\xi}{-\tau}\left(1+\nu(-\tau)^{3/2}\cos\left(\tfrac{1}{2}\sqrt{-d_{2}}\ln(-\tau)+\theta\right)+\mathcal{O}\left(\nu^{2}(-\tau)^{3}\right)\right)\ \mathrm{as}\ \tau\to0^{-},\label{eq:c2-sol}
\end{equation}
 where 
\[
d_{2}:=25-8c_{2}\,\xi^{2}.
\]
 The parameters transform under \eqref{eq:symmetry3} as $(\nu,\theta)\mapsto(\lambda^{3/2}\nu,\theta+\tfrac{1}{2}\sqrt{-d_{2}}\ln(\lambda))$. 
\end{enumerate}
In accordance with \prettyref{thm:bg-future}, it is clear from these formulas and the corresponding transformation laws for the parameters that in the limit $\lambda\to0$ of \eqref{eq:symmetry3} (corresponding to the infinite future), we recover the respective $c_{i}$-solutions $ef(\tau)=c_{i}\xi/(-\tau)$. Note that in this same limit the error terms also vanish, and so these asymptotic formulas become exact.

Closely related to the $c_{1}$ solutions are the instanton-type solutions 
\begin{equation}
ef(\tau)=\frac{c_{1}\xi}{-\tau}\left(1+\rho\left(-\tau\right)^{\tfrac{1}{2}\left(3+\sqrt{d_{1}}\right)}+\mathcal{O}\left(\rho^{2}\left(-\tau\right)^{3+\sqrt{d_{1}}}\right)\right)\label{eq:instanton-type}
\end{equation}
 which are vacuum-to-vacuum transitions which tunnel from the $c_{1}$ solution in the infinite past to either the $c_{0}$ solution or $c_{2}$ solution in the infinite future.

\subsection{\label{app:pf-approx-w}Convergence of \texorpdfstring{$ef(\tau)$}{ef(\tau)} to the Jacobi \texorpdfstring{$\sn$}{sn} function in the far past}

In this subsection, we prove \prettyref{thm:oscillatory} by studying the asymptotic behaviour of solutions to \prettyref{eq:chromonatural_eom_final} as $\tau\to-\infty$. 

As is visible from the envelope of solutions shown in \prettyref{fig:omega_eq_1}, the mean value of the oscillations is slightly positive, and decaying to zero. To very good approximation, this mean value is given by $\xi/(-3\tau)$ as $\tau\to-\infty$. This is verified by defining $s(\tau)$ with this value subtracted from $ef(\tau)$: 
\begin{equation}
s(\tau)\equiv ef(\tau)-\xi/(-3\tau),\label{eq:def-s}
\end{equation}
 We compare the equations of motion for $ef(\tau)$ and $s(\tau)$: 
\begin{equation}
(ef)''(\tau)+2\left(ef(\tau)\right)^{3}=\frac{2\,\xi}{-\tau}(ef(\tau))^{2}\implies s''(\tau)+2s(\tau)^{3}=\frac{2}{3}\left(\frac{\xi}{-\tau}\right)^{2}\left(1+\left(\frac{2}{9}-\frac{1}{\xi^{2}}\right)\left(\frac{\xi}{-\tau s(\tau)}\right)\right)s(\tau).\label{eq:s-eom}
\end{equation}
The right-hand sides can be viewed as forcing terms of the quartic oscillator. The forcing term for $ef(\tau)$ decays in proportion to $(-\tau)^{-1}$, while the forcing term for $s(\tau)$ decays faster as $\tau^{-2}$. Thus the Jacobi $\sn$ function approximates $s(\tau)$ to higher order as $\tau\to-\infty$. Indeed, this allows us to prove asymptotics by switching to Jacobi polar coordinates for $s(\tau)$. 

We shall compute $\omega$ by taking the limit of $\omega_{s}(\tau)$ in the asymptotic past. First we show that working with $\omega_{s}(\tau)$ yields the correct result:
\begin{lem}
\label{lem:omega-ef-equivalent-to-omega-s}If either $\lim_{\tau\to-\infty}\omega_{ef}(\tau)$ or $\lim_{\tau\to-\infty}\omega_{s}(\tau)$ exists, then both limits exist and equal $\omega$. Furthermore, $\lim_{\tau\to-\infty}-\tau\omega_{ef}(\tau)=+\infty$ if and only if $\lim_{\tau\to-\infty}-\tau\omega_{s}(\tau)=+\infty$. 
\end{lem}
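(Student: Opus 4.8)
The plan is to exploit that $ef(\tau)$ and $s(\tau)$ differ only by the explicit, decaying term subtracted in \eqref{eq:def-s}: writing $g(\tau)\equiv\xi/(-3\tau)$ we have $ef(\tau)=s(\tau)+g(\tau)$ and $ef'(\tau)=s'(\tau)+g'(\tau)$, with $g'(\tau)=\tfrac{\xi}{3}(-\tau)^{-2}$, so that both $g(\tau)\to0$ and $g'(\tau)\to0$ as $\tau\to-\infty$. Since $\omega_{ef}=\sqrt[4]{(ef')^2+(ef)^4}$ by \eqref{eq:omega_ef} and $\omega_s=\sqrt[4]{(s')^2+s^4}$ by \eqref{eq:polar-def}, I would compare the two fourth powers directly and show that their difference is a product of a vanishing factor with a factor that is bounded as soon as one of the two limits is assumed to exist.

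First I would treat the assertion about $\lim\omega_{ef}$ and $\lim\omega_s$. Suppose $\lim_{\tau\to-\infty}\omega_{ef}(\tau)=L$ exists and is finite. Then $\omega_{ef}$ is bounded for $\tau\le\tau_0$, so $(ef')^2+(ef)^4$ is bounded there, whence both $ef$ and $ef'$ are bounded; as $g,g'\to0$ the same follows for $s=ef-g$ and $s'=ef'-g'$. I would then factor
\[
\omega_{ef}^4-\omega_s^4=(ef'-s')(ef'+s')+(ef-s)\big((ef)^3+(ef)^2 s+ef\,s^2+s^3\big)=g'(ef'+s')+g\,\Sigma,
\]
where $\Sigma=(ef)^3+(ef)^2 s+ef\,s^2+s^3$ is bounded; hence the right-hand side tends to $0$. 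Thus $\omega_s^4\to L^4$, and continuity of $t\mapsto t^{1/4}$ on $[0,\infty)$ gives $\omega_s\to L$. The argument is symmetric if instead $\lim\omega_s$ is assumed finite (now $s,s'$ bounded forces $ef,ef'$ bounded). Since $\omega$ is \emph{defined} as $\lim_{\tau\to-\infty}\omega_{ef}$, the common value is $\omega$.

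For the divergence statement I would pass to the scaled phase-space coordinates of \eqref{eq:cov}, namely $q=-\tau\,ef$ and $p=(-\tau)^2 ef'$, for which $(-\tau\,\omega_{ef})^4=p^2+q^4$. The fortunate point is that the corresponding quantities for $s$ are only a \emph{constant} translate: since $(-\tau)g(\tau)=\xi/3$ and $(-\tau)^2 g'(\tau)=\xi/3$, one computes $-\tau\,s=q-\xi/3$ and $(-\tau)^2 s'=p-\xi/3$, so that $(-\tau\,\omega_s)^4=(p-\xi/3)^2+(q-\xi/3)^4$. A bounded translation cannot change whether $(q,p)$ escapes to infinity, so $p^2+q^4\to+\infty$ if and only if $(p-\xi/3)^2+(q-\xi/3)^4\to+\infty$, which is precisely the equivalence $-\tau\,\omega_{ef}\to+\infty\Leftrightarrow-\tau\,\omega_s\to+\infty$ (cf.\ \prettyref{lem:oscillation-condition}).

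The hard part, such as it is, lies entirely in the first assertion: the difference of fourth powers is controlled only \emph{after} boundedness has been secured from the finiteness of one of the limits, and without that hypothesis the cross terms $ef'+s'$ and $\Sigma$ in the factorization need not be bounded, so the difference could fail to vanish. The second assertion I expect to be essentially free, the exact constant shift by $\xi/3$ in the $(q,p)$ variables being the feature that makes it immediate.
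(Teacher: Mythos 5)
Your proof is correct, and for the first (main) assertion it is essentially the paper's own argument: the paper also compares fourth powers, recording the estimate $\omega_{ef}(\tau)^{4}-\omega_{s}(\tau)^{4}=\mathcal{O}\bigl(\xi\,\omega_{ef}(\tau)^{3}/(-\tau)\bigr)$ (and symmetrically with $\omega_{s}$), which is exactly your factorization $g'(ef'+s')+g\,\Sigma$; you merely make explicit the boundedness step that the paper hides inside the $\mathcal{O}$-estimate, and both arguments then conclude by letting the difference vanish and invoking the definition $\omega\equiv\lim_{\tau\to-\infty}\omega_{ef}(\tau)$. Where you genuinely depart from the paper is the divergence statement. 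The paper deduces it from the same estimate by showing the ratio $-\tau\omega_{s}(\tau)/(-\tau\omega_{ef}(\tau))\to1$, whereas you pass to the coordinates $(q,p)=\bigl(-\tau\,ef,\ (-\tau)^{2}ef'\bigr)$ of \eqref{eq:cov} and observe the structural fact that $(-\tau)g(\tau)=(-\tau)^{2}g'(\tau)=\xi/3$ identically, so the $s$-quantities are the \emph{exact} constant translate $(q-\xi/3,\,p-\xi/3)$; escape to infinity is then manifestly preserved under a fixed translation, with no asymptotic estimate needed. Your version is cleaner and purely algebraic (no division, no error terms), which also meshes nicely with \prettyref{lem:oscillation-condition}; the paper's ratio argument buys slightly more, namely the asymptotic equivalence $-\tau\omega_{s}(\tau)\sim-\tau\omega_{ef}(\tau)$ rather than just the equivalence of divergence, though nothing downstream in the paper requires that stronger form.
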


\begin{proof}
Recall that $\omega$ is by definition the first limit. Expanding out using the definitions, $\omega_{ef}(\tau)^{4}-\omega_{s}(\tau)^{4}=\mathcal{O}(\xi\omega_{ef}(\tau)^{3}/(-\tau))$, and symmetrically $\omega_{ef}(\tau)^{4}-\omega_{s}(\tau)^{4}=\mathcal{O}(\xi\omega_{s}(\tau)^{3}/(-\tau))$. If either limit exists, then the respective asymptotic estimate shows that the difference of the limits vanishes. A similar argument applied to the ratio $-\tau\omega_{s}(\tau)/(-\tau\omega_{ef}(\tau))\to1$ proves the last assertion. 
\end{proof}
In order to estimate $\omega_{s}(\tau)$ and $u_{s}(\tau)$, we use their differential equations. Substituting \prettyref{eq:s-eom} into \prettyref{eq:Jacobi-polar}, 
\begin{align}
\omega_{s}'(\tau) & =\frac{1}{3}\left(\frac{\xi}{-\tau}\right)^{2}\left(\sn(u_{s}(\tau))+\left(\frac{2}{9}-\frac{1}{\xi^{2}}\right)\left(\frac{\xi}{-\tau\omega_{s}(\tau)}\right)\right)\sn'(u_{s}(\tau)),\label{eq:omega-eom}\\
u_{s}'(\tau) & =\omega_{s}(\tau)\left(1-\frac{1}{3}\left(\frac{\xi}{-\tau\omega_{s}(\tau)}\right)^{2}\left(\sn(u_{s}(\tau))^{2}+\left(\frac{2}{9}-\frac{1}{\xi^{2}}\right)\frac{\xi\sn(u_{s}(\tau))}{-\tau\omega_{s}(\tau)}\right)\right).\label{eq:u-eom}
\end{align}

\begin{thm}
\label{thm:omega-s}The numbers $\omega$ and $u_{0}\equiv\lim_{\tau\to-\infty}\left(u_{s}(\tau)-\omega\tau\right)$ are well-defined. If $\omega=0$ then $\omega_{ef}(\tau)\leq\tfrac{4}{3}\xi/(-\tau)$ for all $\tau$. If $\omega>0$ then 
\begin{align*}
\omega_{s}(\tau_{2}) & =\left(1+\mathcal{O}\left(\frac{\xi^{2}}{\omega^{2}\tau_{2}^{2}}\right)\right)\omega,\ \textrm{ and } & u_{s}(\tau)= & \omega\tau+u_{0}+\mathcal{O}(\xi^{2}/(-\omega\tau))\ \textrm{ as }\tau\to-\infty.
\end{align*}
\end{thm}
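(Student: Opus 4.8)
The plan is to analyze the Jacobi polar coordinates $\omega_s(\tau)$ and $u_s(\tau)$ attached to $s(\tau)=ef(\tau)-\xi/(-3\tau)$ through their exact evolution equations \eqref{eq:omega-eom} and \eqref{eq:u-eom}, splitting the argument according to the dichotomy of \prettyref{lem:dichotomy}. If $D(N)\le 0$ for all $N$, then \prettyref{lem:D-negative-omega-zero} immediately gives $\omega_{ef}(\tau)\le\tfrac{4}{3}\xi/(-\tau)$ for all $\tau$, so $\omega\equiv\lim_{\tau\to-\infty}\omega_{ef}(\tau)=0$ exists and vanishes; this settles both the existence of $\omega$ and the stated inequality in the case $\omega=0$. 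It then remains to treat the complementary case of \prettyref{lem:dichotomy}, in which the trajectory tends to $\bm{\infty}$ and $-\tau\omega_{ef}(\tau)\to+\infty$; I will show this case produces $\omega>0$, so that $\omega=0$ is equivalent to the first alternative.

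In this second case, \prettyref{lem:omega-ef-equivalent-to-omega-s} converts $-\tau\omega_{ef}(\tau)\to+\infty$ into $-\tau\omega_s(\tau)\to+\infty$. Feeding this into \eqref{eq:omega-eom} and using $|\sn|,|\sn'|\le 1$, the factor $\xi/(-\tau\omega_s)$ multiplying the correction term is eventually $\le 1$, so $|\omega_s'(\tau)|\le C\xi^2/\tau^2$ for all sufficiently negative $\tau$. Since $\xi^2/\tau^2$ is integrable at $-\infty$, the function $\omega_s$ is Cauchy and converges to a finite limit, which by \prettyref{lem:omega-ef-equivalent-to-omega-s} equals $\omega$. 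This limit must be positive: were $\omega=0$, the crude estimate $|\omega_s(\tau)|=|\int_{-\infty}^{\tau}\omega_s'|\le C\xi^2/(-\tau)$ would force $-\tau\omega_s(\tau)$ to stay bounded, contradicting $-\tau\omega_s\to+\infty$. Hence $\omega>0$ exactly in this case.

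The heart of the argument is to upgrade the crude rate $\omega_s(\tau)-\omega=O(\xi^2/(-\tau))$ — which is not integrable and hence too weak — to the sharp rate $\omega_s(\tau)-\omega=O(\xi^2/(\omega\tau^2))$. This gain of one power of $(-\tau)$ comes from the oscillation of the integrand in $\omega_s(\tau_2)-\omega=\int_{-\infty}^{\tau_2}\omega_s'\,d\tau$. Using $\sn(u_s)\sn'(u_s)=\tfrac{1}{2u_s'}\tfrac{d}{d\tau}\sn(u_s)^2$ for the leading term and $\sn'(u_s)=\tfrac{1}{u_s'}\tfrac{d}{d\tau}\sn(u_s)$ for the correction term, and integrating by parts while noting that $u_s'\approx\omega_s\approx\omega$ is bounded away from zero, the leading contribution collapses to a boundary term of size $O(\xi^2/(\omega\tau_2^2))$ plus remainders in which the extra $\tfrac{d}{d\tau}$ lands on the slowly varying amplitude $\xi^2/\tau^2$ (or on $1/u_s'$), producing strictly smaller, integrable terms; the correction term becomes $O\!\bigl(\xi^3/(\omega^2(-\tau_2)^3)\bigr)$, which is subdominant precisely in the oscillatory regime $-\tau_2\gg\xi/\omega$. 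This yields $\omega_s(\tau_2)=\bigl(1+O(\xi^2/(\omega^2\tau_2^2))\bigr)\omega$. I expect the bookkeeping of these integration-by-parts remainders — in particular checking that every boundary and bulk term is dominated by $\xi^2/(\omega\tau_2^2)$ — to be the main obstacle, since it is here that the naive triangle-inequality bound is insufficient and genuine cancellation must be exploited.

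Finally, I would insert these estimates into \eqref{eq:u-eom}. Writing $u_s'(\tau)-\omega=(\omega_s(\tau)-\omega)+\omega_s\cdot O\!\bigl(\xi^2/(\tau\omega_s)^2\bigr)$, both pieces are $O(\xi^2/(\omega\tau^2))$ by the sharp rate just established, hence integrable at $-\infty$. Consequently $u_s(\tau)-\omega\tau$ has an integrable derivative and converges to a well-defined limit $u_0\equiv\lim_{\tau\to-\infty}(u_s(\tau)-\omega\tau)$, with $u_s(\tau)-\omega\tau-u_0=\int_{-\infty}^{\tau}(u_s'-\omega)=O(\xi^2/(-\omega\tau))$. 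This establishes that $u_0$ is well-defined for $\omega>0$ and proves the stated asymptotics for both $\omega_s$ and $u_s$, completing \prettyref{thm:omega-s}.
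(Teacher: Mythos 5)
Your proposal is correct, and its skeleton coincides with the paper's: the same dichotomy (\prettyref{lem:dichotomy}, with the $D\leq0$ branch dispatched by \prettyref{lem:D-negative-omega-zero}), the same Cauchy argument for existence of $\omega$ (\prettyref{lem:ws-limit-exists}), the same contradiction argument for $\omega>0$ (\prettyref{lem:omega-positive}), and the same final integration of \eqref{eq:u-eom} to get $u_{0}$ (\prettyref{lem:u0-error}). Where you genuinely diverge is the step you correctly identify as the heart: the sharp rate $\omega_{s}(\tau_{2})-\omega=\mathcal{O}\bigl(\xi^{2}/(\omega\tau_{2}^{2})\bigr)$. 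You stay in Jacobi polar coordinates and integrate by parts against the oscillation, writing $\sn(u_{s})\sn'(u_{s})=\tfrac{1}{2u_{s}'}\tfrac{\mathrm{d}}{\mathrm{d}\tau}\sn(u_{s})^{2}$, which obliges you to bound $u_{s}'$ away from zero and to control $u_{s}''$. The paper instead abandons polar coordinates at exactly this point (its proof of \prettyref{lem:omega-s-estimate} opens with ``it is easier not to use polar coordinates''): it differentiates $\omega_{s}^{4}=s'^{2}+s^{4}$ and uses the identity $\tfrac{\mathrm{d}}{\mathrm{d}\tau}\omega_{s}^{4}=\tfrac{2}{3}(\xi/\tau)^{2}(s^{2})'+\tfrac{4}{3}\bigl(\tfrac{2}{9}-\xi^{-2}\bigr)(\xi/(-\tau))^{3}s'$, so the oscillating factors appear as exact derivatives of $s^{2}$ and $s$ times slowly varying weights, and integration by parts needs only $s(\tau)=\mathcal{O}(\omega)$ from \prettyref{lem:omega-positive}. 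Both routes exploit the same cancellation; the paper's buys cleaner bookkeeping (no lower bound on $u_{s}'$, no second-derivative estimate), while yours keeps the amplitude/phase picture throughout. One caveat in your version: the remainder where $\tfrac{\mathrm{d}}{\mathrm{d}\tau}$ lands on $1/u_{s}'$ is of size $\mathcal{O}\bigl(\xi^{4}/(\omega^{2}|\tau_{2}|^{3})\bigr)$ (since $u_{s}''=\mathcal{O}(\xi^{2}/\tau^{2})$), which is dominated by the boundary term only once $-\tau_{2}\gg\xi^{2}/\omega$, a slightly deeper regime than the $-\tau_{2}\gtrsim\xi/\omega$ needed to absorb the paper's error terms; as a statement ``as $\tau\to-\infty$'' this is harmless, but it matters if one wants the implied constants uniform in $\xi$.
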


\begin{proof}
In what follows, we prove several lemmas which, when taken together, imply this result. According to \prettyref{lem:omega-ef-equivalent-to-omega-s}, we can replace $\omega_{ef}(\tau)$ with $\omega_{s}(\tau)$. We already know from \prettyref{lem:dichotomy} that there are two cases to consider: either $D\leq0$ for all $\tau$ or $\lim_{\tau\to-\infty}-\tau\omega_{s}(\tau)=+\infty$. \prettyref{lem:D-negative-omega-zero} showed that the former case leads to $\omega=0$ with the desired estimate. For the remainder, it suffices to assume $\lim_{\tau\to-\infty}-\tau\omega_{s}(\tau)=+\infty$. Under this assumption, we prove in \prettyref{lem:ws-limit-exists} that $\omega$ exists, in \prettyref{lem:omega-positive} that $\omega>0$, and the desired estimates in \prettyref{lem:omega-s-estimate} and \prettyref{lem:u0-error}. 
\end{proof}
\begin{lem}
\label{lem:ws-limit-exists}In the case $\lim_{\tau\to-\infty}-\tau\omega_{s}(\tau)=+\infty$, the number $\omega\equiv\lim_{\tau\to-\infty}\omega_{ef}(\tau)$ is well-defined.
\end{lem}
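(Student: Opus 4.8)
The plan is to show that $\omega_s'(\tau)$ is absolutely integrable on a neighbourhood of $-\infty$; the Cauchy criterion then guarantees that $\lim_{\tau\to-\infty}\omega_s(\tau)$ exists and is finite, and \prettyref{lem:omega-ef-equivalent-to-omega-s} transfers this conclusion from $\omega_s$ to $\omega_{ef}$, which is exactly the assertion of the lemma. The standing hypothesis already presupposes that the solution is defined for all sufficiently negative $\tau$, so no separate continuation argument is needed.

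First I would take the polar differential equation \eqref{eq:omega-eom} for $\omega_s'(\tau)$ and bound it crudely using $\left|\sn(u)\right|\leq1$ and $\left|\sn'(u)\right|\leq1$, which gives
\[
\left|\omega_s'(\tau)\right|\leq\frac{1}{3}\left(\frac{\xi}{-\tau}\right)^{2}\left(1+\left|\tfrac29-\tfrac1{\xi^2}\right|\frac{\xi}{-\tau\,\omega_s(\tau)}\right).
\]
The hypothesis $-\tau\,\omega_s(\tau)\to+\infty$ is precisely what controls the second factor: taking $M=\xi$ in the definition of the limit, there is a time $\tau_1$ with $-\tau\,\omega_s(\tau)>\xi$, hence $\xi/(-\tau\,\omega_s(\tau))<1$, for all $\tau<\tau_1$. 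Thus $\left|\omega_s'(\tau)\right|\leq C\,\xi^2/\tau^2$ on $(-\infty,\tau_1)$ for a constant $C=C(\xi)$. (The degenerate case $\xi=0$ is immediate, since then the forcing in \eqref{eq:s-eom} vanishes and $\omega_s'\equiv0$.)

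Next I would integrate this estimate. Because $\int_{-\infty}^{\tau_1}\tau^{-2}\,\mathrm{d}\tau=1/(-\tau_1)<\infty$, the derivative $\omega_s'$ is absolutely integrable on $(-\infty,\tau_1)$. Writing $\omega_s(\tau_1)-\omega_s(\tau)=\int_{\tau}^{\tau_1}\omega_s'(s)\,\mathrm{d}s$ and letting $\tau\to-\infty$, the right-hand side converges, so $\lim_{\tau\to-\infty}\omega_s(\tau)$ exists and is finite. By \prettyref{lem:omega-ef-equivalent-to-omega-s} this limit coincides with $\omega=\lim_{\tau\to-\infty}\omega_{ef}(\tau)$, which completes the argument.

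The only delicate point is the danger that $\omega_s(\tau)$ dips toward zero as $\tau\to-\infty$, which would render the bound useless since the forcing term is divided by $\omega_s$ in the polar equations \eqref{eq:Jacobi-polar}. This is, however, exactly what the hypothesis forbids: $-\tau\,\omega_s(\tau)\to+\infty$ keeps $\omega_s(\tau)$ bounded below by $\xi/(-\tau)$ for large $-\tau$, so the singular factor stays subunit. I expect this interplay — that one and the same hypothesis simultaneously tames the $1/\omega_s$ singularity and enforces the $\tau^{-2}$ decay of the forcing — to be the crux of the proof, whereas the integration itself is entirely routine.
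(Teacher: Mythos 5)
Your proof is correct and takes essentially the same route as the paper's: the same bound on $\left|\omega_{s}'(\tau)\right|$ from the polar equation \eqref{eq:omega-eom} using $\left|\sn(u)\right|\leq1$ and $\left|\sn'(u)\right|\leq1$, the same use of the hypothesis $-\tau\omega_{s}(\tau)\to+\infty$ to make the $1/\omega_{s}$ factor harmless, and the same $\tau^{-2}$ integration, merely phrased as absolute integrability of $\omega_{s}'$ instead of the paper's explicit Cauchy estimate, with the final transfer to $\omega_{ef}$ via \prettyref{lem:omega-ef-equivalent-to-omega-s} identical in both.
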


\begin{proof}
By \prettyref{lem:omega-ef-equivalent-to-omega-s}, it suffices to show that $\lim_{\tau\to-\infty}\omega_{s}(\tau)$ converges. This is equivalent to showing that for all $\tau_{1}\leq\tau_{2}$, 
\[
\left|\omega_{s}(\tau_{2})-\omega_{s}(\tau_{1})\right|\to0\textrm{ as }\tau_{2}\to-\infty.
\]
 We can estimate the difference $\omega_{s}(\tau_{2})-\omega_{s}(\tau_{1})=\int_{\tau_{1}}^{\tau_{2}}\omega_{s}'(\tau)\,\mathrm{d}\tau$ by using the differential equation \eqref{eq:omega-eom}. Since $\left|\sn(u)\right|\leq1$ and $\left|\sn'(u)\right|\leq1$, 
\[
\left|\omega_{s}'(\tau)\right|\leq\frac{1}{3}\left(\frac{\xi}{-\tau}\right)^{2}\left(1+\left|\frac{2}{9}-\frac{1}{\xi^{2}}\right|\left(\frac{\xi}{-\tau\omega_{s}(\tau)}\right)\right).
\]
 Since $-\tau\omega_{s}(\tau)\to\infty$ as $\tau\to-\infty$, for any $\epsilon>0$, there exists a $\tau_{*}$ such that for all $\tau\leq\tau_{*},$ 
\[
\left|\omega_{s}'(\tau)\right|\leq\frac{1}{3}\left(\frac{\xi}{-\tau}\right)^{2}\left(1+\epsilon\right).
\]
 Here we choose $\tau_{*}$ corresponding to $\epsilon=2$. It follows that for any $\tau_{1}\leq\tau_{2}\leq\tau_{*}$, 
\[
\left|\omega_{s}(\tau_{2})-\omega_{s}(\tau_{1})\right|\leq\int_{\tau_{1}}^{\tau_{2}}\left|\omega_{s}'(\tau)\right|\,\mathrm{d}\tau\leq\frac{\xi^{2}}{-\tau_{2}}\to0\ \textrm{ as }\ensuremath{\tau_{2}\to-\infty}.
\]
\end{proof}
\begin{lem}
\label{lem:omega-positive}If $\lim_{\tau\to-\infty}-\tau\omega_{s}(\tau)=+\infty$ then $\omega>0$ and the function $s(\tau)$ is $\mathcal{O}(\omega)$ as $\tau\to-\infty$. 
\end{lem}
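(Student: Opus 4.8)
The plan is to prove $\omega>0$ by contradiction, assuming $\omega=0$ and deriving a conflict with the standing hypothesis $-\tau\omega_{s}(\tau)\to+\infty$. The tool is a quantitative refinement of the convergence already established in \prettyref{lem:ws-limit-exists}: rather than merely showing $\omega_{s}(\tau)$ is Cauchy, I would extract an explicit rate of convergence to the limit $\omega$ and then read off the claim from it. The second assertion, that $s(\tau)=\mathcal{O}(\omega)$, will follow for free from the polar representation \eqref{eq:polar-def} once $\omega>0$ is known.

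First I would upgrade the estimate of \prettyref{lem:ws-limit-exists} into a bound relative to $\omega$. Because $-\tau\omega_{s}(\tau)\to+\infty$ by hypothesis, the correction term $\tfrac{\xi}{-\tau\omega_{s}(\tau)}$ in the bracket of \eqref{eq:omega-eom} is eventually negligible, so (with the same $\epsilon=2$ choice of $\tau_{*}$ used in \prettyref{lem:ws-limit-exists}) one has $\left|\omega_{s}'(\tau)\right|\le(\xi/(-\tau))^{2}$ for all $\tau\le\tau_{*}$. Integrating from $\tau_{1}$ to $\tau_{2}$ with $\tau_{1}\le\tau_{2}\le\tau_{*}$ gives
$$\left|\omega_{s}(\tau_{2})-\omega_{s}(\tau_{1})\right|\le\int_{\tau_{1}}^{\tau_{2}}\Big(\frac{\xi}{-\tau}\Big)^{2}\mathrm{d}\tau=\frac{\xi^{2}}{-\tau_{2}}-\frac{\xi^{2}}{-\tau_{1}}\le\frac{\xi^{2}}{-\tau_{2}},$$
and letting $\tau_{1}\to-\infty$, so that $\omega_{s}(\tau_{1})\to\omega$, yields the absolute estimate $\left|\omega_{s}(\tau_{2})-\omega\right|\le\xi^{2}/(-\tau_{2})$ for all $\tau_{2}\le\tau_{*}$.

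With this in hand the contradiction is immediate. If $\omega=0$, the last estimate reads $\omega_{s}(\tau_{2})\le\xi^{2}/(-\tau_{2})$, hence $-\tau_{2}\,\omega_{s}(\tau_{2})\le\xi^{2}$ remains bounded as $\tau_{2}\to-\infty$, contradicting $-\tau\omega_{s}(\tau)\to+\infty$. Therefore $\omega>0$. For the final claim, since $s(\tau)=\omega_{s}(\tau)\,\sn(u_{s}(\tau))$ by \eqref{eq:polar-def} and $\left|\sn\right|\le1$, we have $\left|s(\tau)\right|\le\omega_{s}(\tau)$; as $\omega_{s}(\tau)\to\omega>0$, the bound $\omega_{s}(\tau)\le2\omega$ holds for all sufficiently negative $\tau$, giving $s(\tau)=\mathcal{O}(\omega)$.

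The only delicate point is the self-referential appearance of $\omega_{s}$ inside the bound for $\omega_{s}'$ in \eqref{eq:omega-eom}. This is handled exactly as in \prettyref{lem:ws-limit-exists}: the hypothesis $-\tau\omega_{s}(\tau)\to+\infty$ guarantees that the correction term is dominated by the leading $\tau^{-2}$ forcing, and since only an \emph{upper} bound on $\left|\omega_{s}'\right|$ is needed, no circularity arises. I expect this bookkeeping — ensuring the threshold $\tau_{*}$ depends only on the hypothesis and not on the as-yet-unknown value of $\omega$ — to be the main (though minor) obstacle; everything else is a routine integration.
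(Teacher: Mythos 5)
Your proposal is correct and follows essentially the same route as the paper: a contradiction argument from $\omega=0$, using the integrated bound $\left|\omega_{s}'(\tau)\right|\le(\xi/(-\tau))^{2}$ (inherited, with the same $\epsilon=2$ threshold, from \prettyref{lem:ws-limit-exists}) to conclude $-\tau_{2}\,\omega_{s}(\tau_{2})\le\xi^{2}$, contradicting the hypothesis, and then reading $\left|s(\tau)\right|\le\omega_{s}(\tau)\le2\omega$ off the polar representation. Your explicit intermediate estimate $\left|\omega_{s}(\tau_{2})-\omega\right|\le\xi^{2}/(-\tau_{2})$ is exactly the computation the paper performs inline, and your remark that no circularity arises because only an upper bound on $\left|\omega_{s}'\right|$ is needed matches the paper's (implicit) handling of that point.
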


Suppose for contradiction that $\omega=0$. Then the proof of \prettyref{lem:ws-limit-exists} gives 
\[
-\tau_{2}\omega_{s}(\tau_{2})=-\tau_{2}\omega_{s}(\tau_{2})-\tau_{2}\omega=-\tau_{2}\int_{-\infty}^{\tau_{2}}\omega_{s}'(\tau)\,\mathrm{d}\tau\leq\xi^{2},
\]
 which contradicts $-\tau\omega_{s}(\tau)\to+\infty$. Thus $\omega>0$. In this case, $\left|s(\tau)\right|\leq2\omega$ for sufficiently negative $\tau$. Thus $s(\tau)$ is $\mathcal{O}(\omega)$. 
\begin{lem}
\label{lem:omega-s-estimate}If $\omega>0$ then 
\begin{equation}
\omega_{s}(\tau)=\left(1+\mathcal{O}\left(\frac{\xi^{2}}{\omega^{2}\tau^{2}}\right)\right)\omega\ \textrm{ as }\tau\to-\infty.\label{eq:omega-quadratic-asymptotic}
\end{equation}
\end{lem}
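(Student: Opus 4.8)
The plan is to estimate the rate at which $\omega_{s}(\tau)$ approaches its limit $\omega$ by integrating the differential equation \eqref{eq:omega-eom} and exploiting oscillatory cancellation of the Jacobi functions. Since \prettyref{lem:ws-limit-exists} guarantees that $\omega=\lim_{\tau\to-\infty}\omega_{s}(\tau)$ exists and \prettyref{lem:omega-positive} gives $\omega>0$, I would first write
\[
\omega_{s}(\tau)-\omega=\int_{-\infty}^{\tau}\omega_{s}'(\sigma)\,\mathrm{d}\sigma,
\]
and substitute \eqref{eq:omega-eom} for the integrand. The integrand splits into a leading term proportional to $\tfrac{1}{3}(\xi/(-\sigma))^{2}\sn(u_{s})\sn'(u_{s})$ and a subleading term carrying an extra factor $\xi/(-\sigma\omega_{s})$, which is small because $-\sigma\omega_{s}(\sigma)\to+\infty$ under the standing hypothesis. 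Bounding the integrand by its absolute value only yields $|\omega_{s}(\tau)-\omega|=\mathcal{O}(\xi^{2}/(-\tau))$, which is far too weak; the improvement to the claimed $\mathcal{O}(\xi^{2}/(\omega\tau^{2}))$ must come from the oscillation of $\sn$ and $\sn'$.

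The key step is to integrate the leading term by parts. Using $\sn(u)\sn'(u)=\tfrac{1}{2}\tfrac{\mathrm{d}}{\mathrm{d}u}\sn(u)^{2}$ together with $\tfrac{\mathrm{d}}{\mathrm{d}\sigma}\sn(u_{s}(\sigma))^{2}=2\sn(u_{s})\sn'(u_{s})u_{s}'(\sigma)$, the leading contribution becomes
\[
\int_{-\infty}^{\tau}\frac{\xi^{2}}{6\sigma^{2}u_{s}'(\sigma)}\,\frac{\mathrm{d}}{\mathrm{d}\sigma}\bigl(\sn(u_{s}(\sigma))^{2}\bigr)\,\mathrm{d}\sigma.
\]
Since $u_{s}'(\sigma)=\omega_{s}(\sigma)(1+\mathcal{O}(\epsilon^{2}))$ with $\epsilon=\xi/(-\sigma\omega_{s})$ by \eqref{eq:u-eom}, the envelope $g(\sigma)\equiv\xi^{2}/(6\sigma^{2}u_{s}'(\sigma))$ is slowly varying, of size $\mathcal{O}(\xi^{2}/(\omega\sigma^{2}))$ and (after one $\sigma$-derivative) of size $\mathcal{O}(\xi^{2}/(\omega|\sigma|^{3}))$. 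Integrating by parts then produces a boundary term $g(\tau)\sn(u_{s}(\tau))^{2}=\mathcal{O}(\xi^{2}/(\omega\tau^{2}))$ (the boundary term at $-\infty$ vanishes since $g(\sigma)\to0$), plus the integral $-\int g'(\sigma)\sn(u_{s})^{2}\,\mathrm{d}\sigma$, which is bounded by $\int_{-\infty}^{\tau}\mathcal{O}(\xi^{2}/(\omega|\sigma|^{3}))\,\mathrm{d}\sigma=\mathcal{O}(\xi^{2}/(\omega\tau^{2}))$. The subleading term of \eqref{eq:omega-eom} carries the extra factor $\epsilon$; after the same integration by parts (note $\sn'$ also has zero mean over a period) it contributes at order $\epsilon$ relative to the leading term, i.e.\ $\mathcal{O}(\xi^{3}/(\omega^{2}|\tau|^{3}))$, which decays faster than $1/\tau^{2}$ and is absorbed into the error. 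Summing the contributions gives $\omega_{s}(\tau)-\omega=\mathcal{O}(\xi^{2}/(\omega\tau^{2}))=\omega\cdot\mathcal{O}(\xi^{2}/(\omega^{2}\tau^{2}))$, which is exactly \prettyref{eq:omega-quadratic-asymptotic}.

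The main obstacle is the self-referential nature of the envelope estimate: $g(\sigma)$ and its derivative depend on $u_{s}'$, hence on $\omega_{s}$ and $u_{s}$, the very quantities being estimated. To make this rigorous I would first fix $\tau_{*}$ negative enough that $\omega_{s}(\sigma)\geq\omega/2$ and $-\sigma\omega_{s}(\sigma)$ is large for all $\sigma\leq\tau_{*}$ (possible by \prettyref{lem:omega-positive} and the hypothesis $-\tau\omega_{s}\to+\infty$); this guarantees $u_{s}'(\sigma)\geq\omega/3>0$, so that division by $u_{s}'$ is legitimate and the boundary term at $-\infty$ is controlled. Bounding $g'$ additionally requires an a priori estimate for $u_{s}''$, which follows by differentiating \eqref{eq:u-eom} and using the bound $\omega_{s}'=\mathcal{O}(\xi^{2}/\sigma^{2})$ already established in the proof of \prettyref{lem:ws-limit-exists}; the resulting extra terms decay at least as fast as $1/|\sigma|^{3}$ and are therefore subleading. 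Finally, \prettyref{lem:omega-ef-equivalent-to-omega-s} transfers the estimate from $\omega_{s}$ to $\omega_{ef}$ when needed.
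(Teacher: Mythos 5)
Your proposal is correct, but it takes a route the paper deliberately avoids, and the comparison is instructive. The paper's proof opens with the remark that ``it is easier not to use polar coordinates'': instead of integrating \eqref{eq:omega-eom}, it works with the quantity $\omega_{s}(\tau)^{4}=s'(\tau)^{2}+s(\tau)^{4}$ and uses the identity $\tfrac{\mathrm{d}}{\mathrm{d}\tau}\omega_{s}^{4}=2s'(s''+2s^{3})$, which via \eqref{eq:s-eom} becomes $\tfrac{2}{3}(\xi/(-\tau))^{2}(s^{2})'+\tfrac{4}{3}\left(\tfrac{2}{9}-\xi^{-2}\right)(\xi/(-\tau))^{3}s'$ --- already a sum of exact derivatives weighted by explicit powers of $\tau$. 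The fundamental theorem of calculus plus one integration by parts, together with $s(\tau)=\mathcal{O}(\omega)$ from \prettyref{lem:omega-positive}, then gives $\omega_{s}^{4}-\omega^{4}=\omega^{4}\mathcal{O}(\xi^{2}/(\omega^{2}\tau^{2}))$, and a fourth root finishes. Your version achieves the same oscillatory cancellation, but manufactures the total derivatives by hand, writing $\sn\,\sn'=\tfrac{1}{2u_{s}'}\tfrac{\mathrm{d}}{\mathrm{d}\sigma}\sn^{2}$ (and likewise for $\sn'$ in the subleading term); this forces you to divide by $u_{s}'$, hence to establish the lower bound $u_{s}'\geq\omega/3$ and the estimate $u_{s}''=\mathcal{O}(\xi^{2}/\sigma^{2})$ by differentiating \eqref{eq:u-eom} --- exactly the self-referential bookkeeping that the paper's change of variables eliminates. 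Two things you got right that are easy to miss: bounding the subleading term of \eqref{eq:omega-eom} directly would lose a factor of $\xi$ relative to the claimed error, so integrating it by parts as well (as you do, and as the paper implicitly does with its $s'/\tau^{3}$ term) is genuinely necessary for a $\xi$-uniform constant; and both arguments absorb the resulting cubic contribution $\mathcal{O}((\xi/(\omega|\tau|))^{3})$ into $\mathcal{O}((\xi/(\omega|\tau|))^{2})$ using $\xi/(\omega|\tau|)\to0$, so your proof has exactly the same asymptotic range of validity as the paper's. In short: same mechanism, different packaging; the paper's ``energy'' variable buys algebraic cleanliness, while your polar-coordinate route makes the averaging mechanism explicit at the cost of two extra a priori estimates.
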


\begin{proof}
For this proof it is easier not to use polar coordinates. We need the identity 
\[
\frac{\mathrm{d}}{\mathrm{d}\tau}\omega_{s}(\tau)^{4}=2s'(\tau)\left(s''(\tau)+2s(\tau)^{3}\right)=\frac{2}{3}\left(\frac{\xi}{-\tau}\right)^{2}\left(s(\tau)^{2}\right)'+\frac{4}{3}\left(\frac{2}{9}-\frac{1}{\xi^{2}}\right)\left(\frac{\xi}{-\tau}\right)^{3}s'(\tau),
\]
 which follows from \prettyref{eq:s-eom}. By the fundamental theorem of calculus, 
\begin{align*}
\omega_{s}(\tau_{2})^{4}-\omega^{4} & =\int_{-\infty}^{\tau_{2}}\frac{\mathrm{d}}{\mathrm{d}\tau}\omega_{s}(\tau)^{4}\,\mathrm{d}\tau=\int_{-\infty}^{\tau_{2}}\left(\frac{2}{3}\left(\frac{\xi}{-\tau}\right)^{2}\left(s(\tau)^{2}\right)'+\frac{4}{3}\left(\frac{2}{9}-\frac{1}{\xi^{2}}\right)\left(\frac{\xi}{-\tau}\right)^{3}s'(\tau)\right)\,\mathrm{d}\tau\\
 & =\frac{2\xi^{2}s(\tau_{2})^{2}}{3\tau_{2}^{2}}+\left(\frac{2}{9}-\frac{1}{\xi^{2}}\right)\frac{4\xi^{3}s(\tau_{2})}{-3\tau_{2}^{3}}-\int_{-\infty}^{\tau_{2}}\left(\frac{4\xi^{2}s(\tau)^{2}}{-3\tau{}^{3}}+\left(\frac{2}{9}-\frac{1}{\xi^{2}}\right)\frac{4\xi^{3}s(\tau)}{\tau^{4}}\right)\,\mathrm{d}\tau,
\end{align*}
 where the last equality follows from integration by parts. From \prettyref{lem:omega-positive} we obtain 
\[
\omega_{s}(\tau_{2})^{4}-\omega^{4}=\omega^{4}\left(\mathcal{O}\left(\frac{\xi^{2}}{\omega^{2}\tau_{2}^{2}}\right)+\int_{-\infty}^{\tau_{2}}\mathcal{O}\left(\frac{\xi^{2}}{-\omega^{2}\tau^{3}}\right)\,\mathrm{d}\tau\right)=\omega^{4}\mathcal{O}\left(\frac{\xi^{2}}{\omega^{2}\tau_{2}^{2}}\right).
\]
 Solving for $\omega_{s}(\tau_{2})$ and using $\sqrt[4]{1+\epsilon}=1+\mathcal{O}\left(\epsilon\right)$, we obtain \prettyref{eq:omega-quadratic-asymptotic}.  
\end{proof}
\begin{lem}
\label{lem:u0-error}If $\omega>0$ then the limit $u_{0}\equiv\lim_{\tau\to-\infty}\left(u_{s}(\tau)-\omega\tau\right)$ is well-defined (modulo the period of $\sn$), and $u_{s}(\tau)=\omega\tau+u_{0}+\mathcal{O}(\xi^{2}/(-\omega\tau)).$ 
\end{lem}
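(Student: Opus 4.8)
The plan is to show that the function $\phi(\tau)\equiv u_{s}(\tau)-\omega\tau$ converges as $\tau\to-\infty$ by proving that its derivative is absolutely integrable near $-\infty$, with the integral tail supplying precisely the claimed error. Since $u_{s}(\tau)$ is only defined modulo the period of $\sn$ to begin with (it is the angular Jacobi polar coordinate of \prettyref{eq:polar-def}), fixing a continuous branch determines $\phi$ up to an additive multiple of the period, so the limit $u_{0}$ will likewise be well-defined modulo the period.

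First I would compute $\phi'(\tau)=u_{s}'(\tau)-\omega$ directly from the angular equation \eqref{eq:u-eom}, writing
\begin{equation}
\phi'(\tau)=\left(\omega_{s}(\tau)-\omega\right)-\frac{\omega_{s}(\tau)}{3}\left(\frac{\xi}{-\tau\omega_{s}(\tau)}\right)^{2}\left(\sn(u_{s}(\tau))^{2}+\left(\frac{2}{9}-\frac{1}{\xi^{2}}\right)\frac{\xi\,\sn(u_{s}(\tau))}{-\tau\omega_{s}(\tau)}\right).
\end{equation}
The hypothesis $\omega>0$ places us, via \prettyref{lem:dichotomy} and \prettyref{lem:omega-positive}, in the oscillatory regime where $-\tau\omega_{s}(\tau)\to+\infty$; in particular $\omega_{s}(\tau)$ is bounded and bounded away from $0$ for sufficiently negative $\tau$. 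I would then estimate the two contributions separately: \prettyref{lem:omega-s-estimate} gives $\omega_{s}(\tau)-\omega=\mathcal{O}(\xi^{2}/(\omega\tau^{2}))$, while the bounds $|\sn|\le1$ and $|\sn'|\le1$ together with $\omega_{s}(\tau)\to\omega$ show that the second group is $\mathcal{O}(\xi^{2}/(\omega\tau^{2}))$ as well. Hence $\phi'(\tau)=\mathcal{O}(\xi^{2}/(\omega\tau^{2}))$ as $\tau\to-\infty$.

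With this pointwise derivative bound in hand, the remaining steps are routine applications of the fundamental theorem of calculus. Since $\int_{-\infty}^{\tau_{2}}\tau^{-2}\,\mathrm{d}\tau=1/(-\tau_{2})$ converges, the integral $\int_{-\infty}^{\tau_{2}}\phi'(\tau)\,\mathrm{d}\tau$ is absolutely convergent, proving that $u_{0}\equiv\lim_{\tau\to-\infty}\phi(\tau)$ exists. Writing $\phi(\tau_{2})-u_{0}=\int_{-\infty}^{\tau_{2}}\phi'(\tau)\,\mathrm{d}\tau$ and inserting the tail estimate $\int_{-\infty}^{\tau_{2}}\mathcal{O}(\xi^{2}/(\omega\tau^{2}))\,\mathrm{d}\tau=\mathcal{O}(\xi^{2}/(-\omega\tau_{2}))$ then yields $u_{s}(\tau)=\omega\tau+u_{0}+\mathcal{O}(\xi^{2}/(-\omega\tau))$, as claimed.

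The one point requiring care — and the main, though modest, obstacle — is justifying that $\omega_{s}(\tau)$ stays bounded away from zero, so that the factors $(-\tau\omega_{s}(\tau))^{-1}$ and $\omega_{s}(\tau)^{-1}$ appearing in the error terms are themselves controlled. This is exactly where the hypothesis $\omega>0$ and the estimate of \prettyref{lem:omega-s-estimate} are used, and it is the reason the analysis is carried out for $\omega_{s}$ rather than $\omega_{ef}$, whose forcing term decays only like $(-\tau)^{-1}$ (cf.\ \eqref{eq:s-eom}) and would produce a non-integrable $\mathcal{O}(\xi/\tau)$ contribution to $\phi'$.
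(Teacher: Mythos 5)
Your proposal is correct and takes essentially the same approach as the paper's proof: both extract the bound $u_{s}'(\tau)-\omega=\mathcal{O}(\xi^{2}/(\omega\tau^{2}))$ from \prettyref{eq:u-eom} combined with \prettyref{lem:omega-s-estimate} and the boundedness of $\sn$, then integrate to obtain simultaneously the existence of $u_{0}$ and the claimed error term. The only difference is cosmetic: you phrase convergence via absolute integrability of $\frac{\mathrm{d}}{\mathrm{d}\tau}\left(u_{s}(\tau)-\omega\tau\right)$, whereas the paper uses the equivalent Cauchy-type estimate on differences $\left|(u_{s}(\tau_{2})-\omega\tau_{2})-(u_{s}(\tau_{1})-\omega\tau_{1})\right|$.
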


\begin{proof}
From \prettyref{eq:u-eom}, 
\[
u_{s}'(\tau)=\omega\left(1+\mathcal{O}\left(\frac{\xi^{2}}{\omega^{2}\tau_{2}^{2}}\right)\right)\left(1+\mathcal{O}\left(\frac{\xi^{2}}{\omega^{2}\tau_{2}^{2}}\right)\mathcal{O}\left(1\right)\right)=\omega\left(1+\mathcal{O}\left(\frac{\xi^{2}}{\omega^{2}\tau_{2}^{2}}\right)\right).
\]
 Thus if $\tau_{1}\leq\tau_{2},$ then 
\[
\left|(u(\tau_{2})-\omega\tau_{2})-(u(\tau_{1})-\omega\tau_{1})\right|=\int_{\tau_{1}}^{\tau_{2}}\mathcal{O}\left(\frac{\xi^{2}}{\omega\tau_{2}^{2}}\right)\,\mathrm{d}\tau=\mathcal{O}\left(\frac{\xi^{2}}{-\omega\tau_{2}}\right).
\]
 Since this approaches zero as $\tau_{2}\to-\infty$, the limit $u_{0}$ exists, and the claimed asymptotic holds.  
\end{proof}
\begin{proof}[Proof of \prettyref{thm:oscillatory}]
\label{pf:oscillatory} To show that $(\omega,u_{0})\mapsto(\lambda\omega,u_{0})$, one applies the transformation \eqref{eq:symmetry3} to \prettyref{eq:approx-quartic}. It remains to prove the error estimate. From the definitions \eqref{eq:def-s} of $s(\tau)$ and \eqref{eq:polar-def} of Jacobi polar coordinates, 
\[
ef(\tau)=\xi/(-3\tau)+s(\tau)=\omega_{s}(\tau)\sn(u_{s}(\tau))+\mathcal{O}(\xi/(-\tau))\ \textrm{ as }\tau\to-\infty.
\]
 Note that since $\sn'(u)$ is bounded, it follows that $\sn(u+\epsilon)=\sn(u)+\mathcal{O}(\epsilon)$. Combining this with the estimates of \prettyref{thm:omega-s}, 
\begin{align*}
ef(\tau) & =\omega\left(1+\mathcal{O}\left(\frac{\xi^{2}}{\omega^{2}\tau^{2}}\right)\right)\left(\sn(\omega\tau+u_{0})+\mathcal{O}(\xi^{2}/(-\omega\tau))\right)+\mathcal{O}(\xi/(-\tau))\\
 & =\omega\sn(\omega\tau+u_{0})+\mathcal{O}((\xi+\xi^{2})/(-\tau)),
\end{align*}
 as desired. 
\end{proof}
\begin{proof}[Proof of \prettyref{thm:approx-w}]
\label{pf:approx-w} The parameter $\omega$ which appeared in the proof of \prettyref{thm:oscillatory} was indeed the same $\omega$ defined as $\lim_{\tau\to-\infty}\omega_{ef}(\tau)$. In order to show that $\omega>0$ for the two given cases, by \prettyref{lem:oscillation-condition} and \prettyref{lem:omega-positive} it suffices to show that the trajectory converges to $\bm{\infty}$ in the infinite past. In the case $0\leq\xi<2$, \prettyref{lem:asymptotic-table} implies that all non-zero trajectories limit to $\bm{\infty}$ in the infinite past, and $\omega_{ef}(\tau)>0$ implies that the trajectory is nonzero. For general $\xi$, if $\omega_{ef}(\tau_{1})>\tfrac{4}{3}\xi/(-\tau_{1})$ then \prettyref{lem:D-negative-omega-zero} implies that $D(N_{1})>0$, and then \prettyref{lem:dichotomy} implies that the trajectory limits to $\bm{\infty}$. 
\end{proof}

\section{\label{app:gaugefields}Homogeneous gauge fields}

Throughout this appendix, we investigate the properties of general $\mathrm{SU}(2)$ gauge fields $A=A_{\mu}^{a}(\tau,\vec{x})$ which are not subject to any particular equation of motion. We make use of the following notation: 
\[
A^{a}=(A_{0}^{a},\vec{A}^{a}),\ \implies\ A=(A_{0},\vec{A}).
\]
 Thus $\vec{A}=\vec{A}_{i}^{a}$ denotes a $3\times3$ matrix, while $A=A_{\mu}^{a}$ denotes a $3\times4$ matrix.

In \prettyref{app:homo-gauge} we introduce definitions of homogeneity and isotropy for a gauge field. Then we prove that when $A$ is homogeneous, there is a gauge where $A_{0}=0$ and $\vec{A}(\tau,\vec{x})=\vec{A}(\tau)$. In \prettyref{app:iso-gauge} we prove that if $A$ is moreover isotropic, then there is a gauge in which the standard\footnote{This ansatz dates back at least to 1989 (see \cite{Verbin:1989sg}). A convincing justification of this ansatz appears in Section 5.1 of \cite{Maleknejad:2012fw}. Our mathematically rigorous proof extends this work, in the case of $\mathrm{SU}(2)$, by accounting for additional edge cases and explicitly constructing the necessary gauge transformations. } ansatz $A_{i}^{a}(\tau)=f(\tau)\,\delta_{i}^{a}$ holds. We describe in \prettyref{app:global-symmetries} the diagonal $\mathrm{SO}(3)$ subgroup which fixes a homogeneous and isotropic gauge-field background, and how this is useful for the decomposition of perturbations. In \prettyref{app:quant-aniso}, we introduce some observables for any homogeneous gauge field, and we prove that they are gauge-invariant (with a very minor caveat). We use these observables to quantify isotropy and anisotropy. These observables are also used in \prettyref{subsec:Anisotropic-background-fields} to plot the time-evolution of a non-isotropic background gauge field. Finally in \prettyref{app:isotrop-proof}, we use these observables to complete the proof of \prettyref{thm:isotropy} from \prettyref{app:iso-gauge}. 

\subsection{\label{app:homo-gauge}Homogeneous and temporal gauge }

In this subsection, we define homogeneity for gauge fields, and we construct a gauge suitable for studying such fields.

A gauge field is homogeneous when every translation is equivalent to a gauge transformation, i.e.
\begin{defn}
\label{def:homogeneous}A gauge field $A$ is \emph{homogeneous} if for every spatial translation by $\Delta\vec{x}$, there exists a gauge transformation $g_{\Delta\vec{x}}(\tau,\vec{x})$ such that 
\begin{equation}
A(\tau,\vec{x}+\Delta\vec{x})=g_{\Delta\vec{x}}(\tau,\vec{x})\cdot A(\tau,\vec{x}),\label{eq:homogeneous-definition}
\end{equation}
where $\cdot$ denotes a gauge transformation: 
\begin{equation}
g(\tau,\vec{x})\cdot A_{\mu}(\tau,\vec{x})\equiv g(\tau,\vec{x})A_{\mu}(\tau,\vec{x})g(\tau,\vec{x})^{-1}+(i/e)\left(\partial_{\mu}g(\tau,\vec{x})\right)g(\tau,\vec{x})^{-1}.\label{eq:gauge-formula}
\end{equation}
\end{defn}

Obvious examples of gauge fields which are homogeneous are those which satisfy the following condition.
\begin{defn}
\label{def:homogeneous-gauge}A gauge field $A(\tau,\vec{x})$ is said to be in \emph{homogeneous gauge} if $A(\tau,\vec{x})=A(\tau)$, i.e. $A(\tau,\vec{x})$ does not depend on $\vec{x}$.
\end{defn}

The following lemma is unsurprising yet not completely obvious:
\begin{lem}
\label{lem:homogeneous}If $A(\tau,\vec{x})$ is homogeneous in the sense of \prettyref{def:homogeneous}, then there exists some gauge transformation which puts it into homogeneous gauge.
\end{lem}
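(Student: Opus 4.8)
The plan is to use the spatial origin as a reference point and to build the required gauge transformation directly from the homogeneity data. Setting $\vec{x}=0$ and $\Delta\vec{x}=\vec{x}$ in \prettyref{eq:homogeneous-definition}, for each $\vec{x}$ there is a gauge transformation $g_{\vec{x}}(\tau,0)$ with $A(\tau,\vec{x})=g_{\vec{x}}(\tau,0)\cdot A(\tau,0)$. If these can be assembled into a single smooth gauge transformation $h(\tau,\vec{x})\equiv g_{\vec{x}}(\tau,0)^{-1}$, then $h\cdot A=A(\tau,0)$ is manifestly independent of $\vec{x}$, which places $A$ in homogeneous gauge in the sense of \prettyref{def:homogeneous-gauge}. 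Thus the entire content of the lemma is the assertion that the pointwise-existing transformations $g_{\vec{x}}$ can be chosen to depend \emph{smoothly} on $(\tau,\vec{x})$; this is precisely where the statement is, as the paper puts it, not completely obvious, since \prettyref{def:homogeneous} guarantees only existence of $g_{\Delta\vec{x}}$ for each fixed $\Delta\vec{x}$, with no canonical or regular choice.

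To produce a smooth choice I would pass to the infinitesimal description. Differentiating \prettyref{eq:homogeneous-definition} in $\Delta\vec{x}$ at $\Delta\vec{x}=0$ (where $g_{0}=\mathbb{1}$) shows that each spatial translation acts on $A$ as an infinitesimal gauge transformation: there exist $\mathfrak{su}(2)$-valued functions $\lambda_{j}(\tau,\vec{x})$, $j=1,2,3$, such that $\partial_{j}A_{\mu}=\mathbf{D}_{\mu}^{(A)}\lambda_{j}$, with $\mathbf{D}_{\mu}^{(A)}$ the covariant derivative of \prettyref{eq:infinitesimal_gauge}. The transformation undoing the $\vec{x}$-dependence is then the solution of the overdetermined first-order system $\partial_{j}h=-ie\,h\,\lambda_{j}$ normalized by $h(\tau,\vec{0})=\mathbb{1}$, with $\tau$ carried along as a parameter. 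Because $\mathbb{R}^{3}$ is contractible and $\mathrm{SU}(2)$ is simply connected, I would integrate this system as a path-ordered exponential of the $\lambda_{j}$ along straight rays from the origin, obtaining an $h$ that is smooth jointly in $(\tau,\vec{x})$ and $\mathrm{SU}(2)$-valued with no global monodromy obstruction. By construction $\partial_{j}(h\cdot A)=0$, so $h\cdot A=A(\tau,0)$ is homogeneous, and the same integration works uniformly in $\tau$, so $h$ is a legitimate $\tau$-dependent gauge transformation.

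The main obstacle is the integrability of this system. Equality of mixed partials $\partial_{i}\partial_{j}A=\partial_{j}\partial_{i}A$ must translate into a flatness/cocycle condition on the generators, schematically $\partial_{i}\lambda_{j}-\partial_{j}\lambda_{i}-ie[\lambda_{i},\lambda_{j}]=0$, and this has to be established rather than assumed. The subtlety is that $\lambda_{j}$ is fixed by $A$ only up to the stabilizer of $A$ (the infinitesimal gauge transformations annihilating $A$), so a careless choice of generators need not satisfy the condition on the nose. I would resolve this by first controlling the stabilizer, which for the relevant configurations is at most a discrete or one-dimensional subgroup, and then removing the residual freedom by a normalization orthogonal to the stabilizer directions, in the spirit of the generalized Coulomb condition of Sec.~\ref{sec:GCG}. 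With such a canonical choice of $\lambda_{j}$, the integrability condition follows directly from writing $\partial_{i}\partial_{j}A$ in two ways via homogeneity, the path-ordered integration becomes path-independent, and the construction of $h$ goes through, completing the proof.
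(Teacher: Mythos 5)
You have mislocated the content of the lemma, and the step you call ``manifest'' is in fact false. Smoothness of $\vec{x}\mapsto g_{\vec{x}}(\tau,0)$ is \emph{not} sufficient: take $A\equiv0$ and $g_{\Delta\vec{x}}(\tau,\vec{x})=u(\Delta\vec{x})$ for an arbitrary smooth $u:\mathbb{R}^{3}\to\mathrm{SU}(2)$ with $u(0)=\mathbb{1}$. Each $g_{\Delta\vec{x}}$ is a legitimate (spatially constant) gauge transformation satisfying \prettyref{eq:homogeneous-definition}, and the family is smooth in $\Delta\vec{x}$, yet $h(\tau,\vec{x})=u(\vec{x})^{-1}$ gives $(h\cdot A)_{j}=(i/e)(\partial_{j}h)h^{-1}$, which is generically $\vec{x}$-dependent (e.g.\ $u(\vec{x})=\exp(i(x^{1})^{2}\mathbf{T}_{1})$). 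The reason is that the inhomogeneous term of \prettyref{eq:gauge-formula} applied to $h$ involves the derivative with respect to the \emph{translation parameter}, whereas the hypothesis $A(\tau,\vec{x})=g_{\vec{x}}(\tau,0)\cdot A(\tau,0)$ constrains only the \emph{base-point} derivative of each $g_{\vec{x}}$; identifying the two requires compatibility of the family under composition of translations. That compatibility is the cocycle identity $g_{\Delta\vec{x}_{1}+\Delta\vec{x}_{2}}(\tau,\vec{x})=g_{\Delta\vec{x}_{2}}(\tau,\vec{x}+\Delta\vec{x}_{1})\,g_{\Delta\vec{x}_{1}}(\tau,\vec{x})$ (modulo the stabilizer of $A$), which the paper derives by comparing one translation by $\Delta\vec{x}_{1}+\Delta\vec{x}_{2}$ with two successive translations. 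It — not smoothness — is the engine of the proof: from it one gets \prettyref{eq:gauge-id3}, hence $g_{\Delta\vec{x}}(\tau,\vec{x})=h(\tau,\vec{x}+\Delta\vec{x})^{-1}h(\tau,\vec{x})$ for $h(\tau,\vec{x})\equiv g_{\vec{x}-\vec{x}_{0}}(\tau,\vec{x}_{0})^{-1}$, and substituting into \prettyref{eq:homogeneous-definition} yields \prettyref{eq:const-gf} in one line, with no differentiation in $\Delta\vec{x}$ and no integrability analysis.

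Your replacement construction has two further gaps. First, it is circular at exactly the point you flagged: to define the generators $\lambda_{j}$ you differentiate $g_{\Delta\vec{x}}$ with respect to $\Delta\vec{x}$ at $\Delta\vec{x}=0$, but \prettyref{def:homogeneous} supplies the family only pointwise in $\Delta\vec{x}$, with, as you say, ``no canonical or regular choice''; assuming a differentiable choice of the family is assuming a form of the conclusion. Second, the integrability argument is incomplete: antisymmetrizing $\partial_{i}\partial_{j}A_{\mu}=\partial_{j}\partial_{i}A_{\mu}$ against $\partial_{j}A_{\mu}=\mathbf{D}_{\mu}\lambda_{j}$ yields only $\mathbf{D}_{\mu}\left(\partial_{i}\lambda_{j}-\partial_{j}\lambda_{i}-ie[\lambda_{i},\lambda_{j}]\right)=0$, i.e.\ the curvature of the $\lambda$-connection is covariantly constant — an element of the infinitesimal stabilizer of $A$ — not zero. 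Removing it is precisely the canonical choice you defer, and your supporting claim that the stabilizer is ``at most a discrete or one-dimensional subgroup'' fails in the cases that matter most: for $A\equiv0$ (the configuration relevant in the far past) the stabilizer consists of all constant gauge transformations and is three-dimensional, and for the rank-one fields of \prettyref{exa:local-gt} it is one-dimensional. The paper's finite, purely algebraic route bypasses the parameter-regularity issue, the choice of generators, and the flatness question altogether, at the modest price of the footnote convention that equalities of gauge transformations are understood modulo the stabilizer subgroup.
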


\begin{proof}
The strategy will be to use the definition of homogeneity to construct a gauge transformation $h(\tau,\vec{x})$ such that the gauge-transformed field $h(\tau,\vec{x})\cdot A(\tau,\vec{x})$ is spatially constant. Equivalently, $h(\tau,\vec{x})$ should satisfy 
\begin{equation}
h(\tau,\vec{x}+\Delta\vec{x})\cdot A(\tau,\vec{x}+\Delta\vec{x})=h(\tau,\vec{x})\cdot A(\tau,\vec{x})\label{eq:const-gf}
\end{equation}
 for all $\Delta\vec{x}$. 

By comparing a single translation by $\Delta\vec{x}_{1}+\Delta\vec{x}_{2}$ with two translations $\Delta\vec{x}_{1}$ followed by $\Delta\vec{x}_{2}$, we have 
\begin{align*}
g_{\Delta\vec{x}_{1}+\Delta\vec{x}_{2}}(\tau,\vec{x})\cdot A(\tau,\vec{x}) & =A(\tau,\vec{x}+\Delta\vec{x}_{1}+\Delta\vec{x}_{2})\\
 & =g_{\Delta\vec{x}_{2}}(\tau,\vec{x}+\Delta\vec{x}_{1})\cdot A(\tau,\vec{x}+\Delta\vec{x}_{1})\\
 & =g_{\Delta\vec{x}_{2}}(\tau,\vec{x}+\Delta\vec{x}_{1})\,g_{\Delta\vec{x}_{1}}(\tau,\vec{x})\cdot A(\tau,\vec{x}).
\end{align*}
 It follows that\footnote{Technically, we are assuming here that a gauge transformation is determined by its action on $A$. It is however possible that there exist global symmetries which fix $A$. In this case, the gauge transformation is determined only up to this subgroup. This issue is easily remedied by declaring that all equalities of gauge transformations are understood modulo this subgroup of symmetries. Then no further modifications to the proof are necessary.} 
\[
g_{\Delta\vec{x}_{1}+\Delta\vec{x}_{2}}(\tau,\vec{x})=g_{\Delta\vec{x}_{2}}(\tau,\vec{x}+\Delta\vec{x}_{1})\,g_{\Delta\vec{x}_{1}}(\tau,\vec{x}).
\]
 By making appropriate substitutions for $\Delta\vec{x}_{1}$, $\Delta\vec{x}_{2}$ and $\vec{x}$ in this identity, immediate consequences are 
\begin{align}
g_{0}(\vec{x}) & =\mathrm{Id},\nonumber \\
g_{\Delta\vec{x}}(\vec{x})^{-1} & =g_{-\Delta\vec{x}}(\vec{x}+\Delta\vec{x}),\nonumber \\
g_{\Delta\vec{x}}(\vec{x}) & =g_{\vec{x}-\vec{x}_{0}+\Delta\vec{x}}(\vec{x}_{0})\,g_{\vec{x}-\vec{x}_{0}}(\vec{x}_{0})^{-1},\label{eq:gauge-id3}
\end{align}
 for all values of $\vec{x}$, $\Delta\vec{x}$ and $\vec{x}_{0}$. 

Now fix a basepoint $\vec{x}_{0}$ and define $h(\tau,\vec{x})\equiv g_{\vec{x}-\vec{x}_{0}}(\tau,\vec{x}_{0})^{-1}$. It follows from \prettyref{eq:gauge-id3} that 
\[
g_{\Delta\vec{x}}(\tau,\vec{x})=h(\tau,\vec{x}+\Delta\vec{x})^{-1}h(\tau,\vec{x})
\]
 for all values of $\vec{x}$ and $\Delta\vec{x}$. Substituting this identity into \prettyref{eq:homogeneous-definition} we obtain \prettyref{eq:const-gf} as desired.
\end{proof}
\begin{thm}
\label{thm:hom-temp-gauge}If $A(\tau,\vec{x})$ is a homogeneous gauge field in the sense of \prettyref{def:homogeneous}, then it may be gauge-transformed simultaneously into homogeneous gauge and temporal gauge, so that $A_{0}(\tau,\vec{x})=0$ and $\vec{A}(\tau,\vec{x})=\vec{A}(\tau)$. 
\end{thm}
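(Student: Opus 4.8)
The plan is to build on \prettyref{lem:homogeneous}, which already provides a gauge in which $A$ is spatially constant, and then perform one further, purely time-dependent gauge transformation to kill $A_0$. First I would invoke \prettyref{lem:homogeneous} to assume without loss of generality that $A$ is in homogeneous gauge, so that $A_0(\tau,\vec{x})=A_0(\tau)$ and $\vec{A}(\tau,\vec{x})=\vec{A}(\tau)$. The key observation is that any gauge transformation $g=g(\tau)$ depending on $\tau$ alone automatically preserves homogeneous gauge: since $\partial_i g=0$, formula \eqref{eq:gauge-formula} gives $g\cdot\vec{A}=g\,\vec{A}\,g^{-1}$, which is still $\vec{x}$-independent. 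Thus I only need to find a time-dependent $g(\tau)\in\mathrm{SU}(2)$ that sets the transformed temporal component to zero, and homogeneity comes for free.

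Next I would write down the condition for temporal gauge. Demanding that the $0$-component of $g\cdot A$ vanish, i.e.\ $g A_0 g^{-1}+(i/e)(\partial_\tau g)g^{-1}=0$, and right-multiplying by $g$, yields the linear matrix ODE
\begin{equation}
\partial_\tau g(\tau)=ie\,g(\tau)\,A_0(\tau),\qquad g(\tau_0)=\mathrm{Id}.
\end{equation}
Since $A_0(\tau)=A_0^a(\tau)\mathbf{T}_a$ is smooth (or at least locally integrable) in $\tau$, this is a linear ODE with well-behaved coefficients, so a unique solution exists on the whole $\tau$-interval; concretely it is the path-ordered exponential of $ie\int A_0$.

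I would then verify that the solution stays in the gauge group $\mathrm{SU}(2)$. Writing $M\equiv ie A_0$, the anti-Hermiticity $M^\dagger=-M$ (which holds because the $\mathbf{T}_a$ are Hermitian and $A_0^a$ real) implies that $P\equiv g^\dagger g$ satisfies $\partial_\tau P=[P,M]$ with $P(\tau_0)=\mathrm{Id}$; since $P\equiv\mathrm{Id}$ solves this initial value problem, uniqueness forces $g$ to be unitary. Likewise $\partial_\tau\det g=\det g\cdot\mathrm{tr}(M)=0$ because $M$ is traceless, so $\det g\equiv1$ and $g(\tau)\in\mathrm{SU}(2)$. Applying this $g(\tau)$ sends $A_0\mapsto 0$ by construction and, by the observation of the first paragraph, leaves $\vec{A}$ spatially constant, completing the proof.

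The argument is essentially routine, so there is no deep obstacle; the one point requiring care is precisely the simultaneity claim, namely that achieving temporal gauge does not spoil homogeneity. This is handled cleanly by restricting to $\vec{x}$-independent gauge transformations, which is only legitimate because \prettyref{lem:homogeneous} has already removed the spatial dependence. A secondary technical point is confirming group membership of the ODE solution, which follows from the Lie-algebra structure (Hermitian, traceless generators) as sketched above; the footnote caveat in \prettyref{lem:homogeneous} about global symmetries fixing $A$ plays no essential role here.
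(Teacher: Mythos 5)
Your proposal is correct and follows essentially the same route as the paper's proof: invoke \prettyref{lem:homogeneous} to pass to homogeneous gauge, solve the linear ODE $\partial_\tau g = \pm ie\,g\,A_0$ for a purely time-dependent $g(\tau)$, and note that $\vec{x}$-independence of $g$ preserves homogeneity. Your added verification that the path-ordered exponential stays in $\mathrm{SU}(2)$ (via anti-Hermiticity and tracelessness of $ieA_0$) is a detail the paper leaves implicit, and your sign in the ODE is the one consistent with \prettyref{eq:gauge-formula}; the discrepancy with the paper's displayed sign is immaterial to the argument.
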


\begin{proof}
By \prettyref{lem:homogeneous} we may assume that $A$ is in homogeneous gauge. To put $A(\tau)$ into temporal gauge, one solves for the gauge transformation $g(\tau)$ in 
\[
\tfrac{\mathrm{d}}{\mathrm{d}\tau}g(\tau)=-ieg(\tau)A_{0}(\tau).
\]
 Since $g(\tau)$ does not depend on $\vec{x}$, it preserves homogeneous gauge.
\end{proof}
We warn that sometimes homogeneous and temporal gauge is incomplete as gauge fixing: we shall see that local gauge freedom remains in the case of \prettyref{exa:local-gt}. 

\subsection{\label{app:iso-gauge}Homogeneous and isotropic \texorpdfstring{$\mathrm{SU}(2)$}{SU(2)} gauge fields}

Suppose that $A_{\mu}^{a}(\tau,\vec{x})$ is a homogeneous gauge field, as described in the previous subsection. If every rotation is equivalent to a gauge transformation, then we say that $A$ is isotropic:
\begin{defn}
\label{def:isotropic-gf}A homogeneous gauge field $A(\tau,\vec{x})$ is \emph{isotropic} if for every $R\in\mathrm{SO}(3)_{\mathrm{spatial}}\subset\mathrm{SO}(4)$ there exists a gauge transformation $g_{R}(\tau,\vec{x})$ such that 
\begin{equation}
A_{\nu}^{a}(\tau,\vec{x})\,R_{\mu}^{\nu}=\left(g_{R}(\tau,\vec{x})\cdot A(\tau,\vec{x})\right)_{\mu}^{a},\label{eq:iso-gt}
\end{equation}
 where again $\cdot$ denotes a gauge transformation as in \eqref{eq:gauge-formula}. 
\end{defn}

For the remainder of \prettyref{app:iso-gauge} we assume $A$ to be isotropic, and moreover given by $\vec{A}(\tau)$ in the homogeneous and temporal gauge of \prettyref{thm:hom-temp-gauge}. Our goal is to prove the following theorem. 
\begin{thm}
\label{thm:isotropy}Let $A_{\mu}^{a}(\tau,\vec{x})$ be a homogeneous and isotropic $\mathrm{SU}(2)$ gauge field (which is real-analytic\footnote{This technical condition is used to rule out the non-physical pathology described in \prettyref{exa:no-ansatz}. It is satisfied by the solutions of any well-behaved system of ODEs and constraint equations, and thus it applies to all solutions considered in this paper. }). Then up to a gauge transformation, $A$ is of the form 
\begin{equation}
A_{0}^{a}(\tau)=0,\quad A_{i}^{a}(\tau)=f(\tau)\,\delta_{i}^{a}\label{eq:iso-diag}
\end{equation}
 for some real-valued function $f(\tau)$. 
\end{thm}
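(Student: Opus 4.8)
The plan is to start from the normal form already secured in \prettyref{thm:hom-temp-gauge}: after a gauge transformation we may assume $A$ is in homogeneous and temporal gauge, so that $A_0^a=0$ and the spatial part is encoded in a single $\tau$-dependent real $3\times3$ matrix $\mathbf A(\tau)$ with entries $\mathbf A(\tau)_{ai}=A_i^a(\tau)$. In this gauge a constant gauge transformation $g\in\mathrm{SU}(2)$ acts on $\mathbf A$ by left multiplication through its adjoint $\mathrm{Ad}(g)\in\mathrm{SO}(3)_{\mathrm{gauge}}$, while a spatial rotation $R\in\mathrm{SO}(3)_{\mathrm{spatial}}$ acts by right multiplication $\mathbf A\mapsto\mathbf A R$. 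The first task is to rephrase isotropy (\prettyref{def:isotropic-gf}) in these terms. The spatial components of \eqref{eq:iso-gt} read $\mathbf A(\tau)R=\mathrm{Ad}(g_R)\,\mathbf A(\tau)$, whereas the time component reads $0=(g_R\cdot A)_0^a$; since $A_0=0$ this forces $(\partial_0 g_R)g_R^{-1}=0$, i.e.\ the relevant gauge transformation is independent of $\tau$. I would then argue that $g_R$ may also be taken independent of $\vec x$ — so that $O_R\equiv\mathrm{Ad}(g_R)\in\mathrm{SO}(3)$ is a genuine constant matrix satisfying $O_R\mathbf A(\tau)=\mathbf A(\tau)R$ for all $\tau$ simultaneously — using the gauge-invariant observables constructed in \prettyref{app:quant-aniso} to detect that the residual freedom left by \prettyref{exa:local-gt} does not obstruct this reduction.

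Granting the constant relation $O_R\mathbf A(\tau)=\mathbf A(\tau)R$, the algebraic core is short. Comparing $(\mathbf A R)^{T}(\mathbf A R)$ with $(O_R\mathbf A)^{T}(O_R\mathbf A)$ and using orthogonality of $R$ and $O_R$ gives $R^{T}\bigl(\mathbf A(\tau)^{T}\mathbf A(\tau)\bigr)R=\mathbf A(\tau)^{T}\mathbf A(\tau)$ for every $R\in\mathrm{SO}(3)$. Hence $\mathbf A(\tau)^{T}\mathbf A(\tau)$ commutes with all rotations and must be a scalar, $\mathbf A(\tau)^{T}\mathbf A(\tau)=\lambda(\tau)^{2}\,I$; equivalently all three singular values of $\mathbf A(\tau)$ coincide. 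Where $\lambda\neq0$ this means $\mathbf A(\tau)=\lambda(\tau)\,Q(\tau)$ with $Q(\tau)\in\mathrm{O}(3)$, and the relation $O_R=Q(\tau)RQ(\tau)^{-1}$ being $\tau$-independent forces $Q(\tau_1)^{-1}Q(\tau_2)$ to centralize $\mathrm{SO}(3)$, hence to equal the identity: $Q$ is a constant orthogonal matrix. Lifting $Q^{-1}$ (or $-Q^{-1}$, in the orientation-reversing case) to $\mathrm{SU}(2)$ and applying it as a constant residual gauge transformation brings $\mathbf A$ to $\pm\lambda(\tau)\,I$, i.e.\ $A_i^a(\tau)=f(\tau)\,\delta_i^a$ with $f=\pm\lambda$, the two orientations being absorbed into the sign of $f$. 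The degenerate possibility $\lambda\equiv0$ is simply $f\equiv0$.

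The two genuinely delicate points — and the reason the theorem carries the real-analyticity hypothesis — are exactly the steps I flagged as needing care. First, the reduction of $g_R$ to a spatially constant gauge transformation cannot be taken for granted, since \prettyref{exa:local-gt} shows that homogeneous temporal gauge can leave residual local gauge freedom; this is where the gauge-invariant observables of \prettyref{app:quant-aniso} do the work, by reducing the statement to one about the gauge orbit of $\mathbf A(\tau)$ rather than about a particular representative. Second, and I expect this to be the main obstacle, is the global-in-$\tau$ matching across the zero locus $\{\tau:\lambda(\tau)=0\}$: on this set $\mathbf A$ vanishes, $Q(\tau)$ is undefined, and a priori the constant value of $Q$ could differ on different connected components of $\{\lambda\neq0\}$, which would prevent a single choice of global gauge from producing the ansatz \eqref{eq:iso-diag} with one function $f$. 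Real-analyticity of $A$ forces $\lambda$ to vanish only on a discrete set (unless $\mathbf A\equiv0$), so $Q$ extends to one constant matrix and $f$ becomes a single analytic function passing through its isolated zeros with a consistent sign convention. Assembling these analytic-continuation and orientation-bookkeeping arguments — carried out with the observables of \prettyref{app:quant-aniso} — completes the proof in \prettyref{app:isotrop-proof}.
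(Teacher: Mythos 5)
Your proposal is correct in outline for the non-degenerate case, but it has a genuine gap at its foundation: the claim that the gauge transformation $g_R$ realizing a rotation ``may also be taken independent of $\vec x$,'' so that a single constant matrix $O_R\in\mathrm{SO}(3)$ satisfies $O_R\vec A(\tau)=\vec A(\tau)R$. This claim is false in general, and it fails exactly in the case the theorem's analyticity footnote and the whole of \prettyref{app:isotrop-proof} exist to handle. The field of \prettyref{exa:local-gt} (constant $\vec A$ with $A_1^1=\sigma_1$, all other components zero) is gauge-equivalent to the zero field and hence isotropic in the sense of \prettyref{def:isotropic-gf}, yet no constant $O_R$ can satisfy $O_R\vec A=\vec A R$ for generic $R$: left multiplication preserves the one-dimensional row space of $\vec A$, while right multiplication by $R$ rotates it, and a line cannot be invariant under all rotations. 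So for this field the rotations are implemented only by genuinely $\vec x$-dependent gauge transformations, and your subsequent conclusion $\vec A^T\vec A\propto I$ is simply wrong for it (a rank-one matrix is not a multiple of an orthogonal one). Your appeal to the observables of \prettyref{app:quant-aniso} ``to detect that the residual freedom does not obstruct this reduction'' is not an argument --- it is a placeholder for what is in fact the bulk of the paper's proof: \prettyref{thm:rk1-is-gauge-artifact}, which requires showing that each time slice is either isotropic or rank-one (via the identity $I_4(\vec A)^2=3I_2(\vec A)I_3(\vec A)^2$ inherited from isotropy of the magnetic field, \prettyref{lem:isotropic-gf-relation} and \prettyref{thm:iso-or-rank-one}), that the rank-one locus is open (\prettyref{lem:iso-open}), that $\vec A(\tau)$ is constant there (\prettyref{lem:rank-one-const}, using the electric field's nullspace), and hence that the rank-one case is globally \prettyref{exa:local-gt} and can be gauged away. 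Note also that your later centralizer argument for the constancy of $Q(\tau)$ relies on the same unavailable constant $O_R$, so it is contingent on the same gap.

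The paper circumvents the $\vec x$-dependence problem in a way worth internalizing: it never claims $g_R$ is constant. Instead it passes to the field strengths $\vec{\mathcal E}$ and $\vec{\mathcal B}$, which transform tensorially (no inhomogeneous $\partial g\,g^{-1}$ term), so that evaluating the isotropy relation at a single point $\vec x_0$ yields honest constant rotation matrices and hence matrix isotropy of $\vec{\mathcal E}$ and $\vec{\mathcal B}$ slice by slice (\prettyref{lem:elec-is-iso}); gauge-invariant polynomial constraints on $\vec A$ then force the isotropic-or-rank-one dichotomy. Constancy of the orthogonal factor is likewise obtained without any constant $O_R$, from per-slice isotropy of $\vec{\mathcal E}=\tfrac{\mathrm{d}}{\mathrm{d}\tau}\vec A$ via an eigenvalue argument (\prettyref{lem:non-analytic}); if you fill the gap by importing \prettyref{thm:rk1-is-gauge-artifact}, you would still need to replace your centralizer step by something of this kind. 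The parts of your proposal that do survive --- the observation that $\vec A^T\vec A$ commuting with all rotations forces it to be scalar, and the analyticity gluing of the orthogonal factor across zeros of $f$ (your version of \prettyref{thm:analytic} and \prettyref{exa:no-ansatz}) --- are sound and arguably cleaner than the paper's SVD bookkeeping, but they apply only after the degenerate case has been excluded. The rank-one case is not a technicality to be deferred; it is the main content of the theorem.
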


\begin{proof}
By \prettyref{thm:hom-temp-gauge} we may assume that $A$ is in homogeneous and temporal gauge. In \prettyref{lem:SVD-SOn} we introduce an $\mathrm{SO}(n)$-version of singular value decomposition (SVD) to be used with the $3\times3$ matrix $\vec{A}(\tau)=A_{i}^{a}(\tau)$. \prettyref{def:isotropic-matrix} introduces a notion of isotropy for $3\times3$ matrices, and \prettyref{lem:matrix-isotropy} characterizes such matrices. In \prettyref{exa:local-gt} we observe that despite $A$ being an isotropic gauge field, the matrix $\vec{A}(\tau)$ can be rank-one instead of isotropic. Later in \prettyref{app:quant-aniso} we develop the tools needed to deal with this subtlety, and in \prettyref{app:isotrop-proof} we prove \prettyref{thm:rk1-is-gauge-artifact}, that the rank-one case may be eliminated by a gauge transformation, and we may therefore assume that $\vec{A}(\tau)$ is always an isotropic matrix. \prettyref{lem:non-analytic} proves that $\vec{A}(\tau)=f(\tau)S$ for some $S\in\mathrm{SO}(n)$ along any interval where $f(\tau)\neq0$, but \prettyref{exa:no-ansatz} shows that different intervals may require different $S$. \prettyref{thm:analytic} proves that when $A$ is real-analytic, a single matrix $S$ suffices for all $\tau$. Finally, a global gauge transformation whose adjoint is $S^{-1}$ brings $A$ into the desired form.  
\end{proof}
To prove the supporting lemmas, we first introduce an $\mathrm{SO}(n)$-version of SVD.
\begin{lem}
\label{lem:SVD-SOn}Let $M$ be any real $n\times n$ matrix with $n$ odd. Then there exist real $n\times n$ matrices $G$, $\Sigma$, and $R$ such that $G,R\in\mathrm{SO}(n)$, $\Sigma$ is diagonal, and $M=G\Sigma R^{T}$. This decomposition is not unique. The diagonal entries $(\sigma_{1},\ldots,\sigma_{n})$ of $\Sigma$ are called the \emph{singular values} of $M$. The singular values can be chosen to be non-negative (resp. negative) when $\det M$ is non-negative (resp. negative). Moreover, they can be chosen to satisfy $\left|\sigma_{1}\right|\geq\cdots\geq\left|\sigma_{n}\right|$. When subject to these two constraints, the singular values are uniquely determined. 
\end{lem}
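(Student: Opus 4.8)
The plan is to deduce this $\mathrm{SO}(n)$-version of the decomposition from the ordinary singular value decomposition and then repair the determinants of the two orthogonal factors by sign flips, crucially exploiting that $n$ is odd. First I would invoke the standard SVD: every real $n\times n$ matrix factors as $M=U\Sigma_{0}V^{T}$, where $U,V$ are orthogonal and $\Sigma_{0}=\mathrm{diag}(\sigma_{1}^{(0)},\dots,\sigma_{n}^{(0)})$ with $\sigma_{1}^{(0)}\geq\cdots\geq\sigma_{n}^{(0)}\geq0$. At this stage $\det U,\det V\in\{\pm1\}$ but need not be $+1$, so this is not yet of the desired form.

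Next I would record the elementary sign freedoms. Writing $M=\sum_{i}\sigma_{i}\,g_{i}r_{i}^{T}$ in terms of the columns $g_{i}$ of $G$ and $r_{i}$ of $R$, the product is left unchanged if we independently replace $g_{i}\mapsto\alpha_{i}g_{i}$ and $r_{i}\mapsto\beta_{i}r_{i}$ with $\alpha_{i},\beta_{i}\in\{\pm1\}$, provided we set $\sigma_{i}=\alpha_{i}\beta_{i}\sigma_{i}^{(0)}$. These flips give $\det G=\det U\prod_{i}\alpha_{i}$ and $\det R=\det V\prod_{i}\beta_{i}$, while on every nonzero mode $\mathrm{sgn}(\sigma_{i})=\alpha_{i}\beta_{i}$. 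For the generic case $\det M\neq0$ (so all $\sigma_{i}^{(0)}>0$ and $s:=\mathrm{sgn}(\det M)=\det U\,\det V$), I would choose any $\beta_{i}$ with $\prod_{i}\beta_{i}=\det V$ and then set $\alpha_{i}=s\,\beta_{i}$. This yields $\alpha_{i}\beta_{i}=s$ for every $i$, so all signed singular values acquire the common sign $s$; and since $n$ is odd, $\prod_{i}\alpha_{i}=s^{n}\prod_{i}\beta_{i}=s\,\det V=\det U$, whence $\det G=(\det U)^{2}=1$ and likewise $\det R=1$. The identity $s^{n}=s$ is exactly where the oddness of $n$ is used, and it is what makes the two determinants reparable simultaneously with a uniform sign.

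The degenerate case $\det M=0$ I would treat separately: a vanishing singular value $\sigma_{j}^{(0)}=0$ allows me to flip $g_{j}$ or $r_{j}$ without altering $M$ or any $\sigma_{i}$, so I can independently force $\det G=\det R=+1$ while keeping all $\sigma_{i}\geq0$, which is consistent with $s=+1$ since $\det M=0\geq0$. For uniqueness, I would note that $M^{T}M=R\,\Sigma^{2}R^{T}$, so the numbers $\sigma_{i}^{2}$ are precisely the eigenvalues of $M^{T}M$ and are determined as a multiset; the ordering constraint $|\sigma_{1}|\geq\cdots\geq|\sigma_{n}|$ then fixes the magnitudes, and the prescribed sign $s$ fixes the remaining freedom, so the signed singular values are unique.

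I expect the main obstacle to be the simultaneous determinant repair in the second paragraph, namely verifying that one can always land on $\det G=\det R=+1$ together with a single consistent sign pattern for the $\sigma_{i}$; everything hinges on the parity bookkeeping, and it is precisely the hypothesis that $n$ is odd that closes the argument. The remaining pieces (standard SVD, the $\det M=0$ edge case, and uniqueness via $M^{T}M$) are routine.
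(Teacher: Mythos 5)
Your proof is correct and takes essentially the same route as the paper: both start from the ordinary singular value decomposition with orthogonal factors and repair the determinants by sign flips, with the oddness of $n$ closing the parity count. The difference is granularity. The paper flips globally --- it sets $G=(\det G_{0})G_{0}$, $R=(\det R_{0})R_{0}$ and $\Sigma=\pm\Sigma_{0}$ according to $\mathrm{sgn}(\det M)$, so that $(\det G_{0})^{n+1}=1$ plays the role of your identity $s^{n}=s$ --- whereas you flip columns individually. This finer bookkeeping pays off in the degenerate case: when $\det M=0$, $M\neq0$ and $\det G_{0}\det R_{0}=-1$, the paper's prescription yields either $G\Sigma R^{T}=-M$ (with $\Sigma\geq0$) or $G\Sigma R^{T}=M$ with non-positive singular values, so read literally it does not produce the claimed decomposition with non-negative $\Sigma$; fixing this requires exactly the zero-singular-value column flip you invoke. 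Your explicit treatment of $\det M=0$ therefore fills a step the paper's own proof glosses over (and, as displayed, gets wrong). Your uniqueness argument via the eigenvalues of $M^{T}M$ plus the ordering and sign constraints is the standard one and matches what the paper implicitly relies on.
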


\begin{proof}
The standard SVD for real $n\times n$ matrices produces matrices $G_{0}$, $\Sigma_{0}$, and $R_{0}$ with $G_{0},R_{0}\in\mathrm{O}(n)$ and $\Sigma_{0}$ diagonal. Moreover, we may choose the diagonal entries $\sigma_{1}^{0},\ldots,\sigma_{n}^{0}$ of $\Sigma_{0}$ to satisfy $\sigma_{1}^{0}\geq\cdots\geq\sigma_{n}^{0}\geq0$ in which case the singular values are uniquely determined. Upon setting 
\[
G=\begin{cases}
+G_{0} & \textrm{ if }\det G_{0}=1,\\
-G_{0} & \textrm{ if }\det G_{0}=-1,
\end{cases}\quad R=\begin{cases}
+R_{0} & \textrm{ if }\det R_{0}=1,\\
-R_{0} & \textrm{ if }\det R_{0}=-1,
\end{cases}\quad\Sigma=\begin{cases}
+\Sigma_{0} & \textrm{ if }\det M\geq0,\\
-\Sigma_{0} & \textrm{ if }\det M\leq0,
\end{cases}
\]
 the hypotheses of \prettyref{lem:SVD-SOn} are satisfied. 
\end{proof}
In analogy with \prettyref{def:isotropic-gf} of an isotropic gauge field, we make a similar definition for matrices:
\begin{defn}
\label{def:isotropic-matrix}A $3\times3$ matrix $M$ is said to be \emph{isotropic} if for each $R\in\mathrm{SO}(3)$ there exists some $G_{R}\in\mathrm{SO}(3)$ such that $MR=G_{R}M$.
\end{defn}

In the case $M=\vec{A}(\tau_{0})$, the matrix $M$ is isotropic if and only if every spatial rotation is equivalent to a \emph{global} gauge transformation for the restriction of $\vec{A}(\tau)$ to the time-slice $\tau_{0}$. 

We have the following characterization of isotropic matrices:
\begin{lem}
\label{lem:matrix-isotropy}A $3\times3$ matrix $M$ is isotropic if and only if $M$ is a scalar multiple of an orthogonal matrix, or equivalently if the singular values of $M$ satisfy $\left|\sigma_{1}\right|=\left|\sigma_{2}\right|=\left|\sigma_{3}\right|$. 
\end{lem}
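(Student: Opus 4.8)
The plan is to prove the stated equivalence in three short steps, using the $\mathrm{SO}(n)$-version of singular value decomposition from \prettyref{lem:SVD-SOn}. First I would settle the equivalence of the two characterizations of $M$ on the right-hand side. Writing $M=G\Sigma R^{T}$ with singular values $\sigma_{1},\sigma_{2},\sigma_{3}$, the condition $\left|\sigma_{1}\right|=\left|\sigma_{2}\right|=\left|\sigma_{3}\right|=:\sigma$ forces $\Sigma=\pm\sigma\,\mathrm{Id}$, since \prettyref{lem:SVD-SOn} lets us take the singular values with a common sign (determined by $\det M$). Hence $M=\pm\sigma\,GR^{T}$ with $GR^{T}\in\mathrm{SO}(3)$, i.e. $M$ is a scalar multiple of an orthogonal matrix. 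Conversely, any $M=cO$ with $O\in\mathrm{O}(3)$ satisfies $M^{T}M=c^{2}\,\mathrm{Id}$, so all its singular values equal $\left|c\right|$. This reduces the lemma to the equivalence of isotropy with $\left|\sigma_{1}\right|=\left|\sigma_{2}\right|=\left|\sigma_{3}\right|$.

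For the ``if'' direction I would assume $M=cO$ with $O\in\mathrm{O}(3)$ and verify \prettyref{def:isotropic-matrix} directly. Given $R\in\mathrm{SO}(3)$, set $G_{R}\equiv ORO^{-1}$. Then $G_{R}$ is orthogonal with $\det G_{R}=\det R=1$, so $G_{R}\in\mathrm{SO}(3)$, and $G_{R}M=ORO^{-1}\cdot cO=cOR=MR$, which is exactly the required relation.

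The substance is the ``only if'' direction. Assuming $M$ is isotropic, for each $R\in\mathrm{SO}(3)$ there is some $G_{R}\in\mathrm{SO}(3)$ with $MR=G_{R}M$. Forming $(MR)^{T}(MR)$ and using $G_{R}^{T}G_{R}=\mathrm{Id}$ gives
\[
R^{T}\left(M^{T}M\right)R=M^{T}M\qquad\text{for all }R\in\mathrm{SO}(3).
\]
Thus the symmetric matrix $S\equiv M^{T}M$ commutes with every rotation, i.e.\ $SR=RS$ for all $R\in\mathrm{SO}(3)$. The key step is to conclude that $S=\lambda\,\mathrm{Id}$. I would argue this elementarily rather than invoke Schur's lemma: diagonalize $S$ in an orthonormal eigenbasis, and observe that if two eigenvalues $\lambda_{i}\neq\lambda_{j}$ differed, then rotating by a generic angle $\theta$ in the corresponding coordinate plane would send the $(i,i)$ entry of $R^{T}SR$ to $\lambda_{i}\cos^{2}\theta+\lambda_{j}\sin^{2}\theta\neq\lambda_{i}$, contradicting $R^{T}SR=S$. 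Hence all eigenvalues coincide, $M^{T}M=\lambda\,\mathrm{Id}$ with $\lambda\geq0$, and therefore $\sigma_{i}^{2}=\lambda$ for each $i$, giving $\left|\sigma_{1}\right|=\left|\sigma_{2}\right|=\left|\sigma_{3}\right|$.

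The main obstacle is really just this final implication — that an $\mathrm{SO}(3)$-invariant symmetric matrix is scalar — which is precisely the place where the irreducibility of the defining representation of $\mathrm{SO}(3)$ on $\mathbb{R}^{3}$ enters; the explicit rotation computation above bypasses any representation theory and is insensitive to the sign ambiguity in the singular values permitted by \prettyref{lem:SVD-SOn}. A convenient secondary point is that passing to $M^{T}M$ never requires $M$ to be invertible, so the degenerate case $M=0$ (all singular values zero) is handled uniformly with no special treatment.
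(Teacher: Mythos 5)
Your proof is correct, and its first two steps essentially coincide with the paper's: the paper also proves the lemma as a cycle of implications, using the $\mathrm{SO}(n)$-SVD of \prettyref{lem:SVD-SOn} to show that equal-magnitude singular values give a scalar multiple of an orthogonal matrix, and then the same conjugation trick you use (the paper takes $M=\lambda S$ with $S\in\mathrm{SO}(3)$ and $G_{R}=SRS^{T}$, you take $G_{R}=ORO^{-1}$) for the ``if'' direction. The genuine divergence is in the hard direction, isotropy $\Rightarrow$ equal singular values. The paper stays entirely inside the SVD framework: writing $M=G_{1}\Sigma_{1}R_{1}^{T}$, it applies the isotropy hypothesis to just \emph{two} rotations, $R=R_{1}$ and $R=R_{1}C$ with $C$ a cyclic permutation matrix, and uses the fact that left multiplication by an orthogonal matrix preserves column norms to conclude that the column norms of $M$ equal both $\left(\left|\sigma_{1}\right|,\left|\sigma_{2}\right|,\left|\sigma_{3}\right|\right)$ and its cyclic shift, forcing equality. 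You instead pass to the Gram matrix $S=M^{T}M$, derive $R^{T}SR=S$ for all $R\in\mathrm{SO}(3)$, and show by an explicit two-plane rotation of eigenvalues that such an invariant symmetric matrix must be scalar. Your route buys independence from the SVD and its sign conventions in this direction (the eigenvalues of $M^{T}M$ are $\sigma_{i}^{2}$, so sign ambiguities never enter), makes explicit where irreducibility of the standard representation of $\mathrm{SO}(3)$ is used, and generalizes verbatim to $n\times n$ matrices and $\mathrm{SO}(n)$. The paper's route buys economy: it needs only two concrete rotations rather than a family, and it recycles the decomposition already set up in \prettyref{lem:SVD-SOn} without introducing any new object. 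Both arguments handle the degenerate case $M=0$ without special treatment.
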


\begin{proof}
We will demonstrate the following logical implications:
\begin{align*}
\left|\sigma_{1}\right|=\left|\sigma_{2}\right|=\left|\sigma_{3}\right| & \implies M\textrm{ is a scalar multiple of an orthogonal matrix}\\
 & \implies M\textrm{ is isotropic}\\
 & \implies\left|\sigma_{1}\right|=\left|\sigma_{2}\right|=\left|\sigma_{3}\right|.
\end{align*}
 Since these three conditions will form a loop of implications, they are logically equivalent. 

If $\left|\sigma_{1}\right|=\left|\sigma_{2}\right|=\left|\sigma_{3}\right|$, then it is easy to express $\Sigma$ from \prettyref{lem:SVD-SOn} as $\lambda\equiv\sqrt[3]{\det M}$ times some diagonal matrix in $\mathrm{SO}(n)$. In this case, \prettyref{lem:SVD-SOn} expresses $M$ as $\lambda$ times a product of three elements of $\mathrm{SO}(n)$, establishing the first implication. 

For the next implication, suppose that $M=\lambda S$ for $S\in\mathrm{SO}(n)$. Then for any $R\in\mathrm{SO}(n)$, we verify that $MR=\lambda SR=SRS^{T}(\lambda S)$, so that $M$ is isotropic by \prettyref{def:isotropic-matrix} with $G_{R}=SRS^{T}$. 

For the final implication, suppose that $M$ is isotropic, and that $M=G_{1}\Sigma_{1}R_{1}^{T}$ is an SVD in the sense of \prettyref{lem:SVD-SOn}. By taking $R=R_{1}$ in the definition of isotropy, we conclude that $G_{R_{1}}M=MR_{1}=G_{1}\Sigma_{1}$. Since left-multiplication by orthogonal matrices (here $G_{R_{1}}$ and $G_{1}$) preserves the norms of matrix columns, we conclude that the columns of $M$ have norms $\left(\left|\sigma_{1}\right|,\left|\sigma_{2}\right|,\left|\sigma_{3}\right|\right)$. Next we consider the result of taking $R=R_{1}C$ where 
\begin{gather*}
C=\left(\begin{array}{ccc}
0 & 0 & 1\\
1 & 0 & 0\\
0 & 1 & 0
\end{array}\right).
\end{gather*}
 We similarly conclude that the columns of $M$ have norms $\left(\left|\sigma_{2}\right|,\left|\sigma_{3}\right|,\left|\sigma_{1}\right|\right)$. Therefore $\left|\sigma_{1}\right|=\left|\sigma_{2}\right|=\left|\sigma_{3}\right|$. 
\end{proof}
We note that if the uniqueness constraints of \prettyref{lem:SVD-SOn} are imposed, then $\left|\sigma_{1}\right|=\left|\sigma_{2}\right|=\left|\sigma_{3}\right|$ implies that $\sigma_{1}=\sigma_{2}=\sigma_{3}$. Thus if $M$ is isotropic, then it is possible to find an SVD of the type in \prettyref{lem:SVD-SOn} where $\Sigma$ is a scalar multiple of the identity matrix.

At this stage, it is tempting to mistakenly claim that since $A(\tau)$ is an isotropic gauge field in the sense of \prettyref{def:isotropic-gf}, it follows that for each $\tau_{0}$ the matrix $\vec{A}(\tau_{0})$ must be isotropic in the sense of \prettyref{def:isotropic-matrix}. However, the following example demonstrates that an isotropic gauge field can indeed have a non-isotropic matrix.
\begin{example}
\label{exa:local-gt}For any real number $\sigma_{1}$, consider the gauge transformation $\exp\left(-ie\sigma_{1}\mathbf{T}_{1}x^{1}\right)$ applied to the zero gauge field $A_{\mu}^{a}=0$. The transformed gauge field has $A_{1}^{1}=\sigma_{1}$ with all other components zero. Thus a nonzero constant gauge field with singular values $(\sigma_{1},0,0)$ is gauge-equivalent to the zero gauge field.
\end{example}

The following definition is useful for characterizing \prettyref{exa:local-gt} in a coordinate-independent manner:
\begin{defn}
\label{def:rank-one}A $3\times3$ matrix $M$ is said to be \emph{rank-one} if exactly one of the singular values of $M$ is non-zero.
\end{defn}

Thus \prettyref{exa:local-gt} shows that if $\vec{A}(\tau)$ is any rank-one matrix which is constant in $\tau$, then $\vec{A}(\tau)$ is gauge-equivalent to zero. In \prettyref{app:isotrop-proof} we show that this is the only example in which a non-isotropic matrix can arise from an isotropic gauge field. Since the zero matrix is isotropic, we may assume, up to gauge, that the matrix $\vec{A}(\tau)$ is always isotropic (see \prettyref{thm:rk1-is-gauge-artifact}).

In order to analyze $\vec{A}(\tau)$, it is useful to work with the electric field matrix 
\[
E_{i}^{b}(\tau,\vec{x})\equiv-a(\tau)^{-2}F_{0i}^{b}(\tau,\vec{x}).
\]
For convenience, we introduce the comoving quantity 
\begin{align*}
\vec{\mathcal{E}}(\tau,\vec{x}) & \equiv-a(\tau)^{2}\vec{E}(\tau,\vec{x})=F_{0i}^{b}(\tau,\vec{x}).
\end{align*}
 When $\vec{A}(\tau)$ is in homogeneous and temporal gauge, it follows that 
\begin{equation}
\vec{\mathcal{E}}(\tau)=\tfrac{\mathrm{d}}{\mathrm{d}\tau}\vec{A}(\tau).\label{eq:E-is-derivative}
\end{equation}
 Moreover, $\vec{\mathcal{E}}(\tau)$ is always an isotropic matrix when $A$ is homogeneous and isotropic:
\begin{lem}
\label{lem:elec-is-iso}If $\vec{A}(\tau)$ is a homogeneous and isotropic $\mathrm{SU}(2)$ gauge field in homogeneous and temporal gauge, then the electric field matrix $\vec{E}(\tau_{0})$ and the corresponding comoving quantity $\vec{\mathcal{E}}(\tau_{0})$ are isotropic matrices for all $\tau_{0}$. 
\end{lem}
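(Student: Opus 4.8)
The plan is to exploit the fact that, unlike the gauge potential itself, the field strength transforms \emph{covariantly} under gauge transformations---by the adjoint action, with no inhomogeneous term. This will convert the isotropy hypothesis on $A$ (which involves an awkward inhomogeneous gauge transformation) into the clean matrix condition of \prettyref{def:isotropic-matrix} for $\vec{\mathcal{E}}(\tau)$.

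First I would record the relevant transformation laws. In homogeneous and temporal gauge we have $A_{0}=0$ and $\vec{A}(\tau,\vec{x})=\vec{A}(\tau)$, so by \prettyref{eq:E-is-derivative} the matrix $\mathcal{E}_{i}^{a}(\tau)=F_{0i}^{a}(\tau)=(A_{i}^{a})'(\tau)$ is $\vec{x}$-independent. Writing $(\rho_{R}A)_{\mu}^{a}=A_{\nu}^{a}R_{\mu}^{\nu}$ for the action of a spatial rotation $R\in\mathrm{SO}(3)_{\mathrm{spatial}}$ on the spacetime index, one checks that since $R$ fixes the time index, $\rho_{R}A$ stays in homogeneous and temporal gauge, and its field strength satisfies the equivariance $F[\rho_{R}A]_{0i}^{a}=F[A]_{0j}^{a}R_{i}^{j}=(\mathcal{E}R)_{i}^{a}$. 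Simultaneously, for any gauge transformation $g$ the field strength obeys $\mathbf{F}[g\cdot A]_{\mu\nu}=g\,\mathbf{F}[A]_{\mu\nu}\,g^{-1}$, that is $F[g\cdot A]_{\mu\nu}^{a}=(\mathrm{Ad}\,g)_{b}^{a}F[A]_{\mu\nu}^{b}$ with $\mathrm{Ad}\,g\in\mathrm{SO}(3)$; this is where the covariance is used.

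Next I would feed the isotropy relation \prettyref{eq:iso-gt}, namely $\rho_{R}A=g_{R}\cdot A$, into these two laws. Taking field strengths of both sides and reading off the $(0,i)$ components gives, at every point $(\tau,\vec{x})$,
\[
\mathcal{E}(\tau)\,R=\mathrm{Ad}\,g_{R}(\tau,\vec{x})\;\mathcal{E}(\tau).
\]
The left-hand side does not depend on $\vec{x}$, so I may freeze $g_{R}$ at a single point and set $G_{R}:=\mathrm{Ad}\,g_{R}(\tau_{0},\vec{x}_{0})\in\mathrm{SO}(3)$, obtaining $\mathcal{E}(\tau_{0})\,R=G_{R}\,\mathcal{E}(\tau_{0})$ for every $R$---exactly \prettyref{def:isotropic-matrix}. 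Since $\vec{E}(\tau)=-a(\tau)^{-2}\vec{\mathcal{E}}(\tau)$ is a scalar multiple and the condition $MR=G_{R}M$ is invariant under $M\mapsto\lambda M$, the matrix $\vec{E}(\tau_{0})$ is isotropic too; the degenerate case $\mathcal{E}(\tau_{0})=0$ is trivially isotropic.

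The main obstacle is conceptual rather than computational: the gauge transformation $g_{R}$ realizing a rotation is not required to preserve the homogeneous and temporal gauge and generically depends on $\vec{x}$, so one cannot argue directly at the level of $A$. Both difficulties dissolve on passing to the field strength---its covariant transformation law discards the inhomogeneous gauge term, and the $\vec{x}$-independence of $\mathcal{E}(\tau)R$ lets me evaluate $g_{R}$ at one point to extract the needed $\mathrm{SO}(3)$ matrix $G_{R}$.
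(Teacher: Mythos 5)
Your proposal is correct and follows essentially the same route as the paper's own proof: both exploit the covariant (adjoint) transformation of the field strength to convert the gauge-field isotropy relation $\rho_{R}A=g_{R}\cdot A$ into the matrix identity $\mathcal{E}R=G_{R}\,\mathcal{E}$, then freeze $g_{R}$ at a single point to extract the required $\mathrm{SO}(3)$ element, and finish by noting $\vec{E}\propto\vec{\mathcal{E}}$. The only difference is presentational: you spell out explicitly the rotation-equivariance of $F_{0i}$ and the preservation of temporal gauge, which the paper leaves implicit in its ``consequently'' step.
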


\begin{proof}
First note that $\vec{\mathcal{E}}(\tau,\vec{x})$ satisfies the tensorial transformation property $\vec{\mathcal{E}}(\tau,\vec{x})\mapsto G(\tau,\vec{x})\,\vec{\mathcal{E}}(\tau,\vec{x})$ under a gauge transformation $A\mapsto g(\tau,\vec{x})\cdot A$, where $G(\tau,\vec{x})$ denotes the adjoint $\mathrm{SO}(3)$ matrix corresponding to $g(\tau,\vec{x})$. Next we observe that $\vec{\mathcal{E}}(\tau_{0})$ is an isotropic matrix for each $\tau_{0}$, which follows from the isotropy of $A$ as follows. For any $R\in\mathrm{SO}(3)$ we have 
\begin{equation}
g_{R}(\tau_{0},\vec{x})\cdot\vec{A}(\tau_{0})=\vec{A}(\tau_{0})\,R^{T},\label{eq:isotropy-A_sp}
\end{equation}
 and consequently 
\[
G_{R}(\tau_{0},\vec{x})\,\vec{\mathcal{E}}(\tau_{0})=\vec{\mathcal{E}}(\tau_{0})\,R^{T}
\]
 for each $\vec{x}$. Choosing an arbitrary point $\vec{x}_{0}$, it follows that $G_{R}(\tau_{0},\vec{x}_{0})$ provides the necessary group element for matrix isotropy (\prettyref{def:isotropic-matrix}) to be satisfied. Since $\vec{E}(\tau_{0})$ and $\vec{\mathcal{E}}(\tau_{0})$ are proportional, the same $G_{R}(\tau_{0},\vec{x}_{0})$ applies also to $\vec{E}(\tau)$. 

We note that this proof uses \prettyref{def:isotropic-gf} of an isotropic gauge field without assuming that $\vec{A}(\tau_{0})$ is an isotropic matrix. Thus this lemma will be useful in \prettyref{app:isotrop-proof} for understanding the rank one case. The same argument straightforwardly extends to prove that the magnetic field is isotropic.
\end{proof}
\begin{lem}
\label{lem:non-analytic}Let $\vec{A}(\tau)$ be a homogeneous and isotropic $\mathrm{SU}(2)$ gauge field in homogeneous and temporal gauge such that $\vec{A}(\tau)$ is isotropic. For any interval along which $\vec{A}(\tau)\neq0$, there exists some $S\in\mathrm{SO}(n)$ such that 
\[
\vec{A}(\tau)=f(\tau)S.
\]
\end{lem}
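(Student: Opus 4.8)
\textbf{Proof proposal for \prettyref{lem:non-analytic}.}

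The plan is to show that along any interval where $\vec{A}(\tau)$ is a nonzero isotropic matrix, the $\mathrm{SO}(3)$-part of its singular value decomposition can be chosen to be constant in $\tau$. The starting point is \prettyref{lem:matrix-isotropy}, which tells us that at each fixed $\tau$ an isotropic matrix is a scalar multiple of an orthogonal matrix, so we may write $\vec{A}(\tau)=f(\tau)\,S(\tau)$ with $f(\tau)\equiv\sqrt[3]{\det\vec{A}(\tau)}$ and $S(\tau)\in\mathrm{SO}(3)$. On the interval where $\vec{A}(\tau)\neq0$ we have $f(\tau)\neq0$, so $S(\tau)=f(\tau)^{-1}\vec{A}(\tau)$ is well-defined and inherits smoothness from $\vec{A}(\tau)$ and $f(\tau)$. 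The goal is thus to prove that $S(\tau)$ is in fact constant, i.e. $\tfrac{\mathrm{d}}{\mathrm{d}\tau}S(\tau)=0$.

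First I would differentiate the relation $\vec{A}(\tau)=f(\tau)S(\tau)$ to obtain
\[
\vec{\mathcal{E}}(\tau)=\tfrac{\mathrm{d}}{\mathrm{d}\tau}\vec{A}(\tau)=f'(\tau)\,S(\tau)+f(\tau)\,S'(\tau),
\]
using \prettyref{eq:E-is-derivative}. The key structural input is that $S'(\tau)S(\tau)^{T}$ is antisymmetric, since $S(\tau)\in\mathrm{SO}(3)$ implies $S(\tau)S(\tau)^{T}=\mathrm{Id}$, whose derivative gives $S'S^{T}+S(S')^{T}=0$. Right-multiplying the displayed equation by $S(\tau)^{T}$ and dividing by $f(\tau)$ yields
\[
\tfrac{1}{f(\tau)}\vec{\mathcal{E}}(\tau)S(\tau)^{T}=\tfrac{f'(\tau)}{f(\tau)}\,\mathrm{Id}+S'(\tau)S(\tau)^{T},
\]
so the right-hand side is a scalar multiple of the identity plus an antisymmetric matrix. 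On the other hand, \prettyref{lem:elec-is-iso} guarantees that $\vec{\mathcal{E}}(\tau)$ is itself an isotropic matrix, hence by \prettyref{lem:matrix-isotropy} equal to $g(\tau)T(\tau)$ for some scalar $g(\tau)$ and $T(\tau)\in\mathrm{SO}(3)$. The plan is then to exploit the fact that both $\vec{A}(\tau)$ and $\vec{\mathcal{E}}(\tau)$ are simultaneously proportional to orthogonal matrices together with the constraint that $\vec{\mathcal{E}}(\tau)S(\tau)^{T}/f(\tau)$ decomposes as (scalar)$\,\times\,\mathrm{Id}$ plus an antisymmetric piece: the matrix $\vec{\mathcal{E}}(\tau)S(\tau)^{T}$ is a scalar multiple of an orthogonal matrix (being a product of two such), and the only way a scalar multiple of $\mathrm{SO}(3)$ can split as scalar$\,\times\,\mathrm{Id}$ plus antisymmetric is if the antisymmetric part vanishes. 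I would verify this last algebraic fact directly in dimension three: if $\lambda R = c\,\mathrm{Id}+K$ with $R\in\mathrm{SO}(3)$, $K$ antisymmetric, then comparing symmetric parts forces $\lambda R_{\mathrm{sym}}=c\,\mathrm{Id}$, and an orthogonal matrix whose symmetric part is a multiple of the identity must itself be a multiple of the identity (in $\mathrm{SO}(3)$ this means $R=\mathrm{Id}$), whence $K=0$. Concluding $S'(\tau)S(\tau)^{T}=0$ and therefore $S'(\tau)=0$, so $S(\tau)=S$ is constant on the interval.

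The main obstacle I anticipate is the final algebraic step establishing that the antisymmetric part must vanish, since a naive argument might overlook the possibility that $\vec{\mathcal{E}}(\tau)$ and $\vec{A}(\tau)$ are proportional to \emph{different} orthogonal matrices $T(\tau)$ and $S(\tau)$, in which case $\vec{\mathcal{E}}(\tau)S(\tau)^{T}=g(\tau)f(\tau)^{-1}T(\tau)S(\tau)^{T}$ is a scalar times the orthogonal matrix $T(\tau)S(\tau)^{T}$, which need not be the identity. The resolution is to observe that the decomposition into scalar-plus-antisymmetric already pins down the orthogonal factor: matching the trace and the symmetric traceless part of the equation forces $T(\tau)S(\tau)^{T}$ to have vanishing symmetric traceless part, and in $\mathrm{SO}(3)$ this is enough (together with $\det=+1$) to conclude it equals the identity. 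I would be careful to handle the degenerate sub-case $g(\tau)=0$ (i.e. $\vec{\mathcal{E}}(\tau)=0$) separately, where the conclusion $S'(\tau)=0$ is immediate. This completes the identification $\vec{A}(\tau)=f(\tau)S$ with a single constant $S\in\mathrm{SO}(3)$ over the interval, which is precisely the claim; the subsequent \prettyref{thm:analytic} then upgrades this to a global statement using real-analyticity.
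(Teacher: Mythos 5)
Your proposal is correct and follows essentially the same route as the paper's proof: both write $\vec{A}(\tau)=f(\tau)S(\tau)$ via \prettyref{lem:matrix-isotropy}, differentiate using \prettyref{eq:E-is-derivative}, invoke the isotropy of $\vec{\mathcal{E}}(\tau)$ from \prettyref{lem:elec-is-iso}, and exploit the antisymmetry of the Maurer--Cartan-type term ($S'S^{T}$ in your notation, $S^{-1}S'$ in the paper's). The only difference is the concluding linear-algebra step: the paper computes the eigenvalues of $f'I+fS^{-1}S'$ and uses that a scalar multiple of an orthogonal matrix has eigenvalues of equal modulus, whereas you compare symmetric parts and use the (correct) rigidity fact that an element of $\mathrm{SO}(3)$ whose symmetric part is a multiple of the identity must equal the identity --- both finishes are valid and of comparable weight.
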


\begin{proof}
From the characterization of \prettyref{lem:matrix-isotropy}, we know that $\vec{A}(\tau)=f(\tau)S(\tau)$, where $f(\tau)$ is nowhere zero along the interval in question. We wish to show that $S(\tau)$ is constant. Since $\vec{\mathcal{E}}(\tau)$ is an isotropic matrix and $S(\tau)\in\mathrm{SO}(3)$, we know from \prettyref{lem:matrix-isotropy} that $S(\tau)^{-1}\vec{\mathcal{E}}(\tau)$ is also an isotropic matrix, and 
\[
S(\tau)^{-1}\vec{\mathcal{E}}(\tau)=S(\tau)^{-1}\tfrac{\mathrm{d}}{\mathrm{d}\tau}\vec{A}(\tau)=f'(\tau)\,I+f(\tau)\,S^{-1}(\tau)\tfrac{\mathrm{d}}{\mathrm{d}\tau}S(\tau).
\]
 In particular, this must be a scalar multiple of an orthogonal matrix for each $\tau$. Recall that the eigenvalues of an orthogonal matrix all have absolute-value one. Thus the eigenvalues of $S(\tau)^{-1}\vec{\mathcal{E}}(\tau)$ must all have the same absolute value. The matrix $S^{-1}(\tau)\tfrac{\mathrm{d}}{\mathrm{d}\tau}S(\tau)$ is antisymmetric, and thus its components can be written as $\xi^{b}(\tau)\varepsilon_{aib}$ for some $\xi^{b}(\tau)$. The three eigenvalues of $S(\tau)^{-1}\vec{\mathcal{E}}(\tau)$ are 
\[
\left\{ f'(\tau),f'(\tau)\pm if(\tau)\sqrt{\xi^{b}(\tau)\xi^{b}(\tau)}\right\} .
\]
 For these to have the same absolute value, it must be that $\xi^{b}(\tau)=0$ since by assumption $f(\tau)\neq0$. Thus $S^{-1}(\tau)\tfrac{\mathrm{d}}{\mathrm{d}\tau}S(\tau)=0$ and hence $S(\tau)$ is constant.
\end{proof}
The following example shows that without some additional hypothesis, $S$ need not be constant where $f(\tau)=0$.
\begin{example}
\label{exa:no-ansatz}For any fixed time $\tau_{0}$, any $f(\tau)$ satisfying $f(\tau_{0})=0$, and any distinct $S_{+},S_{-}\in\mathrm{SO}(n)$, consider 
\[
A_{0}(\tau)=0,\quad\vec{A}(\tau)=\begin{cases}
f(\tau)S_{+} & \textrm{if }\tau\geq\tau_{0},\\
f(\tau)S_{-} & \textrm{if }\tau\leq\tau_{0}.
\end{cases}
\]
 This gauge field is homogeneous, isotropic, and not gauge-equivalent to the ansatz of \prettyref{thm:isotropy}. 
\end{example}

Such examples are not expected to occur in practice. If $\vec{A}(\tau)$ solves some well-behaved system of ODEs and constraint equations, then $\vec{A}(\tau)$ is real-analytic, meaning that it has a locally-convergent power-series expansion. The following theorem sharpens \prettyref{lem:non-analytic} by excluding cases such as \prettyref{exa:no-ansatz}.
\begin{thm}
\label{thm:analytic}Let $\vec{A}(\tau)$ be a real-analytic, homogeneous and isotropic $\mathrm{SU}(2)$ gauge field in homogeneous and temporal gauge such that $\vec{A}(\tau)$ is isotropic. There exists some $S\in\mathrm{SO}(n)$ such that 
\begin{equation}
\vec{A}(\tau)=f(\tau)S\ \textrm{ for all }\tau.\label{eq:const-son}
\end{equation}
\end{thm}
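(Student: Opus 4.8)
The plan is to use the real-analyticity hypothesis to rule out exactly the pathology exhibited by \prettyref{exa:no-ansatz}, namely an orthogonal factor $S$ that jumps across a zero of $f$. First I would dispose of the trivial case: if $\vec{A}(\tau)\equiv 0$ then \prettyref{eq:const-son} holds with $f\equiv 0$ and any $S\in\mathrm{SO}(n)$, so assume $\vec{A}$ is not identically zero. The crucial structural input is that $\vec{A}(\tau)$ is isotropic at every time, so by \prettyref{lem:matrix-isotropy} it is a scalar multiple of an orthogonal matrix; in particular $\vec{A}(\tau)=0$ if and only if $\det\vec{A}(\tau)=0$. Since each entry of $\vec{A}(\tau)$ is real-analytic, so is the polynomial $\det\vec{A}(\tau)$, and as it is not identically zero its zeros form a set $\{\tau_j\}$ with no finite accumulation point (by the identity theorem), hence locally finite. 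On each open interval of the complement one has $\vec{A}(\tau)\neq 0$, and \prettyref{lem:non-analytic} supplies a constant $S_j\in\mathrm{SO}(n)$ and a nowhere-vanishing $f_j(\tau)$ with $\vec{A}(\tau)=f_j(\tau)S_j$ there. The whole problem reduces to showing that all the $S_j$ coincide.

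The main obstacle, and the only place where analyticity is genuinely needed, is matching $S_j$ across an isolated zero $\tau_0$. Let $S_-$ and $S_+$ denote the constant matrices on the intervals immediately to the left and right of $\tau_0$. Writing the locally convergent expansion $\vec{A}(\tau)=\sum_{k\geq m}M_k(\tau-\tau_0)^k$ with $M_m\neq 0$ the lowest nonvanishing coefficient, I would observe that on each side $f_\pm(\tau)=\tfrac{1}{3}\,\mathrm{tr}\!\bigl(\vec{A}(\tau)\,S_\pm^{T}\bigr)$ extends analytically through $\tau_0$ and vanishes there, since $\vec{A}(\tau_0)=0$. Comparing lowest-order Taylor terms on the two sides then gives $M_m=c_+S_+=c_-S_-$ for nonzero real leading coefficients $c_\pm$, whence $S_+=(c_-/c_+)\,S_-$. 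Taking determinants yields $1=(c_-/c_+)^3$, and because $c_-/c_+$ is real this forces $c_-/c_+=1$ and therefore $S_+=S_-$. (It is precisely the oddness of $n=3$, i.e.\ $\det(-I)=-1$, that eliminates the sign ambiguity and makes the decomposition rigid; for even $n$ the factor could flip.)

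Finally, propagating the equality $S_-=S_+$ across each zero of the locally finite set $\{\tau_j\}$ shows that all the $S_j$ equal a single $S\in\mathrm{SO}(n)$, so $\vec{A}(\tau)=f_j(\tau)S$ on the complement of $\{\tau_j\}$. Defining $f(\tau)\equiv\tfrac{1}{3}\,\mathrm{tr}\bigl(\vec{A}(\tau)\,S^{T}\bigr)$ globally produces a single real-analytic function that agrees with each $f_j$ and satisfies $f(\tau_j)=0$, giving $\vec{A}(\tau)=f(\tau)S$ for all $\tau$ and establishing \prettyref{eq:const-son}. I expect the matching-at-zeros step to be the crux of the argument; identifying the vanishing loci of $\vec{A}$ with those of $\det\vec{A}$ via isotropy, and the final assembly of the global $f$, are then routine.
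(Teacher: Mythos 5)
Your proposal is correct, but it takes a genuinely different route from the paper. The paper's proof is a single global application of unique continuation: it picks \emph{one} interval on which $\vec{A}(\tau)\neq 0$, where \prettyref{lem:non-analytic} confines $\vec{A}(\tau)$ to the one-dimensional span of a fixed $S$; the eight complementary linear combinations of components of $\vec{A}(\tau)$ are then real-analytic functions vanishing on an interval of positive width, hence identically zero, so $\vec{A}(\tau)\in\mathrm{span}(S)$ for \emph{all} $\tau$ and one simply reads off $f(\tau)$ (e.g.\ as $\tfrac{1}{3}\mathrm{tr}(\vec{A}(\tau)S^T)$). This entirely bypasses the structure of the zero set and the gluing problem that occupies most of your argument. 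Your proof instead localizes: you show the zeros of $\vec{A}$ are isolated (via isotropy plus analyticity of $\det\vec{A}$), apply \prettyref{lem:non-analytic} on each complementary interval, and match the orthogonal factors $S_\pm$ across each isolated zero by comparing leading Taylor coefficients, using determinants to kill the scalar ambiguity. That matching step is sound, and your analytic extension of $f_\pm$ through $\tau_0$ via the trace formula is exactly the right device. One caveat: your parenthetical claim that oddness of $n$ is what makes the theorem work is misleading. For even $n$ your specific bookkeeping would allow $S_+=-S_-$, but the theorem's conclusion would survive unharmed, since the sign can be absorbed into $f$ (replace $f_-$ by $-f_-$); indeed the paper's argument is blind to the parity of $n$. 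So the parity is needed only to make \emph{your} normalization (all $S_j$ literally equal with the original $f_j$) come out, not for the statement itself. In exchange for being longer, your route gives more explicit information — the location of the zeros and how the local pieces fit together — whereas the paper's is the minimal argument.
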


\begin{proof}
According to the principle of unique continuation for real-analytic functions, if a real-analytic function $g(\tau)$ vanishes along any interval of positive width, then $g(\tau)$ is identically zero for all $\tau$. For any interval along which $f(\tau)\neq0$, \prettyref{lem:non-analytic} implies that $\vec{A}(\tau)$ is confined to a one-dimensional subspace of $3\times3$ matrices. Thus there are eight complementary linear combinations of components of $\vec{A}(\tau)$ which vanish along the interval. By the principle of unique continuation, these components vanish for all $\tau$, and thus \prettyref{eq:const-son} holds for all $\tau$. 
\end{proof}
Up to the proof of \prettyref{thm:rk1-is-gauge-artifact} given in \prettyref{app:isotrop-proof}, these results complete the proof of \prettyref{thm:isotropy}.

\subsection{\label{app:global-symmetries}Global symmetries of homogeneous and isotropic \texorpdfstring{$\mathrm{SU}(2)$}{SU(2)} gauge fields}

Understanding the global symmetries of a given field is important for perturbation theory. Namely, if some field is fixed by a global symmetry group, then linear perturbations around that field decompose into irreducible representations of that group. For example, suppose that $A(\tau)$ is any homogeneous $\mathrm{SU}(2)$ gauge field. Then $A(\tau)$ transforms under a pair of global $\mathrm{SO}(3)$ symmetries denoted by $\mathrm{SO}(3)_{\mathrm{gauge}}$ and $\mathrm{SO}(3)_{\mathrm{spatial}}$. The group of spatial rotations is $\mathrm{SO}(3)_{\mathrm{spatial}}$, while $\mathrm{SO}(3)_{\mathrm{gauge}}$ is the adjoint group of global $\mathrm{SU}(2)$ gauge transformations. If $(G,R)\in\mathrm{SO}(3)_{\mathrm{gauge}}\times\mathrm{SO}(3)_{\mathrm{spatial}}$, then the transformation is given by 
\begin{align}
A^{a}{}_{i}(\tau) & \mapsto G^{a}{}_{b}A^{b}{}_{j}(\tau)\left(R^{T}\right)^{j}{}_{i}\quad\left(\vec{A}(\tau)\mapsto G\vec{A}(\tau)R^{T}\right),\label{eq:diagonal-action-space}\\
A^{a}{}_{0}(\tau) & \mapsto G^{a}{}_{b}A^{b}{}_{0}(\tau)\quad\left(A_{0}(\tau)\mapsto GA_{0}(\tau)\right).\label{eq:diagonal-action-time}
\end{align}
Suppose now that $A(\tau)$ is isotropic (and not gauge-equivalent to zero). By \prettyref{thm:isotropy} we can take $A(\tau)$ to be of the form $\vec{A}(\tau)=f(\tau)I$, $A_{0}(\tau)=0$. The subgroup of $\mathrm{SO}(3)_{\mathrm{gauge}}\times\mathrm{SO}(3)_{\mathrm{spatial}}$ which leaves $A(\tau)$ fixed is evidently the diagonal $\mathrm{SO}(3)$ subgroup consisting of pairs $(G,R)$ such that $G=R$. Thus for any homogeneous perturbation $\vec{A}(\tau)+\vec{P}(\tau)\epsilon+\mathcal{O}(\epsilon^{2})$ we may decompose 
\begin{equation}
P_{i}^{a}(\tau)=s(\tau)\delta_{i}^{a}+v^{j}(\tau)\varepsilon_{ija}+T_{i}^{a}(\tau)\label{eq:one-three-five}
\end{equation}
 where $v^{j}$ is a vector, $\varepsilon$ is the Levi-Civita symbol, and $T_{i}^{a}$ is a traceless symmetric tensor, corresponding to the decomposition $\mathbf{3}\otimes\mathbf{3}=\mathbf{1}\oplus\mathbf{3}\oplus\mathbf{5}$. Two remarks are in order:

Firstly, although this decomposition contains a scalar, vector and tensor, it differs from the SVT/helicity decomposition described in \prettyref{sec:helbas}. The latter applies to inhomogeneous perturbations, and it is defined in terms of charges under the $\mathrm{SO}(2)$ subgroup of the diagonal $\mathrm{SO}(3)$ corresponding to rotations around the axis specified by a Fourier mode. In contrast, \prettyref{eq:one-three-five} is a decomposition into irreducible $\mathrm{SO}(3)$ representations. 

Secondly, it's important to note that in the case where $A(\tau)$ is the zero gauge field (i.e. $f(\tau)=0$ for all~$\tau$), it is impossible to sensibly decompose perturbations (at least without some additional structure). This is because all of $\mathrm{SO}(3)_{\mathrm{gauge}}\times\mathrm{SO}(3)_{\mathrm{spatial}}$ acts trivially on the zero gauge field, and $\mathbf{3}_{\mathrm{gauge}}\otimes\mathbf{3}_{\mathrm{spatial}}$ is an irreducible representation of this full group. In contrast, when $f(\tau)\neq0$, $\vec{A}(\tau)$ determines an identification between spatial directions and Lie algebra directions, leading to a distinguished diagonal subgroup and enabling the previous decomposition to proceed.\footnote{For this reason, the analysis of non-abelian gauge theories is actually much easier when a non-zero background field is present. }

\subsection{\label{app:quant-aniso}Quantifying anisotropy}

The purpose of this subsection is to develop the tools used in \prettyref{subsec:Anisotropic-background-fields} to measure the anisotropy of a homogeneous $\mathrm{SU}(2)$ gauge-field background. These are the same tools needed to complete the proof of \prettyref{thm:isotropy}, which is carried out in \prettyref{app:isotrop-proof}. Throughout this subsection, we take $\vec{A}(\tau)$ to be an $\mathrm{SU}(2)$ gauge field in homogeneous and temporal gauge, but is not necessarily isotropic.

Recall from \prettyref{lem:matrix-isotropy} that a $3\times3$ matrix $M$ is \emph{isotropic} when its singular values are all equal in absolute value. We wish to apply this to the case $M=\vec{A}(\tau)$. In this case, left-multiplication by an $\mathrm{SO}(3)$ matrix corresponds to the adjoint action of a spatially-constant $\mathrm{SU}(2)$ gauge transformation. Right-multiplication by (the transpose of) an $\mathrm{SO}(3)$ matrix corresponds to a spatial rotation. We seek scalars which are invariant under both left and right $\mathrm{SO}(3)$ transformations. Since $\dim(\mathrm{SO}(3))=3$, assuming that the two $\mathrm{SO}(3)$ symmetries are nondegenerate, we expect a total of $9-2\times3=3$ independent scalars. From \prettyref{lem:SVD-SOn}, such scalars must be functions of the three singular values. A convenient choice is the polynomials 
\begin{align}
I_{2}(M) & \equiv\left|M\right|^{2}\equiv M_{i}^{a}M_{i}^{a}=\sigma_{1}^{2}+\sigma_{2}^{2}+\sigma_{3}^{2},\label{eq:def-invariants}\\
I_{3}(M) & \equiv\det M_{i}^{a}=\sigma_{1}\sigma_{2}\sigma_{3},\nonumber \\
I_{4}(M) & \equiv\left(\tfrac{1}{2}\varepsilon_{ijk}\varepsilon^{abc}M_{i}^{a}M_{j}^{b}\right)^{2}=\left(\sigma_{2}\sigma_{3}\right)^{2}+\left(\sigma_{3}\sigma_{1}\right)^{2}+\left(\sigma_{1}\sigma_{2}\right)^{2}.\nonumber 
\end{align}
 As shown in \prettyref{thm:invariant-scalars}, any invariant scalar is determined as some function of these three quantities. (They also occur in Sec.~4.1.2 of \cite{mares2010}, where this problem occurs in a slightly different context.)

In the isotropic case $\sigma_{1}=\sigma_{2}=\sigma_{3}=f$, we have 
\begin{align*}
I_{2}(M) & =3f^{2},\qquad I_{3}(M)=f^{3}\qquad I_{4}(M)=3f^{4}.
\end{align*}

In the general case, these scalars satisfy certain inequalities. The inequality of arithmetic and geometric means implies that 
\[
3\sqrt{3}\left|\det M\right|\leq\left|M\right|^{3},
\]
 where the inequality is saturated when $M$ is isotropic (i.e. $\left|\sigma_{1}\right|=\left|\sigma_{2}\right|=\left|\sigma_{3}\right|$). Thus 
\[
-1\leq3\sqrt{3}\frac{I_{3}(M)}{\left|M\right|^{3}}\leq1,
\]
 with isotropy when $3\sqrt{3}\frac{I_{3}(M)}{\left|M\right|^{3}}=\pm1$. 

Note that $I_{4}(M)\geq0$ since it is a sum of squares, and $I_{4}(M)=0$ precisely when $M$ is rank-one or zero. Furthermore, note that 
\[
\left|M\right|^{4}-3I_{4}(M)=\tfrac{1}{6}(2\sigma_{1}^{2}-\sigma_{2}^{2}-\sigma_{3}^{2})^{2}+\textrm{cyclic permutations}\geq0,
\]
 where this inequality is saturated exactly when $M$ is isotropic. In summary,
\begin{gather}
0\leq\frac{3I_{4}(M)}{\left|M\right|^{4}}\leq1,\\
I_{4}(M)=0\textrm{ if and only if \ensuremath{M} is rank one or zero,}\label{eq:two-sv-vanish}\\
3I_{4}(M)-I_{2}(M)^{2}=0\textrm{ if and only if \ensuremath{M} is isotropic.}\label{eq:isotropy-characterization}
\end{gather}

As a sort of polar decomposition, we can consider the radial coordinate $\left|M\right|=\sqrt{I_{2}(M)}$ together with two other quantities which are invariant under scaling. As such, we define 
\begin{equation}
\left(D,E\right)\equiv\left(3\sqrt{3}\frac{I_{3}(M)}{\left|M\right|^{3}},\frac{3I_{4}(M)}{\left|M\right|^{4}}\right)\in\left[-1,1\right]\times\left[0,1\right].\label{eq:DE-def}
\end{equation}
 (Here $E$ is a scalar, and should not be confused with the electrical field $\vec{E}$ used in \prettyref{lem:rank-one-const}.) Not all points inside this rectangle can be realized. The points which are realized belong to the enclosed region in \prettyref{fig:DE-triangle} which resembles a triangle, but with curved edges. A nonzero matrix is isotropic precisely when it corresponds to either the left or right vertex, and the bottom vertex corresponds to rank-one matrices.

In \prettyref{eq:F-norm-fn} of \prettyref{subsec:Anisotropic-background-fields} we introduce the radial quantity $F\equiv-\tau|M|/\sqrt{3}\xi$ in the context of $M=\vec{A}(\tau)$. In order to know whether the quantities $(D,E,F)$ are physically meaningful in this context, we must worry about gauge invariance. From \prettyref{exa:local-gt} \vpageref{exa:local-gt}, we note that $I_{2}(\vec{A}(\tau))$ can fail to be gauge-invariant at any time $\tau_{0}$ when $\vec{A}(\tau_{0})$ is rank-one or zero, i.e. when $I_{4}(\vec{A}(\tau_{0}))=0$. We show in \prettyref{thm:scalars-are-gauge-invt} that this is the only such case. Thus $I_{3}(\vec{A}(\tau))$ and $I_{4}(\vec{A}(\tau))$ are always gauge-invariant, while $I_{2}(\vec{A}(\tau))$ is gauge-invariant at all times $\tau_{0}$ for which $I_{4}(\vec{A}(\tau_{0}))\neq0$. In summary, for the case $M=\vec{A}(\tau)$, gauge-invariance of the quantities $(D,E,F)$ fails only at the rank-one point $(D,E)=(0,0)$ or when $F=0$. 

In any case of physical interest, this slight lack of gauge invariance presents no difficulties: trajectories with generic non-isotropic initial conditions should never pass through these bad points. Even when a non-generic trajectory passes through a bad point, a unique meaningful value of $F$ is determined by continuity. 

\begin{figure}
\begin{centering}
\includegraphics{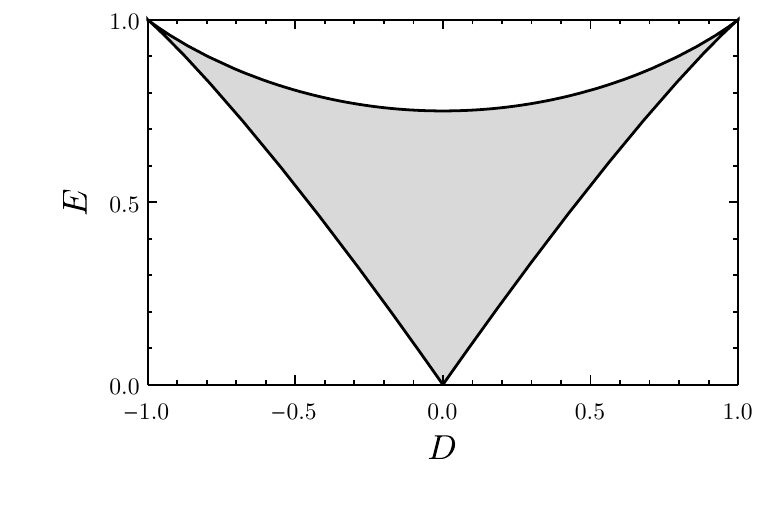}
\par\end{centering}
\caption{\label{fig:DE-triangle}Only values inside the shaded region can be realized as pairs $(D,E)$ of an actual matrix. The left vertex corresponds to isotropy with $\sigma_{i}<0$. The right vertex corresponds to isotropy with $\sigma_{i}>0$. The bottom vertex corresponds to rank one. The boundary corresponds to when two or more singular values coincide. The upper edge corresponds to a coincidence in the larger two singular values, so it is traced out by $(1,1,\sigma_{3})$ for $\sigma_{3}\in\left[-1,1\right]$. The lower two edges correspond to a coincidence of the smaller singular values, i.e. ($\pm1$,$\sigma_{3}$,$\sigma_{3}$) for $\sigma_{3}\in\left[0,1\right]$. The defining equation of this region is $\zeta\protect\geq0$, where $\zeta$ is defined in \prettyref{eq:zeta-sv}.}
\end{figure}

\subsubsection*{Completeness of the scalars $I_{n}(M)$}
\begin{thm}
\label{thm:invariant-scalars}Any scalar function of a $3\times3$ matrix $M$ which is invariant under both left and right multiplication by $\mathrm{SO}(3)$ is a function of $I_{2}(M)$, $I_{3}(M)$ and $I_{4}(M)$. 
\end{thm}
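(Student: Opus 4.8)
The plan is to reduce the statement to orbit separation for the $\mathrm{SO}(3)\times\mathrm{SO}(3)$ action $(G,R)\colon M\mapsto GMR^{T}$, using the $\mathrm{SO}(3)$-singular value decomposition of \prettyref{lem:SVD-SOn}. First I would note that every $3\times3$ matrix $M$ can be written as $M=G\,\Sigma\,R^{T}$ with $G,R\in\mathrm{SO}(3)$ and $\Sigma=\mathrm{diag}(\sigma_{1},\sigma_{2},\sigma_{3})$, so any function $\Phi$ invariant under the action satisfies $\Phi(M)=\Phi(\Sigma)$. Thus $\Phi$ is determined by a function $\phi(\sigma_{1},\sigma_{2},\sigma_{3})$ of the singular values, and evaluating the three scalars on $\Sigma$ gives
\[
I_{2}(\Sigma)=\sigma_{1}^{2}+\sigma_{2}^{2}+\sigma_{3}^{2},\quad I_{3}(\Sigma)=\sigma_{1}\sigma_{2}\sigma_{3},\quad I_{4}(\Sigma)=\sigma_{2}^{2}\sigma_{3}^{2}+\sigma_{3}^{2}\sigma_{1}^{2}+\sigma_{1}^{2}\sigma_{2}^{2}.
\]
The whole proof therefore comes down to showing that the map $\Sigma\mapsto(I_{2},I_{3},I_{4})$ separates the orbits of the residual symmetry group acting on diagonal matrices.

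Second, I would pin down this residual symmetry group, i.e. which diagonal $\Sigma,\Sigma'$ lie in the same $\mathrm{SO}(3)\times\mathrm{SO}(3)$ orbit. Two operations preserve diagonal form: permutations of the entries $\sigma_{i}$, and sign changes. Any permutation is realizable with $G=R\in\mathrm{SO}(3)$ (even permutations use genuine permutation matrices, and a transposition such as $\sigma_{1}\leftrightarrow\sigma_{2}$ is realized by $G=R=\mathrm{diag}$-adjusted swap matrix of determinant $+1$, for which $G\Sigma G^{T}$ swaps the two entries without introducing signs). For sign changes, only an \emph{even} number of sign flips is allowed, since $G=\mathrm{diag}(-1,-1,1)\in\mathrm{SO}(3)$ flips two signs, whereas a single flip would change $\det(GMR^{T})=\det G\,\det M\,\det R=\det M$, which is impossible. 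Hence the residual group is $S_{3}$ (permutations) together with even sign changes, and $I_{3}=\det M$ is precisely the extra invariant available for $\mathrm{SO}(3)$ (as opposed to $\mathrm{O}(3)$), which will fix the sign parity.

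Third, I would prove separation. Since $\{\sigma_{1}^{2},\sigma_{2}^{2},\sigma_{3}^{2}\}$ are the roots of $t^{3}-I_{2}\,t^{2}+I_{4}\,t-I_{3}^{2}=0$ (using $\sigma_{1}^{2}\sigma_{2}^{2}\sigma_{3}^{2}=I_{3}^{2}$), the values $(I_{2},I_{3},I_{4})$ determine the multiset of squared singular values, hence the multiset of $|\sigma_{i}|$. It then remains to recover the signs up to even sign change. When all $\sigma_{i}\neq0$, the sign of $I_{3}=\sigma_{1}\sigma_{2}\sigma_{3}$ fixes the parity of the number of negative entries, and every sign pattern of that parity is reached by even sign changes; when some $\sigma_{i}=0$ the product $I_{3}$ vanishes automatically and a flip of a zero entry paired with another lets one adjust the remaining signs freely. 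In all cases the orbit is determined, so the fibers of $M\mapsto(I_{2},I_{3},I_{4})$ coincide exactly with the orbits. Consequently any invariant $\Phi$, being constant on orbits, is constant on these fibers and therefore factors as a function of $I_{2},I_{3},I_{4}$, which is the claim.

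The main obstacle I anticipate is the bookkeeping in the second and third steps: correctly verifying that the residual symmetry is exactly permutations plus \emph{even} sign changes (tying the obstruction to a single sign flip back to $\det M=I_{3}$ being invariant), and then carefully handling the degenerate cases where one or more singular values vanish, since there the orbit structure and the freedom in the signs behave differently. The reduction via \prettyref{lem:SVD-SOn} and the factorization conclusion are routine once orbit separation is established; no smoothness hypothesis on $\Phi$ is needed, as the argument is purely set-theoretic.
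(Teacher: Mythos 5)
Your proof is correct, but the decisive step is genuinely different from the paper's. You and the paper begin identically: reduce to diagonal form via \prettyref{lem:SVD-SOn} and identify the residual rearrangement group acting on singular values (permutations together with even sign flips, a group of order $24$, with the parity restriction on sign flips tied to the invariance of $\det M$). The paper then establishes separation by a counting argument: the rearrangement group forces at least $24$ real solutions (with multiplicity) of the system $I_{2}=I_{2}(M)$, $I_{3}=I_{3}(M)$, $I_{4}=I_{4}(M)$, while B\'ezout's theorem caps the count at $\deg I_{2}\cdot\deg I_{3}\cdot\deg I_{4}=24$, so the fiber is exactly one rearrangement orbit; a second, abstract proof invokes the Jacobian criterion and Molien's theorem. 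You instead observe that $I_{2}$, $I_{4}$ and $I_{3}^{2}$ are the elementary symmetric polynomials in $\sigma_{1}^{2},\sigma_{2}^{2},\sigma_{3}^{2}$, so the squared singular values are the roots of $t^{3}-I_{2}\,t^{2}+I_{4}\,t-I_{3}^{2}$ and the multiset of $\left|\sigma_{i}\right|$ is recovered directly; the sign of $I_{3}$ then fixes the parity of the negative entries, even flips act transitively on sign patterns of fixed parity, and the case of a vanishing singular value is handled explicitly. Your route is more elementary---no algebraic geometry, purely set-theoretic, and it treats the degenerate fibers transparently, whereas the paper's ``at least $24$ solutions with multiplicity'' step is delicate precisely when singular values coincide or vanish. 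What the paper's machinery buys in exchange is stronger structural output: the Molien-series argument shows that $I_{2}$, $I_{3}$, $I_{4}$ are algebraically independent and freely generate the full ring of invariant polynomials, not merely that they separate orbits.
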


\begin{proof}
From the singular value decomposition of \prettyref{lem:SVD-SOn}, any function of $M$ which is invariant under both left and right multiplication by $\mathrm{SO}(3)$ must be expressible in terms of the singular values of $M$. When considering functions of the singular values $(\sigma_{1},\sigma_{2},\sigma_{3})$, there are two possible approaches. The first possibility is to impose a uniqueness condition on the singular values (such as the one stated in \prettyref{lem:SVD-SOn}), so that any possible function of the singular values makes sense. Thus $\sigma_{1}$ denotes the largest singular value of $M$ (in absolute value). This approach is awkward because there is no algebraic expression for $\sigma_{1}(M)$ in terms of the components of $M$, so it is difficult to compute. 

The more natural approach is to consider only combinations of singular values which are invariant under rearrangements. The singular values are determined only up to permutation and flipping pairs of signs. Such freedom to rearrange the singular values may be seen explicitly by multiplying $\Sigma$ on the left and right by $\mathrm{SO}(3)$ matrices as follows:
\begin{align}
\left(\begin{array}{ccc}
0 & 1 & 0\\
-1 & 0 & 0\\
0 & 0 & 1
\end{array}\right)\left(\begin{array}{ccc}
\sigma_{1}\\
 & \sigma_{2}\\
 &  & \sigma_{3}
\end{array}\right)\left(\begin{array}{ccc}
0 & -1 & 0\\
1 & 0 & 0\\
0 & 0 & 1
\end{array}\right) & =\left(\begin{array}{ccc}
\sigma_{2}\\
 & \sigma_{1}\\
 &  & \sigma_{3}
\end{array}\right),\nonumber \\
\left(\begin{array}{ccc}
-1\\
 & -1\\
 &  & 1
\end{array}\right)\left(\begin{array}{ccc}
\sigma_{1}\\
 & \sigma_{2}\\
 &  & \sigma_{3}
\end{array}\right)\left(\begin{array}{ccc}
1\\
 & 1\\
 &  & 1
\end{array}\right) & =\left(\begin{array}{ccc}
-\sigma_{1}\\
 & -\sigma_{2}\\
 &  & \sigma_{3}
\end{array}\right).\label{eq:SVD-signs}
\end{align}
Assuming no degeneracies, this gives a total of 24 possible ways that $\Sigma$ can be rearranged (6 permutations times 4 possibilities for signs). Correspondingly, these rearrangements are given by the action of a group with 24 elements.\footnote{This group is the symmetric group $S_{4}$ in disguise: our group action is isomorphic to the standard action of the Weyl group of $\mathfrak{so}(6)$ on a Cartan subalgebra, $\mathfrak{so}(6)$ is isomorphic to $\mathfrak{su}(4)$, and the Weyl group of $\mathfrak{su}(n)$ is the symmetric group $S_{n}$.} Since this group action can rearrange an arbitrary triple of singular values $(\sigma_{1},\sigma_{2},\sigma_{3})$ so that they satisfy the uniqueness constraints given in \prettyref{lem:SVD-SOn}, we conclude that this is the complete group of all possible rearrangements.

Next we explain how the singular values of $M$ can be reconstructed from $I_{2}(M)$, $I_{3}(M)$ and $I_{4}(M)$. Equivalently, given some numbers $I_{2}$, $I_{3}$ and $I_{4}$, we wish to solve the algebraic system 
\begin{align}
\sigma_{1}^{2}+\sigma_{2}^{2}+\sigma_{3}^{2} & =I_{2},\nonumber \\
\sigma_{1}\sigma_{2}\sigma_{3} & =I_{3},\nonumber \\
\left(\sigma_{2}\sigma_{3}\right)^{2}+\left(\sigma_{3}\sigma_{1}\right)^{2}+\left(\sigma_{1}\sigma_{2}\right)^{2} & =I_{4}\label{eq:bezout-system}
\end{align}
 for $\sigma_{1}$, $\sigma_{2}$ and $\sigma_{3}$. It can happen that this system has no real solutions, for instance when $I_{2}<0$. However, we assume that the numbers $\left(I_{2},I_{3},I_{4}\right)$ arise from some matrix $M$, so that we are guaranteed that the system has at least one real solution. Since the $24$-element group acts on the solutions of this equation, there must be at \emph{least} 24 real solutions (when the solutions are counted with multiplicity). By B\'ezout's theorem (see for example \cite{shafarevich1994}), the number of real solutions is at \emph{most} $\deg I_{2}\cdot\deg I_{3}\cdot\deg I_{4}=2\cdot3\cdot4=24$ (counting multiplicities). Therefore, the number of real solutions to the system \eqref{eq:bezout-system} is \emph{exactly} 24 (counting multiplicities). Thus the information encoded in $I_{2}(M)$, $I_{3}(M)$ and $I_{4}(M)$ is precisely that of the 24 possible rearrangements of the singular values of the original matrix $M$. 

Finally, any invariant scalar function of $M$ can be written as an invariant function of the singular values of $M$, and the above procedure gives a recipe to solve for those singular values of $M$ in terms of $I_{2}(M)$, $I_{3}(M)$ and $I_{4}(M)$. Therefore, any invariant scalar function can be written as a function of $I_{2}(M)$, $I_{3}(M)$ and $I_{4}(M)$.
\end{proof}
We also present the following short abstract proof of \prettyref{thm:invariant-scalars}. (Details about the relevant combinatorics and Molien's Theorem can be found in Chapter 1 of \cite{mukai2003}.)
\begin{proof}
The polynomials $I_{n}(M)$ are readily verified to be algebraically independent functions of the $\sigma_{i}$ by the Jacobian criterion. Thus the $I_{n}(M)$ freely generate a subring of polynomials which are invariant under the group action, and the number of linearly independent polynomials in each degree is given by the generating function $\prod_{p=2}^{4}(1-t^{p})^{-1}$. By Molien's Theorem, a tedious but completely straightforward computation shows that the generating function for the full ring of invariant polynomials is given by the same expression. Thus the subring of invariant polynomials generated by $I_{2}(M)$, $I_{3}(M)$ and $I_{4}(M)$ spans the whole ring of invariant polynomials. Since the problem of determining invariant quantities is algebraic, if there are no missing polynomials then there are no missing functions. Thus all invariant scalars of $M$ are functions of $I_{2}(M)$, $I_{3}(M)$ and $I_{4}(M)$. 
\end{proof}

\subsubsection*{Gauge invariance of the scalars $I_{n}(\vec{A}(\tau))$}

We saw in \prettyref{exa:local-gt} that $I_{2}(\vec{A}(\tau))$ fails to be gauge-invariant, so the scalars $I_{n}(\vec{A}(\tau))$ are of potentially dubious physical significance. However, it turns out that $I_{n}(\vec{A}(\tau))$ is gauge-invariant most of the time:
\begin{thm}
\label{thm:scalars-are-gauge-invt}For any $\mathrm{SU}(2)$ gauge field $\vec{A}(\tau)$ in homogeneous and temporal gauge, the quantities $I_{3}(\vec{A})$ and $I_{4}(\vec{A})$ are always gauge-invariant. Furthermore, $I_{2}(\vec{A}(\tau_{0}))$ is gauge-invariant for all $\tau_{0}$ such that $I_{4}(\vec{A}(\tau_{0}))\neq0$.
\end{thm}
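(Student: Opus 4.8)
The plan is to treat the two assertions by different means: for $I_{3}$ and $I_{4}$ I would argue from the gauge-covariance of the field strength, whereas $I_{2}$ genuinely requires analyzing the residual gauge freedom that preserves homogeneous and temporal gauge. First recall that in homogeneous and temporal gauge the magnetic field is purely the commutator term, $\mathcal{B}_{k}^{a}=\tfrac{e}{2}\varepsilon_{kij}\varepsilon^{abc}A_{i}^{b}A_{j}^{c}=e\,(\mathrm{cof}\,\vec{A})_{k}^{a}$, i.e.\ $e$ times the cofactor matrix of $\vec{A}(\tau)$. Under any gauge transformation the magnetic field transforms tensorially, $\mathcal{B}_{k}^{a}\mapsto G_{ab}\mathcal{B}_{k}^{b}$ with $G\in\mathrm{SO}(3)$ the adjoint of the (possibly $\vec{x}$-dependent) transformation; since both the original and the transformed field are homogeneous, their $\vec{\mathcal{B}}$ are $\vec{x}$-independent, and evaluating the transformation at a single point shows that $|\vec{\mathcal{B}}|^{2}$ and $\det\vec{\mathcal{B}}$ are gauge-invariant. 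Using $|\mathrm{cof}\,\vec{A}|^{2}=(\sigma_{2}\sigma_{3})^{2}+(\sigma_{3}\sigma_{1})^{2}+(\sigma_{1}\sigma_{2})^{2}$ and $\det(\mathrm{cof}\,\vec{A})=(\det\vec{A})^{2}$, this yields $I_{4}=e^{-2}|\vec{\mathcal{B}}|^{2}$ and $I_{3}^{2}=e^{-3}\det\vec{\mathcal{B}}$, so $I_{4}$ and $I_{3}^{2}$ are gauge-invariant. The sign of $I_{3}$ is then pinned down by connectedness of the gauge group: $I_{3}$ varies continuously along any orbit while $I_{3}^{2}$ stays fixed, so $I_{3}$ cannot change sign without passing through $0$.

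The delicate assertion is gauge-invariance of $I_{2}$ when $I_{4}(\vec{A}(\tau_{0}))\neq0$, i.e.\ when at least two singular values of $\vec{A}(\tau_{0})$ are nonzero. Using the $\mathrm{SO}(3)$-version of SVD from \prettyref{lem:SVD-SOn}, a global gauge rotation (which preserves all $I_{n}$) together with a spatial rotation (a symmetry of the homogeneous and gauge structure which also preserves the $I_{n}$) lets me assume without loss of generality that $\vec{A}(\tau_{0})=\mathrm{diag}(\sigma_{1},\sigma_{2},\sigma_{3})$ with $\sigma_{1},\sigma_{2}\neq0$. Any gauge transformation preserving homogeneous and temporal gauge must be time-independent (to keep $A_{0}=0$, as in \prettyref{thm:hom-temp-gauge}) and must keep the spatial field homogeneous; infinitesimally, its generator $\xi^{a}(\vec{x})$ must be such that $\delta A_{i}^{a}=\tfrac{1}{e}\partial_{i}\xi^{a}+\varepsilon^{abc}A_{i}^{b}\xi^{c}$ is $\vec{x}$-independent. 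I would write out these nine homogeneity conditions at $\tau_{0}$ for the diagonal background.

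The heart of the argument is to extract from these conditions that $\partial_{1}\xi^{1}=\partial_{2}\xi^{2}=0$. The pair of off-diagonal conditions coupling $\xi^{2}$ and $\xi^{3}$ forces $\xi^{2}$ to oscillate in $x^{1}$ with frequency $e\sigma_{1}$ (here using $\sigma_{1}\neq0$); substituting this into the diagonal condition $\partial_{2}\xi^{2}=\mathrm{const}$ then forces that constant to vanish, so $\partial_{2}\xi^{2}=0$. The symmetric argument using $\sigma_{2}\neq0$ gives $\partial_{1}\xi^{1}=0$, and the same mechanism removes the third diagonal derivative from the variation (it is either multiplied by $\sigma_{3}=0$ or itself forced to vanish). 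Since the cross term $\varepsilon^{abc}A_{i}^{b}\xi^{c}$ is orthogonal to $\vec{A}_{i}$, the variation of $I_{2}$ is $\delta I_{2}=\tfrac{2}{e}\sum_{i}\vec{A}_{i}\cdot\partial_{i}\vec{\xi}=\tfrac{2}{e}\sum_{i}\sigma_{i}\partial_{i}\xi^{i}=0$. Constant generators trivially preserve $I_{2}$, and since the residual gauge transformations are generated by these infinitesimal generators, $I_{2}(\vec{A}(\tau_{0}))$ is gauge-invariant.

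The main obstacle is exactly this last computation in the rank-two case $\sigma_{3}=0$: there the field strength alone does not determine $I_{2}$, since two matrices with equal $\sigma_{1}\sigma_{2}$ but different $\sigma_{1}^{2}+\sigma_{2}^{2}$ have identical magnetic fields, so one genuinely needs the residual-gauge analysis, and the cancellation forcing $\partial_{1}\xi^{1}=\partial_{2}\xi^{2}=0$ hinges on both $\sigma_{1}$ and $\sigma_{2}$ being nonzero, which is precisely the content of $I_{4}\neq0$. A secondary technical point, which I expect to be routine, is upgrading the infinitesimal statement to finite gauge transformations and checking that no disconnected component of the residual gauge group can alter $I_{2}$. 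Finally, \prettyref{exa:local-gt} already shows the hypothesis $I_{4}\neq0$ cannot be dropped, since a rank-one or zero field admits residual transformations that do change $I_{2}$, so the stated restriction is sharp.
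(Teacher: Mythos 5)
Your treatment of $I_4$ and of $|I_3|$ is sound and coincides with the paper's \prettyref{lem:scalars-gauge-invt}: in homogeneous and temporal gauge the comoving magnetic field is (up to normalization) the cofactor matrix of $\vec{A}$, it transforms tensorially under gauge transformations, and its norm and determinant reproduce $I_4(\vec{A})$ and $I_3(\vec{A})^2$. The two remaining steps, however, contain genuine gaps, and they share the same root cause. For the sign of $I_3$ you argue that ``$I_3$ varies continuously along any orbit while $I_3^2$ stays fixed.'' The second half of this is false: along a path $g_s$ in the gauge group the intermediate fields $g_s\cdot A$ are \emph{not} homogeneous, their field strength acquires derivative terms, and the identity tying $I_3^2$ to the pointwise gauge-invariant quantity $\det\vec{\mathcal{B}}$ holds only at the endpoints. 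Nothing then prevents the determinant of the spatial matrix from sliding continuously from $+c$ to $-c$. The obstruction is not academic: the cofactor map is even, so $\vec{A}$ and $-\vec{A}$ have \emph{identical} magnetic fields, and no argument built from $\vec{\mathcal{B}}$ alone (or from continuity of quantities controlled by $\vec{\mathcal{B}}$) can resolve the sign. Rescuing your argument would require the set of homogeneity-preserving gauge transformations to be path-connected, a structural fact about that set which you never establish and which is comparable in difficulty to the theorem itself.

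The same issue undermines your $I_2$ argument. The infinitesimal computation is correct and actually illuminating --- the oscillation of $\xi^2,\xi^3$ in $x^1$ at frequency $e\sigma_1$, played against the linear growth demanded by $\partial_2\xi^2=\mathrm{const}$, does force $\partial_i\xi^i=0$, and it shows exactly where $I_4\neq0$ enters. But gauge equivalence of two homogeneous configurations is implemented by a single \emph{finite} transformation $g(\vec{x})$, and you give no argument that $g$ lies on a path of homogeneity-preserving transformations, nor that this set (a subset of maps $\mathbb{R}^3\to\mathrm{SU}(2)$ cut out by a nonlinear PDE, not a finite-dimensional Lie group) is generated by its infinitesimal generators; calling this step routine understates it. The paper avoids all path arguments by pointwise invariant theory: it introduces the covariant curl $\vec{\mathcal{C}}=e^{-1}\vec{\nabla}^{(A)}\times\vec{\mathcal{B}}$ (i.e.\ $\mathcal{C}^a_i=e^{-2}F^a_{ij;j}$), which also transforms tensorially, and verifies the algebraic identities
\begin{equation*}
\mathcal{B}^a_i\mathcal{C}^a_i = 2\,I_2(\vec{A})\,I_3(\vec{A}), \qquad I_2(\vec{\mathcal{C}}) = I_2(\vec{A})\,I_4(\vec{A}) + 3\,I_3(\vec{A})^2 .
\end{equation*}
The first identity pins the sign of $I_3$ (this is precisely the derivative information that distinguishes $\vec{A}$ from $-\vec{A}$), and the second solves for $I_2$ whenever $I_4\neq0$, expressing both in terms of manifestly gauge-invariant scalars --- valid for arbitrary finite gauge transformations with no connectedness input. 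If you wish to keep your residual-gauge approach, the honest route is to solve the finite homogeneity-preservation equations for $g(\vec{x})$ directly (your analysis is the linearization of that problem); as it stands, the proof is incomplete.
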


To prove this theorem, it suffices to show that each $I_{n}(\vec{A}(\tau_{0}))$ is expressible in terms of scalars which are known to be gauge-invariant. This can mostly be accomplished by considering the magnetic field matrix as in the following lemma:
\begin{lem}
\label{lem:scalars-gauge-invt}For any $\tau_{0}$, the quantities $\left|I_{3}(\vec{A}(\tau_{0}))\right|$ and $I_{4}(\vec{A}(\tau_{0}))$ are gauge-invariant. If $I_{3}(\vec{A}(\tau_{0}))\neq0$ then $I_{2}(\vec{A}(\tau_{0}))$ is also gauge-invariant.
\end{lem}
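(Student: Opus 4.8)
The key obstruction to gauge-invariance of the $I_n(\vec A)$ is the inhomogeneous term in \eqref{eq:gauge-formula}, which is exactly what makes \prettyref{exa:local-gt} possible. The plan is therefore to work not with $\vec A(\tau)$ itself but with the comoving magnetic field matrix $\mathcal B^a_k(\tau)\equiv\tfrac12\varepsilon_{kij}F^a_{ij}(\tau)$, which transforms \emph{covariantly}. First I would record that in homogeneous and temporal gauge $F^a_{ij}=e\varepsilon^{abc}A^b_iA^c_j$, so that $\mathcal B=e\,\mathrm{cof}(\vec A)$, the cofactor matrix (indeed a one-line index computation shows $\mathcal B^a_k$ is $e$ times the $(a,k)$ cofactor of $\vec A$). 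Exactly as in the first paragraph of the proof of \prettyref{lem:elec-is-iso}, the field-strength law $\mathbf F_{\mu\nu}\mapsto g\mathbf F_{\mu\nu}g^{-1}$ gives $\mathcal B(\tau,\vec x)\mapsto G(\tau,\vec x)\,\mathcal B(\tau,\vec x)$ with $G\in\mathrm{SO}(3)$ the adjoint of $g$. Since $\mathcal B$ is homogeneous, evaluating at a single point $\vec x_0$ shows that any two gauge-equivalent representatives in homogeneous and temporal gauge have magnetic fields differing by left multiplication by a fixed $G\in\mathrm{SO}(3)$. Because the scalars $I_n$ depend only on singular values (\prettyref{lem:SVD-SOn}, \prettyref{thm:invariant-scalars}), they are invariant under such left multiplication, so $I_2(\mathcal B(\tau_0))$, $I_3(\mathcal B(\tau_0))$ and $I_4(\mathcal B(\tau_0))$ are gauge-invariant for every $\tau_0$.

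Next I would translate these back into statements about $\vec A$ using the elementary singular-value identities for the $3\times3$ cofactor: if $\vec A$ has singular values $(\sigma_1,\sigma_2,\sigma_3)$, then $\mathrm{cof}(\vec A)$ has singular values $(\sigma_2\sigma_3,\sigma_3\sigma_1,\sigma_1\sigma_2)$. Reading off \eqref{eq:def-invariants} then gives $I_2(\mathcal B)=e^2I_4(\vec A)$ and $I_3(\mathcal B)=\det\mathcal B=e^3I_3(\vec A)^2$. Hence $I_4(\vec A)=I_2(\mathcal B)/e^2$ and $I_3(\vec A)^2=I_3(\mathcal B)/e^3$ are gauge-invariant, yielding the gauge-invariance of $I_4(\vec A)$ and of $\left|I_3(\vec A)\right|$; only the absolute value survives, since the magnetic field recovers only the square.

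For the final assertion, suppose $I_3(\vec A(\tau_0))\neq0$, i.e.\ all $\sigma_i\neq0$. Here I would apply the cofactor once more: $\mathrm{cof}(\mathcal B)=e^2\,\mathrm{cof}(\mathrm{cof}(\vec A))$ is again gauge-covariant (the cofactor of a covariant matrix is covariant, using $\mathrm{cof}(G)=G$ for $G\in\mathrm{SO}(3)$), so $I_2(\mathrm{cof}(\mathcal B))$ is gauge-invariant. The singular values of $\mathrm{cof}(\mathrm{cof}(\vec A))$ are $\det(\vec A)\,(\sigma_1,\sigma_2,\sigma_3)$, whence $I_2(\mathrm{cof}(\mathcal B))=e^4I_3(\vec A)^2\,I_2(\vec A)$. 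Since $I_3(\vec A)^2$ is gauge-invariant and nonzero, dividing gives the gauge-invariance of $I_2(\vec A(\tau_0))$. Equivalently, one recovers $\sigma_i^2$ as ratios of the gauge-invariant singular values of $\mathcal B$, which is precisely the step that breaks down when some $\sigma_i$, and hence $I_3$, vanishes.

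The main obstacle is conceptual rather than computational: one must argue covariance of $\mathcal B$ with care, because the gauge transformation $g(\tau,\vec x)$ relating two representatives may genuinely depend on $\vec x$ (as in \prettyref{exa:local-gt}), so it is essential that $\mathcal B$ be homogeneous in order that evaluation at one point produce a single well-defined $G\in\mathrm{SO}(3)$. The only other care needed is the bookkeeping of index conventions in $\mathcal B=e\,\mathrm{cof}(\vec A)$ and the standard $3\times3$ cofactor singular-value identities, both of which are routine.
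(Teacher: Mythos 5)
Your proof is correct and follows essentially the same strategy as the paper's proof: both pass to the comoving magnetic field, which in homogeneous and temporal gauge is (up to normalization) the cofactor matrix of $\vec{A}$, use its tensorial transformation law together with homogeneity to reduce any gauge transformation to left multiplication by a single $\mathrm{SO}(3)$ matrix, and then read off $\left|I_{3}(\vec{A})\right|$ and $I_{4}(\vec{A})$ from the singular-value relation $\sigma_{1}^{\mathcal{B}}=\sigma_{2}^{A}\sigma_{3}^{A}$ and cyclic permutations, which is exactly \prettyref{eq:SV-AB}. The only real divergence is the last step recovering $I_{2}(\vec{A})$ when $I_{3}(\vec{A})\neq0$: the paper solves for $I_{2}(\vec{A})$ as a rational function of the invariants of $\vec{\mathcal{B}}$, whereas you form the second cofactor and use $\mathrm{cof}(\mathrm{cof}(M))=\det(M)\,M$ together with covariance of cofactors under left $\mathrm{SO}(3)$ multiplication. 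These are equivalent in substance, since $I_{2}(\mathrm{cof}(\vec{\mathcal{B}}))=I_{4}(\vec{\mathcal{B}})$, so your identity amounts (in the paper's normalization, where $\vec{\mathcal{B}}=\mathrm{cof}(\vec{A})$) to $I_{2}(\vec{A})=I_{4}(\vec{\mathcal{B}})/I_{3}(\vec{\mathcal{B}})$, which is an identity because $I_{4}(\vec{\mathcal{B}})=I_{3}(\vec{A})^{2}I_{2}(\vec{A})$ and $I_{3}(\vec{\mathcal{B}})=I_{3}(\vec{A})^{2}$. Worth noting: your version quietly repairs a small algebra slip in the paper, whose displayed formula $I_{2}(\vec{A})=\bigl(I_{2}(\vec{\mathcal{B}})^{2}-I_{4}(\vec{\mathcal{B}})\bigr)/\bigl(3I_{3}(\vec{\mathcal{B}})\bigr)$ fails already for $A_{i}^{a}=\delta_{i}^{a}$ (it yields $2$ instead of $3$); the lemma's conclusion is of course unaffected, since a correct gauge-invariant expression for $I_{2}(\vec{A})$ exists, and your derivation supplies it.
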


\begin{proof}
We introduce the magnetic field matrix 
\begin{equation}
B_{i}^{b}(\tau_{0})\equiv\tfrac{1}{2}a(\tau_{0})^{-2}\varepsilon_{ijk}F_{jk}^{b}(\tau_{0}),\label{eq:B-def}
\end{equation}
 and the corresponding comoving quantity 
\begin{equation}
\vec{\mathcal{B}}(\tau)=a(\tau)^{2}\vec{B}(\tau)/e.\label{eq:cal-B}
\end{equation}
Since $\vec{\mathcal{B}}(\tau_{0})$ transforms as a tensor under gauge transformations (see the proof of \prettyref{lem:elec-is-iso}), each $I_{n}(\vec{\mathcal{B}}(\tau_{0}))$ is gauge-invariant. If the SVD of $\vec{A}(\tau)$ is $\vec{A}(\tau)=G_{1}(\tau)\Sigma^{A}(\tau)R_{1}(\tau)^{T}$ then
\begin{gather}
\vec{\mathcal{B}}(\tau)=G_{1}(\tau)\Sigma^{\mathcal{B}}(\tau)R_{1}(\tau)^{T},\quad\sigma_{1}^{\mathcal{B}}(\tau)=\sigma_{2}^{A}(\tau)\sigma_{3}^{A}(\tau)\ \textrm{ and cyclic permutations}.\label{eq:SV-AB}
\end{gather}
A short computation gives $\left|I_{3}(\vec{A})\right|=\sqrt{I_{3}(\vec{\mathcal{B}})}$ and $I_{4}(\vec{A})=I_{2}(\vec{\mathcal{B}})$. Since the expressions involving $I_{n}(\vec{\mathcal{B}})$ are gauge-invariant, so are the corresponding expressions involving $I_{n}(\vec{A})$. Finally, for all $\tau_{0}$ such that $I_{3}(\vec{A}(\tau_{0}))\neq0$, we have $I_{3}(\vec{\mathcal{B}}(\tau_{0}))\neq0$, and so 
\[
I_{2}(\vec{A}(\tau_{0}))=\frac{I_{2}(\vec{\mathcal{B}}(\tau_{0}))^{2}-I_{4}(\vec{\mathcal{B}}(\tau_{0}))}{3I_{3}(\vec{\mathcal{B}}(\tau_{0}))}.
\]
\end{proof}
Because we are in the context of non-abelian gauge theory, it is impossible to fully reconstruct $\vec{A}(\tau_{0})$ from $\vec{\mathcal{B}}(\tau_{0})$ up to gauge. For instance, $\vec{\mathcal{B}}(\tau_{0})$ carries no information about the sign of $I_{3}(\vec{A}(\tau_{0}))$. To sharpen the result of \prettyref{lem:scalars-gauge-invt} in order to prove \prettyref{thm:scalars-are-gauge-invt}, it is necessary to introduce a new quantity: 
\begin{proof}[Proof of \prettyref{thm:scalars-are-gauge-invt}]
 We introduce the tensorial quantity 
\[
\vec{\mathcal{C}}\equiv e^{-1}\,\vec{\nabla}^{(A)}\!\times\vec{\mathcal{B}}\quad\left(\mathcal{C}_{i}=e^{-1}\varepsilon_{ijk}\mathbf{D}_{j}\left(\mathcal{B}_{k}^{b}\mathbf{T}_{b}\right),\quad\mathcal{C}_{i}^{a}=e^{-2}F_{ij;j}^{a}\right),
\]
 where $\vec{\nabla}^{(A)}\times$ denotes the gauge-covariant curl operator, and the subscript $_{;j}$ denotes components corresponding to the gauge-covariant derivative $\mathbf{D}_{j}\mathbf{F}$. If the SVD of $\vec{A}(\tau)$ is $\vec{A}(\tau)=G_{1}(\tau)\Sigma^{A}(\tau)R_{1}(\tau)^{T}$ then $\vec{\mathcal{C}}(\tau)=G_{1}(\tau)\Sigma^{\mathcal{C}}(\tau)R_{1}(\tau)^{T}$, where 
\[
\sigma_{1}^{\mathcal{C}}=\left(\left(\sigma_{2}^{A}\right)^{2}+\left(\sigma_{3}^{A}\right)^{2}\right)\sigma_{1}^{A}\ \textrm{ and cyclic permutations}.
\]
A quick computation shows that the gauge-invariant scalar $\mathcal{B}_{i}^{a}\mathcal{C}_{i}^{a}$ satisfies 
\[
\mathcal{B}_{i}^{a}\mathcal{C}_{i}^{a}=2I_{2}(\vec{A})I_{3}(\vec{A}),
\]
 and thus $I_{3}(\vec{A})$ is always given by the gauge-invariant expression 
\[
I_{3}(\vec{A})=\mathrm{sign}\left(\mathcal{B}_{i}^{a}\mathcal{C}_{i}^{a}\right)\,\sqrt{I_{3}(\vec{\mathcal{B}})}.
\]
 Finally, for all $\tau_{0}$ such that $I_{4}(\vec{A}(\tau_{0}))\neq0$, a quick computation shows that 
\[
I_{2}(\vec{A}(\tau_{0}))=\frac{I_{2}(\vec{\mathcal{C}}(\tau_{0}))-3I_{3}(\vec{A}(\tau_{0}))^{2}}{I_{4}(\vec{A}(\tau_{0}))},
\]
 and the right-hand side involves only quantities which have been shown to be gauge-invariant. 
\end{proof}

\subsection{\label{app:isotrop-proof}The case when $A$ is isotropic but the matrix $\vec{A}(\tau)$ is not }

In this subsection we assume that $A$ is isotropic, and given by $\vec{A}(\tau)$ in homogeneous and temporal gauge (see \prettyref{app:homo-gauge}). Our goal is to prove \prettyref{thm:rk1-is-gauge-artifact}, that it is always possible to find a particular homogeneous and temporal gauge in which $\vec{A}(\tau)$ is an isotropic matrix for all $\tau$. We must deal with the case observed in \prettyref{exa:local-gt}, where $\vec{A}(\tau)$ can be a rank-one matrix. 

In \prettyref{lem:isotropic-gf-relation} we derive a relation satisfied by $\vec{A}(\tau)$. \prettyref{lem:rewrite-isotropic-relation} then rewrites this relation so that in \prettyref{thm:iso-or-rank-one} we conclude for each $\tau_{0}$ that $\vec{A}(\tau_{0})$ is either isotropic or rank-one. In \prettyref{lem:iso-open} we prove that if $\vec{A}(\tau_{0})$ is rank-one, then it is rank-one for all nearby $\tau$. \prettyref{lem:rank-one-const} uses the electrical field to show that $\vec{A}(\tau)$ is constant where it is rank-one. Finally, the proof of \prettyref{thm:rk1-is-gauge-artifact} concludes that if $\vec{A}(\tau_{0})$ is rank one for any $\tau_{0}$, then $\vec{A}(\tau)$ is constant for all $\tau$. Thus it coincides with \prettyref{exa:local-gt} and can be made isotropic (and moreover zero) with a gauge transformation. 

First we prove a relation which holds for $\vec{A}(\tau)$: 
\begin{lem}
\label{lem:isotropic-gf-relation}Let $\vec{A}(\tau)$ be a homogeneous and isotropic $\mathrm{SU}(2)$ gauge field in homogeneous and temporal gauge. Then $\vec{A}(\tau)$ satisfies 
\[
I_{4}(\vec{A}(\tau))^{2}-3I_{2}(\vec{A}(\tau))I_{3}(\vec{A}(\tau))^{2}=0
\]
 for all $\tau$. 
\end{lem}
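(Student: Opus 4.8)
The plan is to avoid proving directly that the matrix $\vec{A}(\tau)$ is isotropic (which is false in general, as \prettyref{exa:local-gt} shows) and instead to route everything through the magnetic field, which—unlike $\vec{A}$ itself—transforms tensorially under gauge transformations and is therefore a genuinely isotropic matrix. Concretely, in the homogeneous and temporal gauge provided by \prettyref{thm:hom-temp-gauge}, the spatial derivatives of $\vec{A}$ vanish, so \prettyref{eq:field_strength} collapses to $F_{jk}^{b}=e\,\varepsilon^{bcd}A_{j}^{c}A_{k}^{d}$, and the comoving magnetic field becomes
\[
\mathcal{B}_{i}^{b}(\tau)=\tfrac{1}{2}\varepsilon_{ijk}\varepsilon^{bcd}A_{j}^{c}(\tau)A_{k}^{d}(\tau),
\]
which is (up to transposition) the cofactor matrix of $\vec{A}(\tau)$. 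Writing $\sigma_{1},\sigma_{2},\sigma_{3}$ for the $\mathrm{SO}(3)$-singular values of $\vec{A}(\tau)$ from \prettyref{lem:SVD-SOn}, the singular values of $\vec{\mathcal{B}}(\tau)$ are the pairwise products $\sigma_{2}\sigma_{3},\sigma_{3}\sigma_{1},\sigma_{1}\sigma_{2}$, exactly as recorded in \prettyref{eq:SV-AB}.

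First I would invoke the magnetic analogue of \prettyref{lem:elec-is-iso}: since $A$ is a homogeneous and isotropic gauge field and the magnetic field transforms homogeneously (with adjoint matrix $G_R$, no inhomogeneous term) under gauge transformations, $\vec{\mathcal{B}}(\tau)$ is an isotropic matrix for every $\tau$. By the characterization in \prettyref{lem:matrix-isotropy}, its singular values then coincide in absolute value,
\[
\left|\sigma_{2}\sigma_{3}\right|=\left|\sigma_{3}\sigma_{1}\right|=\left|\sigma_{1}\sigma_{2}\right|,
\]
i.e. $\sigma_{1}^{2}\sigma_{2}^{2}=\sigma_{2}^{2}\sigma_{3}^{2}=\sigma_{3}^{2}\sigma_{1}^{2}$.

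It remains to convert this into the stated polynomial relation, which is pure singular-value bookkeeping. Setting $a=\sigma_{1}^{2}$, $b=\sigma_{2}^{2}$, $c=\sigma_{3}^{2}$, the definitions \prettyref{eq:def-invariants} read $I_{2}=a+b+c$, $I_{3}^{2}=abc$, and $I_{4}=ab+bc+ca$, and a short expansion gives the identity
\[
I_{4}^{2}-3\,I_{2}\,I_{3}^{2}=\tfrac{1}{2}\left[(ab-bc)^{2}+(bc-ca)^{2}+(ca-ab)^{2}\right].
\]
The right-hand side vanishes precisely when $ab=bc=ca$, which is exactly the condition established in the previous step. Hence $I_{4}(\vec{A}(\tau))^{2}-3\,I_{2}(\vec{A}(\tau))\,I_{3}(\vec{A}(\tau))^{2}=0$ for all $\tau$, as claimed.

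The only genuine subtlety, and the reason the magnetic field is indispensable, is the one flagged at \prettyref{exa:local-gt}: isotropy in the sense of \prettyref{def:isotropic-gf} supplies, for each rotation $R$, only a possibly $\vec{x}$-dependent gauge transformation $g_{R}(\tau,\vec{x})$ relating the rotated field to $\vec{A}$, and the inhomogeneous term $(i/e)(\partial_{i}g)g^{-1}$ in \prettyref{eq:gauge-formula} is what prevents concluding that $\vec{A}(\tau)$ is itself an isotropic matrix (in the sense of \prettyref{def:isotropic-matrix}). Passing to $\vec{\mathcal{B}}$ removes this obstruction because the field strength carries no inhomogeneous term; once $\vec{\mathcal{B}}(\tau)$ is known to be isotropic, the remaining argument is elementary. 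I would expect the main care to go into citing the tensorial transformation law correctly (as in the proof of \prettyref{lem:elec-is-iso}) and into verifying the adjugate identity for the singular values, both of which are straightforward.
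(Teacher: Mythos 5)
Your proof is correct and takes essentially the same route as the paper's: both pass to the comoving magnetic field $\vec{\mathcal{B}}(\tau)$, whose matrix isotropy follows from its tensorial (homogeneous) transformation law as in the magnetic extension of \prettyref{lem:elec-is-iso}, and then convert that isotropy into the stated relation for $\vec{A}$ using the singular-value correspondence \prettyref{eq:SV-AB}. The only cosmetic difference is that the paper applies the characterization \prettyref{eq:isotropy-characterization} to $\vec{\mathcal{B}}$ and rewrites it as the identity $I_{2}(\vec{\mathcal{B}})^{2}-3I_{4}(\vec{\mathcal{B}})=I_{4}(\vec{A})^{2}-3I_{2}(\vec{A})I_{3}(\vec{A})^{2}$, whereas you use \prettyref{lem:matrix-isotropy} together with an explicit sum-of-squares identity in the variables $\sigma_{i}^{2}$; these are the same computation.
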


\begin{proof}
The matrix $\vec{{\cal B}}(\tau)$ is isotropic for all $\tau$ by the proof of \prettyref{lem:elec-is-iso}. From \prettyref{eq:isotropy-characterization} it follows that $I_{2}(\vec{\mathcal{B}}(\tau))^{2}-3I_{4}(\vec{\mathcal{B}}(\tau))=0$. From \prettyref{eq:SV-AB}, 
\[
I_{2}(\vec{\mathcal{B}}(\tau))^{2}-3I_{4}(\vec{\mathcal{B}}(\tau))=I_{4}(\vec{A}(\tau))^{2}-3I_{2}(\vec{A}(\tau))I_{3}(\vec{A}(\tau))^{2}.
\]
\end{proof}
We rewrite the characterization given in \prettyref{lem:isotropic-gf-relation} into a form which more clearly implies that the matrix is either isotropic or rank-one:
\begin{lem}
\label{lem:rewrite-isotropic-relation}Suppose $M_{0}$ is a matrix which satisfies $I_{4}(M_{0})^{2}-3I_{2}(M_{0})I_{3}(M_{0})^{2}=0$. Then 
\[
I_{4}(M_{0})\left(I_{2}(M_{0})^{2}-3I_{4}(M_{0})\right)=0.
\]
\end{lem}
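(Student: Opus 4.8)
The plan is to reduce everything to the squared singular values of $M_0$ and then exhibit the two relevant expressions as manifest sums of squares. By the $\mathrm{SO}(3)$ singular value decomposition of \prettyref{lem:SVD-SOn}, I set $a=\sigma_1^2$, $b=\sigma_2^2$, $c=\sigma_3^2$, which are nonnegative reals. From the definitions in \prettyref{eq:def-invariants} one reads off $I_2(M_0)=a+b+c$, $I_3(M_0)^2=abc$, and $I_4(M_0)=ab+bc+ca$. Crucially, each quantity appearing in the lemma (namely $I_2$, $I_4$, and $I_3^2$) depends only on these nonnegative numbers, so the possible signs of the singular values play no role.

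First I would establish the algebraic identity
\begin{equation}
I_4(M_0)^2 - 3\,I_2(M_0)\,I_3(M_0)^2 = \tfrac12\big[(ab-bc)^2+(bc-ca)^2+(ca-ab)^2\big],
\end{equation}
which follows by expanding $(ab+bc+ca)^2 = (a^2b^2+b^2c^2+c^2a^2)+2abc(a+b+c)$ and subtracting $3abc(a+b+c)$. Because the hypothesis sets the left-hand side to zero while the right-hand side is a sum of squares, it forces $ab=bc=ca$. I denote this common value by $t\ge 0$.

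The conclusion then follows by a short dichotomy on $t$. Since $I_4(M_0)=ab+bc+ca=3t$, if $t=0$ then $I_4(M_0)=0$ and the product $I_4(M_0)\big(I_2(M_0)^2-3I_4(M_0)\big)$ vanishes immediately. If instead $t\neq 0$, then $a,b,c$ are all nonzero, and the equalities $ab=bc=ca$ can be divided through to give $a=b=c$; here I would invoke the companion identity
\begin{equation}
I_2(M_0)^2 - 3\,I_4(M_0) = \tfrac12\big[(a-b)^2+(b-c)^2+(c-a)^2\big],
\end{equation}
whose right-hand side vanishes precisely when $a=b=c$, so again the product is zero. In either case $I_4(M_0)\big(I_2(M_0)^2-3I_4(M_0)\big)=0$, as required.

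This argument is essentially self-contained, so I do not anticipate a serious obstacle; the only point demanding care is the bookkeeping for degenerate configurations. Specifically, I must keep separate the case $t=0$ (where one or more $\sigma_i$ vanish, so $M_0$ is rank-one or zero and $I_4$ is killed) from the case $t\neq 0$ (where $M_0$ is isotropic and $I_2^2-3I_4$ is killed); these correspond exactly to the two factors of the product. Recognising the hypothesis as the sum-of-squares identity above is the single genuinely non-routine step, and everything else is elementary symmetric-function manipulation.
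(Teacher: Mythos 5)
Your proof is correct, but it takes a genuinely different route from the paper's. The paper stays at the level of the invariants themselves: it introduces the sextic quantity $\zeta(I_2,I_3,I_4)$ of \eqref{eq:zeta-sv}, which the SVD exhibits as a perfect square (hence $\zeta(M_0)\,I_2(M_0)^2\ge 0$), and then uses the hypothesis to eliminate $I_3$, obtaining $\zeta(M_0)\,I_2(M_0)^2=-\tfrac{1}{3}\bigl(\bigl(I_2(M_0)^2-3I_4(M_0)\bigr)I_4(M_0)\bigr)^2\le 0$; squeezed between $0$ and $0$, this forces the product to vanish with no case analysis. You instead pass to the squared singular values $a,b,c\ge 0$ via \prettyref{lem:SVD-SOn}, rewrite the hypothesis as the sum of squares $\tfrac12\bigl[(ab-bc)^2+(bc-ca)^2+(ca-ab)^2\bigr]$, conclude $ab=bc=ca$, and split on whether the common value is zero. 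Your route is more elementary (no need to verify the $\zeta$ identity) and yields strictly more information: it shows directly that the hypothesis forces $M_0$ to be rank-one/zero (killing $I_4$) or to have all squared singular values equal (killing $I_2^2-3I_4$), which is essentially the dichotomy of \prettyref{thm:iso-or-rank-one} that this lemma feeds into. What the paper's route buys is economy within its own framework: $\zeta$ is needed anyway to characterize the realizable region shown in \prettyref{fig:DE-triangle}, and the sandwich argument handles all configurations uniformly without a case split.
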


\begin{proof}
We will make use of the quantity 
\[
\zeta(I_{2},I_{3},I_{4})\equiv\left(I_{2}I_{4}\right)^{2}+18I_{2}I_{3}^{2}I_{4}-4\left(I_{4}^{3}+I_{2}^{3}I_{3}^{2}\right)-27I_{3}^{4}.
\]
 It is easily verified that for any matrix $M$, 
\begin{equation}
\zeta(M)\equiv\zeta(I_{2}(M),I_{3}(M),I_{4}(M))=\left(\left(\sigma_{1}^{2}-\sigma_{2}^{2}\right)\left(\sigma_{2}^{2}-\sigma_{3}^{2}\right)\left(\sigma_{3}^{2}-\sigma_{1}^{2}\right)\right)^{2}\geq0.\label{eq:zeta-sv}
\end{equation}
 As an aside, one can also show the converse: if $\zeta(I_{2},I_{3},I_{4})\geq0$ then there exists a matrix $M$ such that $I_{n}=I_{n}(M)$. Thus the triangular region shown in \prettyref{fig:DE-triangle} is characterized by $\zeta\geq0$. 

Suppose now that $M_{0}$ is any matrix which satisfies 
\begin{equation}
I_{4}(M_{0})^{2}-3I_{2}(M_{0})I_{3}(M_{0})^{2}=0.\label{eq:A-iso-poly}
\end{equation}
 Consider the quantity $\zeta(M_{0})I_{2}(M_{0})^{2}$. From \prettyref{eq:zeta-sv} we know that $\zeta(M_{0})I_{2}(M_{0})^{2}\geq0$. However, using \prettyref{eq:A-iso-poly} to eliminate $I_{3}(M_{0})$, we obtain 
\begin{equation}
\zeta(M_{0})I_{2}(M_{0})^{2}=-\tfrac{1}{3}\left(\left(I_{2}(M_{0})^{2}-3I_{4}(M_{0})\right)I_{4}(M_{0})\right)^{2}\leq0.\label{eq:eliminate-I3}
\end{equation}
 Since $0\leq\zeta(M_{0})I_{2}(M_{0})^{2}\leq0$, we conclude that $\zeta(M_{0})I_{2}(M_{0})^{2}=0$, and thus by the equality in \prettyref{eq:eliminate-I3}, 
\[
\left(I_{2}(M_{0})^{2}-3I_{4}(M_{0})\right)I_{4}(M_{0})=0.
\]
\end{proof}
We therefore have the following conclusion:
\begin{thm}
\label{thm:iso-or-rank-one}Let $\vec{A}(\tau)$ be a homogeneous and isotropic $\mathrm{SU}(2)$ gauge field in homogeneous and temporal gauge. Then for each time $\tau_{0}$, either $\vec{A}(\tau_{0})$ is isotropic or $\vec{A}(\tau_{0})$ is rank-one. 
\end{thm}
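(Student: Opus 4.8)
The plan is to obtain the dichotomy as an immediate consequence of the two preceding lemmas together with the scalar characterizations of isotropy and rank-one. Fix a time $\tau_{0}$ and abbreviate $M_{0}\equiv\vec{A}(\tau_{0})$; since $\tau_{0}$ is arbitrary it suffices to establish the claim for this single matrix.

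First I would invoke \prettyref{lem:isotropic-gf-relation}. Because $A$ is homogeneous and isotropic, the comoving magnetic field matrix $\vec{\mathcal{B}}(\tau)$ is an isotropic matrix for every $\tau$, which through the relation \prettyref{eq:SV-AB} between the singular values of $\vec{A}$ and $\vec{\mathcal{B}}$ forces
\[
I_{4}(M_{0})^{2}-3\,I_{2}(M_{0})\,I_{3}(M_{0})^{2}=0.
\]
Next I would feed this relation into \prettyref{lem:rewrite-isotropic-relation}, whose purely algebraic content (using the discriminant-like quantity $\zeta$ and the sign constraint \prettyref{eq:zeta-sv}) repackages the quartic identity into the factored form
\[
I_{4}(M_{0})\bigl(I_{2}(M_{0})^{2}-3\,I_{4}(M_{0})\bigr)=0.
\]

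The conclusion then follows by a two-case split according to which factor vanishes. If $I_{4}(M_{0})=0$, the characterization \prettyref{eq:two-sv-vanish} identifies $M_{0}$ as rank-one or zero; the zero case is harmless since the zero matrix is already isotropic. If instead $I_{2}(M_{0})^{2}-3\,I_{4}(M_{0})=0$, the characterization \prettyref{eq:isotropy-characterization} identifies $M_{0}$ as isotropic. Hence at every time $\tau_{0}$ the matrix $\vec{A}(\tau_{0})$ is either isotropic or rank-one, which is the assertion of the theorem.

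There is no genuine obstacle at this stage, since all the nontrivial algebra has already been absorbed into \prettyref{lem:rewrite-isotropic-relation} and the geometric input into \prettyref{lem:isotropic-gf-relation}. The one subtlety worth emphasizing is that the dichotomy is established \emph{pointwise} in $\tau$: a priori the matrix $\vec{A}(\tau)$ could be rank-one at some times and isotropic at others. Ruling out a genuine rank-one regime --- and thereby completing the reduction to the ansatz $\vec{A}(\tau)=f(\tau)S$ of \prettyref{thm:isotropy} --- is exactly the role of the subsequent \prettyref{lem:iso-open}, \prettyref{lem:rank-one-const} and \prettyref{thm:rk1-is-gauge-artifact}, which respectively show that the rank-one set is open, that $\vec{A}(\tau)$ is constant wherever it is rank-one, and hence that such a regime is gauge-equivalent to zero.
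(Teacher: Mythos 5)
Your proof is correct and is essentially identical to the paper's: both combine \prettyref{lem:isotropic-gf-relation} with \prettyref{lem:rewrite-isotropic-relation} to get the factored identity $I_{4}(\vec{A}(\tau_{0}))\bigl(I_{2}(\vec{A}(\tau_{0}))^{2}-3I_{4}(\vec{A}(\tau_{0}))\bigr)=0$ and then conclude via \eqref{eq:two-sv-vanish} and \eqref{eq:isotropy-characterization}. Your explicit remark that the zero matrix is isotropic (so the ``rank-one or zero'' case in \eqref{eq:two-sv-vanish} causes no trouble) is a small point of rigor the paper leaves implicit, but it does not change the argument.
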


\begin{proof}
By \prettyref{lem:isotropic-gf-relation} combined with \prettyref{lem:rewrite-isotropic-relation}, 
\begin{equation}
\left(I_{2}(\vec{A}(\tau))^{2}-3I_{4}(\vec{A}(\tau))\right)I_{4}(\vec{A}(\tau))=0.\label{eq:reduced-isotropy-A}
\end{equation}
 Thus for any $\tau_{0}$, either $I_{4}(\vec{A}(\tau_{0}))=0$ or $I_{2}(\vec{A}(\tau_{0}))^{2}-3I_{4}(\vec{A}(\tau_{0}))=0$, and the result follows from \eqref{eq:two-sv-vanish} and \eqref{eq:isotropy-characterization}.
\end{proof}
The following two lemmas will imply that if $\vec{A}(\tau_{0})$ is rank one for some $\tau_{0}$ then $\vec{A}(\tau)$ is constant. 
\begin{lem}
\label{lem:iso-open}Let $\vec{A}(\tau)$ be a homogeneous and isotropic $\mathrm{SU}(2)$ gauge field in homogeneous and temporal gauge. If there is some $\tau_{0}$ such that $\vec{A}(\tau_{0})$ is rank-one, then $\vec{A}(\tau)$ is rank-one along some interval containing $\tau_{0}$ in its interior. 
\end{lem}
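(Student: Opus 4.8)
The plan is to exploit the dichotomy already established in \prettyref{thm:iso-or-rank-one}: at every time $\tau$, the matrix $\vec{A}(\tau)$ is either isotropic or rank-one. The key observation is that, restricted to times where $\vec{A}(\tau)\neq0$, the normalized scalar $E(\tau)\equiv3I_{4}(\vec{A}(\tau))/I_{2}(\vec{A}(\tau))^{2}$ (see \prettyref{eq:DE-def}) takes only the two discrete values $0$ and $1$, and is continuous; being continuous with values in a two-point set on a connected interval, it must be constant.

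First I would establish a neighborhood on which $\vec{A}(\tau)$ is nonzero. Since $\vec{A}(\tau_{0})$ is rank-one it is in particular nonzero, so $I_{2}(\vec{A}(\tau_{0}))=|\vec{A}(\tau_{0})|^{2}>0$. As $\vec{A}(\tau)$ depends continuously on $\tau$, the map $\tau\mapsto I_{2}(\vec{A}(\tau))$ is continuous, so there is an open interval $J$ containing $\tau_{0}$ in its interior along which $I_{2}(\vec{A}(\tau))>0$, that is, $\vec{A}(\tau)\neq0$ on all of $J$.

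Next I would invoke the characterizations \prettyref{eq:two-sv-vanish} and \prettyref{eq:isotropy-characterization}. On $J$ the denominator $I_{2}(\vec{A}(\tau))^{2}$ is nowhere zero, so $E(\tau)$ is a continuous function of $\tau$ there. By \prettyref{thm:iso-or-rank-one}, for each $\tau\in J$ the nonzero matrix $\vec{A}(\tau)$ is either isotropic or rank-one. In the isotropic case $3I_{4}=I_{2}^{2}$ by \prettyref{eq:isotropy-characterization}, giving $E(\tau)=1$; in the rank-one case $I_{4}=0$ by \prettyref{eq:two-sv-vanish}, giving $E(\tau)=0$. These cannot coincide for a nonzero matrix, since a rank-one matrix is never isotropic, so $E(J)\subseteq\{0,1\}$.

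Finally, since $E$ is continuous on the connected interval $J$ and takes values in the two-point set $\{0,1\}$, it is constant on $J$; equivalently, the preimage $E^{-1}(\{0\})$ is a nonempty clopen subset of $J$. As $E(\tau_{0})=0$ because $\vec{A}(\tau_{0})$ is rank-one, we conclude $E\equiv0$ on $J$, so $I_{4}(\vec{A}(\tau))=0$ with $\vec{A}(\tau)\neq0$ for all $\tau\in J$, which by \prettyref{eq:two-sv-vanish} means $\vec{A}(\tau)$ is rank-one throughout $J$. There is no serious obstacle here; the only point requiring care is keeping $\vec{A}(\tau)$ bounded away from zero so that $E$ remains well-defined and continuous, which is precisely what the opening continuity argument secures.
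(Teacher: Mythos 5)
Your proposal is correct and follows essentially the same route as the paper: both use the dichotomy from \prettyref{thm:iso-or-rank-one} (equivalently \prettyref{eq:reduced-isotropy-A}), continuity of $I_{2}(\vec{A}(\tau))$ to get an interval where $\vec{A}(\tau)\neq0$, and then continuity to force $I_{4}$ to remain on the zero branch rather than the branch $\tfrac{1}{3}I_{2}^{2}$. Your normalized scalar $E(\tau)=3I_{4}/I_{2}^{2}$ taking values in the two-point set $\{0,1\}$ is simply a tidy reformulation of the paper's two-branch continuity argument.
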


\begin{proof}
First, note that by solving \prettyref{eq:reduced-isotropy-A} for $I_{4}(\vec{A}(\tau_{1}))$ at any time $\tau_{1}$, the number $I_{4}(\vec{A}(\tau_{1}))$ may equal either $0$ or $\tfrac{1}{3}I_{2}(\vec{A}(\tau_{1}))^{2}$. Now suppose that there is some time $\tau_{0}$ for which the matrix $\vec{A}(\tau_{0})$ is not an isotropic matrix. By \prettyref{thm:iso-or-rank-one}, $\vec{A}(\tau_{0})$ must be rank-one. Thus from \prettyref{eq:two-sv-vanish}, $I_{4}(\vec{A}(\tau_{0}))=0$ but $I_{2}(\vec{A}(\tau_{0}))\neq0$. Assuming the continuity of $\vec{A}(\tau)$, it follows that there is some interval containing $\tau_{0}$ along which $I_{2}(\vec{A}(\tau))\neq0$. Also by continuity, $I_{4}(\vec{A}(\tau))$ along this same interval must be equal to either the positive branch $\tfrac{1}{3}I_{2}(\vec{A}(\tau))^{2}$ or the zero branch. Since $I_{4}(\vec{A}(\tau_{0}))=0$, we conclude that it must be the zero branch. Therefore, $\vec{A}(\tau)$ is rank-one for this whole interval along which $I_{2}(\vec{A}(\tau))\neq0$.
\end{proof}
\begin{lem}
\label{lem:rank-one-const}Let $\vec{A}(\tau)$ be a homogeneous and isotropic $\mathrm{SU}(2)$ gauge field in homogeneous and temporal gauge. Along any interval where $\vec{A}(\tau)$ is rank-one, it is constant. 
\end{lem}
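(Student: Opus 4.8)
The plan is to show that on any interval $I$ where $\vec A(\tau)$ is rank-one, the comoving electric field $\vec{\mathcal E}(\tau)=\tfrac{\mathrm d}{\mathrm d\tau}\vec A(\tau)$ (see \prettyref{eq:E-is-derivative}) vanishes identically, which immediately gives that $\vec A(\tau)$ is constant on $I$. The two facts I would play against each other are: (i) by \prettyref{lem:elec-is-iso}, $\vec{\mathcal E}(\tau)$ is an isotropic matrix for every $\tau$ --- crucially, that part of the lemma did not require $\vec A(\tau)$ itself to be an isotropic matrix, so it is available here --- and hence, by \prettyref{lem:matrix-isotropy}, its singular values are all equal in absolute value, forcing $\operatorname{rank}\vec{\mathcal E}(\tau)\in\{0,3\}$; and (ii) the $\tau$-derivative of a rank-one matrix can have rank at most $2$. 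Combining (i) and (ii) leaves only $\operatorname{rank}\vec{\mathcal E}(\tau)=0$.

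For step (ii), I would fix $\tau_0\in I$ and factor the rank-one matrix as $\vec A(\tau)=\sigma(\tau)\,u(\tau)\,v(\tau)^{T}$ with $u,v$ unit vectors and $\sigma(\tau)=|\vec A(\tau)|\neq0$ on $I$ (recall that on the rank-one locus $I_4=0$ but $I_2(\vec A)\neq0$ by \eqref{eq:two-sv-vanish}). Differentiating and using $u^{T}u'=v^{T}v'=0$ gives $\vec{\mathcal E}=(\sigma'u+\sigma u')\,v^{T}+\sigma\,u\,(v')^{T}$, whose column space lies in $\operatorname{span}\{u,u'\}$ and is therefore at most two-dimensional. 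Thus $\operatorname{rank}\vec{\mathcal E}(\tau_0)\le2$, and since $\tau_0$ was arbitrary the conclusion of (ii) holds throughout $I$.

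The main technical point to nail down is the differentiability of this factorization. Because the single nonzero singular value of $\vec A(\tau)$ is \emph{simple} (the other two being $0$), the nonzero eigenvalue of the symmetric matrix $\vec A(\tau)\vec A(\tau)^{T}=\sigma^{2}\,u u^{T}$ is simple, so standard perturbation theory for a simple eigenvalue yields a locally $C^{1}$ choice of the unit eigenvector $u(\tau)$; applying the same to $\vec A^{T}\vec A$ gives $v(\tau)$, and then $\sigma=u^{T}\vec A\,v$ is $C^{1}$. Since the vanishing $\vec{\mathcal E}(\tau_0)=0$ is a pointwise statement, a local $C^{1}$ factorization near each $\tau_0$ suffices, and any residual sign ambiguity in $u,v$ is harmless. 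Equivalently, one may argue invariantly: the rank-$\le1$ matrices form a smooth manifold at points of rank exactly one, with tangent space $\{u_0 b^{T}+a\,v_0^{T}:a,b\in\mathbb R^{3}\}$ at $\vec A(\tau_0)=\sigma_0 u_0 v_0^{T}$, every element of which has rank $\le2$, and the derivative $\vec{\mathcal E}(\tau_0)$ necessarily lies in this tangent space. With (ii) established, the isotropy dichotomy from (i) forces $\vec{\mathcal E}\equiv0$ on $I$, i.e. $\vec A(\tau)$ is constant on $I$, as claimed.
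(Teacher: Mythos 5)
Your proposal is correct and takes essentially the same route as the paper: both reduce the claim to showing $\vec{\mathcal{E}}(\tau)=\tfrac{\mathrm{d}}{\mathrm{d}\tau}\vec{A}(\tau)=0$ by playing the isotropy of $\vec{\mathcal{E}}$ (\prettyref{lem:elec-is-iso} together with \prettyref{lem:matrix-isotropy}, which forces rank $0$ or $3$) against a rank-one factorization $\vec{A}=\sigma\,u\,v^{T}$ whose $\tau$-derivative has rank at most $2$ --- the paper phrases this last step as exhibiting a nonzero left-null vector $\vec{s}\perp\{\vec{v},\vec{v}\,'\}$, which is the same observation as your column-space bound. The one genuine addition in your write-up is the explicit justification that the factorization can be chosen locally $C^{1}$ (via simplicity of the nonzero singular value, or the tangent-space argument), a regularity point the paper's proof implicitly assumes when it differentiates the SVD factors.
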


\begin{proof}
Recall from \prettyref{eq:E-is-derivative} that $\tfrac{\mathrm{d}}{\mathrm{d}\tau}\vec{A}(\tau)=\vec{\mathcal{E}}(\tau)$. Thus in order to show that $\vec{A}(\tau)$ is constant along some interval, it suffices to show that $\vec{\mathcal{E}}(\tau)=0$ along the same interval. 

From \prettyref{lem:elec-is-iso} it follows that $\vec{\mathcal{E}}(\tau)$ is always an isotropic matrix. By \prettyref{lem:matrix-isotropy}, $\vec{\mathcal{E}}(\tau_{0})$ is a scalar multiple of an orthogonal matrix for each $\tau_{0}$. Since orthogonal matrices are invertible, the only scalar multiple of an orthogonal matrix which has a nonzero nullspace is the zero matrix. Thus the lemma follows if we show that $\vec{\mathcal{E}}(\tau_{0})$ has a nonzero nullspace. 

The singular value decomposition of $\vec{A}(\tau)$ in the rank-one case (see \prettyref{def:rank-one}) gives the $3\times3$ matrix equation 
\[
\vec{A}(\tau)=\vec{v}(\tau)\sigma_{1}(\tau)\vec{w}(\tau)^{T},
\]
 where $\vec{v}(\tau)$ denotes the first column of $G(\tau)$ and $\vec{w}(\tau)$ is the first column of $R(\tau)$. Thus 
\begin{equation}
\vec{\mathcal{E}}(\tau)=\vec{v}(\tau)\,\tfrac{\mathrm{d}}{\mathrm{d}\tau}\left(\sigma_{1}(\tau)\vec{w}(\tau)^{T}\right)+\left(\tfrac{\mathrm{d}}{\mathrm{d}\tau}\vec{v}(\tau)\right)\sigma_{1}(\tau)\vec{w}(\tau)^{T}.\label{eq:E-rk-1-as-derivative}
\end{equation}
 Choosing any nonzero vector $\vec{s}(\tau)$ in $\mathbb{R}^{3}$ which is orthogonal to both $\vec{v}(\tau)$ and $\tfrac{\mathrm{d}}{\mathrm{d}\tau}\vec{v}(\tau)$, it is clear from \prettyref{eq:E-rk-1-as-derivative} that $\vec{s}(\tau)^{T}\vec{\mathcal{E}}(\tau)$ vanishes. Thus $\vec{\mathcal{E}}(\tau)$ has a nonzero left-nullspace, proving the lemma.
\end{proof}
\begin{thm}
\label{thm:rk1-is-gauge-artifact}Let $A$ be a homogeneous and isotropic $\mathrm{SU}(2)$ gauge field. There exists a homogeneous and temporal gauge for $A$ in which $\vec{A}(\tau)$ is an isotropic matrix for all $\tau$. 
\end{thm}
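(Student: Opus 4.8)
\textbf{Proof proposal for \prettyref{thm:rk1-is-gauge-artifact}.}

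The plan is to dispose of the only obstruction to isotropy of the matrix $\vec{A}(\tau)$, namely the rank-one case identified in \prettyref{exa:local-gt}, by showing that whenever it occurs the gauge field is actually \emph{constant} in time, and a constant rank-one field is gauge-equivalent to zero (hence to an isotropic matrix). First I would invoke \prettyref{thm:hom-temp-gauge} to place $A$ in homogeneous and temporal gauge, so that throughout the argument I may treat $\vec{A}(\tau)$ as a $\tau$-dependent $3\times3$ matrix with $A_{0}=0$ and with $\vec{\mathcal{E}}(\tau)=\tfrac{\mathrm{d}}{\mathrm{d}\tau}\vec{A}(\tau)$ by \prettyref{eq:E-is-derivative}. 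The strategy then splits the time axis into the set where $\vec{A}(\tau)$ is isotropic and the set where it is rank-one, which by \prettyref{thm:iso-or-rank-one} are the only two possibilities at each instant.

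The key steps, in order, are as follows. (i) Apply \prettyref{thm:iso-or-rank-one} to conclude that for every $\tau_{0}$ the matrix $\vec{A}(\tau_{0})$ is either isotropic or rank-one; this already rests on the invariant relation of \prettyref{lem:isotropic-gf-relation} and its rewriting in \prettyref{lem:rewrite-isotropic-relation}. (ii) Use \prettyref{lem:iso-open} to show that the rank-one condition is \emph{open}: if $\vec{A}(\tau_{0})$ is rank-one then it remains rank-one on a whole interval around $\tau_{0}$, where one tracks the invariant $I_{4}(\vec{A}(\tau))$ on its zero branch by continuity. (iii) Invoke \prettyref{lem:rank-one-const}, which uses the isotropy of the electric field matrix $\vec{\mathcal{E}}(\tau)$ (\prettyref{lem:elec-is-iso}) together with the observation that a rank-one $\vec{A}(\tau)=\vec{v}(\tau)\sigma_{1}(\tau)\vec{w}(\tau)^{T}$ forces $\vec{\mathcal{E}}(\tau)$ to have a nonzero nullspace, hence (being a scalar multiple of an orthogonal matrix) to vanish; thus $\vec{A}(\tau)$ is constant on any rank-one interval. (iv) Combine openness with constancy: on the rank-one interval $\vec{A}$ is constant, so the matrix $\vec{A}$ takes a single rank-one value; by the continuity of $\vec{A}(\tau)$ this constant value must persist at the endpoints of the maximal rank-one interval, and since a transition to the isotropic regime would require $\vec{A}$ to change, the rank-one interval must be all of $\mathbb{R}$. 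Concretely: the maximal rank-one interval is both open (step ii) and closed (its complement, the isotropic set, is where $I_{4}=\tfrac{1}{3}I_{2}^{2}$, and any boundary point would be a limit of constant rank-one values), so by connectedness either $\vec{A}(\tau)$ is isotropic for all $\tau$ or rank-one and constant for all $\tau$. (v) In the constant rank-one case, \prettyref{exa:local-gt} exhibits the explicit gauge transformation $\exp(-ie\sigma_{1}\mathbf{T}_{1}x^{1})$ (suitably conjugated so that $\vec{v}$, $\vec{w}$ are aligned with the first basis vectors) rendering $\vec{A}=0$; since this transformation is $\tau$-independent it preserves homogeneous and temporal gauge, and the zero matrix is trivially isotropic.

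The main obstacle I anticipate is step (iv), the global gluing: the lemmas guarantee local constancy on rank-one intervals and isotropy elsewhere, but one must rule out a solution that is rank-one (and constant) on part of the time axis and genuinely isotropic (and varying) on another part, meeting at some $\tau_{*}$. The subtlety is that at such a junction $\vec{A}(\tau_{*})$ would be simultaneously a limit of a fixed rank-one matrix and a point of the isotropic locus $I_{2}^{2}=3I_{4}$; I would resolve this by noting that continuity of $\vec{A}(\tau)$ forces the constant rank-one value to satisfy $I_{4}=0$ while the isotropic side demands $I_{4}=\tfrac{1}{3}I_{2}^{2}>0$ (as $I_{2}\neq0$ throughout, by \prettyref{lem:iso-open}), a contradiction unless one regime is empty. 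A clean alternative, avoiding any delicate boundary analysis, is simply to apply the dichotomy of steps (i)--(iii) to conclude that the rank-one set is open and its constant value therefore cannot be a limit of isotropic (hence non-rank-one) matrices, so the two sets cannot share a boundary point; connectedness of $\mathbb{R}$ then yields the result. Either way the argument is short once the supporting lemmas are in hand, and no new computation beyond verifying the explicit gauge transformation of \prettyref{exa:local-gt} is required.
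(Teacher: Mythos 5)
Your proposal is correct and takes essentially the same route as the paper: place $A$ in homogeneous and temporal gauge via \prettyref{thm:hom-temp-gauge}, apply the isotropic/rank-one dichotomy of \prettyref{thm:iso-or-rank-one}, combine openness of the rank-one regime (\prettyref{lem:iso-open}) with constancy on rank-one intervals (\prettyref{lem:rank-one-const}), and finally gauge away the constant rank-one case using \prettyref{exa:local-gt}. The only difference is that you make explicit the connectedness/gluing step (iv), which the paper's proof leaves implicit when it asserts that a non-constant field that is rank-one at some instant ``cannot happen'' by the two lemmas; your elaboration is a faithful and correct filling-in of that gap rather than a different argument.
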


\begin{proof}
By \prettyref{thm:hom-temp-gauge} we may assume that $A$ is in homogeneous and temporal gauge. If $\vec{A}(\tau)$ is isotropic for all $\tau$ then we are done. Alternatively, if $\vec{A}(\tau)$ is a constant rank-one matrix for all $\tau$ as in \prettyref{exa:local-gt} then we are done because \prettyref{exa:local-gt} is gauge-equivalent to zero, and zero is isotropic. By \prettyref{thm:iso-or-rank-one}, the only other possibility is that $\vec{A}(\tau_{0})$ is rank-one at some time $\tau_{0}$, but $\vec{A}(\tau)$ is non-constant. However, this cannot happen by \prettyref{lem:iso-open} and \prettyref{lem:rank-one-const}. 
\end{proof}
As a result of \prettyref{thm:rk1-is-gauge-artifact}, when $A$ is homogeneous and isotropic we may assume without loss of generality that $\vec{A}(\tau)$ is an isotropic matrix for all $\tau$.

\section{Asymptotics of the Whittaker \texorpdfstring{$W$}{W} function\label{app:asymptotics}}

The purpose of this appendix is to understand the asymptotics of the Whittaker function, which describes the enhanced gauge field modes.

For parameters $k$, $m$, and $b$, and a positive real variable $x$, the Whittaker functions 
\[
C_{1}W_{k,m}(bx)+C_{2}M_{\kappa,\mu}(bx)
\]
 for constants $C_{1}$ and $C_{2}$ provide the general solution to the differential equation 
\[
\frac{\mathrm{d}^{2}}{\mathrm{d}x^{2}}w(x)+\left(-\left(\frac{b}{2}\right)^{2}+\frac{kb}{x}+\frac{\tfrac{1}{4}-m^{2}}{x^{2}}\right)w(x)=0.
\]
 In general, given any differential equation of the form 
\[
\frac{\mathrm{d}^{2}}{\mathrm{d}x^{2}}w(x)+\left(A+\frac{B}{x}+\frac{C}{x^{2}}\right)w(x)=0,
\]
 it is obviously possible to solve for (possibly complex) Whittaker function parameters $k$, $m$ and $b$ which match general parameters $A$, $B$, and $C$. For our purposes, it will be convenient to make the transformation $k\mapsto-i\kappa$, $m\mapsto-i\mu$ and $b\mapsto-i\beta$. Under the assumption that $\kappa$, $\mu$ and $\beta$ are real and positive, it will be sufficient for our purposes to take $C_{2}=0$ and consider only the ``negative-imaginary Whittaker $W$ function'' 
\[
C_{1}W_{-i\kappa,-i\mu}(-i\beta x),
\]
 which solves 
\begin{equation}
\frac{\mathrm{d}^{2}}{\mathrm{d}x^{2}}w(x)+\left(\left(\frac{\beta}{2}\right)^{2}-\frac{\kappa\beta}{x}+\frac{\tfrac{1}{4}+\mu^{2}}{x^{2}}\right)w(x)=0.\label{eq:whiteqn}
\end{equation}
 The case of \prettyref{eq:+2mode} for $w_{+2}^{(e)}$ with the $c_{2}$-solution corresponds to the negative-imaginary Whittaker $W$ function with parameters
\begin{align}
\kappa & =(1+c_{2})\xi\approx2\xi,\label{eq:whitparams}\\
\mu & =\sqrt{2c_{2}-(2\xi)^{-2}}\xi\approx\sqrt{2}\xi,\nonumber \\
\beta & =2.\nonumber 
\end{align}

\subsection{Summary of asymptotics}

The phenomenology in Sections \ref{sec:linearized} and \ref{sec:example} depends strongly on the behavior of the mode functions both in the infinite future $x\to0^{+}$ and around the global maximum. We summarize here the corresponding estimates for the Whittaker function. Derivations are given in the subsequent subsections.

We impose the boundary condition 
\begin{align}
w(x) & =C_{1}W_{-i\kappa,-i\mu}(-i\beta x)\sim e^{i\beta x/2}\textrm{ as }x\to\infty\label{eq:whittaker-with-coeff}
\end{align}
 which, with our normalization conventions for mode functions, corresponds to the Bunch--Davies vacuum \prettyref{eq:BDsimple}. We show in \prettyref{subsec:asymp-normalization} that 
\[
C_{1}=e^{\kappa\pi/2}e^{i\phi_{0}},
\]
 where the phase $\phi_{0}$ is undetermined. 

In \prettyref{subsec:asymp-x-0} we study the resulting behavior as $x\to0^{+}$. As is the case in \prettyref{eq:whitparams}, if $\kappa>\mu$ then upon taking $\phi_{0}$ as in \prettyref{eq:phi-0}, the real part is enhanced and the imaginary part is suppressed. Specifically, 
\begin{align}
\textrm{Re}(w(x)) & \approx2e^{(\kappa-\mu)\pi}\sqrt{\frac{\beta x}{2\mu}}\cos\left(\mu\ln(\beta x)+\theta_{0}\right),\label{eq:wh-re}\\
\textrm{Im}(w(x)) & \approx\tfrac{1}{2}e^{-(\kappa-\mu)\pi}\sqrt{\frac{\beta x}{2\mu}}\sin\left(\mu\ln(\beta x)+\theta_{0}\right).\label{eq:wh-im}
\end{align}
 Note that this is a wave which decays in proportion to $\sqrt{x}$. Furthermore, this wave oscillates $\mu/(2\pi)$ times per e-fold. In our case of interest $w(x)=w_{+2}^{(e)}(x)$, e-folds in $x$ are equivalent to e-folds in $\tau$, and the frequency is $\approx\sqrt{2}\xi/(2\pi)$ oscillations per e-fold. 

While for large $\xi$, we could approximate $\kappa$ and $\mu$ to obtain 
\begin{align}
w_{+2}^{(e)}(x) & \approx2^{-1/4}\sqrt{x}\left(2e^{(2-\sqrt{2})\pi\xi}\cos\theta+\tfrac{1}{2}e^{-(2-\sqrt{2})\pi\xi}i\sin\theta\right),\label{eq:as_wp2}\\
\theta & \approx\sqrt{2}\xi\ln(2x)+\theta_{0},\nonumber 
\end{align}
 it is far more accurate to use \prettyref{eq:wh-re} rather than \prettyref{eq:as_wp2} due to the exponential sensitivity on $\kappa$ and $\mu$. 

Finally in \prettyref{subsec:Airy-Whittaker}, we apply the WKB approximation as reviewed in \prettyref{subsec:WKB-review} and \prettyref{subsec:WKB-formulas} to obtain \prettyref{eq:airy-full-approx}, which accurately approximates $\textrm{Re}(w(x))$ around its maximum value in terms of the Airy function.

\subsection{\label{subsec:asymp-normalization}Asymptotics as \texorpdfstring{$x\to\infty$}{x\to\infty}}

Here we determine the magnitude $\left|C_{1}\right|$ of the coefficient in \prettyref{eq:whittaker-with-coeff} by matching the Bunch--Davies vacuum with the asymptotics of the Whittaker function. 

Let $x$ be a positive real variable. For large $x$, the Whittaker function satisfies 
\begin{align*}
W_{k,m}(bx) & =e^{-bx/2}(bx)^{\kappa}\left(1+\epsilon_{1}(bx)\right),\\
\epsilon_{1}(bx) & =0+\mathcal{O}\left(\frac{1+k^{2}+m^{2}}{bx}\right).
\end{align*}
 This asymptotic expression remains valid upon replacing $k\mapsto-i\kappa$, $m\mapsto-i\mu$ and $b\mapsto-i\beta$, and evaluating complex exponents $a^{p}=e^{p\ln a}$ using the principal branch of the logarithm. Thus 
\begin{align}
W_{-i\kappa,-i\mu}(-i\beta x) & =e^{i\beta x/2}(-i\beta x)^{-i\kappa}\left(1+\epsilon_{1}(\beta x)\right)\nonumber \\
 & =e^{-\kappa\pi/2}\,e^{i\beta x/2-i\kappa\ln(\beta x)}\left(1+\epsilon_{1}(\beta x)\right)\label{eq:log-drift}
\end{align}
 as $x\to\infty$. Therefore $\left|C_{1}\right|=e^{\kappa\pi/2}$. Note the (inconsequential) logarithmic drift in complex phase, which prevents us at this stage from selecting a distinguished phase.

\subsection{\label{subsec:asymp-x-0}Asymptotics as \texorpdfstring{$x\to0^+$}{x\to0+}}

Let us now consider the behavior as $x\to0^{+}$, corresponding to the infinite future. In this limit we have 
\begin{align*}
W_{-i\kappa,-i\mu}(-i\beta x) & =\sqrt{\beta x}e^{-i\pi/4}\sum_{\pm}\frac{\Gamma\left(\mp2i\mu\right)}{\Gamma\left(\tfrac{1}{2}+i\kappa\mp i\mu\right)}(-i\beta x)^{\pm i\mu}\left(1+\epsilon_{2}(\beta x)\right),\\
\epsilon_{2}(\beta x) & =0+\mathcal{O}\left(\frac{1+\kappa+\mu}{1+\mu}\beta x\right).
\end{align*}
 To make sense of this expression, we may rewrite it in the form 
\begin{align}
w(x)\equiv e^{\kappa\pi/2}e^{i\phi_{0}}W_{-i\kappa,-i\mu}(-i\beta x) & =C_{0}\sqrt{\beta x}\left(\lambda\cos\theta+i\lambda^{-1}\sin\theta+\epsilon_{3}(\beta x)\right),\label{eq:w-exact}\\
\theta & \equiv\mu\ln(\beta x)+\theta_{0},
\end{align}
 where the four constants $C_{0}$, $\lambda$, $\phi_{0}$ and $\theta_{0}$ depend only on the parameters $\kappa$ and $\mu$. After some algebra, we find the exact expressions
\begin{align}
C_{0} & \equiv(2\mu)^{-1/2},\label{eq:whit-consts}\\
\lambda & \equiv e^{(\kappa-\mu)\pi}\left(\frac{\sqrt{1+e^{-2(\kappa-\mu)\pi}}+\sqrt{1+e^{-2(\kappa-\mu)\pi}e^{-4\mu\pi}}}{\sqrt{1-e^{-4\mu\pi}}}\right),\label{eq:whit-lambda}\\
\theta_{0} & \equiv\tfrac{1}{2}\left(\phi_{\Gamma,1/2}(\kappa+\mu)-\phi_{\Gamma,1/2}(\kappa-\mu)\right)-\phi_{\Gamma,0}(2\mu)\\
\phi_{0} & \equiv\tfrac{\pi}{4}+\tfrac{1}{2}\left(\phi_{\Gamma,1/2}(\kappa+\mu)+\phi_{\Gamma,1/2}(\kappa-\mu)\right),\label{eq:phi-0}\\
\phi_{\Gamma,a}(b) & \equiv\arg\Gamma(a+ib).\\
\epsilon_{3}(\beta x) & =0+\mathcal{O}\left(\frac{1+\kappa+\mu}{1+\mu}(\lambda+\lambda^{-1})\beta x\right).
\end{align}
 To derive these parameters, we have made use of the following polar decomposition identities for the gamma function. For $b\in\mathbb{R}$, 
\[
\Gamma\left(ib\right)=\sqrt{\frac{\pi}{b\sinh(\pi b)}}e^{i\phi_{\Gamma,0}(b)},\quad\Gamma\left(\tfrac{1}{2}+ib\right)=\sqrt{\frac{\pi}{\cosh(\pi b)}}e^{i\phi_{\Gamma,1/2}(b)}.
\]

In the parameter range of interest (see \prettyref{eq:whitparams}), 
\[
\kappa\approx2\xi,\qquad\mu\approx\sqrt{2}\xi,\qquad\beta=2,\qquad\xi\geq2,
\]
 we have $e^{-2\pi(\kappa-\mu)}\ll1$ and $e^{-4\pi\mu}\ll1$. This allows us to very accurately approximate the part of the expression for $\lambda$ inside the parentheses in \prettyref{eq:whit-lambda} by the number $2$. Specifically, 
\begin{align}
\lambda & =2e^{(\kappa-\mu)\pi}\left(1+\epsilon_{4}\right),\label{eq:l-approx}\\
\textrm{with }\ 0 & <\epsilon_{4}\leq e^{-2(\kappa-\mu)\pi}+e^{-4\mu\pi}\textrm{ when }\mu\geq\tfrac{1}{25}.
\end{align}
 In the worst case with our parameters (when $\xi=2$), the relative error is only $\epsilon_{4}\approx3\times10^{-4}$. 

Finally, when $e^{-2\pi(\kappa-\mu)}\ll1$ and $e^{-4\pi\mu}\ll1$ we obtain the formulas \prettyref{eq:wh-re} and \prettyref{eq:wh-im} by plugging \prettyref{eq:l-approx} into \prettyref{eq:w-exact}.

\subsection{WKB approximation }

\subsubsection{Review of WKB approximation\label{subsec:WKB-review}}

Although widely known, we briefly summarize the technique of WKB approximation as utilized here. The solutions to a differential equation of the form 
\begin{equation}
\frac{\mathrm{d}^{2}}{\mathrm{d}x^{2}}w(x)=V(x)w(x)\label{eq:WKB-orig}
\end{equation}
 are often not straightforward when $V(x)$ is a general function. Important exceptions are when $V(x)$ is one of the following model potentials $V_{0}(x)$. 
\begin{itemize}
\item If $V_{0}(x)=1$ then $w_{0}(x)=e^{\pm x}$ are solutions. 
\item If $V_{0}(x)=-1$ then $w_{0}(x)=e^{\pm ix}$ are solutions. 
\item If $V_{0}(x)=x$ then the Airy functions $w_{0}(x)=\mathrm{Ai}(x)$ and $w_{0}(x)=\mathrm{Bi}(x)$ are solutions. 
\item If $V_{0}(x)=\pm\left(x^{2}-\alpha^{2}\right)$ for some constant $\alpha$ then the solutions are called parabolic cylinder functions, and they have $\alpha$ as a parameter. 
\end{itemize}
The main idea of the WKB approximation is to perform a change of variables to make \prettyref{eq:WKB-orig} resemble such a model equation. We consider transformations of \prettyref{eq:WKB-orig} which do not introduce a first-derivative term, and are of the form 
\begin{align}
w(x) & \mapsto\vartheta(\zeta(x))\cdot W(\zeta(x)),\label{eq:liouville}
\end{align}
 where $\zeta(x)$ is an increasing change of variables ($\mathrm{d}\zeta/\mathrm{d}x>0$), and $\vartheta(\zeta)$ is constructed to eliminate any first-derivative term introduced by $\zeta$. Such a transformation is called a Liouville transformation. The transformed equation should be of the form 
\begin{equation}
\frac{\mathrm{d}^{2}}{\mathrm{d}\zeta^{2}}W(\zeta)=\left(V_{0}(\zeta)+\epsilon(\zeta)\right)W(\zeta),\label{eq:WKB-transformed}
\end{equation}
 where $V_{0}(\zeta)$ is a model potential, $\epsilon(\zeta)$ is arranged to be suitably negligible. The Liouville transformation condition that ensures no first-derivative term is equivalent to 
\begin{equation}
\vartheta(\zeta)=\left(\frac{\mathrm{d}x}{\mathrm{d}\zeta}\right)^{1/2}.\label{eq:WKB-mult}
\end{equation}
A generally good choice of $\zeta(x)$ to keep $\epsilon(\zeta)$ small is 
\begin{equation}
\frac{\mathrm{d}\zeta}{\mathrm{d}x}=+\sqrt{\frac{V(x)}{V_{0}(\zeta)}}.\label{eq:zeta-def}
\end{equation}
Since this must be real, it must satisfy $\mathrm{sign}(V(x))=\mathrm{sign}(V_{0}(\zeta))$. Thus for each zero of $V(x)$ in the domain of interest, there must be a corresponding zero in $V_{0}(\zeta)$. In this way, the number of roots of $V(x)$ determines the appropriate type of model potential.

Solving \prettyref{eq:zeta-def} gives 
\[
Z(\zeta)=X(x),\ \textrm{where}\ Z(\zeta)\equiv\int\sqrt{\left|V_{0}(\zeta)\right|}\ \mathrm{d}\zeta\ \textrm{ and }\ X(x)\equiv\int\sqrt{\left|V(x)\right|}\ \mathrm{d}x.
\]
In case there are any roots $\left\{ x_{i}\right\} $ of $V(x)$, then the constant of integration and any parameters of the model potential $V(x)$ must be chosen so that $X(x_{i})=Z(\zeta_{i})$, where $\left\{ \zeta_{i}\right\} $ are the corresponding zeroes of $V_{0}(\zeta)$.  Assuming that we can compute the antiderivatives $Z$ and $X$, as well as $Z^{-1}$, we have the formula 
\[
\zeta(x)=Z^{-1}\left(X(x)\right).
\]
 Working backward to get our approximate solution, let $W_{0}(\zeta)$ denote a solution to \prettyref{eq:WKB-transformed} where we ignore\footnote{For detailed error estimates for the WKB approximation, see \cite{olver1997}.} $\epsilon(\zeta)$. Plugging this into \prettyref{eq:liouville} and using \prettyref{eq:WKB-mult} and \prettyref{eq:zeta-def}, we find that 
\begin{equation}
w_{\textrm{approx}}(x)=C\left(\frac{V_{0}(\zeta(x))}{V(x)}\right)^{1/4}W_{0}(\zeta(x))\label{eq:WKB}
\end{equation}
 is an approximate solution to \prettyref{eq:WKB-orig} for any number $C$.

\subsubsection{\label{subsec:WKB-formulas}WKB approximation for various model potentials}

Consider first the simple case where $V(x)$ has no zeroes so that we can take $V_{0}(\zeta)=\pm1=\mathrm{sign}V$. The model solutions are $W_{0}(\zeta)=e^{\pm\sqrt{\mathrm{sign}V}\zeta}$. For the reparameterization, $Z(\zeta)=\zeta$ and so 
\[
\zeta_{\textrm{exp}}(x)=X(x)=\int\sqrt{\left|V(x)\right|}\,\mathrm{d}x.
\]
 Finally, 
\begin{equation}
w_{\textrm{exp-approx}}(x)=C\left|V(x)\right|^{-1/4}e^{\pm\sqrt{\mathrm{sign}V}X(x)},\label{eq:exp-approx}
\end{equation}
 which is the standard WKB approximation. 

Consider next the case where $V(x)$ has a single simple zero at $x=x_{1}$ in the domain of interest. We take $V_{0}(\zeta)=\zeta$. Thus $Z(\zeta)=\mathrm{sign(\zeta)}\tfrac{2}{3}\left|\zeta\right|^{3/2}$, and $X(x)=\int_{x_{1}}^{x}\sqrt{\left|V(y)\right|}\,\mathrm{d}y$, so 
\begin{align}
\zeta_{\textrm{Airy}}(x) & =\mathrm{sign}(x-x_{1})\left|\tfrac{3}{2}X(x)\right|^{2/3},\\
w_{\textrm{Airy-approx}}(x) & =C\left(\frac{\zeta(x)}{V(x)}\right)^{1/4}\mathrm{Ai}\left(\zeta(x)\right).\label{eq:airy-approx}
\end{align}

The case where $V(x)$ has two zeroes at $x_{1}<x_{2}$ is treated in~\cite{Olver137} and applied to the imaginary Whittaker function in~\cite{Olver_1980}. As a summary, the parameter $\alpha$ of $V_{0}$ is determined by $\int_{-\alpha}^{\alpha}\sqrt{\alpha^{2}-\zeta^{2}}=\int_{x_{1}}^{x_{2}}\sqrt{\left|V(y)\right|}\,\mathrm{d}y$, which is necessary for $\zeta(x_{1})=-\alpha$ and $\zeta(x_{2})=+\alpha$. One complication is that although $Z(\zeta)$ has a closed form, its inverse function does not. Furthermore the parabolic cylinder functions are considerably more complicated. Since our region of interest is around the global maximum, which occurs close to the zero\footnote{A slight discrepancy between $x_{1}$ and $x_{\mathrm{min}}$ arises because our $-V(x)$ differs slightly from the mass term $m_{+2}^{2}$ of \prettyref{eq:mass-term-2p}, as explained in \prettyref{fn:modified-potential}. } $x_{1}\approx x_{\mathrm{min}}$ defined in \prettyref{eq:def-xmin-xmax}, the Airy approximation \eqref{eq:airy-approx} suffices for our purposes. 

\subsubsection{\label{subsec:Airy-Whittaker}Airy approximation of the imaginary Whittaker \texorpdfstring{$W$}{W} function around its maximum }

We wish to find the constant coefficient of \prettyref{eq:airy-approx} which will make it agree with $\textrm{Re}(w(x))$ from \prettyref{eq:wh-re} as $x\to0$, or equivalently as $\zeta\to-\infty$. 

In the imaginary Whittaker equation \prettyref{eq:whiteqn} we take\footnote{\label{fn:modified-potential}Since the WKB approximation introduces small errors, there is potential for these errors to cancel. For the Whittaker function, it is advantageous to take $V(x)$ without the $\tfrac{1}{4}x^{-2}$ term in \prettyref{eq:whiteqn}. This is equivalent to using a non-zero error term $\epsilon(x)=\tfrac{1}{4}x^{-2}$, as in \prettyref{eq:WKB-transformed}. (See \cite{olver1997} for details.) As motivation for this modification, consider $\epsilon(x)=ax^{-2}$, so that the remainder is $V(x)\approx-(\tfrac{1}{4}-a+\mu^{2})x^{-2}$ as $x\to0$. Therefore $X(x)\approx\sqrt{\tfrac{1}{4}-a+\mu^{2}}\ln x$. When $V(x)$ is large, the Airy approximation reduces to the exponential approximation, and thus by \prettyref{eq:exp-approx} the approximate solution is of the form 
\[
w_{\mathrm{approx}}(x)\approx C\sqrt{x}e^{\pm i\sqrt{\tfrac{1}{4}-a+\mu^{2}}\ln x}.
\]
 By comparison of the exponent with \prettyref{eq:wh-re}, we see that the coefficient of $\ln x$ should be $\mu$, so it is better to take $a=\tfrac{1}{4}$ rather than $a=0$. } 
\begin{align*}
V(x) & =-\left(\left(\frac{\beta}{2}\right)^{2}-\frac{\kappa\beta}{x}+\frac{\tfrac{0}{4}+\mu^{2}}{x^{2}}\right),
\end{align*}

To determine the proper coefficient $C$ of \prettyref{eq:airy-approx}, we substitute into \prettyref{eq:airy-approx} the asymptotic formula 
\[
\mathrm{Ai}(\zeta)=\pi^{-1/2}(-\zeta)^{-1/4}\sin\left(\frac{\pi}{4}+\frac{2}{3}(-\zeta)^{3/2}\right)+O(-\zeta^{-1})\textrm{ as }\zeta\to-\infty
\]
 to obtain 
\[
w_{\textrm{approx}}(x)\approx C\pi^{-1/2}V(x)^{-1/4}\sin\left(\frac{\pi}{4}-X(x)\right).
\]
 As $x\to0$ we have $V(x)^{-1/4}\approx\sqrt{x/\mu}$. Matching the magnitude with \prettyref{eq:wh-re}, we find $C=\sqrt{2\pi\beta}\ e^{(\kappa-\mu)\pi}$, so 
\begin{equation}
w_{\textrm{approx}}(x)=\sqrt{2\pi\beta}\ e^{(\kappa-\mu)\pi}\left(\frac{\zeta(x)}{V(x)}\right)^{1/4}\mathrm{Ai}\left(\zeta(x)\right).\label{eq:airy-full-approx}
\end{equation}
 The exact expression for $\zeta(x)$ is complicated, but it is well-approximated for $x\approx x_{1}$ by 
\[
\zeta(x)\approx\left(2\mu^{2}-\beta\kappa x_{1}\right)^{1/3}\ln\frac{x}{x_{1}}.
\]
 To give the exact formula for $\zeta(x)$ in the interval $\left(0,x_{1}\right)$ in a concise form, we introduce 
\begin{align*}
\chi\equiv\frac{\beta x}{2\mu},\quad\lambda & \equiv\frac{\kappa}{\mu}>1,\quad\chi_{i}\equiv\lambda+(-1)^{i}\sqrt{\lambda^{2}-1}\textrm{ for }i\in\left\{ 1,2\right\} ,\\
R & \equiv\sqrt{\chi^{2}-2\chi\lambda+1}=\frac{x}{\mu}\sqrt{-V(x)}.
\end{align*}
 It is easy to verify that 
\begin{align*}
x_{i} & =2\mu\chi_{i}/\beta\textrm{ for }i\in\left\{ 1,2\right\} ,\\
V(x) & =V(x)=-\left(\frac{\beta}{2\chi}\right)^{2}\left(\chi^{2}-2\chi\lambda+1\right)=-\left(\frac{\beta}{2\chi}\right)^{2}\left(\chi-\chi_{1}\right)\left(\chi-\chi_{2}\right),\\
X(x) & =\left(R-\ln\left(1-\chi\lambda+R\right)-\lambda\ln\left(\lambda-\chi-R\right)+\ln\chi+\tfrac{1}{2}(1+\lambda)\ln\left(\lambda^{2}-1\right)\right)\mu,\\
\zeta(x) & =\mathrm{sign}(x-x_{1})\left|\tfrac{3}{2}X(x)\right|^{2/3}.
\end{align*}
 Due to the limited domain of the logarithm, the expression given for $X(x)$ applies only in the interval $(0,x_{1})$. 

We use \prettyref{eq:airy-full-approx} to compute the gauge-field contribution to the scalar power spectrum in \prettyref{app:variance_computation}.

\textcolor{white}{}

\bibliographystyle{JHEP}
\bibliography{refs_non-abelian}{}

\end{document}